\newcommand{\ps}[1]{\todo[color=green!30!white]{\textbf{Phil:}\\#1}}
\newcommand{\psin}[1]{\todo[inline,color=green!30!white]{\textbf{Phil:}\\#1}}
\newcommand{\jw}[1]{\todo[color=blue!20!white]{\textbf{Julian:}\\#1}}
\newtheorem{theorem}{Theorem}
\newtheorem*{theorem*}{Theorem}  
\newtheorem{lemma}{Lemma}[section]
\newtheorem{definition}[lemma]{Definition}
\newtheorem{corollary}[lemma]{Corollary}
\DeclareMathOperator{\indeg}{indeg}
\DeclareMathOperator{\SUM}{SUM}
\newcommand{\bigO}{\ensuremath{\mathcal O}}
\newcommand{\arb}{\ensuremath{\mathrm{arb}}\xspace}
\newcommand{\eps}{\varepsilon}
\newcommand{\NCC}{{\ensuremath{\textnormal{NCC}}}\xspace}
\newcommand{\NCCx}[1]{{\ensuremath{\textnormal{NCC}\textnormal{(}{#1}\textnormal{)}}}\xspace}
\newcommand{\NCCz}{{\ensuremath{\textnormal{NCC}_0}}\xspace}
\newcommand{\NCCzs}{{\ensuremath{\textnormal{NCC}^*_0}}\xspace}
\newcommand{\NCCzsx}[1]{{\ensuremath{\textnormal{NCC}^*_0({#1})}}\xspace}
\newcommand{\NCCzx}[1]{{\ensuremath{\textnormal{NCC}_0({#1})}}\xspace}
\renewcommand{\MPC}{{\ensuremath{\textnormal{MPC}}}\xspace}
\newcommand{\MPCx}[1]{{\ensuremath{\textnormal{MPC}}\textnormal{(}#1\textnormal{)}}\xspace}
\newcommand{\LOCAL}{{\ensuremath{\textnormal{LOCAL}}}\xspace}
\newcommand{\CONGEST}{{\ensuremath{\textnormal{CONGEST}}}\xspace}
\newcommand{\CC}{{\ensuremath{\textnormal{CC}}}\xspace}
\newcommand{\NClin}{{\ensuremath{\NC_{\textnormal{lin}}}}\xspace}
\newcommand{\NCx}[1]{{\ensuremath{\NC^{#1}}}\xspace}
\newcommand{\NClinx}[1]{{\ensuremath{\NC^{#1}_{\textnormal{lin}}}}\xspace}
\newcommand{\NCy}[1]{{\ensuremath{\NC_{\textnormal{#1}}}}\xspace}
\newcommand{\Pclass}{{\ensuremath{\P}}\xspace}
\newcommand{\Lclass}{{\ensuremath{\ComplexityFont{L}}}\xspace}
\title{Simulations between Strongly Sublinear MPC and Node-Capacitated Clique}
\date{}
\author{%
  \begin{tabular}{@{}c@{\hspace{3em}}c@{}}
    Philipp Schneider & Julian Werthmann \\
    \texttt{philipp.schneider@cispa.de} & \texttt{jwerth@mail.upb.de}
  \end{tabular}%
}
\begin{document}

\maketitle

\begin{abstract}
    We study how the strongly sublinear $\MPC$ model relates to the classic, graph-centric distributed models, focusing on the Node-Capacitated Clique ($\NCC$), a bandwidth-parametrized generalization of the Congested Clique. In $\MPC$, $M$ machines with per-machine memory $S$ hold a partition of the input graph, in $\NCC$, each node knows its full neighborhood but can send/receive only a bounded number of $C$ words per round. We are particularly interested in the strongly sublinear regime where $S=C=n^\delta$ for some constant $0 < \delta <1$.
    
    Our goal is determine when round-preserving simulations between these models are possible and when they are not, when total memory and total bandwidth $SM=nC$ in both models are matched, for different problem families and graph classes. %
    On the positive side, we provide techniques that allow us to replicate the specific behavior regarding input representation, number of machines and local memory from one model to the other to obtain simulations with only constant overhead.
    On the negative side, we prove simulation impossibility results, which show that the limitations of our simulations are necessary.

    \ps{emphasize deterministic}

\end{abstract}

\newpage
\tableofcontents
\newpage

\psin{Mention that our simulation result does not require much additional total space (if I recall correctly) in contrast to many fast algorithms in the sublinear regime, which make use of more machines (e.g. the O(log log n) algorithm for triangle detection blows the space by a factor of arboricity). Adding machines is not just additional space, it also means additional communication bandwidth, so this is an argument we could make against those algorihtms that use additional space.}

\psin{Mention that our simulation is fully-scalable, i.e., its round complexity holds even if the local memory per machine is allowed to shrink to \emph{any} constant power of the input size. -- Well that term does not really fit for the MPC simulation in NCC because our simulation runs on the NCC nodes, which do not have a memory bound. Maybe we can make the argument for the other direction?}

\psin{Explain somewhere that we do not only look at labelling problems, but also at other graph problems like subgraph detection, or diameter.}

\psin{what about AMPC?}

\psin{Discuss AMPC to drop some of the assumption for $\NCCz$ simulation in $\MPC$}

\psin{Argue that simulation impossibility of \NCC in \MPC due to space limitations extends to the AMPC model}

\psin{have in the intro/contribution section an additional discussion that all our impossibility for simulating \NCC in \MPC clearly extend to \CC simulation in \MPC}

\section{Introduction}

Distributed computing sits on a spectrum of models that differ in what they treat as scarce: time, bandwidth, or memory. At one end are the classic, graph-centric, synchronous message passing models, \LOCAL and \CONGEST \cite{DBLP:journals/siamcomp/Linial92, Peleg2000Locality}, that received a significant amount of attention in past decades. Here the communication network is the input graph and nodes talk only to their neighbors, one message per round (where \CONGEST also restricts the message size, typically to a single logarithmic size machine word), and the cost of an algorithm is the required number of communication rounds.

These graph-centric distributed models come with some strong assumptions. First, local computation is considered free (which is sometimes exploited, like locally solving \NP-complete problems). Second, the network topology is inherently tied to the graph-problem instance. This is natural for settings where latency is the limiting factor and where we are interested in problem-solutions on the network itself. For example, computing a collision-free communication schedule among neighbors via coloring in $\LOCAL$, or computing routing tables via shortest paths in $\CONGEST$. The main interest in these classic models is to obtain a clear understanding of the (round-)complexity of distributed problems, where the strong modeling assumptions shift the importance to lower bounds.
However, the same assumptions also limit portability of algorithmic ideas to practical settings where local computation matters and the communication infrastructure is decoupled from the problem input.

Contrast this to the other end of the spectrum, with the massively parallel computation or \MPC model \cite{DBLP:conf/soda/KarloffSV10, DBLP:conf/pods/BeameKS13} inspired by Googles Map-Reduce \cite{DBLP:journals/cacm/DeanG08}. Perhaps not surprisingly the \MPC model departs from these graph-centric assumptions. 
The model has $M$ machines, each with memory $S$ (measured in machine words). The total memory $MS$ suffices to hold the input, which is initially partitioned across machine memories. This memory bound of $S$ severely limits the local computation a machine can perform per round. Further, communication is decoupled from the input graph. In each round a machine may contact any others, where number of exchanged words per round is bounded by $S$. Consequently, the challenge shifts from locality of a graph problem to efficient information exchange, which is typical in global problems such as computing connected components, graph diameter or triangle counting.

The parameterization $M$ and $S$ of the \MPC model largely influences algorithmic solutions. For instance, in the \textit{linear regime} of \MPC each node has a relatively large memory of $S = \Theta(n)$, where $n$ is the number of nodes of the input graph, i.e., a machine can store the whole neighborhood of a node. Most recent attention was given to the \textit{strongly sublinear} regime, where $S = n^{\delta}$ for some $0 < \delta<1$, as this setting is a rather natural consequence of huge data sets combined with limited hardware, where the much smaller $S$ forces sharper algorithmic ideas.

Somewhat between the extremes of graph-centric models and \MPC sits the congested clique (\CC) model \cite{DBLP:journals/siamcomp/LotkerPPP05}. Like \MPC, the \CC permits any-to-any communication, but, much like \LOCAL and \CONGEST, the computational entities in \CC equal the $n$ nodes of the input graph and each node initially knows its incident edges. Communication per node in the \CC is equivalent to the \CONGEST model on a complete communication graph, that is, every pair of nodes may exchange one word each round. The \CC model essentially disregards locality and captures scenarios where a physical network exists but bandwidth, not latency, is the bottleneck.\footnote{We remark that the $k$-machine model \cite{DBLP:conf/soda/KlauckNP015}, which allows communication as in the \CC but partitions the input randomly among the $k$ machines, can be seen as intermediate between \MPC and the \CC.}
The \CC model is frequently compared to the \textit{linear regime} of \MPC. In fact many algorithmic ideas between the \CC and linear \MPC are interchangeable and lead to similar results (e.g., \cite{DBLP:conf/podc/DoryFKL21, DBLP:conf/podc/ChangFGUZ19}).

We are interested in formalizing these connections by providing round-preserving simulation results, i.e., showing that an algorithm solving a problem in one model implies an equivalent algorithm for the other with the same round complexity.
An earlier result by \cite{DBLP:journals/tcs/HegemanP15} shows that when the \CC uses only to $\bigO(n)$ memory per node and each node does only polynomial computations per round, it can be efficiently simulated in \textit{linear} $\MPC$ (when given a sufficient number of machines to match the total used bandwidth in the simulated \CC algorithm).
A relatively straight forward simulation equivalence between the \CC and \textit{linear} \MPC is known when \CC nodes are restricted to $\bigO(n)$ memory and linear \MPC to $n$ machines \cite{DBLP:journals/corr/abs-1802-10297}, as these assumptions make the models essentially equal up to applications of Lenzen's routing protocol \cite{DBLP:conf/podc/Lenzen13}.
To the best of our knowledge, there are no \textit{general} simulation results of algorithms between the linear \MPC and the \CC without these restrictions, nor are there proofs of impossibility of such a simulations.

Even less is known about simulation results for the \textit{strongly sublinear} case.
For simulating \CC in the \textit{strongly sublinear} \MPC model, the obvious obstacles for simulation are bandwidth and memory. A \CC node can transmit and store much more per round and do arbitrary computations on their input, and no generic protocol is known for redistributing the data and the corresponding traffic over potentially many small machines without incurring large overhead.\footnote{A result by \cite{DBLP:conf/fsttcs/KothapalliPP20} allows efficient simulation of state-congested, $\alpha$-sparse \CONGEST routines in the strongly sublinear \MPC model. To transfer this result to the \CC in strongly sublinear \MPC we have to restrict communication per \CC node to just $n^{o(1)}$ words in expectation and node memory to $\bigO(n \cdot \polylog \, n)$.}
For simulating the strongly sublinear \MPC in \CC, \MPC algorithms may use $\gg n$ machines, especially on dense input graphs, much more than the $n$ processors available in \CC, and there is no generic technique for simulating that additional parallelism in the \CC.

The main focus of this paper is to bridge this knowledge gap and provide simulation and impossibility results for the \textit{strongly sublinear} \MPC model. Our model of comparison is a generalization of the \CC called the \textit{node capacitated clique} or $\NCC(C)$ model introduced in \cite{DBLP:conf/spaa/AugustineGGHSKL19}, where each node has a bandwidth of $C$ words per round (send \textit{and} receive) it may exchange with any other node. In all other modeling aspects the \NCC equals the \CC model, in fact $\NCC(n)$ is equivalent to the \CC as implied by Lenzen's routing protocol \cite{DBLP:conf/podc/Lenzen13}.

At first glance, $\MPC(M,S)$ and $\NCC(C)$ look deceptively similar. Both allow communication that is not constrained by the input graph’s topology but rather bandwidth restricted by $S$ and $C$ respectively. One might hope that algorithmic results for certain parameter ranges (particularly $MS = nC$) would translate one-to-one on any input graph. Unfortunately they do not. The key tensions are, first, the memory-limitation of local computation in the \MPC versus unlimited computation in \NCC. Second, the representation of the problem input, which is conceptually different in the two models. Third, the allowed parallelism, where \MPC potentially admits a much higher degree as specified by the parameter $M$.

Our aim is to provide a comprehensive study how these differences between $\NCC$ and $\MPC$ affect the feasibility of simulating algorithms across the two models. Our goal is to characterize, as a function of model parameters, problem families, and graph structure, when such simulations exist and when they provably cannot exist. We focus on $\NCC$ because, despite superficial similarities to $\MPC$, their key divergences (local computation, input layout, and available parallelism) mirror the major gaps between $\MPC$ and other graph-centric models ($\LOCAL$, $\CONGEST$, $\CC$). Thus, $\NCC$ serves as a proxy for understanding how modeling choices affect what is possible in $\MPC$ versus classical distributed models. A noteworthy, up-front conclusion is that these simulation questions intersect several core areas of complexity theory (complexity in terms of time, memory, circuit size and depth, and information entropy), with interesting connections to long-standing open problems.

\subsection{Models} 

We consider synchronous message-passing models with identical deterministic finite-state machines (i.e., all machines use the same constant-size program). Time proceeds in rounds, where in each round a machine first receives all messages sent to it in the previous round, then updates its local state, and finally sends new messages. We use $n$ as the main parameter, to specify the size of the problem input and the number of machines (typically, $n$ equals the number of vertices of the input graph). Each machine has a unique identifier (ID) and knows a polynomial upper bound on the ID range. Other than that IDs are arbitrary. Messages have minimum payload of one machine word of $\Theta(\log n)$ bits, enough to carry an ID or an aggregation key and bandwidth is measured in machine words. Model-specific constraints are given in the following definitions.

\paragraph{Node-Capacitated Clique} 
In the $\NCCx{C}$ model \cite{DBLP:conf/spaa/AugustineGGHSKL19} there are $n$ machines with a globally known ID set. Machines are identified with the vertices of the input graph, therefore we refer to them as nodes. In every round a node may send and receive at most $C$ machine words in total, to or from any node in the network (any-to-any). Each node is initially given the set of IDs of its neighbors in the input graph. Nodes are deterministic Turing machines and may perform arbitrary finite local computations on their data, although, it is desirable to keep local computations efficient, e.g., $\poly(n)$.
The $\NCCzx{C}$ model \cite{DBLP:journals/tpds/AugustineCCPSS22} keeps the same communication parameters but each node initially only knows the IDs of its neighbors (not the complete set of IDs) and communication pairs are restricted. Initially a node can message only its neighbors. Whenever it learns a new ID via a message, it may thereafter communicate directly with that node. The \NCCz model is strictly weaker than \NCC, for instance, learning the ID of a distant node has a lower bound logarithmic in the diameter as each round the distance to the farthest ID a given node knows can be at most doubled.

\paragraph{Massively Parallel Computation} In the $\MPCx{M,S}$ model \cite{DBLP:conf/soda/KarloffSV10}, we have $M \in \bigO\big(\poly(n)\big)$ machines, each with local memory of size $S$ (words) and IDs in $\{0, \dots, M-1\}$. For fixed $S$, the input size imposes a lower bound on the number of machines $M$ and keeping total memory $MS$ close to the input size is typically an optimization criterion for \MPC algorithms. (If a problem asks for a large output, then the total memory is lower bounded by the output). 
Machines are modeled as deterministic Turing machines with bounded space of $S$ machine words. This limits the number of states a machine can distinguish and thus limits local computation each round to $n^{\bigO(S)}$ steps (however, keeping local computation to at most $\poly(n)$ is preferable). Machines exchange data via synchronous message passing, at most $S$ words per machine per round. This bound is natural, as the exchanged messages in a given round must fit into machine memory.

\paragraph{Bandwidth Violations} The two models make slightly different provisions when the inbound message capacity of $C$ or $S$ is exceeded. The $\NCCx{C}$\ model of \cite{DBLP:conf/spaa/AugustineGGHSKL19} discards an adversarially chosen subset of messages. By contrast, \cite{DBLP:conf/soda/KarloffSV10} only stipulates that in the $\MPCx{M,S}$ model, reshuffling memories must respect memory bounds, which we interpret in the sense that an algorithm that could cause such a violation is illegal, i.e., not a viable solution to the problem. Allowing resolving capacity violations (even adversarially) potentially allows stronger algorithms than outright rejecting such algorithms, thus the policy of the \MPC model is more restrictive. To provide a level playing field we will apply the more restrictive policy of the \MPC model to both models. That is, only  \NCC or \MPC algorithms that ensure that no bandwidth violations occur, are considered viable solutions to a problem.

\paragraph{Strongly Sublinear Regime} We are specifically interested in the simulation of fast algorithms for $C, S = n^\delta$ for some constant $\delta \in (0,1)$, where we assume $n^\delta \in \mathbb N$. Whenever we write \NCC, \NCCz or \MPC without mentioning $C$ or $S$ or $M$, then this typically refers to the strongly sublinear case with $M$ at least large enough to store the problem input. Note that the sublinear case makes simulation results more challenging than the \textit{linear regime} $S,C \in \Theta(n)$, since a single machine can never learn the neighborhood of a large degree node in $\MPCx{M,S}$, and a large degree node in $\NCCx{C}$ cannot necessarily communicate its information to others within acceptable time. However many of our \textit{simulation impossibility} results extend to these stronger models.

\subsection{Overview and Contributions}

\paragraph*{\MPC Simulation in \NCC} We start by asking to which extent algorithms for the $\MPC$ model can be simulated in $\NCC$ in Section \ref{sec:sim_mpc_in_ncc}. In the easy case, when the input data is already well spread across \NCC nodes, there is a direct simulation of $\MPC(M,S)$ in $\NCC(MS/n)$ with constant overhead.
This immediately yields constant-time sorting and aggregation in strongly sublinear $\NCC$ by reusing standard $\MPC$ techniques (cf. \Cref{sec:mpc_routines}).
Further, it allows the simulation on graphs whose maximum degree is within a constant factor of the average degree.
We then turn to bounded-arboricity graphs, where nodes can have arbitrarily large degree. Our goal is to reassign edges on such bounded-arboricity graphs so that the number of assigned edges is closer to the average, allowing us to fall back on the simulation for well-distributed inputs.

The main obstacle for edge reassignment is that high degree nodes can directly communicate the fact that they have high degree only to a small fraction of their neighbors in constant time with the limited available bandwidth. To enable this reassignment anyway we consider the node identification problem formulated in \cite{DBLP:conf/spaa/AugustineGGHSKL19}: nodes are designated as \emph{learning} or \emph{playing} (or neither), and each learning node must learn all of its playing neighbors. In \cite{DBLP:conf/spaa/AugustineGGHSKL19} a randomized polylogarithmic-round solution is given for the $\NCC(\log n)$ model. We give (in Theorem \ref{thm:identification}) a derandomized algorithm for node identification and generalize it to the strongly sublinear regime, obtaining a constant-round, deterministic node identification algorithm (perhaps of independent interest).

We then use it to compute an edge orientation where the out-degree is bounded by the arboricity (times a small factor). The full simulation proceeds by first reassigning edges with the subroutines above and then emulating the $\MPC$ algorithm on the resulting evenly distributed instance. The subsequent theorem gives a simplified summary of the simulation result. We emphasize that the full simulation incurs only a constant \textit{additive} overhead.
The bound on the arboricity can be framed as a lower bound on the total memory $MS \geq a \cdot n^{1+\delta}$ which is close to what is needed to store the input anyway (a $n$-node graph with arboricity $a$ has  between $\Omega(n+a^2)$ and  $\bigO(a \cdot n)$ nodes and edges, see Lemma \ref{lem:arb_min_edges}).
For more details, in particular the dependence on the constant $\delta > 0$ used in $S=n^\delta$, see Theorem \ref{thm:ncc_sim_mpc}.

\begin{theorem*}[cf.\ Thm.\ \ref{thm:ncc_sim_mpc}]
    Any $\MPCx{M,S}$-algorithm, where $S = n^\delta$ for any constant $\delta>0$, that solves any problem on graphs with bounded arboricity $a$ with $MS \geq a \cdot n^{1+\delta}$ can be simulated in the $\NCC\big(\tfrac{MS}{n}\big)$ model with constant \emph{additive} overhead.
\end{theorem*}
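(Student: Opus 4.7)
My strategy is to first redistribute the edges of the input graph so that each $\NCC$ node holds at most $\bigO(a)$ edges, after which the input is evenly spread and I can invoke the easy-case simulation of $\MPCx{M,S}$ in $\NCCx{MS/n}$ with constant overhead. The target out-degree $\bigO(a)$ fits comfortably within the bandwidth $C = MS/n \geq a \cdot n^\delta$, so the redistribution itself does not violate any capacity bound during execution.

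To realize the redistribution, I will compute an acyclic edge orientation with out-degree $\bigO(a)$, exploiting that a graph of arboricity $a$ admits a partition of its edges into $a$ forests, and hence such an orientation via the standard peeling argument. The obstacle is that a node whose degree far exceeds $C$ cannot inform all of its neighbors directly about its status in constant rounds. This is precisely the situation the node identification problem addresses: designate nodes whose current degree exceeds a threshold $\tau$ as \emph{playing} and the rest as \emph{learning}, and have every learning node discover its playing neighbors. Invoking the constant-round, deterministic node identification algorithm of \Cref{thm:identification}, every learning node orients all incident edges to playing neighbors away from itself, after which the playing nodes become low-degree in the residual graph. A constant number of iterations with geometrically shrinking threshold $\tau$ suffices to orient every edge so that each node retains out-degree $\bigO(a)$.

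Once the orientation is in place, each $\NCC$ node stores only its $\bigO(a)$ outgoing edges, which gives an evenly distributed instance of the form required by the easy-case simulation. Since the assumption $MS \geq a \cdot n^{1+\delta}$ together with $S=n^\delta$ implies $M \geq a \cdot n$, each $\NCC$ node can be assigned $M/n \geq a$ virtual $\MPC$ machines, and the per-machine bandwidth $S$ aggregates to the per-node bandwidth $C = MS/n$. A round of the simulated $\MPC$ algorithm is then executed by having each $\NCC$ node run the local steps of its assigned virtual machines and route the outgoing messages to the nodes hosting the respective destination machines, using the $\NCC$ versions of the sorting and aggregation primitives from \Cref{sec:mpc_routines} to resolve the any-to-any traffic within constant overhead.

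The hard part is the orientation step: designing a constant-round procedure that lets high-degree nodes relinquish their incident edges to lower-degree neighbors under the bandwidth cap. This is exactly what node identification bypasses, since it lets each low-degree learner efficiently collect the identities of its high-degree neighbors rather than forcing the high-degree nodes to broadcast outward through a bottlenecked channel. Everything else — the easy-case simulation on a balanced input, the explicit dependence on $\delta$ absorbed into the constant additive overhead, and the bookkeeping of machine-to-node assignments — is routine once the subroutines developed earlier in the paper are in hand.
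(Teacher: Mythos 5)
Your high-level plan matches the paper's: compute an edge orientation via node identification, redistribute edges so each node holds only its out-edges, then invoke the easy-case simulation (Lemma~\ref{lem:NCC_dominates_MPC_small_inputs}). Where your argument breaks is in the claim that you can achieve out-degree $\bigO(a)$ in a \emph{constant} number of iterations. The Nash-Williams peeling argument that gives out-degree $\bigO(a)$ removes only a constant fraction of nodes per phase (since the residual graph retains arboricity $\le a$ and hence average degree $<2a$, so a constant fraction of nodes exceed any $\bigO(a)$ threshold), and thus needs $\Theta(\log n)$ phases, not $\bigO(1)$. There is no ``geometrically shrinking threshold'' variant that buys out-degree $\bigO(a)$ in constant rounds; this is an inherent trade-off.

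The paper resolves this tension by accepting a much larger out-degree. It peels with a polynomial threshold $\beta = \lfloor n^{\delta}/2\rfloor$, so each phase removes all but a $1/\beta$ fraction of the remaining nodes and the number of phases collapses to $\log_\beta n = \bigO(1/\delta)$, a constant. The price is out-degree up to $2\beta a \approx a\cdot n^\delta$, which is exactly $MS/n$ under the hypothesis $MS \ge a\cdot n^{1+\delta}$ -- i.e., the out-degree is pushed right up against the bandwidth bound $C = MS/n$, not kept at $\bigO(a)$ with slack. That still yields a valid input representation for Lemma~\ref{lem:NCC_dominates_MPC_small_inputs}, because that lemma only requires at most $MS/n$ words per node, not $\bigO(a)$. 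So your overall architecture is sound and your intuition that ``the target fits comfortably within $C$'' is right, but you must weaken the orientation target from $\bigO(a)$ to $\bigO(a\cdot n^\delta)$ precisely so that the orientation routine itself (Lemma~\ref{lem:ncc_orientation}, applied with $\beta = \lfloor n^\delta/2\rfloor$) terminates in $\bigO(1/\delta^4)$ rounds; otherwise the additive overhead is $\Theta(\log n)$, not constant.
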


\ps{We could mention the 1-versus-2 cycles conjecture here, in the sense that if the conjecture were false we could distinguish 1 cycle from 2 cycles in $o(\log n)$ rounds in the strongly sublinear \NCC model, however $O(\log n)$ pointer doubling is the best known solution here. However it is a bit awkward, since we would have to assume that there is a sublogarithmic \MPC algorihtm with $MS = n^{2-\Omega(1)}$, which is strictly speaking not a restriction that the conjecture makes. I'll leave it out for now.}

This simulation theorem allows us to automatically
transfer various known algorithms for the strongly sublinear $\MPC(M,S)$ model to \NCC. The resulting \NCC algorithms are strongly sublinear if $MS \leq n^{2-\eps}$ for some constant $\eps >0$. Note that the fastest \MPC algorithms often make similar or stronger assumptions on the arboricity than we require in the theorem ($a \leq MS/n^{1+\delta}$). We list a selection of implied results for the sublinear \NCC model (for constant $\delta>0$):

\begin{itemize}
    \item Maximal independent sets and maximal matchings on graphs with bounded arboricity $a = \polylog (n)$ can be computed in $\bigO(\log\log n)$ rounds in the $\NCC(n^\delta)$ model (implied by \cite{DBLP:conf/wdag/GhaffariGJ20}).
    \item $(\Delta + 1)$-colorings on graphs with bounded arboricity $a \leq n^{\delta}$ can be computed in $\bigO(\log\log\log n)$ rounds in the $\NCC(n^{2\delta})$ model (implied by \cite{DBLP:conf/podc/CzumajDP21})
    \item Triangle listing, i.e., the problem of outputting the set of all triangles in a graph with arboricity $a \leq n^\delta$, can be solved in constant time in the $\NCC(n^{2\delta})$ model (implied by the constant round $\MPC(M,n^\delta)$ algorithm with total space $MS \in \bigO(a|E|) \subseteq \bigO(a^2n)$ due to \cite{DBLP:conf/podc/Liu024})
\end{itemize}

\paragraph{Impossibility of \MPC Simulation in \NCC} On the surface, it might seem obvious that we can lift results from the seemingly more restrictive \MPC model to \NCC when matching total memory with total bandwidth $MS=nC$. However this is false in general!
We demonstrate in Section \ref{sec:sim_mpc_in_ncc_impossible} that the simulation theorem above is in fact tight in terms of the implied arboricity bound (up to a small factor of roughly $n^\delta$). %
The reason lies in the way the input is represented and the amount of parallelism that is possible in each model. We design a graph that has a sparse part and a randomized dense part, on which we consider the triangle listing problem. In $\MPCx{M,S}$ this problem can be solved in constant time when given enough machines, such that we can assign one machine to the task of checking whether one specific triangle exists in the randomized, dense part without giving each machine too many such tasks.

This parallelism cannot be replicated in $\NCCx{C}$, even when $nC = MS$. Intuitively, the dense part of the graph ``traps'' information about existing triangles at a few nodes and information theoretic arguments force at least one node in the dense region to communicate a large amount of information before all triangles can be listed, which yields a round lower bound. By carefully balancing the size of the sparse and dense parts, we obtain near-tight impossibility results for different arboricity ranges given in Theorem \ref{thm:sim_mpc_in_ncc_impossible1} and  \ref{thm:sim_mpc_in_ncc_impossible2}, which we summarize in the subsequent theorem. %
Compared to our positive simulation result above the threshold $MS \ge an^{1+\delta}$, the remaining gap is only a multiplicative factor $wn^{\delta}$, where $w$ should be thought of as a slightly super-logarithmic term that accounts for the length $\bigO(\log n)$ of a machine word.

\begin{theorem*}[cf.\ Thms.\ \ref{thm:sim_mpc_in_ncc_impossible1}, \ref{thm:sim_mpc_in_ncc_impossible2}]
    There is a graph problem $\Pi$ that for $MS \leq \tfrac{a \cdot n}{w}$ for any $w \in \omega(\log n)$ can be solved in constant rounds in $\MPCx{M,S}$, but has a $\omega(1)$ round lower bound in $\NCC\big(\tfrac{MS}{n}\big)$.
\end{theorem*}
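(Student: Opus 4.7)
The plan is to exhibit a concrete graph problem $\Pi$---triangle listing on a tailored family of $n$-vertex, arboricity-$a$ graphs---together with a hard instance that separates the two models. The hard instance concentrates essentially all combinatorial content in a small, dense, random-looking region: \MPC, with the ability to route that entire region into a handful of machines in constant time, solves the problem trivially, whereas in \NCC the per-node bandwidth $C = MS/n$ is too small to deliver to any single node the information it needs to certify its share of the triangles.

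The construction I would use proceeds as follows. Take $k = \Theta(a)$ and let $H$ be a graph on $k$ vertices drawn from the distribution $G(k,1/2)$; the input graph $G$ is the disjoint union of $H$ with $n - k$ isolated vertices. Then $H$ has $\Theta(a^2)$ edges and $\Theta(a^3)$ triangles with high probability, and the arboricity of $G$ equals that of $H$, namely $\Theta(a)$. The \MPC upper bound is then routine: in the parameter regime of interest one has $MS \ge |E(G)|$, so a constant number of standard \MPC sorting rounds gather the edges of $H$ onto a few machines (one machine when $a^2 \le S$, or the constant-round, arboricity-aware triangle-listing algorithm of Liu~\cite{DBLP:conf/podc/Liu024} otherwise); each machine then enumerates and outputs its share of the triangles.

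The main content is the \NCC lower bound, which I would derive by an adversarial information-theoretic argument. Fix any deterministic $\NCCx{C}$ algorithm $\mathcal A$ with $C \le a/w$, and suppose it runs in $R$ rounds. For each graph $H$ in our family, each of the $\Omega(a^3)$ triangles of $H$ must be reported by at least one of its three endpoints, so by pigeonhole some vertex $v^\star = v^\star(H)$ is responsible for reporting $\Omega(a^2)$ triangles. Each such triangle $\{v^\star, u, w\}$ is witnessed by a distinct non-incident edge indicator $X_{uw}$ of $H$, which is independent of $v^\star$'s initial knowledge (its own neighborhood). Considering the set of graphs that agree with $H$ on the incidences of $v^\star$ but vary on these $\Omega(a^2)$ non-incident edges, the transcript received by $v^\star$ must distinguish all $2^{\Omega(a^2)}$ such graphs, forcing the transcript length to be $\Omega(a^2)$ bits. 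Since $v^\star$ receives at most $R \cdot C \cdot \bigO(\log n)$ bits in $R$ rounds, we obtain $R = \Omega\bigl(a^2 / (C \log n)\bigr) = \Omega(aw / \log n)$, which is $\omega(1)$ whenever $w \in \omega(\log n)$.

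The main obstacle is making the entropy step rigorous: one has to argue that every non-incident edge bit $v^\star$ needs in order to certify its reported triangles must be carried by a word physically arriving at $v^\star$, and cannot be recovered for free from indirect communication patterns elsewhere in the network. The clean way is a cut-based entropy argument across the ``port'' of $v^\star$: in each round the posterior entropy (conditioned on $v^\star$'s full transcript so far) of the $\Omega(a^2)$ relevant edges drops by at most $\bigO(C \log n)$, and must be zero by round $R$ for correctness, which forces $R \cdot C \cdot \bigO(\log n) \ge \Omega(a^2)$. The two distinct theorem statements referenced in the excerpt (Thms.~\ref{thm:sim_mpc_in_ncc_impossible1} and~\ref{thm:sim_mpc_in_ncc_impossible2}) can then be recovered by tuning $k$, the sparse padding, and---for very large $a$---replacing the $G(k,1/2)$ gadget by a more structured dense subgraph, so that the implied gap $\Omega(aw/\log n) = \omega(1)$ persists across different arboricity regimes.
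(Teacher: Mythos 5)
Your approach has the right flavor---triangle listing on a graph whose combinatorial content is concentrated in a small $G(k,1/2)$-like region, an $\MPC$ upper bound by gathering the dense part, and an information-theoretic $\NCC$ lower bound---but there is a genuine gap in the lower-bound step, and your construction also deviates from the paper's in a way the authors care about.

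The gap is in the pigeonhole. You write that each of the $\Omega(a^3)$ triangles ``must be reported by at least one of its three endpoints,'' and conclude that some single vertex $v^\star$ of the dense part reports $\Omega(a^2)$ triangles incident to it. But the triangle-listing problem as defined in the paper only requires each triangle to be reported by \emph{some} node, not by an endpoint; in $\NCC$ every node can talk to every node, so even the isolated vertices in your graph can receive information and report triangles. With all $n$ nodes available, pigeonhole only gives a node reporting $\Omega(a^3/n)$ triangles, which is $\Theta(1)$ in the regime $n=\Theta(a^3)$ that the paper actually uses, and the argument collapses. The paper circumvents this with a two-case analysis (Lemma~\ref{lem:ncc_lower_bound_lollipop}): either the $n-a$ sparse vertices collectively report half the triangles, in which case the $a$ dense vertices must send out $\Omega(a^2)$ bits through a cut with throughput $O(aC\log n)$ per round; or some dense vertex reports $\Omega(a^2)$ triangles, in which case one invokes Rivin's edge-count bound (Lemma~\ref{lem:edges_of_triangles}) on $\Omega(a^2)$ triangles that need not all share a common vertex, yielding $\Omega(a^{4/3})$ distinct edges. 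Case~1 dominates and gives the $\Omega(a/C\log n)$ bound; a single-vertex receive-bandwidth argument at $v^\star$ alone does not.

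Two further differences worth noting. First, your instance is a disjoint union of $H$ with isolated vertices, hence disconnected; the paper deliberately attaches the dense part to a long path (a ``lollipop'') so that all lower-bound instances are connected, explicitly to avoid entangling the result with the 1-vs-2-cycles conjecture. Second, your $\MPC$ upper bound invokes the general arboricity-aware triangle-listing algorithm, whereas the paper's Lemma~\ref{lem:mpc_triangle_listing} takes the simpler route of hard-coding the candidate triangle set $\mathcal T=\binom{\text{last }a\text{ path nodes}}{3}$, which is known in advance because the dense part's location in the ID space is fixed; this keeps the upper bound elementary and avoids any dependence on $\MPC$ locating the dense region. Fixing the connectedness and replacing your single-vertex argument with the paper's cut/two-case analysis would bring your proposal in line with the paper's proof.
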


The theorem above crucially relies on a problem with a large output to create an information bottleneck. Therefore, we further ask whether the simulation of \MPC algorithms in \NCC could be feasible under much better conditions as the proposed simulation (above or Theorem \ref{thm:ncc_sim_mpc}) for \emph{decision} problems? 

Unfortunately even for decision problems with computationally limited machines and nodes, no (significantly) better simulation is possible. For proving this we have to move away from purely information-theoretic arguments, instead we base our argument on a conjecture in circuit complexity. 
We show that simulation of $\MPC(M,S)$-decision algorithms in the polylogarithmic regime $S=\polylog(n)$ in $\NCC(C)$ with bandwidth $C = MS/n^{1+\eps}$, is impossible for any $\eps >0$, unless any decision problem $\Pi$ that is decided by a \textit{polynomial}-size, polylogarithmic-depth circuit, (i.e., $\Pi \in \NC$), can also be decided by a \textit{near-linear} size (i.e., size $n \cdot \polylog(n)$), polylogarithmic-depth circuit (i.e., $\Pi \in \NC_{\text{near-lin}}$), which is an open question. The following theorem summarizes the result (see Theorem \ref{thm:sim_mpc_in_ncc_impossible3} for more details).

\begin{theorem*}[cf.\ Thm.\ \ref{thm:sim_mpc_in_ncc_impossible3}]
    Assume that machines and nodes are polynomially bounded. Then there is a decision problem $\Pi$ that has a $\MPCx{M,S}$ algorithm with $MS\ge n^{1+\varepsilon}$ that solves it in $r$ rounds, but for any constant $\varepsilon>0$ and any fixed $t=\polylog(n)$ any  $\NCCx{MS/n^{1+\varepsilon}}$ algorithm that solves $\Pi$ takes more than $r \cdot t$ rounds, unless $\NCy{near\text{-}lin}=\NC$.
\end{theorem*}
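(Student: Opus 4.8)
The plan is to build a decision problem $\Pi$ that embeds an arbitrary $\NC$ computation, but which becomes trivially parallelizable in $\MPC$ once enough machines and total memory are available, and then argue that a round-efficient $\NCC$ simulation would collapse the circuit complexity. Concretely, I would take a language $L\in\NC$ computed by a polynomial-size, polylog-depth circuit family. The input graph for $\Pi$ on $n$ vertices is designed so that its edge set encodes $\Theta(n^{1+\eps})$ bits of information (e.g.\ a blow-up where $n^{\eps}$ dedicated ``data'' vertices have neighborhoods among a pool of $n$ vertices, so each such vertex carries $\Theta(n\log n)$ bits), and $\Pi$ accepts iff the encoded bit-string belongs to $L$. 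First I would verify the $\MPC$ side: with $MS\ge n^{1+\eps}$ the entire input fits in total memory, so in $\bigO(1)$ rounds the machines gather the encoded string onto a single designated machine (using the constant-round sorting/aggregation primitives referenced in \Cref{sec:mpc_routines}), and since that machine is a Turing machine with $\poly(n)$ space it evaluates the $\NC$ (indeed polynomial-time) circuit locally; hence $r=\bigO(1)$ rounds suffice.

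Next, the core reduction: suppose toward a contradiction that some $\NCCx{MS/n^{1+\eps}}$ algorithm $\mathcal A$ solves $\Pi$ in $r\cdot t$ rounds with $t=\polylog(n)$ and all nodes polynomially bounded. I would unroll $\mathcal A$ into a circuit. Each node runs a $\poly(n)$-time Turing machine per round, so one round of one node's local computation is a circuit of size $\poly(n)$ and depth $\polylog(n)$ (by the standard simulation of time-$T$ TMs by depth-$\poly(\log T)$, size-$\poly(T)$ circuits, i.e.\ $\TIME(\poly)\subseteq\NC$-style conversion is too lossy in depth — instead I use that a $\poly(n)$-time computation has a $\poly(n)$-size, $\bigO(\poly(\log n))$-depth circuit only if it is in $\NC$, which is not given). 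This is the subtle point and I address it below. Stacking $n$ nodes $\times$ $r\cdot t=\polylog(n)$ rounds, and wiring the per-round message routing (bandwidth $C=MS/n^{1+\eps}$ words per node per round, so the total communication volume per round is $n\cdot C = MS/n^{\eps}$ words) as a fixed interconnect, yields one big circuit deciding $\Pi$. Its depth is (number of rounds) $\times$ (per-round depth) $=\polylog(n)$, so it lands in polylog depth. For the size bound I would charge $\poly(n)$ gates per node-round, but crucially observe that the \emph{width} that actually matters is governed by total bandwidth: the state that must persist between rounds is only what a node can have received, which over $\polylog(n)$ rounds is $\polylog(n)\cdot C$ words per node, hence the ``carried'' information is $n\cdot\polylog(n)\cdot C = MS/n^{\eps}\cdot\polylog(n) = n^{1+\eps}/n^{\eps}\cdot\polylog(n)=n\cdot\polylog(n)$ bits total. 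This near-linear information width is what I would leverage to argue the resulting circuit can be re-balanced into a \emph{near-linear size}, polylog-depth circuit for the encoded instance of $L$, giving $L\in\NCy{near\text{-}lin}$; since $L\in\NC$ was arbitrary, $\NCy{near\text{-}lin}=\NC$, contradiction.

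The main obstacle — and the step I expect to require the most care — is the ``local computation is cheap'' issue: an $\NCC$ node may do arbitrary $\poly(n)$-time local computation each round, and a $\poly(n)$-time computation is \emph{not} known to have polylog-depth circuits, so naively converting node computation to circuitry blows the depth past $\polylog(n)$ and the argument fails. Two ways I would try to push through this: (i) restrict attention, in defining $\Pi$, to an $\NCC$ variant (or an additional hypothesis, matching the theorem's ``nodes are polynomially bounded'' in a circuit-depth sense) so that per-round local work is itself in $\NC$ — but the cleaner route is (ii) a \emph{padding/indirection} trick: make the hard part of $\Pi$ be purely about \emph{moving} the $\Theta(n^{1+\eps})$ input bits to where the circuit for $L$ can see them, and argue that any $\NCC$ algorithm finishing in $r\cdot t=\polylog(n)$ rounds with per-node bandwidth $C$ can only have caused $n\cdot\polylog(n)\cdot C$ bits of data to ever leave their originating nodes; so the \emph{only} node that can actually evaluate the circuit for $L$ sees at most $n\cdot\polylog(n)\cdot C \ll n^{1+\eps}$ bits — unless the computation of $L$ is itself realized in a distributed, low-depth fashion, which is exactly a near-linear-size $\NC$ circuit. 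Formalizing ``the only place $L$ can be decided is where the bits are, and getting the bits there is a near-linear-size polylog-depth circuit'' is the delicate combinatorial/information-theoretic bookkeeping; everything else (the $\MPC$ upper bound, the round-to-depth accounting, the final contradiction with the conjecture $\NCy{near\text{-}lin}\ne\NC$) is routine.
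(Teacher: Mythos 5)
The high-level plan (encode an $\NC$ language as a graph problem, show that a round-efficient $\NCC$ simulation of the natural $\MPC$ algorithm would compress the circuit, iterate) is the same as the paper's. But your concrete instantiation has a gap that kills it, and you noticed the gap yourself without finding the fix.

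\textbf{The encoding breaks the circuit-depth argument.} You propose data vertices that each carry $\Theta(n\log n)$ bits of the input string in their neighborhood. Then an $\NCC$ node holding one of these neighborhoods has $\Theta(n)$ words of \emph{initial input}, and, being $\Pclass$-bounded, may do $\poly(n)$ local work on that input in a single round. Converting that per-round Turing machine into a circuit gives size $\poly(n)$ and depth $\poly(n)$ — not $\polylog(n)$ — unless you already know $\Pclass\subseteq\NC$. Your attempted width argument also fails for the same reason: you argue the state that persists is only ``what a node can have received'' over $\polylog(n)$ rounds, but that ignores the $\Theta(n^{1+\eps})$ bits of initial input that the encoding places at the nodes from the start. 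The paper avoids both problems by using a \emph{constant-degree} encoding of $w$ into the graph (bits of $w$ become chord edges of a path), so that each node has $\bigO(1)$ words of initial input. Then, together with $C\in\bigO(1)$ bandwidth and $r\cdot t=\polylog(n)$ rounds, every node ever sees only $\polylog(n)$ words; $\Pclass$-boundedness turns each per-round local step into a $\polylog(n)$-size, $\polylog(n)$-depth circuit (Lemma \ref{lem:ncc_to_circuit}), and the $n$ nodes give total circuit size $n\cdot\polylog(n)$. This is the single insight your proposal is missing.

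\textbf{Two further issues.} First, your $\MPC$ upper bound ``gather the entire encoded string onto one machine and evaluate the circuit locally'' requires $S\ge n^{1+\eps}$, contradicting the strongly sublinear regime the theorem is about; the paper instead distributes the circuit's gates over $M=s/n^{\eps}$ machines with $S=2n^{\eps}$ memory and simulates one layer per round, giving $r=\polylog(n)$ (not $\bigO(1)$). Second, you never address $\Lclass$-uniformity of the resulting circuit: a naive ``unroll the algorithm'' construction has data-dependent wiring and hence is not a valid $\NC$ circuit family. The paper first makes the communication pattern input-oblivious by routing through a fixed sorting network (Lemma \ref{lem:mpc_oblivious}, Corollary \ref{cor:ncc_oblivious}), which is what makes the produced family uniform. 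Finally, the paper iterates the reduction $\bigO(c/\eps)$ times to bring a size-$m^{c}$ circuit all the way down to $m\cdot\polylog(m)$; a single application only wins a factor $m^{\eps}$.
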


On the surface, conditioning the impossibility result on $\NCy{near\text{-}lin} \neq \NC$ could be seen as weak foundation, since so far no problem in $\NC$ was found that does (provably) not admit a linear-size, polylog-depth circuit.\ps{maybe say that there are candidates though} Further, it may occur obvious that we cannot simulate any given \MPC algorithm in \NCC with only polylog slowdown if we have $n^\eps$ less total bandwidth available in the latter than in the former.
However, and perhaps surprisingly, if we would be able to give  \textit{unconditional} impossibility result for simulating $\MPCx{M,S}$-decision algorithms in $\NCCx{C}$ with polylog slowdown (even with large total memory $MS \gg nC$ but small $S$), then such a result would imply \smash{$\NC \neq \NC_{\text{near-lin}}$} (see Theorem \ref{thm:sim_mpc_in_ncc_impossible4}). Specifically, it would imply that there exists a problem in $\NC$ that has no near-linear size, polylog-depth circuit (which would be rather upsetting). This suggests that \textit{unconditional} impossibility of efficient simulations of decision problems in our setting may be out of reach at present but underscores how tightly these simulation questions are linked to open problems in circuit complexity.

From a technical perspective, our impossibility results for decision problems depend on transformations between algorithms for distributed models and uniform circuit families. In particular, we show how an $\MPC$ or $\NCC$ algorithm can be turned into a uniform circuit family. Loosely speaking, the local computation of each node in each round becomes a small circuit, and these circuits are connected according to the algorithm's message pattern. The main obstacle is that this communication may depend on the actual problem input, whereas a circuit's wiring may depend only on the input length. We overcome this by routing all messages through a fixed sorting network of near-linear size, yielding an \textit{input-oblivious} schedule, which guarantees the uniformity condition of the transformed circuit (i.e., the circuit can be reconstructed by a log-space Turing machine).

\paragraph{Impossibility of \NCCz Simulation in \MPC} 
Although the \MPC model has an edge over \NCC for problems where raw parallelism and an even input distribution help, the per-machine memory bound makes simulating $\NCC$ algorithms challenging. For this reason, in Section \ref{sec:sim_mpc_in_ncc_impossible}, we first identify conditions that prohibit the simulation of \NCC algorithms in the \MPC model. 

We start by showing that even algorithms for the weaker $\NCCz$ cannot be simulated in $\MPC$ unless the total memory $MS$ is unreasonably large. We consider the problem of comparing regular expressions with squaring, which is known to have exponential space complexity due to \cite{MeyerStockmeyer1972}. We then turn this problem into a graph problem and show that any \MPC algorithm that solves the graph problem with total space $MS$ below a certain exponential threshold would imply a Turing machine that violates the bound by \cite{MeyerStockmeyer1972}. In contrast, the same problem is solvable in $\NCCz$ due to unbounded local memory. See Theorem \ref{thm:sim_nccz_in_mpc_infeasible} for more details.

\begin{theorem*}[cf.\ Thm.\ \ref{thm:sim_nccz_in_mpc_infeasible}]
    There is a graph problem $\Pi$ that is solvable in the $\NCCzx{1}$ model and there exists $\varepsilon > 0$ such that $\Pi$ is unsolvable in $\MPCx{M,S}$ for $MS \leq (1+\varepsilon)^n$.
\end{theorem*}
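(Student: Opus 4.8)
The plan is a two-sided argument. On the easy side I will exhibit an $\NCCzx{1}$ algorithm that works because $\NCCz$ nodes have unbounded local memory and computation; on the hard side I will transfer a \emph{sharp} exponential space lower bound from a classical $\EXPSPACE$-hard problem to $\MPC$ by simulating an $\MPC$ computation on a space-bounded Turing machine. For the $\NCCzx{1}$ side, recall that every problem instance is a \emph{connected} labelled graph on $n$ vertices whose IDs lie in a commonly known polynomial range. Each node can then reconstruct the whole labelled input by ordinary flooding: starting from its initially known neighbor IDs it repeatedly forwards every vertex and edge it has learned, and since it may contact any node whose ID it has seen, the known subgraph grows monotonically and — by connectivity — stabilizes to all of $G$ after a finite number of rounds (the unit bandwidth merely serializes these messages, it does not obstruct them). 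After a fixed, uniformly computable round bound $R(n)$ that is guaranteed to suffice, every node holds $G$ and applies the unbounded local computation that decides $\Pi$ from the labelled graph. Hence $\Pi \in \NCCzx{1}$ as soon as $\Pi$ is a decidable function of the labelled graph, which the construction below guarantees.

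To choose $\Pi$, I would encode instances of the inequivalence problem for regular expressions with squaring over $\{0,1\}$, for which \cite{MeyerStockmeyer1972} gives a constant $\varepsilon_0>0$ such that no Turing machine decides length-$m$ instances in space $(1+\varepsilon_0)^m$, for infinitely many $m$. Map a length-$m$ instance $x$ (coding a pair $(E_1,E_2)$ of regular expressions with squaring) to a connected graph $G_x$ on $n\le c\,m$ vertices for a fixed constant $c$ — for instance a path $v_0v_1\cdots v_m$ with a distinguished marker gadget at $v_0$ and, for each $i\ge1$, a small gadget at $v_i$ whose presence encodes the bit $x_i$. This encoding is logspace computable and logspace decodable, and the set of graphs of the form $G_y$ is logspace recognizable. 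Define $\Pi$: given a graph $G$, accept iff $G=G_y$ for some string $y$ and $y$ codes regular expressions with squaring $(E_1,E_2)$ with $L(E_1)\neq L(E_2)$; otherwise reject. The $\NCCzx{1}$ algorithm above solves $\Pi$, since recognizing $G_y$, decoding $y$, and deciding inequivalence are finite local computations.

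For the $\MPC$ impossibility, the structural fact I rely on is that any $\MPCx{M,S}$ computation can be run on a Turing machine in space $\bigO(MS)$: store the $M$ machine states ($\bigO(MS)$ words, which is consistent at every round because a legal $\MPC$ algorithm never overflows a memory bound), and simulate one synchronous round by, for each target machine $j$, iterating over all source machines $i$, recomputing $i$'s outgoing messages from its stored state (using only $\bigO(S)$ working space and $n^{\bigO(S)}$ time, both harmless for a space-bounded machine), gathering the $\le S$ words addressed to $j$, and then computing and storing $j$'s new state. Now suppose, for contradiction, that some $\MPCx{M,S}$ algorithm $A$ solves $\Pi$ with $MS\le(1+\varepsilon)^n$. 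Build a Turing machine $T$ for regexp inequivalence: on input $x$ of length $m$, it simulates $A$ on $G_x$, recomputing any needed bit of $G_x$ and of the canonical initial $\MPC$ configuration on demand in logspace, runs the $\bigO(MS)$-space simulation above, and outputs $A$'s answer. Then $T$ uses $\bigO\big((1+\varepsilon)^{n}\big)=\bigO\big((1+\varepsilon)^{cm}\big)$ space, which is $<(1+\varepsilon_0)^m$ for all large $m$ once $\varepsilon$ is chosen with $(1+\varepsilon)^c<1+\varepsilon_0$ (e.g.\ $\varepsilon=\tfrac12\big((1+\varepsilon_0)^{1/c}-1\big)$, the multiplicative $\bigO(1)$ and the lower-order logspace overhead being absorbed for large $m$). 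This contradicts \cite{MeyerStockmeyer1972}, so no such $A$ exists, which is exactly the statement for this $\varepsilon$.

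I expect the delicate points to be: (i) justifying the $\MPC\to\mathrm{SPACE}(\bigO(MS))$ simulation in full generality, in particular that the "illegal if it can overflow" convention of the $\MPC$ model genuinely guarantees that every machine holds at most $S$ words in every round of $A$ (so the simulation space stays $\bigO(MS)$), and handling the initial input layout uniformly; and (ii) invoking the \emph{sharp} form of the Meyer--Stockmeyer bound — the statement needs space $\ge(1+\varepsilon_0)^m$, not merely $\EXPSPACE$-hardness — which is why the encoding $x\mapsto G_x$ must be strictly linear in $m$ so that the exponent is not diluted. The remaining ingredients (connectivity of $G_x$, logspace (de)coding, and termination of the $\NCCzx{1}$ flooding with an explicit round bound $R(n)$) are routine.
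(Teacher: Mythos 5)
Your proposal follows essentially the same route as the paper: encode an instance of regex-inequivalence-with-squaring (Meyer--Stockmeyer) as a connected structural graph, observe that an $\NCCzx{1}$ node can collect the whole graph and decide locally, and derive the $\MPC$ impossibility by simulating the $\MPC$ computation on a space-$\bigO(MS\log n)$ Turing machine and contradicting the exponential space lower bound. The paper's choice of encoding (a path of length $2n$ with chords $\{v_i,v_{n+i}\}$ and a terminal triangle to break symmetry) is a concrete instantiation of your ``marker-gadget'' construction, and the paper's Lemma~\ref{lem:nccz_solvability} does the information-gathering via leader election and a round-robin converge-cast rather than raw flooding; your remark that ``unit bandwidth merely serializes these messages'' is the right intuition but understates the care needed (careless flooding can cause several neighbors to send to the same node in one round, violating the $C=1$ receive bound, so the scheduling must be made explicit --- a round-robin rule as in the paper's lemma suffices). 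You also track the exponent more explicitly than the paper (choosing $\varepsilon=\tfrac12\big((1+\varepsilon_0)^{1/c}-1\big)$), which is a fine and slightly cleaner way to close the contradiction. The only other small point is that you call the instance a ``labelled graph'' --- the theorem requires the problem to be on \emph{unlabeled} graphs --- but your $\Pi$ is in fact defined structurally on $G_y$, so this is just loose wording, not a gap.
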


Even when total space $MS$ is unbounded, we show that there exists a problem that can be solved instantly in $\NCCz$ (without any communication) but takes much longer in $\MPC$ unless machines violate some standard cryptographic assumptions. We define a multi-argument function $h$ with cryptographic hardness that is inherently non-distributive, i.e., there is no efficient way to decompose its evaluation into independent subcomputations whose results can be combined. In $\MPC$, a machine can hold only a few arguments for $h$ at once, evaluating $h$ therefore forces essentially sequential processing with inter-machine communication after each step. Any attempt to parallelize would require machines to predict missing inputs or intermediate states, effectively breaking the security properties of $h$. For more details, see Theorem \ref{thm:sim_nccz_in_mpc_inefficient}.

\begin{theorem*}[cf.\ Thm.\ \ref{thm:sim_nccz_in_mpc_inefficient}]
There exists a graph problem that is solvable in $0$ rounds in $\NCCz(1)$ but requires $\Omega(n/S)$ rounds in $\MPC(M,S)$ for any $S=n^\delta$ with $0<\delta<1$ and arbitrarily large $M$, unless a machine breaks cryptographic security assumptions.
\end{theorem*}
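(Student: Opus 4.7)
The plan is to construct a graph problem whose solution reduces to evaluating an inherently sequential cryptographic chain on the IDs of a node's neighborhood, using a star to expose the asymmetry: a single $\NCCzx{1}$ node has the complete input in its initial view and can compute the chain with unbounded local computation in zero rounds, while in \MPC the center's neighborhood must occupy $\Omega(n/S)$ machines and the chain cannot be aggregated faster than $\Omega(n/S)$ rounds without violating a cryptographic sequentiality assumption on the underlying primitive. Concretely, fix a function $F\colon \{0,1\}^\lambda\times\{0,1\}^\lambda\to\{0,1\}^\lambda$ with $\lambda=\Theta(\log n)$ that is believed to be sequentially hard, e.g.\ iterated squaring $y\mapsto y^2\bmod N$ in an RSA group of unknown order, or a keyed hash modeled as a random oracle, and define $h(x_1,\ldots,x_k)=y_k$ by $y_0=0$ and $y_i=F(y_{i-1},x_i)$. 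Take the input graph to be a star $K_{1,n-1}$ with center $c$ and leaves, let $x_1<x_2<\cdots<x_{n-1}$ be the leaves' IDs in increasing order, and let $\Pi$ ask every node $v$ to output $h$ applied to the sorted list of IDs in $N(v)$.

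For the $\NCCzx{1}$ upper bound, every node $v$ initially holds the complete list of IDs in $N(v)$ and therefore sorts them and evaluates $h$ in zero communication rounds using its unbounded local computation. For the $\MPCx{M,S}$ lower bound with $S=n^\delta$ and arbitrary $M$, focus on $c$: its $n-1$ incident edges initially occupy $\Omega(n/S)$ machines and no single machine holds more than $O(S)$ of the $x_i$. Define the \emph{chain frontier} $L_t$ as the largest index $i$ such that the state of some machine at the end of round $t$ contains $y_i$, with $L_0=0$. The core claim is $L_{t+1}-L_t = O(S)$ under the sequentiality assumption on $F$: let $m$ be a machine whose state after round $t+1$ contains $y_{L_{t+1}}$ but whose state after round $t$ did not; either $m$ received $y_{L_{t+1}}$ in a message sent at the end of round $t$, forcing $L_t\geq L_{t+1}$ and contradicting the jump, or $m$ computed $y_{L_{t+1}}$ in round $t+1$ by chaining $F$-evaluations from some $y_j$ with $j\leq L_t$, which requires all of $x_{j+1},\ldots,x_{L_{t+1}}$ to co-reside in its $S$-word state and hence $L_{t+1}-L_t\leq L_{t+1}-j = O(S)$. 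Any attempt by multiple machines to splice partial chains around a not-yet-computed intermediate $y_j$ amounts to predicting $F(y_j,\blankarg)$ without evaluating it, contradicting the sequentiality of $F$ via a direct reduction. Since $\Pi$ forces $L_r\geq n-1$, we conclude $r = \Omega(n/S)$.

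The main obstacle is making the no-splicing step fully rigorous against all conceivable parallel strategies. In the random oracle model I would track the DAG of $F$-queries across all machines and show that any query producing $y_{n-1}$ must be reachable from $y_0=0$ along a path of $n-1$ dependent $F$-queries, where each round contributes at most $O(S)$ new edges per machine to any such path and any off-chain shortcut corresponds to a negligible-probability guess of an $F$-output; summing over rounds then yields $\Omega(n/S)$. Under concrete hardness for iterated squaring, the argument must be rephrased using parallel-time complexity rather than oracle queries and is technically more delicate, but the same $\Omega(n/S)$ conclusion should follow from the standard time-lock puzzle assumption.
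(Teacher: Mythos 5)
Your proposal attacks the same underlying phenomenon---a sequential cryptographic chain over the neighborhood of a high-degree star center, with the \MPC memory bound preventing a single machine from holding the whole chain---but realizes it quite differently from the paper. You build the chain directly over the leaves' node IDs (so the instance is genuinely a single unlabeled star, with one ID consumed per chain link and chain values fitting in $\lambda=\Theta(\log n)$ bits), and you invoke a \emph{sequentiality} assumption (time-lock or random-oracle query-DAG) to show the frontier $L_t$ advances by at most $O(S)$ per round. The paper instead uses a \emph{labeled} star whose $\Theta(\log n)$-bit edge labels are i.i.d.\ uniform, groups them into batches $B^v_0,\dots,B^v_{b_v-1}$ of $\Theta(S)$ words each, and chains a \emph{leakage-resilient PRP} $h^v(i{+}1)=F_{B^v_i}(h^v(i))$ with key and state lengths $\Theta(n^\delta\log n)$ bits; the memory bound is then applied once per chain step (a machine cannot simultaneously hold $h^v(i)$ and a $>\!\lambda$-bit summary of $B^v_i$), so each of the $\Theta(n/S)$ steps simply costs one round. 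Your route is arguably more elementary structurally, and the paper's is arguably a cleaner reduction to a concrete primitive with a per-step, rather than per-round, accounting.

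There is, however, a concrete gap you should close. With $\lambda = \Theta(\log n)$, the chain value $y_i$ fits in a single machine word, so (i) guessing $y_i$ succeeds with probability $n^{-\Theta(1)}$, which is polynomially small rather than \emph{negligible}---a union bound over $\poly(n)$ machine-rounds then leaves a $\poly(n)^{-1}$ failure probability, and ``unless a machine breaks a cryptographic assumption'' is not the standard conclusion one gets from sub-exponential security---and (ii) you must rule out the table-composition attack: a machine can precompute the truth table of the $S$-fold composition $G_j(y)=F(\cdots F(y,x_{j+1})\cdots,x_{j+S})$, and two such tables compose in one round, giving an $O(\log(n/S))$-round divide-and-conquer unless $2^\lambda\gg S\log n$, i.e.\ unless the constant in $\lambda$ exceeds $\delta$. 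Both issues are delicately tied to the exact constant in $\lambda=c\log n$ and still give at best polynomial, not super-polynomial, security. The paper's choice of $k,\ell=\Omega(n^\delta\log n)$ sidesteps both: the key/state are too large for any compositional truth table, and the security parameter is $\Theta(n^\delta)$, giving genuinely negligible failure under a standard leakage-resilient-PRP assumption. If you want to keep your single-ID-per-step design, you would need to inflate $y_i$ to $\Theta(S)$ words, batch the $x_i$'s so that $y_i$ plus the next batch still fits in $S$ words, and argue in a model (random oracle, or a concrete sequentiality assumption) where the guessing probability is truly negligible; at that point you largely converge with the paper's construction.
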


\paragraph{\NCCzs Simulation in \MPC} We study simulating $\NCC$ algorithms in the $\MPC$ model in Section \ref{sec:nccz_in_mpc}. unfortunately, the impossibility results from Section \ref{sec:nccz_in_mpc_impossibility} make us rather pessimistic about simulating even the weaker $\NCCz$ model in $\MPC$. The core obstacle is local concentration of information in \NCCz: whenever a single $\NCCz$ node holds a lot of information, we can define a non-distributive or memory-intensive function $h$ that the node can evaluate instantly, but the computation of $h$ in $\MPC$  would either require heavy inter-machine coordination or be outright infeasible.

To isolate what \emph{can} be simulated, we adopt a stricter model, the $\NCCzs$. Here, nodes may not keep substantial local information like neighbor IDs or weights of incident edges. Instead, a node $v$ knows only its own ID, port numbers of its initial neighbors (used to contact them) and a locally computable function $f_v$ that maps any learned ID to $\bot$ if it is not an initial neighbor, otherwise to a port number identifying that neighbor locally. Aside from $f_v$, each node has only $\bigO(C)$ local memory which it also needs to use to remember further node IDs to expand its neighborhood. This preserves the key $\NCCz$ feature of neighbor identification, while aligning memory restriction with $\MPC$. Under these constraints we give a simulation of $\NCCzs(C)$ in $\MPC$, see Theorem \ref{thm:sim_ncczs_in_mpc} and Corollary \ref{cor:sim_ncczs_in_mpc}.

\begin{theorem*}[cf.\ Thm.\ \ref{thm:sim_ncczs_in_mpc} or Cor.\ \ref{cor:sim_ncczs_in_mpc}]
    On unlabeled graphs with arboricity $n^\delta$ with $\delta >0$ we can simulate any $\NCCzsx{n^\delta}$ algorithm in the $\MPC(n,n^{4\delta})$ model with constant overhead.
\end{theorem*}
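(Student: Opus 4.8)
The plan is to dedicate MPC machine $v$ to simulating $\NCCzs$ node $v$ — both models run $n$ entities — and, crucially, never to materialize a whole neighborhood at a single machine: an $\NCCzs$ node may have $\deg(v)\gg S=n^\delta$, but at every point its \emph{state} is only $\bigO(C)$ words plus the locally computable oracle $f_v$, and in one round it sends and receives at most $C$ words, hence touches at most $\bigO(C)$ of its neighbors. We instead maintain, \emph{evenly distributed} over the $n$ machines, a sorted \emph{edge table}: for each edge $\{u,v\}$ the two records $(u,p,v)$ and $(v,q,u)$, where $p$ (resp.\ $q$) is the local port of that edge at $u$ (resp.\ $v$). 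Since the input has arboricity at most $n^\delta$, it has $|E|=\bigO(n^{1+\delta})$ edges (\Cref{lem:arb_min_edges}), so the table has $\bigO(n^{1+\delta})$ records and fits comfortably in the total memory $n\cdot n^{4\delta}$; ``unlabeled'' ensures the input is exactly this edge set and nothing per-vertex has to be replicated. An $\bigO(1)$-round preprocessing phase, using the constant-round MPC sorting, ranking and aggregation primitives (\Cref{sec:mpc_routines}), builds a distributed translation table mapping each arbitrary $\NCCzs$-ID to an MPC machine in $\{0,\dots,n-1\}$; the port numbers (ranks within each adjacency list); two sorted copies of the edge table keyed by $(u,\mathrm{port})$ and by $(u,v)$; and finally ships each machine the value $\deg(v)$ and the initial state of node $v$.

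One simulated round then reduces to a constant number of \emph{distributed joins}, each a constant number of MPC sorts over $\bigO(n^{1+\delta})$ records, hence $\bigO(\log_S n)=\bigO(1/\delta)=\bigO(1)$ MPC rounds for constant $\delta$. In detail: each machine runs node $v$'s local step, producing up to $C$ messages. A message to an already-learned ID is sent directly, since machine $v$ stores that ID among the $\bigO(C)$ it holds; a message to a local port $p$ is emitted as a record $((v,p),\mathrm{payload})$, joined against the port-keyed edge table to recover the destination ID, and then routed (via the translation table) to the owning machine. Symmetrically, when node $u$ wants to evaluate $f_u$ on a received ID $w$, machine $u$ emits a query $((u,w),\cdot)$, joins it against the $(u,v)$-keyed edge table to learn whether $w\in N(u)$ and its port, and the answer is routed back. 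Finally each machine folds the (at most $C$, by our no-bandwidth-violation policy) delivered messages and oracle answers into the state of $v$; a newly learned ID is simply stored, which is legal precisely because the simulated algorithm, being a viable solution, keeps $v$'s state and inbound traffic within the $\bigO(C)$ bound. Iterating over all rounds yields a constant multiplicative blow-up, plus the constant additive preprocessing term.

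The remaining work is the memory accounting: each machine holds node $v$'s $\bigO(C)$-word state, its $\bigO(C)$ in/out buffers, its $\bigO(|E|/n)=\bigO(C)$ share of the (two copies of the) edge table and of the translation table, and scratch space for the fixed-width sorts of the $\bigO(n^{1+\delta})$ routing/query records; implemented with care this stays within $n^{4\delta}$ words per machine, and the detailed bookkeeping is deferred to \Cref{thm:sim_ncczs_in_mpc}. The main obstacle — and the sole reason the simulation is stated for the restricted $\NCCzs$ rather than $\NCCz$ — is exactly these high-degree nodes: machine $v$ can neither store $N(v)$ nor emulate $f_v$ locally, so every port look-up and every $f_v$-query must be answered through the globally distributed, load-balanced edge table, which is affordable only because bounded arboricity keeps $|E|$ near-linear and because one $\NCCzs$ round has each node use at most $\bigO(C)$ of its neighbors. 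For labeled or weighted graphs this collapses — the incident-edge data of a high-degree vertex could force $\omega(S)$ words to live somewhere — matching the impossibility results of \Cref{sec:nccz_in_mpc_impossibility}.
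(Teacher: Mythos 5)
Your approach is correct and genuinely different from the paper's. The paper first runs a forest decomposition (\Cref{lem:mpc_forest_decomposition}) into $\bigO(\beta a)$ forests split into a low-depth half and a low-in-degree half, then redistributes each node's neighborhood so that $M(v)$ either stores a neighbor's machine ID outright (out-neighbors and in-neighbors in low-in-degree forests) or can compute, via a BFS-order relabeling that makes the children of any tree node occupy a contiguous range of machines, the address of the machine $M(v,w)$ that does store it (\Cref{lem:neighborhood_redistribution}). Message delivery is then three rounds of \emph{direct} point-to-point communication with congestion $\bigO(\beta a)$ (\Cref{lem:message_delivery}); only the $f_v$-evaluation (\Cref{lem:neighbor_identification}) falls back on group aggregation. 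You instead dispense with the structural decomposition entirely: you keep the adjacency information as a globally distributed, sorted \emph{edge table} of $\bigO(n^{1+\delta})$ records and resolve every port lookup, every $f_v$-query, and every message route as a per-round relational join, implemented by $\bigO(1/\delta)$ rounds of MPC sort/aggregation over the queries-plus-table. Both routes give the same asymptotics — $\bigO(1/\delta)$ preprocessing and per-round slowdown, both $\bigO(1)$ for constant $\delta$ — but what they buy is different. Your join-based scheme is conceptually simpler, avoids the BFS-labeling and the two-type forest split entirely, and (with careful accounting) appears to need only $S=n^{\bigO(\delta)}$ with a smaller constant in the exponent than the paper's $n^{4\delta}$, since the per-machine load is dominated by the $\bigO(n^\delta)$ edge-table share rather than by $\bigO((\beta a)^2)$. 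The paper's scheme buys a tunable trade-off via $\beta$ (smaller $\beta a$ at the price of more phases), makes most port resolutions a purely local arithmetic computation with at most one extra hop, and keeps the per-round message-delivery step sort-free.

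Two small points to fix. First, you cite \Cref{lem:arb_min_edges} for $|E|=\bigO(n^{1+\delta})$, but that lemma gives a \emph{lower} bound on the number of edges; the upper bound $|E|\le \arb(G)(n-1)$ you need is the Nash-Williams fact stated in \Cref{sec:graph_properties}. Second, deferring the ``detailed bookkeeping'' to \Cref{thm:sim_ncczs_in_mpc} is circular for a blind proof — that is precisely the statement being proved — and the bookkeeping is where the exponent on $n^\delta$ comes from, so it should be carried out: state explicitly that the edge table, translation table, query buffers, node state, and sort scratch each contribute $\bigO(n^\delta)$ words per machine, and that sorting $\bigO(n^{1+\delta})$ records on $n$ machines with $S=\Theta(n^\delta)$ is within reach of \Cref{lem:mpc_sort} (this also implicitly constrains $\delta$ so that $S<n$, mirroring the paper's $\delta<\nicefrac{1}{4}$). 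Neither point affects the soundness of the overall argument.
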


For implementing the neighborhood function $f_v$ in the \MPC model, we show that on a bounded arboricity graph, one can compute a new set of node identifiers, such that for each node $v$ the set of neighbor IDs of $v$ can be compressed significantly and can subsequently be handled by a small group of machines.
We first conduct an optimized forest decomposition, where we additionally ensure that the resulting set of forests is partitioned into two equally sized subsets of forests one of which has bounded depth and the other has bounded degree. 
The idea is that, the machine memory is sufficient to store neighbor identifiers of low degree forests. For high degree forests we recompute identifiers based on a breadth-first-search (BFS) in each forest, such that neighbor IDs form a contiguous range (and are thus compressable) and are computable via a functional dependency.

\subsection{Further Related Work}

\paragraph{The \MPC and \NCC models: History and Basic Routines} A precursor of the contemporary \MPC model was formalized by \cite{DBLP:conf/soda/KarloffSV10}
as a round-based map/shuffle/reduce abstraction of the MapReduce paradigm \cite{DBLP:journals/cacm/DeanG08} with subsequent refinements by \cite{DBLP:conf/isaac/GoodrichSZ11}. The \MPC model, as it is in use today, was introduced by \cite{DBLP:conf/pods/BeameKS13, DBLP:journals/jacm/BeameKS17}.
We utilize many of the basic routines for the strongly sublinear \MPC regime for instance for sorting, aggregation, prefix-sums \cite{ DBLP:conf/soda/KarloffSV10, DBLP:conf/isaac/GoodrichSZ11} and slight generalizations thereof. An overview is provided in Appendix \ref{sec:mpc_routines}.

The node capacitated clique model (\NCC) was conceptualized by \cite{DBLP:conf/spaa/AugustineGGHSKL19} as a more restrictive version of the congested clique \CC model \cite{DBLP:journals/siamcomp/LotkerPPP05}, where instead of allowing any pair of nodes to exchange a machine word ($\bigO(\log n)$ bits), each node can send and receive at most $\bigO(\log n)$ machine words per round. We use a slightly generalized version $\NCC(C)$ with a capacity parameter $C$ (also mentioned in \cite{DBLP:conf/soda/AugustineHKSS20}) that specifies the number of words a node may send or receive per round.

Since the algorithms considered by \cite{DBLP:conf/spaa/AugustineGGHSKL19} are designed for the \textit{logarithmic} regime $\NCC(\log n)$, they are conceptually different to our designs for the  \textit{strongly sublinear} regime $\NCC(n^\delta), \delta \in (0,1)$. In particular, \cite{DBLP:conf/spaa/AugustineGGHSKL19} presents \textit{randomized} solutions for various tasks (e.g., MIS, maximal matching, $\bigO(a)$-Coloring) that depend polylogarithmic on $n$ and linear on the graph arboricity $a$. Compare this to the $\NCC(n^\delta)$-algorithms implied by our simulation results (see contributions above), which are exponentially faster (MIS, maximal matching) or even double-exponentially faster (coloring) for certain arboricity ranges.
A notable subproblem considered by \cite{DBLP:conf/spaa/AugustineGGHSKL19} gives a randomized solution for the identification problem in $\bigO(\log n + a)$ rounds. We propose a deterministic algorithm for the strongly sublinear \NCC (as subroutine for our simulation result) that takes only constant rounds for a certain parameterization.

\paragraph{\MPC Simulations}
\cite{DBLP:journals/tcs/HegemanP15} shows that a \CC algorithm with per-node $\bigO(n)$ memory and polynomial local work per round can be efficiently simulated in \textit{linear} \MPC with constant-factor round overhead, using as many machines as needed to match the simulated \CC routine’s total used bandwidth (which may be fewer than $n$ machines). \cite{DBLP:journals/corr/abs-1802-10297} shows that the \textit{linear} regime \MPC with $n$ machines is essentially equivalent to \CC with $ \bigO(n)$ per-node memory up to applications of the routing algorithm by \cite{DBLP:conf/podc/Lenzen13}.

\cite{DBLP:journals/corr/abs-1802-10297} also gives a simulation of $T$-round \CONGEST algorithms in linear \MPC with $\bigO(T \, |E|/n)$ machines if the simulated \CONGEST algorithm is severely memory restricted (as low as $\polylog(n)$ bits).
\cite{DBLP:conf/fsttcs/KothapalliPP20} simulates state-congested, $\alpha$-sparse \CONGEST routines in strongly sublinear \MPC. However, directly carrying their result over to \CC requires to restrict per-node expected bandwidth to at most $n^{o(1)}$ and per-node memory to $\bigO(n \, \polylog (n))$, in order to stay within the strongly sublinear \MPC.

\cite{DBLP:conf/soda/KarloffSV10}  proved that any $t$-step CREW PRAM using $O(n^{2-2\varepsilon})$ processors can be executed in $O(t)$ MapReduce rounds \cite{DBLP:conf/soda/KarloffSV10}, if the MapReduce memory matches the shared memory used by the PRAM algorithm.
\cite{DBLP:conf/isaac/GoodrichSZ11} strengthened this by simulating CRCW PRAM algorithms with commutative write conflict resolution (e.g., min/sum/max).

\paragraph{Lower Bounds for Strongly Sublinear \MPC and Relations to Hardness Conjectures} In \cite{DBLP:journals/jacm/RoughgardenVW18} it is shown that any output bit of a $r$-round \MPC algorithm with a receive bound of $s$ bits per round, must correspond to a polynomial representation of a boolean function whose degree is bounded by $s^r$. This means that a degree lower bound for a boolean function implies a round-lower bound for \MPC. They further show that, given $\poly(n)$ machines, one can compute $k = \lfloor \log_2 n \rfloor$ layers of a $\NC$ circuit in just a single round of strongly sublinear \MPC, by concentrating all inputs required to compute a gate $k$ levels up the circuit, on a single machine (see Section \ref{sec:sim_mpc_in_ncc_dec_problems_impossible} for a refresher on circuit complexity). This ultimately implies that a super-constant lower bound for the strongly sublinear \MPC (with $\poly(n)$ machines) for a problem in \Pclass would necessitate a super-logarithmic depth circuit, and thus separate \Pclass from $\NC^1$ (which demonstrates the hardness of obtaining such a lower bound). \cite{DBLP:conf/isaac/FreiW19} improves the circuit simulation of a depth-$d$ circuit in $O(d/\log n)$ \MPC rounds, by using fewer machines (strongly sublinear in the size of the simulated circuit).

Acknowledging the hardness of achieving unconditional super-constant lower bounds for strongly sublinear \MPC, \cite{DBLP:conf/focs/GhaffariKU19} instead turns to lower bounds conditioned on the 1v2 cycles conjecture. The 1v2-cycles conjecture stipulates that, even when promised that an input graph forms either an $n$-cycle or two $n/2$-cycles, it takes $\Omega(\log n)$ rounds
in the strongly sublinear \MPC model to distinguish the two cases. 
\cite{DBLP:conf/focs/GhaffariKU19} lifts \LOCAL lower bounds to strongly sublinear \MPC for \textit{component-stable} algorithms\footnote{An \MPC algorithm is called \emph{component-stable} if, whenever the input graph is a disjoint union of components, the algorithm’s output on any component depends only on that component}, showing that any $o(\log D)$-round \MPC algorithm for a problem with \LOCAL \textit{lower bound} $D$ would yield an $o(\log n)$ algorithm for 1v2 cycles. 
Assuming the conjectured hardness for 1v2-cycles, this yields conditional super-constant $\MPC$ lower bounds for several locally checkable labeling problems (LCLs) and related problems, namely, $\Omega(\log\log n)$ for MIS, constant approximation matching and vertex cover, $\Omega(\sqrt{\log\log n})$ for $(\Delta \!+\! 1)$-coloring; and $\Omega(\log\log\log n)$ for LLL and sinkless orientation.

\cite{DBLP:journals/dc/NanongkaiS22} shows that the 1v2-cycles conjecture is equivalent to $\Lclass \not\subseteq \ComplexityFont{SL-MPC}$, where the former is the set of problems solvable with logarithmic space and the latter describes the class of problems solvable in sublogartihmic rounds in the strongly sublinear MPC model. \cite{DBLP:journals/dc/NanongkaiS22} summarizes hardness results for several problems conditioned on $\Lclass \not\subseteq \ComplexityFont{SL-MPC}$, thus providing a situation where the repercussions of refuting $\Lclass \not\subseteq \ComplexityFont{SL-MPC}$ are similar to refuting $\P \neq \NP$ in the sense that a large number of so far intractable problems would suddenly be solvable within the specified bounds. They also show that some problems can be conditioned on $\ComplexityFont{NL} \not\subseteq \ComplexityFont{SL-MPC}$, (which is even harder to refute).

\paragraph{Intractability of Obtaining Small Circuits for Problems in \NC}
Our simulation impossibility results do not rely on the 1v2-cycles conjecture (we ensure that lower-bound graphs are connected). Instead, we use a different circuit-complexity hypothesis: $\NC\ne\NClin$. $\NC$ is the class of decision problems (Boolean functions) that admit uniform\footnote{Let $C_n$ be the family of circuits that decides decision problems of $n$ bit inputs. Uniformity requires that $C_n$ is computable by a Turing machine using only $\bigO(\log n)$ additional space on input $1^n$. Uniformity ensures $\NC \subseteq \Pclass$.} circuits with polynomial size, constant fan-in, and polylogarithmic depth. The subclass $\NClin$ requires linear-size circuits while keeping polylogarithmic depth (see Section \ref{sec:sim_mpc_in_ncc_dec_problems_impossible} for details).

Answering the question $\NC \stackrel{?}{=} \NClin$ is notoriously hard. On one hand, no problems in $\NC$ are known that provably have a $\omega(n)$ size lower bound even without a  depth-bound \cite{DBLP:conf/innovations/GolovnevKW21}. On the other hand, 
there is no known method that shows how to reduce polynomial-size polylogarithmic-depth circuits to size $\bigO(n)$ while keeping its depth polylogarithmic. We consider the intermediate class $\NCy{near-lin}$ of near-linear size $n \cdot \polylog(n)$, with $\NClin\subseteq\NCy{near-lin}\subseteq\NC$. The question \smash{$\NC\stackrel{?}{=}\NCy{near-lin}$} (A) is similarly hard as \smash{$\NC\stackrel{?}{=}\NClin$} (B). A negative answer for (A) implies a negative answer for (B), while a positive answer for (A) would constitute major progress towards (B).

\subsection{Preliminaries}

In this work we are mainly (but not solely) concerned with graph problems, due to their close ties to the $\NCC$ model. This focus is not as narrow as it may seem, as any decision problems can be encoded as edges of a graph. We begin by carefully defining graph problems, as for simulation results, seemingly small details often matter.

\begin{definition}[Graph Problem]\label{def:problem}
A graph problem $\Pi$ is a set of problem instances $(I,S_I)$. $I=(G,f_I)$ is the input containing a graph $G = (V,E)$ and a function $f_I$ that maps nodes or edges to a constant number of words (i.e., input labels, e.g., edge weights). %
$S_I$ is a set of legal \emph{outputs} for $I$, i.e., the solution of the graph problem, and we require any legal output to be of size at most $n^c$ words for an arbitrary but fixed $c>0$. 
If $f_I$ is trivial (e.g., $f_I(x)=0$ for all $x$) for every instance of $\Pi$, and the set of legal outputs $S_I$ is independent from the identifiers assigned to nodes then we call $\Pi$ unlabeled.\footnote{The specific solution from $S_I$ than an algorithm computes may still depend on identifiers (but not the set $S_I$ itself), so the identifiers can be exploited, e.g., as initial solution (coloring) or as tie breakers.}
\end{definition}

The requirement on the output size covers the most common graph problems, for example a mapping of nodes or edges to the colors of a legal node or edge coloring, or the graph diameter, or a bit representing whether the graph contains a triangle or even listing all triangles. However problems with huge output, e.g., listing all instances of a specific, super-constant size subgraph (which can be super-polynomially many), are not included.

Next, we fix how instances and solutions are represented in different distributed models. In particular, the algorithm output on instance $I$ must uniquely specify a single solution from $S_I$, even though it may be split across machines. In subsequent parts we refrain from discussing input and output representations in-depth and instead follow the usual conventions, stating the expected output representation for a given problem at first use. With this understanding, we formalize input and output representations as follows.

\begin{definition}[Input / output representation]\label{def:representation}
For $(I,S_I) \in \Pi$ and a distributed model $\mathcal M$, an \emph{input representation}  assigns the input-data of $I$ to the machines of $\mathcal M$ subject to model specific rules. Analogously, an \emph{output representation} is to be interpreted as an assignment of parts of a solution $s \in S_I$ to the machines in a model specific way (e.g.\ each \NCC-node stores its own color of a proper graph coloring). Neither the input nor the output representation are unique in general. The total length of the output of all machines must uniquely encode some $s \in S_I$ must be decodable in $\bigO(|s|)$ (centralized) steps and have size $\bigO(|s|)$ for some $s \in S_I$, where we allow individual machines to output an empty symbol $\bot$, which does not count towards the size of).
\end{definition}

In the \MPC model an input representation places the problem input on the machines, subject to a per-machine memory budget of $S$ words but otherwise arbitrarily distributed. For instance in sorting or aggregation tasks, the sorting or aggregation values are on the machines with at most $S$ values per machine. In case of a graph problem, any machine holds at most $S$ edges together with labels for the edges and endpoint nodes as specified by $f_I$.
Besides the memory restriction, we cannot make any assumption about the assignment of inputs on machines, i.e., we have to assume the worst case assignment.
The output representation is problem specific, e.g., given a labeling problem, the output representation must store each output label on some machine (e.g., vertex-color pairs). Or some machine must output whether a decision problem is a yes-instance.

In the \NCC model the problem input is usually related to a specific graph problem. Each vertex initially resides on its own machine together with the IDs of its neighbors and input labels associated to it and its incident edges (as specified by $f_I$). 
The output is again problem specific, e.g., in a labeling problem $v$'s machine holds the output label for $v$ or its incident edges. %

\begin{definition}[Algorithmic solution in model $\mathcal M$]\label{def:solve}
An algorithm $\mathcal A$ for model $\mathcal M$ \emph{solves} $\Pi$ if, for every instance $I$ and every possible input representation, it outputs a representation of some legal output in $S_I$.
\end{definition}

Our definition for a simulation is solution-oriented, in the sense that when we want to show that a model $\mathcal M$-algorithm $\mathcal A$ that solves a problem $\Pi$ can be simulated in some other model $\mathcal M'$, we only require the existence of a model $\mathcal M'$-algorithm $\mathcal A'$ that is also a solution of $\Pi$ (where the required number of rounds specifies the slowdown of the simulation).
 
\begin{definition}[Simulation]\label{def:simulation}
Let $\mathcal A$ solve $\Pi$ in $T_{\mathcal A}$ rounds in model $\mathcal M$.  We say that $\mathcal A$ can be \emph{simulated} in model $\mathcal M'$ with slowdown $t > 0$ if there exists an algorithm $\mathcal A'$ for $\mathcal M'$ that solves $\Pi$ using at most $\bigO(t \cdot T_{\mathcal A})$ rounds in total.
\end{definition}

\begin{definition}[Model dominance]\label{def:dominance}
Let $\mathcal P$ be a set of problems (e.g., all graph problems with input graphs of bounded arboricity). For two models $\mathcal M_1$ and $\mathcal M_2$ and a value $t$ we write $\mathcal M_1\le_{\mathcal P,t}\mathcal M_2$ if every algorithm that solves some problem $\Pi \in \mathcal P$ in $\mathcal M_1$ can be simulated in $\mathcal M_2$ with slowdown at most $t$. We omit $\mathcal P$ if the simulation holds universally for any type of algorithm or if $\mathcal P$ it is clear from the context. We omit $t$ when $t \in \bigO(1)$.
\end{definition}

Hence $\mathcal M_1\le\mathcal M_2$ implies that $\mathcal M_2$ is at least as powerful as $\mathcal M_1$: any problem in the considered class $\mathcal P$ that is solvable in $T$ rounds in $\mathcal M_1$ is also solvable in $O(T)$ rounds in $\mathcal M_2$. Conversely, to prove $\mathcal M_1 \nleq \mathcal M_2$ it suffices to give a problem in $\mathcal P$ that admits an $O(T)$-round algorithm in $\mathcal M_1$ but requires $\omega(T)$ rounds in $\mathcal M_2$, i.e., it is equivalent to proving a separation between $\mathcal M_1$ and $\mathcal M_2$.
Framing the relative power of models in terms of simulation is conceptually stronger than comparing just round complexities, as $\mathcal M_1\le\mathcal M_2$ implies that any algorithmic solution for $\mathcal M_1$ induces one for $\mathcal M_2$.

\section{\MPC Simulations in \NCC}
\label{sec:sim_mpc_in_ncc}

We start our comparison of the relative power of the \MPC and \NCC model families by investigating simulations of \MPC algorithms in \NCC. The goal is to provide a map how bandwidth, local memory, and structural graph properties such as arboricity interact to enable or preclude constant-round simulation between the two models.

Given that we have enough bandwidth in the \NCC model to match total memory in the \MPC model, i.e., $nC=MS$ one could get the impression that the former is strictly stronger due to its unbounded local memory. Perhaps surprisingly this is not necessarily the case (as we will see later on). Roughly speaking, an uneven input distribution among nodes in \NCC due to input graph structure can be a disadvantage over \MPC.

However, for a problem $\Pi$ whose input is spread out well among nodes in the \NCC model, an \MPC algorithm that solves $\Pi$ can in fact be simulated in the \NCC model if we match total bandwidth with total memory $nC = MS$. More precisely, each node in the $\NCCx{MS/n}$ model can simulate $M/n$ machines of a given $\MPCx{M,S}$ algorithm with constant slowdown.

\begin{lemma}
    \label{lem:NCC_dominates_MPC_small_inputs}
    Let $M \geq n$. Then $\MPC(M,S) \leq_{\mathcal P} \NCC(MS/n)$ where $\mathcal P$ are problems where the input representation in the $\NCC(MS/n)$ model assigns each node at most $MS/n$ words.\ps{could drop the requirement $M \geq n$ by making the reduction for $\MPC(M,S) \leq_{\mathcal P} \NCC(C)$ with $C:= S \cdot \max(M/n,n/M)$}
\end{lemma}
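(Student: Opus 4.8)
The plan is to have each $\NCC$ node simulate a fixed block of $M/n$ consecutive $\MPC$ machines, round by round, and to show that every $\MPC$ round can be carried out with $\bigO(1)$ $\NCC$ rounds. First I would fix the assignment: node $v \in \{0,\dots,n-1\}$ is responsible for $\MPC$-machines $vM/n, vM/n+1, \dots, (v{+}1)M/n - 1$ (using that $M \ge n$ and, after rounding, $M/n$ is treated as an integer; a ceiling does not change anything asymptotically). By hypothesis the $\NCC(MS/n)$ input representation gives each node at most $MS/n$ words, which is exactly the total memory $S \cdot (M/n)$ of the block of $\MPC$-machines it must emulate, so node $v$ can initially hold the full local state of all its assigned machines. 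Since $\NCC$ nodes have unbounded local computation, $v$ can perform the local-state updates of all $M/n$ machines for one $\MPC$ round internally, for free.

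The substantive step is routing. In one $\MPC$ round every machine sends and receives at most $S$ words, so a block of $M/n$ machines collectively sends and receives at most $SM/n = C$ words — precisely the per-node bandwidth budget of $\NCC(C)$ with $C = MS/n$. To deliver these messages I would translate each $\MPC$ message "machine $i$ sends word $x$ to machine $j$" into the $\NCC$ message "node $\lfloor i/(M/n)\rfloor$ sends $(j,x)$ to node $\lfloor j/(M/n)\rfloor$"; the receiving node then internally dispatches $(j,x)$ to its local copy of machine $j$. Because the $\MPC$ algorithm is a legal solution under the restrictive bandwidth policy adopted in the excerpt (no bandwidth violations ever occur), the per-machine $\le S$ send/receive bounds hold on every input, hence the per-node $\le C$ bounds hold, and no $\NCC$ message is dropped. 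Thus one $\MPC$ round is simulated in exactly one $\NCC$ round of communication plus free local computation, giving slowdown $\bigO(1)$, i.e. $\MPC(M,S) \le_{\mathcal P} \NCC(MS/n)$.

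Two small points need to be checked to make this clean. First, output representation: at the end, a solution to $\Pi$ is spread over the $\MPC$ machines with $\le S$ words each; node $v$ simply outputs the concatenation of the outputs of its $M/n$ machines, which is a valid $\NCC$ output representation of the same $s \in S_I$ (the $\bigO(|s|)$ decodability and size bounds of Definition~\ref{def:representation} are inherited, since the underlying encoding is the $\MPC$ one). Second, identifiers: $\NCC$ nodes have arbitrary IDs from a globally known set of size $\poly(n)$, and they know a polynomial bound on the ID range, so each node can compute which block of machine-indices in $\{0,\dots,M-1\}$ it owns and address the owner of any machine index — this is where the any-to-any communication of $\NCC$ (as opposed to $\NCCz$) is used.

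The main obstacle, such as it is, is purely bookkeeping: making sure that the $\le C$ bandwidth accounting is exact rather than merely asymptotic (so that the restrictive "no violations" policy is genuinely preserved, not just up to constants), and handling the initial input placement — the hypothesis on $\mathcal P$ is exactly what lets each node store its block's input, so the simulation is not responsible for redistributing input, only for running it. There is no hard technical core here; the lemma is the "easy direction," and the real difficulty is deferred to later sections where the input is *not* well spread (bounded-arboricity graphs, Theorem~\ref{thm:ncc_sim_mpc}).
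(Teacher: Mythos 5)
Your proof is correct and follows essentially the same route as the paper: assign a block of $M/n$ consecutive machines to each \NCC node by rank, use the input-representation hypothesis to ensure each node's $\le MS/n$ words can be distributed as a valid $\MPCx{M,S}$ input, and deliver one \MPC round's messages with $\bigO(1)$ \NCC rounds by matching $C = MS/n$. (One tiny quibble: since \NCC's bound is on send \emph{plus} receive, "exactly one \NCC round" may need to be "two," but you correctly conclude $\bigO(1)$ slowdown regardless.)
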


\begin{proof}
    In the \NCC model we assume that each node knows the set of IDs and can therefore determine its rank in the order of IDs.
    Each node in the $\NCC(MS/n)$ model simulates $M/n$ machines where the set of simulated machines is determined by the nodes rank (assuming, w.l.o.g., that $M/n$ is integral). Spreading the input of $MS/n$ words per node among the simulated machines results in at most $S$ words per simulated machine, which is a valid input representation in the (simulated) $\MPCx{M,S}$ model. Then a given \NCC node has a communication capacity of $\bigO(S)$ words per simulated machine and can simulate all assigned machines in $\bigO(1)$ rounds.
\end{proof}

By Lemmas \ref{lem:mpc_sort}, \ref{lem:mpc_groupagg} and \ref{lem:mpc_prefix} there is a $\bigO(1/\delta)$ time algorithm for sorting and group aggregation and computing prefix sums in the $\MPCx{M,S}$ model with $S= n^\delta, M = n$. The lemma above implies that we can simulate these \MPC algorithms efficiently in the \NCC model, when inputs are well distributed in \NCC.

\begin{corollary}
    \label{cor:ncc_sort_agg}
    Sorting, group aggregations and computing prefix sums (see \Cref{sec:mpc_routines}) can be conducted in $\bigO(1/\delta)$ rounds in the $\NCCx{n^\delta}$ model, provided each node has at most $n^\delta$ inputs (sorting or keys or aggregation pairs), for constant $\delta \in (0,1)$.
\end{corollary}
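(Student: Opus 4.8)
The plan is to obtain \Cref{cor:ncc_sort_agg} as a direct corollary of \Cref{lem:NCC_dominates_MPC_small_inputs} together with the standard \MPC routines referenced as Lemmas \ref{lem:mpc_sort}, \ref{lem:mpc_groupagg} and \ref{lem:mpc_prefix}. Concretely, those lemmas give $\MPCx{M,S}$-algorithms with $S=n^\delta$ and $M=n$ (so that total memory $MS = n^{1+\delta}$ suffices to hold $n^\delta$ inputs per machine, i.e.\ up to $n^{1+\delta}$ inputs total) that solve sorting, group aggregation, and prefix sums in $\bigO(1/\delta)$ rounds. \Cref{lem:NCC_dominates_MPC_small_inputs} with this choice of $M$ yields $\MPC(n, n^\delta) \le_{\mathcal P} \NCC(n^\delta)$, where $\mathcal P$ consists of problems whose $\NCC(n^\delta)$ input representation assigns each node at most $MS/n = n^\delta$ words. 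Since each \NCC node here holds at most $n^\delta$ inputs by hypothesis of the corollary, the sorting/aggregation/prefix-sum instances lie in $\mathcal P$, and the simulation carries the $\bigO(1/\delta)$ round bound over with only constant multiplicative overhead, giving $\bigO(1/\delta)$ rounds in $\NCCx{n^\delta}$.

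In more detail, I would first recall from the proof of \Cref{lem:NCC_dominates_MPC_small_inputs} that a node in $\NCC(n^\delta)$, knowing the global ID set, computes its rank among IDs; since $M/n = 1$ here, each node simulates exactly one \MPC machine (indexed by its rank). Its local data of at most $n^\delta$ words is precisely a valid single-machine memory content in $\MPCx{n, n^\delta}$, and its per-round bandwidth of $n^\delta$ words matches the per-machine message bound $S = n^\delta$ of the simulated model, so each simulated \MPC round costs $\bigO(1)$ \NCC rounds. Applying this to each of the three routines in turn gives the claimed bound. I would also note that the output representation matches: after simulation, the node playing the role of \MPC machine $i$ holds exactly the output that machine $i$ would hold, which is the natural \NCC output layout for these tasks (the $i$-th node holds the $i$-th block of the sorted sequence, its aggregate, or its prefix sum).

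There is essentially no substantive obstacle here; the only points requiring a sentence of care are (i) checking that the \MPC routines of \Cref{sec:mpc_routines} are stated for the parameter regime $M = n$, $S = n^\delta$ (or can be instantiated there), so that \Cref{lem:NCC_dominates_MPC_small_inputs} applies with $M/n$ integral — which holds trivially for $M = n$ — and (ii) confirming that ``each node has at most $n^\delta$ inputs'' is exactly the membership condition for the class $\mathcal P$ in \Cref{lem:NCC_dominates_MPC_small_inputs} (it is: $MS/n = n \cdot n^\delta / n = n^\delta$). Given these, the corollary follows immediately. If one wanted to be fully explicit about the output side, one could additionally remark that the decodability condition of \Cref{def:representation} is inherited from the \MPC output representation of the respective routine, since the simulation is a faithful one-to-one correspondence between \MPC machines and \NCC nodes.
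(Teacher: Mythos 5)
Your proposal is correct and follows exactly the paper's (implicit) argument: the sentence preceding \Cref{cor:ncc_sort_agg} derives it from \Cref{lem:NCC_dominates_MPC_small_inputs} applied to the $\MPCx{n,n^\delta}$ routines of Lemmas \ref{lem:mpc_sort}, \ref{lem:mpc_groupagg}, and \ref{lem:mpc_prefix}, which is precisely the route you take, including the observation that $MS/n = n^\delta$ matches the per-node input bound and that $M/n = 1$ makes the node-to-machine correspondence one-to-one.
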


We can now show that the $\NCC(C)$ model can simulate the more powerful $\NCC(tC)$ model, with a slowdown proportional to $t$. While this seems intuitive by dilating a round in the $\NCC(tC)$ to $t$ rounds in the $\NCC(C)$ model and sending the $tC$ message in $t$ batches of size $C$, the naive approach violates the restriction of receiving at most $C$ message per node. We show that we can find a schedule to send messages in batches of size $C$ that also satisfies the bandwidth restriction on the receiving side, leveraging the simulated techniques from the \MPC model.

\begin{lemma}
    \label{lem:ncc_scaling}
    $\NCC(tC) \leq_{t/\delta} \NCC(C)$ for any $t \in \mathbb N$ provided $C \geq n^\delta$ for some constant $\delta \in (0,1)$. 
\end{lemma}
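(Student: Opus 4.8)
The plan is to reduce a single round of $\NCC(tC)$ to $\bigO(t/\delta)$ rounds of $\NCC(C)$ by exhibiting an explicit message schedule that respects the per-round capacity $C$ on \emph{both} the sending and the receiving side. In one round of the simulated $\NCC(tC)$ model, each node sends at most $tC$ words and receives at most $tC$ words. Think of the set of messages to be delivered as a bipartite multigraph $H$ on senders (left) and receivers (right), where the multiplicity of an edge $(u,v)$ is the number of words $u$ must send to $v$; by the $\NCC(tC)$ bandwidth bound, every left-degree and every right-degree in $H$ is at most $tC$. The goal is to decompose $H$ into $\bigO(t)$ subgraphs $H_1,\dots,H_{\bigO(t)}$ each of maximum degree $\bigO(C)$, and then deliver $H_i$ in $\bigO(1/\delta)$ rounds of $\NCC(C)$.

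The first key step is the degree splitting. A clean way is to bucket messages by the low-order bits of a hash of their destination — more concretely, since each node in $\NCC$ knows the full ID set, assign to destination $v$ a rank $\rho(v)\in\{0,\dots,n-1\}$, and have sender $u$ place the $k$-th word destined for $v$ into batch $(\rho(v)+k)\bmod t$ (or a similar balanced round-robin over the $tC$-word outgoing buffer grouped so that no single destination contributes more than $C$ to a batch). Because $\indeg_H(v)\le tC$, at most $C$ of those words land in any fixed batch at $v$, so each batch has receive-degree $\le C$; and since $\outdeg_H(u)\le tC$ split into $t$ batches, each batch has send-degree $\le C$ after we also cap the per-destination contribution. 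Realizing this split requires the senders and receivers to \emph{agree} on which words go in which batch; this is where the $\MPC$-derived routines from \Cref{cor:ncc_sort_agg} come in — sorting the outgoing messages globally by (destination rank, batch index) and computing prefix sums lets every node determine, in $\bigO(1/\delta)$ rounds, exactly how many words it will receive in each batch and from whom, so no receiver is surprised and no message is dropped.

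The second key step is the actual delivery of a single batch $H_i$: this is a routing instance where every node sends and receives $\bigO(C)$ words with $C\ge n^\delta$, which is precisely a group-aggregation / sorting-shaped task, handled in $\bigO(1/\delta)$ rounds by \Cref{cor:ncc_sort_agg} (route each word as a (destination, payload) pair; sorting by destination delivers it). Iterating over all $\bigO(t)$ batches gives total round complexity $\bigO(t/\delta)$, which is $\bigO(t/\delta)$ as claimed, so $\NCC(tC)\le_{t/\delta}\NCC(C)$. Note the premise $C\ge n^\delta$ is used exactly to invoke Corollary \ref{cor:ncc_sort_agg}, and the premise $t\in\mathbb N$ keeps the batch count integral.

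The main obstacle is precisely the one flagged in the paper's own remark before the lemma: the \emph{receive}-side capacity. The naive "send the $tC$ words in $t$ chunks of $C$" fails because an adversarial message pattern can route $tC$ words from many different senders all at one receiver in the same chunk. So the crux of the proof is constructing the degree-balanced batch decomposition and, crucially, making it \emph{coordinated} — every sender must know which batch each of its words belongs to and every receiver must be able to anticipate its $\le C$ incoming words per batch without ever exceeding capacity en route. I expect the bulk of the write-up to be: (i) setting up the bipartite-multigraph abstraction and proving the $\le C$ degree bound of each batch, and (ii) checking that the sorting/prefix-sum preprocessing that fixes the schedule itself obeys the $\NCC(C)$ bandwidth bound (which it does, since those routines only ever move $\bigO(C)$ words per node given $C\ge n^\delta$), so the preprocessing does not blow the budget.
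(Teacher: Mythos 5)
Your high-level plan matches the paper's: both identify that naive $t$-chunk sending can overload a receiver, both aim for a coordinated decomposition of the message multigraph into $\bigO(t)$ pieces of per-node degree $\bigO(C)$, and both lean on the simulated sort/prefix-sum/aggregation routines of \Cref{cor:ncc_sort_agg} to set up the schedule. The paper's version of this is a \emph{global reordering}: it assigns every outgoing word a globally unique index via group aggregation and prefix sums, physically relocates words so each node holds a contiguous block of the sorted-by-receiver sequence, and then sends blocks in order, the key structural fact being that a fixed receiver's words occupy at most two adjacent nodes in the relocated layout.

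There is, however, a genuine gap in your concrete batching rule. You put sender $u$'s $k$-th word destined for $v$ into batch $(\rho(v)+k)\bmod t$, where $k$ is \emph{local to} $u$, and then assert ``Because $\indeg_H(v)\le tC$, at most $C$ of those words land in any fixed batch at $v$.'' That implication does not hold: the shift $\rho(v)$ is the same across all senders, and $k$ restarts at $0$ for each sender, so if $tC$ distinct senders each send a single word to $v$, \emph{all} of those words get $k=0$ and fall into the same batch $\rho(v)\bmod t$ — receive degree $tC$, not $C$. The degree bound would be correct only if $k$ were a \emph{global} index over all words destined for $v$, but computing that global index is exactly the nontrivial step; your write-up invokes the sort/prefix-sum machinery only afterwards, for ``agreement,'' putting the coordination where it can no longer repair the broken degree bound. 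In the paper this is resolved the other way around: the prefix-sum computation comes \emph{first}, producing per-message global indices $I(r,b,i)$ that by construction place at most $C$ words for any receiver in any round of the delivery schedule. A secondary, smaller issue: once you genuinely have receive-degree $\le C$ and send-degree $\le C$ per batch, delivering a batch is one direct round in $\NCCx{C}$; routing it through another sorting pass is unnecessary and somewhat obscures where \Cref{cor:ncc_sort_agg} is actually doing work (namely, in computing the batch assignment, not in delivering it).
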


\begin{proof}
    Let $\mathcal A$ be any algorithm for the $\NCC(tC)$ model. In each round of executing $\mathcal A$, each node sends $k \leq tC$ messages. Per definition, bandwidth restrictions are not violated when running  $\mathcal A$ in the $\NCC(tC)$ model, i.e., in each round for any node there are at most $tC$ messages addressed to that node.     
    Our only task is delivering the messages generated during the execution of $\mathcal A$ in the $\NCC(C)$ model, all local computations of $\mathcal A$ can be emulated in $\NCC(C)$ as well.    
    Let each node divide its out-going messages into at most $t$ blocks of at most $C$ messages. Note that we cannot simply send these messages block by block for $t$ rounds in the $\NCC(C)$ model, since then some nodes might receive up to $tC$ messages, violating bandwidth of $C$ words in $\NCC(C)$.
    \jw{I suggest the following refactoring of the proof:\\
    1) give a brief intuition on the proof (a) global ordering (b) use global ordering to deliver messages\\
    2) describe how to compute the global ordering\\
    3) describe how to deliver messages given a global ordering}
    
    The idea how to deliver these messages without violating the $\NCC(C)$ bounds is to compute a \textit{global reordering}\jw{somehow stumbling over the word reordering. maybe reassignment is more clear?} of the \textit{total} set of out-going messages of a given round of $\mathcal A$, which we define as follows. If $u$ locally holds a message intended for $q$ and $v$ holds a message intended for $r$ and $ID(u) \leq ID(v)$ then $ID(q) \leq ID(r)$.\jw{so we sort the messages by receiver id?} Furthermore, out-going messages are ``filled up from the left'', i.e., each node $u$ has either $tC$ messages, or all nodes with higher ID have no message.
    
    Note that this global reordering does not imply that each node already has the messages it is intended to receive, since some nodes may send or receive fewer than their bandwidth allows. However, this global reordering gives us a \textit{schedule} that works in the $\NCC(C)$ model: given a global reordering, each node sends, in ascending order by receiver ID, $t$ blocks of $C$ messages each (one block per round) to the intended receiver.
    
    This schedule for simulating a round of $\mathcal A$ in $\NCC(C)$ does not violate the bandwidth restriction of $C$. This due to the fact that in such a global reordering, the messages to be received by some node $r$, are located on at most two nodes $u$ and $v$ and $v$ will only start sending $r$'s messages, after $u$ has sent all its messages for $r$, therefore no node will receive more than $C$ per round (more precisely both $u$ and $v$ might send to $r$ within a block, but then the messages to $r$ add up to at most $C$).
    
    The rest of the proof is dedicated to obtaining such a global reordering. 
    First each node divides its initial set of outgoing messages during the simulation of $\mathcal A$ into $t$ blocks of size $C$. 
    For each message with receiver $r$ in block $b$, node $v$ creates a tuple $(r,b,i)$ where $i$ is an index from the set $I = \{0, \dots, n \cdot t \cdot C-1\}$ that is distinct among all such tuples of all nodes. To achieve this we note that $v$ can deduce its rank $\rho(v) \in \{0, \dots, n-1\}$ in the sorted sequence of global IDs (since $v$ knows the set of IDs), which allows $v$ to assign unique indices $i \in I$ within the range $\{\rho(v) \cdot t \cdot C, \dots, (\rho(v) \!+\! 1) \cdot t \cdot C - 1 \}$ for each of their messages.     
    The indices assigned to messages are not necessarily consecutive and ascending in terms of sender and receiver IDs as required above.
    
    To obtain the desired global reordering, we compute the following global values on the sets $M_b$ of all tuples of the form $(r,b,i)$ of all nodes in a fixed block $b$. First, the total and partial counts $S_{b} = |M_b|$ and $S_{b}' = |\{(r,b,i) \in M_{b'} \mid b' \leq b\}|$ of messages in block $b$ and up to block $b$ respectively.   
    Second, the total and partial counts $S_{r,b} = |\{(r',b,i) \in M_b \mid r = r'\}|$ and $S_{r,b}' = |\{(r',b,i) \in M_b \mid r' \leq r\}|$ of messages in block $b$ with equal or smaller equal receiver ID than $r$'s. Third, the partial count $s_{r,b,i} = |\{(r',b,j) \in M_b \mid r' = r, j \leq i \}|$ of messages to $r$ with smaller index. Since the number of tuples $(r,b,i)$ corresponds to the block size, each node contributes at most $C$ tuples for a fixed block $b$. Therefore, we can simulate the group based aggregation and prefix-sum algorithms from the $\MPC$ model using Corollary \ref{cor:ncc_sort_agg} in $\bigO(1/\delta)$ rounds for some given block $b$. Since we repeat this for each block $b = 1, \dots ,t$, the total number of rounds is $\bigO(t/\delta)$.

    Each node $v$ that holds some tuple $(r,b,i)$ will obtain the results $S_b, S'_b,S_{r,b}, S'_{r,b}$ and $s_{r,b,i}$ from simulating the according routines. From these values, $v$ can locally compute for each of its out-going messages corresponding to the tuple $(r,b,i)$ the final index $I(r,b,i)$ in a global reordering of messages as follows: $I(r,b,i) := (S_{b}'- S_b) + (S'_{r,b} - S_{r,b}) + s_{r,b,i}$. The first bracket counts all messages in smaller blocks than $b$, the second bracket counts all message within the same block with smaller receiver ID, and the third term counts messages within the same block with same receiver, but smaller index $i$.

    Finally, we can relocate the messages to obtain the desired global reordering. Each node $u$ has for each block $b$ a range of consecutive indices $I(b,u) := \{(\rho(u) \cdot t + b) \cdot C, \dots, (\rho(u) \cdot t + b + 1) \cdot C - 1 \}$ (where $\rho(u) \in \{0, \dots, n-1\}$ is the rank of $u$ in the sorted sequence of IDs) which taken together form the consecutive sequence of indices $I = \{0, \dots, n \cdot t \cdot C-1\}$. For $t$ rounds $b = 1, \dots , t$, each node $v$ that has some message corresponding to $(r,b,i)$ sends this to node $u$ such that $I(r,b,i) \in I(b,u)$. Each node sends at most a block of $C$ messages and receives at most $|I(b,u)| = C$ messages. After relocating the messages according to the global reordering, we can finally send them, by blocks of size $C$ in ascending order by receivers as described earlier.    
\end{proof}

The lemma above combined with Corollary \ref{cor:ncc_sort_agg} implies that we can simulate the basic \MPC routines even with larger data sets of size $t n^\delta$ by running it on the simulated $\NCCx{t\cdot n^\delta}$ with a slowdown of $t/\delta$.

\begin{corollary}
    \label{cor:ncc_sort_agg_large}
    Sorting, group aggregations and computing prefix sums (see \Cref{sec:mpc_routines}) with $t \cdot n^\delta$ inputs per node can be conducted in $\bigO(t/\delta^2)$ rounds in the $\NCCx{n^\delta}$ model, for constant $\delta \in (0,1)$ and $t \leq n^{1-\delta}$.
\end{corollary}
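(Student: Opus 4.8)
The plan is to combine the two scaling tools already available: first solve the task in the stronger $\NCC(t\cdot n^\delta)$ model in $\bigO(1/\delta)$ rounds, and then descend to $\NCC(n^\delta)$ through Lemma~\ref{lem:ncc_scaling}, which costs a multiplicative slowdown of $t/\delta$. Composed, the round count becomes $\bigO(1/\delta)\cdot\bigO(t/\delta)=\bigO(t/\delta^2)$, matching the claim.

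For the first step I would replay the argument behind Corollary~\ref{cor:ncc_sort_agg} with all parameters scaled by $t$ (we may assume $t\in\mathbb N$, otherwise round up to $\lceil t\rceil\le 2t$, changing only constants). Instantiate the $\MPC$ routines for sorting, group aggregation and prefix sums (Lemmas~\ref{lem:mpc_sort}, \ref{lem:mpc_groupagg}, \ref{lem:mpc_prefix}) with $M=t\cdot n$ machines and $S=n^\delta$, instead of $M=n$. Since $t\le n^{1-\delta}$, the total memory is $MS=t\cdot n^{1+\delta}\le n^2$, still polynomial in $n$, so these routines keep a round complexity of $\bigO(\log_{n^\delta}(MS))=\bigO(1/\delta)$. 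Splitting the $t\cdot n^\delta$ items held at a node evenly across $t$ of the simulated machines gives at most $n^\delta$ words per machine, i.e., a valid $\MPC(tn,n^\delta)$ input representation. Now apply Lemma~\ref{lem:NCC_dominates_MPC_small_inputs} with $M=tn\ge n$ and $S=n^\delta$ (so $MS/n=t\cdot n^\delta$): it simulates this $\MPC$ algorithm in $\NCC(t\cdot n^\delta)$ with constant slowdown, and its hypothesis---each node holds at most $MS/n=t\cdot n^\delta$ words---is precisely what the corollary assumes. Hence sorting, group aggregation and prefix sums on $t\cdot n^\delta$ inputs per node run in $\bigO(1/\delta)$ rounds in $\NCC(t\cdot n^\delta)$.

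For the second step I would note that this $\NCC(t\cdot n^\delta)$ algorithm is, in particular, an algorithm for $\NCC(tC)$ with $C=n^\delta$ (and $C\ge n^\delta$ holds with equality), so Lemma~\ref{lem:ncc_scaling} turns it into an $\NCC(n^\delta)$ algorithm with slowdown $t/\delta$. The output stays correctly distributed throughout: sorting, aggregation and prefix sums never increase the total number of items, and the node-rank-to-machine-index correspondence used in Lemma~\ref{lem:NCC_dominates_MPC_small_inputs} is order-preserving, so each node still finishes with $\bigO(t\cdot n^\delta)$ words forming the correct output layout in all three models involved.

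The one point that genuinely needs checking is the first step---that inflating the machine count of the base $\MPC$ routines from $n$ to $tn$ does not push their round complexity past $\bigO(1/\delta)$. This is exactly where the assumption $t\le n^{1-\delta}$ enters: it keeps $MS\le n^2$, so the $\log_{n^\delta}(MS)$ factor in the $\MPC$ round bound collapses to $\bigO(1/\delta)$. Everything else is a mechanical composition of Lemmas~\ref{lem:NCC_dominates_MPC_small_inputs} and~\ref{lem:ncc_scaling}.
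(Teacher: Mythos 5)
Your proposal matches the paper's intended argument exactly: solve the task in $\NCC(t\cdot n^\delta)$ in $\bigO(1/\delta)$ rounds via the $\MPC$ routines and Lemma~\ref{lem:NCC_dominates_MPC_small_inputs}, then use Lemma~\ref{lem:ncc_scaling} to port it to $\NCC(n^\delta)$ with slowdown $t/\delta$, giving $\bigO(t/\delta^2)$ overall. The paper only sketches this in one sentence preceding the corollary, so your fleshed-out version is faithful; the one minor variant is that you fix $S=n^\delta$ and inflate to $M=tn$ machines rather than inflating $S$ to $tn^\delta$ with $M=n$, but since $MS$ and $MS/n$ are identical in both settings, this is cosmetic, and your observation that $t\le n^{1-\delta}$ keeps $MS\le n^2$ (hence $\log_S(MS)=\bigO(1/\delta)$) is precisely the point to verify.
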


\subsection{Deterministic Node Identification in \NCC}

We will now consider the \emph{identification problem} in the \NCC model, where every node is either playing or learning or neither \cite{DBLP:conf/spaa/AugustineGGHSKL19}. The learning and playing nodes know their own role and the IDs of their neighbors. The playing nodes know which neighbors are learning (or know a relatively small superset of the learning neighbors). The learning nodes do not know anything about their neighbors' roles. To solve the problem, all learning nodes must identify all their playing neighbors. 

What makes this problem challenging is that a learning node could have a large number of playing neighbors, so that directly sending this information from playing to learning nodes is not efficient. We solve this problem by leveraging simulated \MPC aggregation routines to iteratively narrow down the potential set of playing neighbors. Our approach offers a deterministic alternative to the randomized approach of \cite{DBLP:conf/spaa/AugustineGGHSKL19}.\ps{be mindful of wording, their solution is for log capacity. Say (not necessarily here), that their solution can be adapted for $n^\delta$ capacity.}\jw{can it? can ours be adapted to $\log n$ capacity? either way, I think it is valuable to stress here again that this might be of independent interest}\ps{pretty sure it can be adapted, but of course not constant time, but $\polylog$ works.}

\begin{theorem}
    \label{thm:identification}
    Assume each learning node has at most $n^{(1-\varepsilon)\delta}$ neighbors that are \emph{not} playing and each playing node has at most $n^{\delta}$ learning neighbors for constants $\varepsilon,\delta \in (0,1)$. There is a deterministic algorithm to solve the node identification problem in $\bigO(1/\varepsilon\delta^2)$ rounds in the $\NCCx{n^{\delta}}$ model.
\end{theorem}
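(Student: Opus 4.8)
The plan is to solve node identification by iteratively narrowing the set of candidate playing neighbors for each learning node, using the simulated \MPC sorting and aggregation routines (Corollaries \ref{cor:ncc_sort_agg} and \ref{cor:ncc_sort_agg_large}) as the main workhorse. The key structural idea: we think of IDs as strings of $1/\delta$ "digits" in base $n^\delta$, and we reveal one digit at a time. After $j$ rounds of the procedure, each learning node $v$ should know, for each of its playing neighbors $p$, the first $j$ digits of $ID(p)$ — equivalently, it knows a block of the ID space of size $n^{(1/\delta - j)\delta \cdot \delta}$ — wait, more carefully: it knows a partition cell of the ID range into $n^{j\delta}$ cells, together with which cells are nonempty (contain at least one playing neighbor). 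Since the whole process has only $1/\delta = \bigO(1/\delta)$ stages, and each stage is a constant number of aggregation/sorting calls at cost $\bigO(1/\delta)$ or $\bigO(1/\delta^2)$, the total is $\bigO(1/\varepsilon\delta^2)$ once we handle the counting issue below.

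First I would set up the iteration. In stage $j$, every learning node $v$ holds a list $L_v^{(j)}$ of "live prefixes" — length-$j$ digit-prefixes of base-$n^\delta$ that still might be prefixes of some playing neighbor's ID. We want to refine each live prefix into its (up to $n^\delta$) length-$(j{+}1)$ extensions and determine which extensions are live. To do this, every playing node $p$ announces to each of its learning neighbors the digit-$j$ value of $ID(p)$, tagged by the already-known prefix. The obstacle — and I expect this to be the main obstacle — is the bandwidth bottleneck on the learning side: a learning node could have up to $n^\delta$ playing neighbors, and if each playing neighbor sends one word the learning node can just barely receive it ($n^\delta$ words fits in capacity $n^\delta$), but the live-prefix lists $L_v^{(j)}$ could in principle blow up multiplicatively by a factor $n^\delta$ each stage, overflowing local bookkeeping and, worse, the outgoing traffic of playing nodes ($n^\delta$ learning neighbors times a potentially long prefix tag). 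The fix is that the number of live prefixes for $v$ can never exceed $\deg_{\text{playing}}(v) \le n^\delta$, because each live prefix must be witnessed by at least one actual playing neighbor; so $|L_v^{(j)}| \le n^\delta$ always, and the refinement step is: collect the multiset of (prefix, next-digit) pairs over all playing neighbors — at most $n^\delta$ of them — which is exactly a group-aggregation / sort task on $n^\delta$ items per node, executable in $\bigO(1/\delta)$ rounds by Corollary \ref{cor:ncc_sort_agg}.

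The role of the $n^{(1-\varepsilon)\delta}$ bound on non-playing neighbors (and the $\varepsilon$ in the round count) is to control communication in the opposite direction — from learning nodes back out, or the initial handshake — so that we never confuse non-playing neighbors with playing ones and the "announce your digit" step from playing to learning nodes can be routed through the simulated aggregation machinery without a receiver overload: a learning node has at most $n^{(1-\varepsilon)\delta} + n^\delta = \bigO(n^\delta)$ neighbors total, so all its incident traffic fits, and the slack factor $n^{\varepsilon\delta}$ between "total neighbors" and "capacity" is what lets us afford $1/(\varepsilon\delta)$-style repetitions (e.g., to gather the relevant edges into the sorting routine in batches, or to handle the case where a single playing node has its $n^\delta$ learning neighbors spread so that Corollary \ref{cor:ncc_sort_agg_large} with $t = n^{\varepsilon\delta}$-ish is invoked, costing $\bigO(1/\delta^2)$ per call). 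After $\lceil 1/\delta \rceil$ stages every live prefix has full length, i.e., is a complete ID, and the set of live full-length prefixes at $v$ is exactly the set of playing neighbors of $v$ — that is the correctness claim, which follows because (i) every playing neighbor's ID is always a live prefix by induction (it is witnessed at every stage) and (ii) a live full-length prefix equal to $ID(p)$ was witnessed in the last stage, meaning $p$ is genuinely a neighbor that declared itself playing. Summing the per-stage costs over $\bigO(1/\delta)$ stages and a constant number of aggregation calls per stage, each costing $\bigO(1/(\varepsilon\delta))$ or $\bigO(1/\delta^2)$, gives the claimed $\bigO(1/\varepsilon\delta^2)$ bound. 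I would finally double-check the edge cases: nodes that are neither playing nor learning simply stay silent; the globally-known ID set assumption (available in \NCC) is what lets every node compute ranks and participate in the simulated \MPC routines.
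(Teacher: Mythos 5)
Your proposal rests on an assumption that is not among the theorem's hypotheses, and the argument collapses at exactly that point. You write that the number of live prefixes for a learning node $v$ ``can never exceed $\deg_{\text{playing}}(v)\le n^\delta$'' and later that ``a learning node has at most $n^{(1-\varepsilon)\delta}+n^\delta=\bigO(n^\delta)$ neighbors total.'' But the theorem does \emph{not} bound the number of playing neighbors of a learning node. Its two hypotheses bound (i) the number of \emph{non}-playing neighbors of each learning node, by $n^{(1-\varepsilon)\delta}$, and (ii) the number of \emph{learning} neighbors of each playing node, by $n^\delta$. Neither of these controls $\deg_{\text{playing}}(v)$ for a learning node $v$, which can be as large as $n-1$. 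Consequently your live-prefix list $L_v^{(j)}$, which by construction tracks playing neighbors, is not bounded by $n^\delta$: it can reach size $n^{j\delta}$ and eventually the (unbounded) count of playing neighbors, blowing past the $n^\delta$ capacity, and the ``announce your digit'' step overloads the learning node's receive budget. The problem is nontrivial precisely because learning nodes may have arbitrarily many playing neighbors; if they had at most $n^\delta$ of them, each playing node could simply send one direct message.

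The paper's proof avoids this by inverting what is tracked. Since in $\NCC$ a learning node already knows the IDs of \emph{all} its neighbors, it suffices to identify the (at most $n^{(1-\varepsilon)\delta}$) non-playing neighbors; the playing neighbors are then obtained locally by complementation, with no further communication. The iterative refinement therefore maintains, per learning node, a list of ID-intervals that still contain some non-playing neighbor, whose size is always at most $n^{(1-\varepsilon)\delta}$. The base of the refinement is $n^{\varepsilon\delta}$, not $n^\delta$: subdividing each of the at most $n^{(1-\varepsilon)\delta}$ live intervals into $n^{\varepsilon\delta}$ sub-intervals yields at most $n^{(1-\varepsilon)\delta}\cdot n^{\varepsilon\delta}=n^\delta$ aggregation groups per learning node, matching the bandwidth, while the $n^\delta$ bound on learning neighbors of each playing node keeps playing nodes within their own send budget. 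This choice of base is also where the $1/\varepsilon$ in the round bound comes from: there are $\bigO(1/(\varepsilon\delta))$ stages of $\bigO(1/\delta)$ rounds each. Finally, the per-interval test is a cancellation trick rather than a direct handshake: playing nodes contribute $-1$ per learning neighbor in the interval, the learning node contributes the total count of its neighbors in that interval, and a positive aggregate signals a non-playing neighbor. You would need this inversion (and the accompanying base choice) to make the proposal go through.
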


\begin{proof}
    Let $d$ be the smallest power of $n^{\varepsilon \delta}$ that is strictly larger than the largest ID (We assume that $n^{\varepsilon \delta}$ is integer. Otherwise we round it up). Since the largest ID has a $\poly(n)$ upper bound, $d$ is $\poly(n)$ as well. For some granularity parameter $i \in \mathbb N$ we divide $[0,d)$ into $n^{i \cdot \varepsilon \cdot \delta}$ equally sized sub-intervals $[d_{ij}, d_{i,j+1})$, where $j$ enumerates the intervals in ascending order. Our goal is that a learning node $v$ determines for a given interval $[d_{i-1,k}, d_{i-1,k})$ which of the next smaller granularity sub-intervals $[d_{ij}, d_{i,j+1})$ contain at least one neighbor that is \emph{not} playing. Naturally, if we achieve this down to intervals of size 1, $v$ knows its playing neighbors.
    
    We set up the following aggregation instance for node $v$. Each playing node $u$ creates a pair $\big(g_{jv},-1\big)$, for each learning neighbor $v$ of $u$ whose ID falls into the interval $[d_{ij}, d_{i,j+1})$, where $g_{jv} := j\circ ID(v)$ denotes the aggregation group and $-1$ denotes the aggregation value (see \Cref{sec:mpc_routines}). The aggregation function is the summation. 
    Each learning node $v$ creates a pair $(g_{jv}, t_{jv})$, where $t_{jv}$ is the total number of neighbors $v$ has in $[d_{ij}, d_{i,j+1})$.     
    Let $s_{jv}$ be the sum of aggregation values in the group $g_{jv}$, obtained by all nodes that contributed a tuple to $g_{jv}$. Crucially, $[d_{ij}, d_{i,j+1})$ contains a non-playing neighbor of $v$ if and only if $s_{jv} > 0$. 
    This establishes how to handle a single interval $[d_{ij}, d_{i,j+1})$ for a single learning node $v$ in iteration $i$. 
    
    We need to determine for \textit{all} intervals of granularity $i$ whether they contain a non-playing neighbor for all learning nodes $v$ with a single run of a group aggregation algorithm. Therefore, we need to take care that every node contributes at most $n^{\delta}$ aggregation tuples, to efficiently run the aggregation using Corollary \ref{cor:ncc_sort_agg}.   
    This is clear for playing nodes, since they only contribute one aggregation tuple per learning neighbor, which is bounded by $n^\delta$. To also limit the number of aggregations for the learning nodes we have to bound the number of intervals they contribute aggregations to by induction over $i$. For $i=1$ we have $n^{(1-\eps)\delta}$ intervals $[d_{1j},d_{1,j+1})$, i.e., a learning node $v$ participates in at most $n^{(1-\eps)\delta}$ aggregation groups, allowing $v$ to identify the intervals with non-playing neighbors (as described before).
    
    Our induction hypothesis is that in iteration $i$ we have successfully identified the at most $n^{(1-\eps)\delta}$ intervals $[d_{ij},d_{i,j+1})$ that contain non-playing neighbors.
    We subdivide each of these into $n^{\eps\delta}$ sub-intervals of the next smaller granularity, that is, we obtain at most $n^{(1-\eps)\delta} \cdot n^{\eps\delta} = n^{\delta}$ sub-intervals of the form $[d_{i+1,k},d_{i+1,k+1})$. Node $v$ participates in an aggregation group for each such sub-interval $[d_{i+1,k},d_{i+1,k+1})$ allowing it to identify those that contain non-playing neighbors, of which there are at most $n^{(1-\eps)\delta}$ many and therefore the claim holds for $i+1$.

    When the sub-intervals reach size 1, each learning node knows its non-playing, and thus its playing neighbors. Since intervals become shorter by a factor of $n^{\eps\delta}$ in each iteration and the initial interval size is $n^c$ for some constant $c$, this point is reached after $\bigO(1/\eps\delta)$ iterations. Each iteration consists of a group aggregation taking $\bigO(1/\delta)$ rounds by Corollary \ref{cor:ncc_sort_agg}, resulting in the claimed overall round complexity.
\end{proof}

We can parameterize the problem with the number of non-playing and learning neighbors of each learning and playing node, respectively. The theorem above gives us a solution by dilating the number of rounds to handle a larger throughput.

\begin{corollary}
    \label{cor:identification}
    Let $p$ be the maximum number of non-playing neighbors of each learning node. Let $\ell$ be the maximum number of learning neighbors of each playing node.   
    There is a deterministic algorithm to solve the node identification problem in $\bigO\big((\tfrac{p+\ell}{n^{(1-\eps)\delta}}\!+\!1) \tfrac{1}{\eps^2\delta^3}\big)$ rounds in the $\NCCx{n^{\delta}}$ model, for any constants $\eps,\delta \in (0,1)$.
\end{corollary}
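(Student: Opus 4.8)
The plan is to reduce the general parameterized case back to the situation handled by Theorem \ref{thm:identification} by splitting both the playing-to-learning relationship and the learning-to-nonplaying relationship into batches of the sizes that the theorem can digest, and then invoking the theorem once per pair of batches. Concretely, set $q := \lceil (p+\ell)/n^{(1-\eps)\delta} \rceil + 1$, which is the number of batches we will need along each relevant axis. Each learning node partitions its at most $p$ non-playing neighbors into at most $\lceil p/n^{(1-\eps)\delta}\rceil \le q$ groups of size at most $n^{(1-\eps)\delta}$ each; symmetrically, each playing node partitions its at most $\ell$ learning neighbors into at most $\lceil \ell/n^{\delta}\rceil$ groups of size at most $n^\delta$ each. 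A learning node knows the indices of its own non-playing-neighbor batches and a playing node knows the indices of its own learning-neighbor batches, but these indices are not globally coordinated; that is fine, because what Theorem \ref{thm:identification} needs is only that, for the subinstance currently being processed, every learning node has at most $n^{(1-\eps)\delta}$ non-playing neighbors and every playing node has at most $n^\delta$ learning neighbors.

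First I would fix a "learning batch index" $k \in \{1,\dots,q\}$ and restrict attention to the instance $\Pi_k$ in which each learning node only cares about (i.e., declares as relevant) the non-playing neighbors in its $k$-th batch. In $\Pi_k$ every learning node has at most $n^{(1-\eps)\delta}$ non-playing neighbors by construction. However a playing node may still have too many learning neighbors, so second I would further restrict to a "playing batch index" $k' \in \{1,\dots,\lceil\ell/n^\delta\rceil\}$, obtaining the instance $\Pi_{k,k'}$ in which a playing node $u$ reports to a learning neighbor $v$ only if $v$ lies in $u$'s $k'$-th batch. In $\Pi_{k,k'}$ both side constraints of Theorem \ref{thm:identification} hold, so one run of the theorem costs $\bigO(1/\eps\delta^2)$ rounds and lets every learning node identify, within its $k$-th batch and within what is visible through playing batch $k'$, exactly which of those nodes are playing. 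Running over all $k'$ and all $k$ covers every (learning node, playing neighbor) pair at least once — any such pair falls into some learning batch $k$ of the learning node and some playing batch $k'$ of the playing node — so after all runs every learning node has identified all of its playing neighbors.

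The round count is the number of invocations times the per-invocation cost. The number of invocations is $q \cdot \lceil \ell/n^\delta\rceil$. Since $q = \bigO\big((p+\ell)/n^{(1-\eps)\delta} + 1\big)$ and $\lceil\ell/n^\delta\rceil \le \lceil(p+\ell)/n^{(1-\eps)\delta}\rceil \le q$ (using $n^\delta \ge n^{(1-\eps)\delta}$), this is $\bigO\big((\tfrac{p+\ell}{n^{(1-\eps)\delta}}+1)^2\big)$ at worst; but by pairing each learning batch with the playing batches more carefully — note that a learning node's $k$-th batch of non-playing neighbors only contains up to $n^{(1-\eps)\delta}$ nodes, so the total "work" seen across all playing nodes in one $\Pi_k$ is already bounded — one can arrange the batching so the two axes do not multiply, giving $\bigO\big((\tfrac{p+\ell}{n^{(1-\eps)\delta}}+1)\big)$ invocations. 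Multiplying by the $\bigO(1/\eps\delta^2)$ cost of Theorem \ref{thm:identification} and absorbing the extra $1/\delta$ factors coming from the aggregation slowdown in the larger-throughput regime (Corollary \ref{cor:ncc_sort_agg_large}) yields the claimed $\bigO\big((\tfrac{p+\ell}{n^{(1-\eps)\delta}}+1)\tfrac{1}{\eps^2\delta^3}\big)$ bound.

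The main obstacle I anticipate is the bookkeeping that ensures the batching on the two sides is consistent enough that a single aggregation run actually respects the per-node bound of $n^\delta$ tuples — in particular, a learning node participating in batch $k$ must know which playing-side batch $k'$ a given potential neighbor would be assigned to, even though it does not yet know that the neighbor is playing. The clean way around this is to not let playing nodes choose their own batches adversarially but instead derive the batch index of the edge $(u,v)$ from a globally agreed function of the IDs (e.g., the batch of $v$ within $u$'s neighbor list can be replaced by a hash/bucketing of $ID(v)$ into $\lceil\ell/n^\delta\rceil$ buckets, which by a balls-into-bins / explicit-hashing argument keeps each bucket within a constant factor of $n^\delta$), which both sides can compute without communication. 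This is exactly the trick already used inside Theorem \ref{thm:identification} (subdividing the ID space into sub-intervals), so it fits the existing machinery; handling the non-uniformity of bucket sizes with a deterministic bucketing scheme is the one place that needs care.
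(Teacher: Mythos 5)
Your proposal takes a genuinely different route from the paper's: you partition the problem into sub-instances (batching by non-playing neighbors on the learning side and by learning neighbors on the playing side) and invoke Theorem~\ref{thm:identification} once per sub-instance, whereas the paper simply scales the bandwidth up to a larger virtual capacity $n^{\delta'}$ with adjusted $\eps'$ so that the hypotheses of Theorem~\ref{thm:identification} hold directly, invokes the theorem once, and then simulates $\NCCx{n^{\delta'}}$ inside $\NCCx{n^\delta}$ via Lemma~\ref{lem:ncc_scaling} at cost $(p+\ell)/(n^{(1-\eps)\delta}\delta)$. Unfortunately your batch-and-repeat plan has two gaps that I don't think can be patched without essentially reinventing the paper's approach. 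First, a learning node $v$ cannot ``partition its at most $p$ non-playing neighbors into groups'' because the entire point of the problem is that $v$ does not know which of its neighbors are non-playing; it only knows its full neighbor set, whose size may greatly exceed $p+\ell$. If you batch by full neighborhood or by ID intervals with a fixed granularity, the number of batches is governed by $\deg(v)$, not by $p$, which destroys your claimed batch count. Second, and more fundamentally, the cancellation logic inside Theorem~\ref{thm:identification} requires \emph{every} playing neighbor of $v$ in a given ID interval to contribute $-1$ to $v$'s aggregation; otherwise the aggregate $s_{jv}$ is an overcount and $v$ obtains false positives (intervals that look like they contain a non-playing neighbor when they are all-playing). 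Restricting a playing node $u$ to report only to the learning nodes in its $k'$-th batch is exactly this kind of suppression, so in sub-instance $\Pi_{k,k'}$ the learning node's candidate set can balloon far beyond $n^{(1-\eps)\delta}$ per level and the induction in the theorem breaks down. Finally, the step where you assert the two axes ``do not multiply'' and the provenance of the extra $1/\eps$ factor in the final bound are left as assertions without an argument, whereas in the paper both of these drop out of a short calculation ($1/(\eps'\delta'^2)\le 1/(\eps^2\delta^2)$ since $\delta'\ge\eps\delta$, and one additional $1/\delta$ from the simulation slowdown). I would suggest looking at Lemma~\ref{lem:ncc_scaling}: once you know you can trade rounds for bandwidth at a clean $t/\delta$ rate, the corollary is a two-line reduction and avoids all the bookkeeping you flagged as the main obstacle.
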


\begin{proof}
    For the case $p+\ell \leq n^{(1-\eps)\delta}$, the claim follows directly from Theorem \ref{thm:identification}. (Note that the $+1$ in the first term of the round complexity is only required to provide a constant floor in case $p+\ell \ll n^{(1-\eps)\delta}$). 
    
    In case $p+\ell > n^{(1-\eps)\delta}$, we choose $\delta' \geq \eps\delta$ such that $n^{\delta'-\eps\delta} = p+\ell$. Then we let $\varepsilon' = \eps\delta/\delta'$, which we chose specifically such that $n^{\delta'-\eps\delta} = n^{(1-\eps')\delta'}$. Note that $\delta', \eps'$ are constants in $(0,1)$. 
    
    This allows us to apply Theorem \ref{thm:identification} for the $\NCCx{n^{\delta'}}$ model, solving the problem in $\bigO(1/\eps'\delta'^2) \subseteq \bigO(1/\eps^2\delta^2)$ rounds. Furthermore, we can simulate $\NCCx{n^{\delta'}}$ in $\NCCx{n^{\delta}}$ with a slowdown factor of at most ${n^{\delta'}}/(n^{\delta}\delta) = (p+\ell)/(n^{(1-\eps)\delta}\delta)$ by Lemma \ref{lem:ncc_scaling}.
\end{proof}

\subsection{\MPC Simulation in \NCC on Bounded Arboricity Graphs}

The prime application for the node identification problem is an efficient edge-orientation of a graph $G$ such that the out-degree is not too much larger than the arboricity of $G$ (see Appendix \ref{sec:graph_properties} for a characterization of the arboricity of a graph). In the \NCC model, a representation of a valid solution for this problem stores the list of out-neighbors of such an orientation.

In this subsection we leverage fast edge orientation \MPC algorithms that we introduce in Appendix \ref{sec:mpc_routines_forest_decomp}.
The process for edge-orientation is described in Lemma \ref{lem:mpc_orientation} and proceeds in phases. In each phase, we orient every edge incident to vertices whose current degree is $\bigO\big(\beta\cdot\arb(G)\big)$, then remove those edges. We repeat this process on the residual graph until all edges are oriented. 

Unlike the \MPC\ routine of Lemma \ref{lem:mpc_orientation}, aggregation alone cannot reveal which edges must be oriented: a vertex that appears low-degree in the residual graph may still have high overall degree, and it cannot locally tell which of its incident edges are already oriented. Consequently, we must first solve a node-identification subproblem before each orientation phase. We show the following lemma.

\begin{lemma}
    \label{lem:ncc_orientation}
    An orientation of a graph $G$ with out-degree less than $2  \beta \arb(G)$ can be computed in  $\bigO\big((\tfrac{\beta\cdot \arb(G)}{n^{\delta/2}}\!+\!1)\cdot \tfrac{1}{\delta^3}\cdot \log_\beta n \big)$ rounds in the $\NCCx{n^\delta}$ model for constant $\delta \in (0,1)$ and $\beta \geq 2$.
\end{lemma}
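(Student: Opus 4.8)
The plan is to mirror the phase structure of the MPC edge‑orientation routine (Lemma \ref{lem:mpc_orientation}) in the \NCC model, using the deterministic node identification of Corollary \ref{cor:identification} to overcome the one gap between the two models. Recall that the MPC routine runs in $\bigO(\log_\beta n)$ phases: in phase $k$ it looks at the residual graph $G_k$ (initially $G_0 = G$), identifies every vertex whose \emph{current} residual degree is at most $\beta\cdot\arb(G)$ (such vertices always exist because $\arb(G_k)\le\arb(G)$ and a bounded‑arboricity graph has a constant fraction of low‑degree vertices), orients all residual edges incident to those ``light'' vertices away from them, and deletes those edges. Since each phase removes at least a constant fraction of the remaining edges (or rather: after $\bigO(\log_\beta n)$ peeling rounds the graph is empty, because peeling vertices of degree $\le \beta a$ from an $a$‑arboricity graph terminates in $\bigO(\log_\beta n)$ rounds), the process finishes in $\bigO(\log_\beta n)$ phases, and every oriented edge points from the vertex that was light when it got removed, so the out‑degree of any vertex is at most $2\beta\arb(G)$ (the factor $2$ absorbing both the light‑vertex threshold and the standard ``a vertex of residual degree $\le c$ contributes $\le c$ out‑edges'' bound summed geometrically; I would pin down the exact constant bookkeeping in the writeup, but it is the same accounting as in Lemma \ref{lem:mpc_orientation}).

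The substantive step is realizing one phase in \NCC. The obstacle, as flagged in the paragraph preceding the lemma, is that a vertex $v$ cannot locally tell which of its incident edges are already oriented in earlier phases, so it does not know its residual degree, and a high‑degree vertex cannot afford to broadcast ``I am heavy'' to all its neighbors within the bandwidth $n^\delta$. We handle this exactly as the text suggests: set up a node‑identification instance where the \emph{playing} nodes are the vertices that are heavy in the current residual graph $G_k$ (equivalently, \emph{not} yet peeled) and the \emph{learning} nodes are all vertices; a learning node must learn which of its neighbors are still playing, and therefore knows which of its incident edges survive into $G_k$, hence its residual degree, hence whether it is itself light this phase; light learning nodes orient their surviving edges outward. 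To invoke Corollary \ref{cor:identification} we need the parameters $p$ (max number of non‑playing neighbors of a learning node) and $\ell$ (max number of learning neighbors of a playing node) under control. A playing (heavy) vertex can have up to $\Theta(\arb(G)\cdot n)$ learning neighbors in the worst case, which is too many; the fix is that heavy vertices need only announce their status to neighbors that are still \emph{undecided}, and once a vertex is peeled it stops being learning, so across the $\bigO(\log_\beta n)$ phases the relevant ``learning'' side shrinks — but more directly, we only ever need each light vertex to discover which of \emph{its own} $O(\beta\arb(G))$ surviving edges point to playing vertices, so the identification instance can be restricted so that $p,\ell = \bigO(\beta\cdot\arb(G))$: a light vertex has at most $\beta\arb(G)$ residual neighbors by definition, and a playing vertex only needs to confirm its status to the (at most $\beta\arb(G)$) residual neighbors that are themselves light candidates this phase. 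I expect this restriction argument — making the playing side only responsible for light neighbors, so that $\ell \le \beta\arb(G)$, while still guaranteeing every light vertex correctly learns all its surviving incident edges — to be the main point to get right, and it parallels how the MPC version uses aggregation only over residual edges.

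Plugging $p,\ell \in \bigO(\beta\cdot\arb(G))$ into Corollary \ref{cor:identification} with the parameter choice $\eps = 1/2$ (so $n^{(1-\eps)\delta} = n^{\delta/2}$) gives a per‑phase cost of $\bigO\big((\tfrac{\beta\cdot\arb(G)}{n^{\delta/2}}+1)\cdot\tfrac{1}{\delta^3}\big)$ rounds; besides the identification call, each phase needs only a constant number of simulated aggregations/prefix‑sums (to let peeled vertices and their neighbors agree on the orientation and to update residual degrees), each costing $\bigO(1/\delta)$ rounds by Corollary \ref{cor:ncc_sort_agg}, which is absorbed. Multiplying by the $\bigO(\log_\beta n)$ phases yields the claimed $\bigO\big((\tfrac{\beta\cdot\arb(G)}{n^{\delta/2}}+1)\cdot\tfrac{1}{\delta^3}\cdot\log_\beta n\big)$ bound. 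Finally, the output representation is the one stated: each vertex stores its list of out‑neighbors, which is produced directly since a vertex learns an edge is outgoing precisely in the phase it gets peeled; correctness of the out‑degree bound $<2\beta\arb(G)$ follows from the peeling invariant as in Lemma \ref{lem:mpc_orientation}. The one caveat to double‑check is that Corollary \ref{cor:identification} requires its parameters to be constants in $(0,1)$ after the internal rescaling; since $\beta\arb(G)$ may be as large as $n^{1-\delta}$ this is exactly the regime the corollary's second case was designed for, so no extra assumption beyond $\beta\ge 2$ and constant $\delta$ is needed.
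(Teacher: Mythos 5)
Your overall plan — peeling phases as in Lemma~\ref{lem:mpc_orientation}, one node-identification call per phase with $p,\ell \in \bigO(\beta\arb(G))$ and $\eps=1/2$, plus a few $\bigO(1/\delta)$-round aggregations, multiplied by $\log_\beta n$ phases — is the right shape and matches the paper's accounting. But the way you set up the node-identification instance is backwards, and this is not a detail that can be patched by the hand-wave you gesture at.

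You declare the \emph{playing} nodes to be the heavy (not-yet-peeled) vertices and the \emph{learning} nodes to be everyone, then observe that a heavy vertex may have $\Theta(n)$ learning neighbors and try to restrict the instance so that a playing vertex ``only needs to confirm its status to the (at most $\beta\arb(G)$) residual neighbors that are themselves light candidates.'' This fails for three independent reasons. First, the identification problem requires playing nodes to know (a small superset of) their learning neighbors; a heavy vertex has no idea which of its neighbors are light this phase, so it cannot restrict to them. Second, even if it could, a heavy vertex can have far more than $\beta\arb(G)$ light neighbors (e.g.\ the center of a star), so $\ell$ is not bounded as claimed. Third, your description of what the learning node learns is circular: you say it discovers ``which of its own $O(\beta\arb(G))$ surviving edges point to playing vertices,'' but the entire obstacle is that a light vertex does not know which of its edges survive — that is the information the identification step must deliver, not assume. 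Also, $p$ is not bounded in your setup either, since a light vertex can have arbitrarily many already-peeled (non-playing) in-neighbors.

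The paper's proof inverts the roles: \emph{playing} nodes are the vertices already removed in earlier phases, and \emph{learning} nodes are those whose residual degree has dropped below $2\beta\arb(G)$ (tracked via an explicit degree invariant maintained by aggregation each phase). This choice is exactly what makes the identification hypotheses hold: a removed node has fewer than $2\beta\arb(G)$ out-neighbors, and every still-active neighbor of a removed node must be one of those out-neighbors, so a playing node knows a bounded superset of its learning neighbors and $\ell \in \bigO(\beta\arb(G))$. Conversely a learning node has fewer than $2\beta\arb(G)$ non-playing (residual) neighbors by definition, so $p \in \bigO(\beta\arb(G))$. After identification, a learning node knows which neighbors are removed, hence which incident edges are already oriented inward, hence which residual edges it must now orient outward — precisely the knowledge you were missing. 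The remaining tie-breaking between two learning endpoints and the degree update are handled by aggregation as you sketch, and the round count then comes out as you computed.
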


\begin{proof}
    We show that the edge‐orientation procedure can be carried out so that two invariants hold after every phase: (i) each node $v$ knows its current degree $d(v)$ in the residual graph; (ii) each node that has already been removed has the list of its at most $2\beta\arb(G)$ outgoing neighbors. To run a phase, we orient every edge incident to a node with fewer than $2\beta\arb(G)$ unoriented edges and update all degrees, thereby restoring the preconditions for the next phase. At the start of a new phase these invariants are assumed to hold. The difficulty is that a node can tell its degree is below $2\beta\arb(G)$ but cannot identify which of its incident edges remain unoriented. We therefore frame this task as an instance of the node‐identification problem.

    Playing nodes are those already removed in previous phases, while learning nodes are the ones whose residual degree is below $2\beta\arb(G)$. Any learning node adjacent to a playing node must be its out-neighbor, because all in-neighbors were deleted earlier. By invariant (ii) every playing node knows its (fewer than $2\beta\arb(G)$) out-neighbors. Conversely, the degree cap implies that each learning node has at most $2\beta\arb(G)$ neighbors that are still present, i.e., not playing.
    This gives a node-identification instance with parameters $p, \ell \in  \bigO\big(\beta \cdot \arb(G)\big)$. Corollary~\ref{cor:identification} with $\varepsilon=1/2$ solves it in \smash{$\bigO\big((\tfrac{\beta \arb(G)}{n^{\delta/2}}\!+\!1) \tfrac{1}{\delta^3}\big)$} rounds. Afterwards, every learning node (removed in the current phase) knows its playing neighbors and can therefore locate the incident edges that still need to be oriented.

    To orient edges toward residual neighbors we must resolve ties when the neighbor is also a learning node removed in the same phase. Either direction is acceptable, but both endpoints must agree. We frame this as an aggregation task. For every edge $\{u,v\}$ a learning node $u$ intends to orient, it creates $(u,v)$ if $ID(u)\le ID(v)$ and $(v,u)$ otherwise. Using Corollary \ref{cor:ncc_sort_agg_large} we aggregate these tuples and each learning node obtains the count $C_{u,v}\in \{1,2\}$ for each edge $\{u,v\}$ to be oriented. Because each node contributes at most $2\beta\arb(G)$ tuples, the aggregation finishes in $\bigO(\beta \arb(G)/n^\delta\delta^2)$ rounds in the $\NCC(n^\delta)$ model when $2\beta \arb(G)\ge n^\delta$, and in $\bigO(1/\delta)$ rounds otherwise (Corollary \ref{cor:ncc_sort_agg}). Finally, if $C_{u,v}=1$ the edge is directed from $u$ to $v$; if $C_{u,v}=2$ it is directed toward the endpoint with the smaller ID. This completes one phase and reestablishes invariant (ii).

    To restore invariant (i) each node must learn its new residual degree at the end of the phase. Every learning node $u$ creates a pair $(v,-1)$ for each of the (fewer than $2  \beta \arb(G)$) edges $\{u,v\}$ it just oriented. Each still-active node $v$ (neither learning nor playing) contributes $(v,d(v))$, where $d(v)$ is its current residual degree. After an aggregation step $v$ receives the sum of the second components and thus its updated degree. Because a node supplies at most $2 \beta \arb(G)$ inputs, the aggregation costs $\bigO(\beta  \arb(G)/n^\delta\delta^2)$ rounds in the $\NCC(n^\delta)$ model when $\beta \cdot \arb(G)\ge n^\delta$ (\ref{cor:ncc_sort_agg_large}), and $\bigO(1/\delta)$ rounds otherwise (Corollary~\ref{cor:ncc_sort_agg}). Overall, each phase runs in \smash{$\bigO\big((\tfrac{\beta\cdot \arb(G)}{n^{\delta/2}}+1)\cdot \tfrac{1}{\delta^{3}}\big)$} rounds. Because every phase removes a $\beta$-fraction of the remaining nodes (explained in-depth in Lemma~\ref{lem:mpc_orientation}), the algorithm needs $\log_\beta n$ phases in total.    
\end{proof}

Finally, we can show how to use the tools above to simulate $\MPC$ in $\NCC$. We can leverage the edge orientation developed for the $\NCC$ model to make one endpoint of each edge responsible for managing that edge such that no node is responsible for more than $\arb(G)$ edges. With such a bound on the input size, we show that each node in the $\NCCx{MS/n}$ model can simulate $M/n$ machines of a given $\MPCx{M,S}$ algorithm.

\begin{theorem}
    \label{thm:ncc_sim_mpc}
    Let $\mathcal P_a$ be the family of graph problems with input graph of bounded arboricity $a \leq \tfrac{MS}{n^{1+\delta}}$ (or analogously, minimum total memory of $MS \geq a \cdot n^{1+\delta}$),
    for any constant $\delta \in (0,1)$ and $M \geq n$. Then $\MPCx{M,S} \,\leq_{\mathcal P_a}\! \NCC\big(\tfrac{MS}{n}\big)$, with an additive (constant in $n$) overhead of $\bigO(1/\delta^4)$.
\end{theorem}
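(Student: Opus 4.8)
The plan is to combine the three ingredients assembled so far: the edge orientation from Lemma~\ref{lem:ncc_orientation}, the basic \MPC routines simulated in \NCC via Corollaries~\ref{cor:ncc_sort_agg} and~\ref{cor:ncc_sort_agg_large}, and the direct simulation of Lemma~\ref{lem:NCC_dominates_MPC_small_inputs} for well-distributed inputs. Concretely, given an $\MPCx{M,S}$-algorithm $\mathcal A$ that solves a problem $\Pi \in \mathcal P_a$ in $T_{\mathcal A}$ rounds, the \NCC simulation has three stages: (1) a preprocessing stage that redistributes the input graph so that every node holds only $\bigO(a)$ edges; (2) a spreading stage that places this redistributed input into a valid $\MPCx{M,S}$ input representation across the $M/n$ machines simulated by each node; (3) the actual round-by-round simulation of $\mathcal A$, which by Lemma~\ref{lem:NCC_dominates_MPC_small_inputs} incurs only constant overhead per round.

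For stage~(1), I would run the $\NCC$ edge-orientation algorithm of Lemma~\ref{lem:ncc_orientation} with parameter $\beta = 2$ (a constant, so $\log_\beta n = \bigO(\log n)$ \emph{phases}, but each phase is cheap), obtaining an orientation with out-degree $< 4\,\arb(G) \le 4a$. Wait --- this is the subtle point, so let me be careful: $\log_\beta n$ is $\Theta(\log n)$ for constant $\beta$, which would \emph{not} give an additive constant overhead. The resolution must be to choose $\beta$ large, e.g.\ $\beta = n^{\delta/2}$ or $\beta = n^{\delta}$, so that $\log_\beta n = \bigO(1/\delta)$ is constant; then the out-degree bound becomes $2\beta\arb(G) = \bigO(a \cdot n^{\delta})$, and one checks this is still at most $\bigO(MS/n)$ since $a \le MS/n^{1+\delta}$ gives $a n^{\delta} \le MS/n$. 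With this choice each node is made ``responsible'' for its $\bigO(a n^{\delta})$ out-edges (including the relevant endpoint labels from $f_I$, which are a constant number of words per edge and can be gathered along the oriented edges). The per-phase cost of Lemma~\ref{lem:ncc_orientation} with $\beta = n^{\delta}$ is $\bigO\big((\tfrac{\beta a}{n^{\delta/2}}+1)\tfrac{1}{\delta^3}\big) \subseteq \bigO\big((\tfrac{MS}{n^{1+\delta/2}}+1)\tfrac{1}{\delta^3}\big)$, and the bandwidth $C = MS/n$ comfortably absorbs the $n^{\delta/2}$ factor, so the whole orientation finishes in $\bigO(1/\delta^4)$ rounds, matching the claimed additive overhead.

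For stage~(2), after orientation each node has $\bigO(a n^{\delta}) \le \bigO(MS/n) = \bigO(C)$ words of input, so Lemma~\ref{lem:NCC_dominates_MPC_small_inputs}'s hypothesis (each node holds at most $MS/n$ words) is satisfied; each node splits its load evenly across the $M/n$ machines it simulates, giving $\le S$ words per simulated machine, i.e.\ a legal $\MPCx{M,S}$ input representation. One should note that \emph{any} legal $\MPC$ input distribution is allowed (the model must cope with the worst case), so it does not matter that the edges end up grouped by their oriented tail; $\mathcal A$ solves $\Pi$ from this representation too. Stage~(3) then runs $\mathcal A$ with the node-per-$(M/n)$-machines correspondence, using that each node has bandwidth $\bigO(S)$ per simulated machine and hence can deliver one round of all $M/n$ simulated machines in $\bigO(1)$ \NCC rounds (appealing to the \NCC routing / the argument inside Lemma~\ref{lem:NCC_dominates_MPC_small_inputs}); the output representation is inherited directly. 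Summing: stages~(1)--(2) cost $\bigO(1/\delta^4)$ additively, stage~(3) costs $\bigO(T_{\mathcal A})$, giving $\bigO(T_{\mathcal A}) + \bigO(1/\delta^4)$ total, which is the asserted additive overhead.

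The main obstacle I anticipate is verifying that the out-degree bound produced by Lemma~\ref{lem:ncc_orientation} (with the necessarily \emph{large} $\beta$ needed to keep the phase count constant) still fits inside the bandwidth/memory budget $C = MS/n$ and, simultaneously, still yields a valid $\MPC$ input representation with $\le S$ words per machine --- this is exactly where the hypothesis $a \le MS/n^{1+\delta}$ is used and must be tracked through the $\beta$-dependent slack. A secondary technicality is bookkeeping the endpoint/edge labels $f_I$: each oriented edge must carry a constant number of words, and these must be collected at the responsible node without blowing the $\bigO(a n^{\delta})$ bound, which is fine since it is only a constant-factor increase, but it should be stated. Finally, one should double-check the tie-breaking/consistency of ``responsibility'' so that every edge is managed by exactly one endpoint and the simulated $\MPC$ machines collectively see each edge exactly once --- this follows from the orientation being a genuine orientation (each edge gets exactly one direction), handled already within Lemma~\ref{lem:ncc_orientation}.
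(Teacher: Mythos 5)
Your three-stage plan---orient via Lemma~\ref{lem:ncc_orientation} with a polynomially large $\beta$, reinterpret the oriented out-edges as a legal $\MPCx{M,S}$ input representation, then simulate with Lemma~\ref{lem:NCC_dominates_MPC_small_inputs}---is exactly the structure of the paper's proof, and you correctly identify that $\beta$ must be $n^{\Theta(\delta)}$ to keep the phase count constant. The paper's specific choice is $\beta=\lfloor n^\delta/2\rfloor$, which makes the out-degree bound $2\beta a\le MS/n$ hold exactly rather than up to a constant factor, but that is cosmetic.

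Where your write-up slips is the round count of stage~(1). Lemma~\ref{lem:ncc_orientation} is invoked in the $\NCCx{MS/n}=\NCCx{n^{\delta'}}$ model (with $n^{\delta'}=MS/n$), so the lemma's $\delta$ must be replaced by $\delta'$ throughout: the per-phase cost is $\bigO\bigl((\beta a/n^{\delta'/2}+1)/\delta'^{3}\bigr)$, not $\bigO\bigl((\beta a/n^{\delta/2}+1)/\delta^{3}\bigr)$ as you wrote --- you conflate the theorem's $\delta$ with the model exponent $\delta'$. Having already bounded $\beta a\le n^{\delta'}$, the phrase ``the bandwidth comfortably absorbs the $n^{\delta/2}$ factor'' does not close the gap: $\beta a/n^{\delta'/2}$ is bounded by $n^{\delta'/2}$, which is not $\bigO(1)$. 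The paper instead bounds $\beta a/n^{\delta'}\le\beta/n^{\delta}\le 1$ using $a\le n^{\delta'-\delta}$ and $\beta<n^\delta$, which is the algebraic step your proposal should spell out rather than gesture at. (Incidentally, the paper's display writes $n^{\delta'}$ where a literal substitution into Lemma~\ref{lem:ncc_orientation}'s stated bound gives $n^{\delta'/2}$; this discrepancy traces back to the fixed choice $\eps=1/2$ in Corollary~\ref{cor:identification} and is worth re-examining --- but it is the paper's concern, not an additional gap in your argument beyond the one just named.)
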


\begin{proof}
    Let $\delta' > 0$, such that $MS = n^{1+\delta'}$. %
    We choose $\beta = \lfloor n^\delta/2 \rfloor$ and compute an edge orientation with out-degree less than $2\beta a \leq 2 \lfloor n^\delta/2 \rfloor \tfrac{MS}{n^{1+\delta}} \leq \tfrac{MS}{n} = n^{\delta'}$ in the $\NCCx{MS/n} = \NCCx{n^{\delta'}}$ model using Lemma \ref{lem:ncc_orientation}. Using that $a \leq \tfrac{MS}{n^{1+\delta}} = n^{\delta'-\delta}$ and $\log_\beta n \in \bigO(1/\delta)$ the required number of rounds can be bounded as follows:
    \begin{align*}
        \bigO\big((\tfrac{\beta\cdot a}{n^{\delta'}}\!+\!1)\cdot \tfrac{1}{\delta'^3}\cdot \log_\beta n \big)  & \subseteq \bigO\big((\tfrac{\beta}{n^{\delta}}\!+\!1)\cdot \tfrac{1}{\delta'^3}\cdot \log_\beta n \big)\\
        & \subseteq \bigO\big(\tfrac{1}{\delta'^3}\cdot \log_\beta n \big) \subseteq \bigO\big(\tfrac{1}{\delta'^3}\cdot \tfrac{1}{\delta} \big) \subseteq \bigO\big(\tfrac{1}{\delta^4} \big)
    \end{align*}

    We now make every node responsible only for its outgoing edges, i.e., each node has its out-edges as input for the graph problem. Note that this is a valid input representation for the $\MPC$ model (see Definition \ref{def:representation}). This means each node has a limited input of $n^{\delta'}$ in the $\NCCx{MS/n} = \NCCx{n^{\delta'}}$ model, and since $M \geq n$, this allows us to apply Lemma \ref{lem:NCC_dominates_MPC_small_inputs} and simulate any graph algorithm in $\NCCx{MS/n} = \NCCx{n^{\delta'}}$ that solves $\mathcal P_a$ in \MPCx{M,S}.
\end{proof}

Any $\MPC(M,S)$ algorithm must, in general, store $\Theta\big(\arb(G)n\big)$ words for the input graph $G$, so a total memory of $MS \in \Omega\big(\arb(G)n\big)$ is typically unavoidable. Distributing the total bandwidth in the $\MPC(M,S)$ model evenly over the nodes in the $\NCC$ model gives each node $\Omega\big(\arb(G)\big)$ bandwidth. Our simulation requires only a slight increase of total memory (and thus bandwidth). In particular, assuming $MS\ge\arb(G)n^{1+\delta}$ for any constant $0<\delta<1$ (for a per-node bandwidth of $\arb(G)n^{\delta}$), is only slightly above the required minimum, considering that $\delta$ can be arbitrarily small.

In particular, Theorem \ref{thm:ncc_sim_mpc} is interesting in light of the 1v2-cycles problem, which is conjectured to be impossible in sub-logarithmic time in the strongly sublinear regime $S=n^\delta$, even for polynomially large total memory $MS = \poly(n)$ \cite{DBLP:conf/stoc/LackiMOS20, DBLP:journals/dc/NanongkaiS22}. In the (unlikely) event that a $o(\log n)$ solution for \MPC with total memory $n^{1+\delta} \leq SM \leq n^{2-\varepsilon}$ for $\delta, \varepsilon > 0$ is found, this would imply the same round complexity for the sublinear $\NCCx{MS/n}$ model. This would be surprising as pointer doubling with round complexity $\Theta(\log n)$ is considered the best possible in sublinear \NCC (arguably another reason to believe the conjectured $\Omega(\log n)$ lower bound in sublinear \MPC).

\section{Impossibility of Simulating \MPC in \NCC}
\label{sec:sim_mpc_in_ncc_impossible}

\psin{Maybe also say something on the $M \geq n$ requirement of Theorem \ref{thm:ncc_sim_mpc}. For example in the extreme case where $S = n$ with $M = n^\delta$, we can solve 1v2 cycles by putting the input on a single machine, whereas the same is impossible in the $\NCCx{MS/n} = \NCCx{n^\delta}$ model. Maybe this argument can be made a bit more fine-grained to show that $M \geq n$ is necessary. In particular, in \MPC we can solve multiple instance 1v2 cycles in constant rounds on graphs with $n^{\delta'}$ nodes each, given $S= n^{\delta'}$ and constant $\delta' >> \delta$, but the same cannot be said of the $\NCCx{n^\delta}$ model where pointer chasing with running time $\bigO(\log(n^{\delta'})) = \bigO(\log n)$ is likely the best (otherwise it would collapse some circuit complexity classes via simulation results).}

This section establishes inherent obstacles to simulating $\MPC$ in $\NCC$. We present two complementary impossibility frameworks: one for graphs with high arboricity, where information bottlenecks imply that any \NCC algorithm takes longer than the best \MPC algorithm, and one for decision problems in the polylogarithmic \MPC regime, where we obtain simulation impossibility of \MPC algorithms in \NCC under local computation limitations and conditioned on circuit-complexity conjectures.

\subsection{Simulation Impossibility on High Arboricity Graphs}

We show that we cannot hope for a constant-round simulation of $\MPCx{M,S}$ algorithms in the $\NCCx{C}$ model with $C = MS/n$ on graphs with high arboricity.
This demonstrates that our constant slowdown simulation Theorem \ref{thm:ncc_sim_mpc} is tight in terms of the capacity $C$ required to simulate $\MPC(M,S)$ in the $\NCC$ model (up to a $\bigO(\log n)$ factor), and near tight in the arboricity requirement (up to $n^\delta$ for arbitrarily small constant $\delta$).
Moreover, our impossibility results hold on graph instances that are connected, which avoids the ramifications of the 1v2-cycles conjecture, %
\cite{DBLP:journals/dc/NanongkaiS22,DBLP:conf/stoc/LackiMOS20}.

To separate the two models, we consider the triangle listing problem, where each triangle has to be reported by at least one node (in case of \NCC) or machine (in case of the \MPC model). Note that this is a graph problem as specified in Definition \ref{def:problem}.
We start by showing that there is a simple, constant round algorithm to solve triangle listing in $\MPC$, exploiting large total memory. Given that such a large memory is indeed necessary to report all triangles in the graph, this algorithm is in fact the best possible one (up to constant factors). We slightly generalize the problem to list only triangles of $G$ from a given subset $\mathcal T$ (where $\mathcal T = \binom{V}{3}$ corresponds to the standard triangle listing).

\begin{lemma}
    \label{lem:mpc_triangle_listing}
    Given a graph $G=(V,E)$ and a set $\mathcal T \subseteq \binom{V}{3}$ where $\mathcal T$ can be encoded with at most $S$ words and every machine knows $\mathcal T$ as part of the input. Then listing all triangles in $G$ and $\mathcal T$ can be accomplished in $\bigO(1/\delta)$ rounds in the $\MPCx{M,S}$ model, with $M \geq |\mathcal T|/n^\delta$ and $S =n^\delta$ for constant $\delta \in (0,1)$.
\end{lemma}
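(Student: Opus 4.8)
The plan is to assign each candidate triangle from $\mathcal{T}$ to a dedicated machine and then route the relevant edge information to that machine so it can check whether the triangle is present in $G$. First I would observe that since $\mathcal{T}$ fits into $S = n^\delta$ words and is known to every machine, the machines can agree on a canonical enumeration of $\mathcal{T}$ without any communication. Using this enumeration together with the machine IDs in $\{0,\dots,M-1\}$, we partition $\mathcal{T}$ into $M \geq |\mathcal{T}|/n^\delta$ blocks of at most $n^\delta$ candidate triangles each, and assign block $i$ to machine $i$. Each such triple $\{u,v,w\}$ needs the three potential edges $\{u,v\},\{v,w\},\{u,w\}$ to be checked; since a machine holds $\le n^\delta$ triples, it issues at most $3 n^\delta$ edge-presence queries.

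The main work is the routing step: each machine must learn, for each of its assigned triples, which of the three relevant pairs are actually edges of $G$. I would implement this with the sorting/aggregation primitives for strongly sublinear $\MPC$ (Lemmas~\ref{lem:mpc_sort}, \ref{lem:mpc_groupagg}, cf.\ \Cref{sec:mpc_routines}), which run in $\bigO(1/\delta)$ rounds. Concretely, each machine creates, for every edge $\{x,y\}\in E$ it currently stores, a record keyed by the (canonically ordered) pair $(x,y)$; in parallel, for every query pair $(x,y)$ arising from an assigned triple, it creates a query record with the same key and a pointer back to the requesting machine and triple. We then sort all records by key (breaking ties so that edge records precede query records within a key group) and run a group aggregation / segmented broadcast so that each query record learns whether a matching edge record exists in its group. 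A final sort by the back-pointer field routes the answers to the originating machine. Each machine handles $\bigO(n^\delta)$ records throughout, so the load bounds are respected and the whole routine takes $\bigO(1/\delta)$ rounds.

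Once a machine knows, for each assigned triple $\{u,v,w\}$, whether all three pairs are edges, it locally outputs the triple iff it is a triangle of $G$; since every triple in $\mathcal{T}$ is assigned to exactly one machine, every triangle in $G\cap\mathcal{T}$ is reported exactly once, which is a valid output representation for the listing problem. The main obstacle to watch is the worst-case input distribution: the edges of $G$ may be scattered arbitrarily across machines, so we cannot assume locality and must route everything through the oblivious sort/aggregate primitives; but because those primitives only need each machine to hold $\bigO(S)$ items at all times and we have arranged exactly that ($\bigO(n^\delta)$ edge records plus $\bigO(n^\delta)$ query records per machine, possibly after a standard rebalancing step), this causes no difficulty. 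One minor point to handle carefully is that $|E|$ could exceed $M n^\delta$ in principle; but since the input already resides in $\MPC$ memory we have $MS \ge |E|$, hence $M \ge |E|/n^\delta$, so the edge records fit, and combining with $M \ge |\mathcal{T}|/n^\delta$ gives enough machines for both the edge records and the query records simultaneously.
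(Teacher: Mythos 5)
Your proof is correct and follows essentially the same approach as the paper: assign the candidate triples of $\mathcal{T}$ canonically to machines (at most $n^\delta$ per machine), create aggregation records keyed by the three candidate edges, and use the constant-round group-aggregation primitive to let each machine learn which of its queried pairs are actual edges of $G$. The paper's version is slightly terser — it simply has triangle-machines emit $(\{u,v\},0)$ tuples and edge-holders emit $(\{u,v\},1)$ tuples and invokes Lemma~\ref{lem:mpc_groupagg} once (which already returns the aggregate to the contributors) — but the underlying mechanism is the same.
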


\begin{proof}
    We assign each triangle $\{u,v,w\} \in \mathcal T$ to a machine $M_{uvw}$ in a canonical way, note that we have enough machines such that no machine is responsible for more than $n^\delta$ triangles. We construct the following aggregation task. Each machine $M_{uvw}$ creates three tuples $(\{u,v\},0)$, $(\{v,w\},0)$, $(\{u,w\},0)$. Let $M_{uv}$ be the machine that has an edge $\{u,v\}$ as input, then each $M_{uv}$ makes an aggregation tuple $(\{u,v\},1)$. Now we run the aggregation routine from Lemma \ref{lem:mpc_groupagg} in $\bigO(1/\delta)$ rounds. $M_{uvw}$ reports the triangle $\{u,v,w\}$ if and only if the aggregation for all three edges $\{u,v\},\{v,w\},\{u,w\}$ results to 1.
\end{proof}

We show that simulating $\MPCx{M,S}$ algorithms in $\NCCx{MS/n}$ is impossible on graphs with arboricity at least $MS/n$. For this purpose we consider a class of ``lollipop'' graphs consisting of paths with a dense random graph attached to one end. We show that in the $\NCC(C)$ model there is an information theoretic bottleneck, where it is hard for the nodes in the dense part of the graph to list the triangles (inspired by a lower bound by \cite{DBLP:conf/podc/IzumiG17} for the congested clique) and the nodes in the sparse part cannot help because all information on the triangles is ``trapped'' in the dense part. We keep the information theoretic part of the proof on an intuitive level, and refer to \cite{DBLP:conf/podc/IzumiG17} for the details. Perhaps surprisingly, and in contrast to \cite{DBLP:conf/podc/IzumiG17}, the lemma gives us a lower bound for an all-to-all communication model even on a sparse graph (choosing $n = a^{3}$, we have $\bigO(a^2 + a^3) = \bigO(n)$ edges, where most edges are just for keeping the graph connected).

\begin{lemma}
    \label{lem:ncc_lower_bound_lollipop}
    Consider the connected random graphs produced by taking a path on $a^{3}$ vertices, enumerated from $\{1, \dots, a^3\}$ from start to end, and then independently adding each missing edge among pairs of the last $a$ vertices with probability $1/2$. Let $\mathcal{G}$ be the class of (sparse) graphs created with probability $>0$ in this procedure. Any $\NCCx{C}$ algorithm that lists all triangles of a random graph from $\mathcal G$, even when promised that the input is from $\mathcal G$, takes $\Omega(a/C \log n)$ rounds in expectation.
\end{lemma}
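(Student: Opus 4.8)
The plan is to lift the congested-clique triangle-listing lower bound of Izumi--Le~Gall~\cite{DBLP:conf/podc/IzumiG17} to the $\NCCx{C}$ model by an information-theoretic cut argument that isolates the dense ``head'' of the lollipop. Let $H$ denote the set of the last $a$ vertices (the dense part), so the random edges live entirely inside $\binom{H}{2}$, and let $L$ be the remaining $a^3 - a$ path vertices. The key structural observation is that the only triangles in such a graph lie inside $H$: a path vertex has degree at most $2$ and its two neighbors are non-adjacent, so no triangle can touch $L$. Hence every triangle that must be reported is a triangle of the random graph $R$ on vertex set $H$ with edge probability $1/2$, and reporting all of them forces the collective knowledge of the vertices in $H$ (together with whatever they receive from $L$) to determine $R$ up to the triangle hypergraph — but in a $G(a,1/2)$ graph the triangle hypergraph determines the graph itself with high probability (any missing edge $\{x,y\}$ is ``witnessed'' by a common neighbor $z$ with probability $1 - 2^{-(a-2)}$, and a union bound over the $\binom{a}{2}$ pairs shows all edges are simultaneously determined w.h.p.). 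Therefore, after a correct execution, the joint transcript seen by $H$ must encode $R$, which carries $\Theta(a^2)$ bits of entropy.

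Next I would set up the cut $(H, V \setminus H)$ and bound the information that can cross it. In $\NCCx{C}$ each of the $a$ vertices of $H$ sends and receives at most $C$ words per round, so across $r$ rounds at most $a \cdot r \cdot C$ words, i.e.\ $\bigO(a r C \log n)$ bits, cross into $H$ from the outside. Crucially, the $L$-vertices start with \emph{no} information about the random edges of $R$ (their inputs are only their path neighborhoods, which are deterministic given the promise), so by a standard argument nothing the $L$-side can tell $H$ about $R$ exceeds what $H$ itself already sent out — more carefully, one argues via the standard ``fooling the cut'' / communication-complexity reduction (exactly as in~\cite{DBLP:conf/podc/IzumiG17}) that the whole computation restricted to $H$ can be simulated by a two-party protocol in which one party holds $R$ and the other holds nothing, so the transcript that lets $H$ reconstruct $R$ must have length $\Omega(a^2)$ bits. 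Since each of the $a$ nodes of $H$ can inject or absorb only $\bigO(C \log n)$ bits per round, we get $a \cdot r \cdot \bigO(C\log n) = \Omega(a^2)$, hence $r = \Omega\bigl(a/(C\log n)\bigr)$ in expectation (the expectation over the random graph is handled by noting the ``w.h.p.\ reconstructible'' event has probability $1-o(1)$, so a constant fraction of the entropy must still be communicated on average; a Markov/averaging argument converts the high-probability entropy bound into an expected-round bound).

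The main obstacle — and the step I would be most careful about — is making rigorous the claim that the sparse side $L$ cannot ``help'' the dense side, since $L$ is large ($a^3 - a$ vertices) and the $\NCC$ model allows arbitrary any-to-any communication and unbounded local computation, so a priori $L$ could serve as external scratch memory that $H$ offloads partial information onto and retrieves later. The resolution is that this does not create information from nothing: any bit $L$ ever knows about $R$ must have entered $L$ through the cut from $H$ in some earlier round, so over $r$ rounds the \emph{total} number of $R$-informative bits ever to cross the cut in either direction is still $\bigO(a r C \log n)$, and at the end $H$'s reconstruction of $R$ is a function of $H$'s inputs (which are independent of $R$ apart from $R$'s own edges incident to $H$ — but those \emph{are} $R$, handled by the partition) plus the messages received across the cut. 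Formalizing this as an entropy inequality $H(R) \le I(R : \text{cut transcript}) + H(R \mid \text{cut transcript, } H\text{'s local randomness})$ and bounding the mutual-information term by the total cut capacity $\bigO(a r C \log n)$ is the technical heart; the excerpt explicitly says we keep this part at an intuitive level and defer details to~\cite{DBLP:conf/podc/IzumiG17}, so in the write-up I would state the cut-capacity bound, invoke their argument for the $\Omega(a^2)$ transcript lower bound, and combine the two to conclude $r = \Omega\bigl(a/(C\log n)\bigr)$.
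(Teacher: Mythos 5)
There is a genuine gap, and it lies right at the heart of your cut argument.

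Your central claim is that ``the joint transcript seen by $H$ must encode $R$, which carries $\Theta(a^2)$ bits of entropy,'' and later that ``the transcript that lets $H$ reconstruct $R$ must have length $\Omega(a^2)$ bits.'' But the vertices of $H$ \emph{collectively already know $R$} from their inputs: each of the $a$ dense vertices is given its full incident edge set, and the union of these inputs determines every random edge. The entropy inequality you propose, $H(R) \le I(R:\text{cut transcript}) + H(R\mid\text{cut transcript},\ldots)$, becomes vacuous once the conditioning ``$\ldots$'' includes $H$'s inputs: the conditional term is already $0$ with zero communication. Your parenthetical (``but those \emph{are} $R$, handled by the partition'') flags the problem but does not resolve it, and there is no partition in your setup that would make the term nonzero. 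So no lower bound follows from this cut inequality as stated, and the direction of information flow you emphasize (into $H$) is the wrong one.

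The idea you are missing is that the lower bound must be keyed to \emph{which nodes actually output the triangles}, not to whether ``$H$'' globally encodes $R$. The paper splits into two cases depending on whether the path side $L$ or the dense side $H$ is charged with reporting at least half of the (in expectation) $\Theta(a^3)$ triangles. In the first case, information must flow $H\to L$: the $L$-nodes initially know nothing about $R$, and the combinatorial fact (\Cref{lem:edges_of_triangles}) that any set of $k$ triangles spans $\Omega(k^{2/3})$ distinct edges forces $\Omega(a^2)$ bits to leave $H$, whose total send capacity is $\bigO(aC\log n)$ bits per round — giving $\Omega(a/(C\log n))$. In the second case a cut bound on $(H,L)$ is useless (nothing need cross), and one instead pigeonholes: some single dense node $v$ reports $\Omega(a^2)$ triangles, which by the same combinatorial lemma span $\Omega(a^{4/3})$ distinct edges, yet $v$ individually starts with only $\bigO(a)$ of them and can receive only $\bigO(C\log n)$ bits per round, giving $\Omega(a^{4/3}/(C\log n))$. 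Your write-up handles neither case correctly: the first needs the opposite flow direction from what your main paragraph argues, and the second is a per-node (not per-cut) bottleneck that a cut argument across $(H,V\setminus H)$ cannot see at all, because the $\NCCx{C}$ constraint is per node, not per subset. To fix the proof you would need to introduce the ``who outputs'' case split and the triangles-to-edges counting lemma; the ``triangle hypergraph determines $R$ w.h.p.'' observation, while true, is not the load-bearing step and can be dropped.
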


\begin{proof}
    Let $G$ be graph from $\mathcal G$ generated as described above. 
    Let $X_e=1$ if one of random edges $e$ exists in $G$ else $X_e = 0$, which is a random variable of Shannon entropy $H(X_e) = 1$ bit. The random variables $X_e$ are independent for distinct random edges $e$ in $G$. Therefore the expected number of bits that must be communicated to a subset $U$ of nodes in order that the nodes in $U$ can list a set of triangles equals the number of distinct edges of these triangles which are yet unknown to any node from $U$.\footnote{This shortened  argument is based on Shannon's concept of information entropy and in particular the source coding theorem \cite{shannon1948_part1,shannon1948_part2}. Loosely, it establishes that a uniquely decodable code that encodes the state of a random variable $X$ requires $H(X)$ bits in expectation. The transcript of a communication protocol that communicates $X$ is a uniquely decodable code, thus the
    transcript of a communication protocol has a length of $H(X)$ bits in expectation.}
    
    Since any three nodes in the dense part of the graph (the last $a$ nodes on the path) form a triangle with probability $1/8$, $G$ has $t \in \Omega(a^3)$ triangles in expectation.     
    In any $\NCC(C)$ protocol $\mathcal A$ each triangle must be reported by some node. Therefore, to solve triangle listing, one of the following two cases must occur. In the first case the first $a^3 - a$ nodes on the path report at least $t/2$ triangles. In the second case the last $a$ nodes on the path, i.e., those in the dense part report $t/2$ triangles. 

    Let us consider case 1. Any set of $t/2$ triangles contains at least \smash{$\tfrac{t^{2/3}}{3 \cdot 2^{1/6}}$} distinct edges by Lemma \ref{lem:edges_of_triangles}, of which at least \smash{$\eta := \tfrac{t^{2/3}}{3 \cdot 2^{1/6}} - a$} are non-path edges. Note that $\eta \in \Omega(a^2)$.    
    Therefore, to output the $t/2$ triangles, the first $a^3-a$ nodes on the path have to learn all $\eta$ edges. Since only the $a$ nodes in the dense part of the graph know these edges, information of $\eta$ bit must have been send by those. Each node can send at most $\bigO(C \log n)$ bits per round, this implies that it takes $\eta/aC \in \Omega(a/C\log n)$ rounds in expectation to communicate this information. 

    Let us consider case 2 where the $a$ nodes in the dense part of the graph output at least $t/2$ triangles. This implies that at least one node $v$ in the dense part reports $t/2a$ triangles, which have at least \smash{$\tfrac{(t/a)^{2/3}}{3 \cdot 2^{1/6}}$} distinct edges by Lemma \ref{lem:edges_of_triangles}. Of these edges $v$ knows at most $a$ itself and at most $a$ are fixed edges on the path, so $v$ must learn at least \smash{$\eta := \tfrac{(t/a)^{2/3}}{3 \cdot 2^{1/6}} - 2a$} edges. Since $\eta \in \Omega(a^{4/3})$ it takes $\Omega(a^{4/3}/C\log n)$ rounds of communication until $v$ knows the edges of the triangles it lists.   
\end{proof}

We are now ready to prove that the arboricity requirement of the simulation result in Theorem \ref{thm:ncc_sim_mpc} is actually necessary and tight up to a factor $n^{\delta}$, for arbitrarily small $\delta >0$.

\begin{theorem}    \label{thm:sim_mpc_in_ncc_impossible1}
 Let $\mathcal P_a$ be the family of graph problems (Definition \ref{def:problem}) with input graph of arboricity $a \in \Omega(n^{1/3})$. There is a constant $c > 0$, such that $\MPCx{M,S} \nleq_{\mathcal P_a} \NCC\big(\tfrac{MS}{n}\big)$ for $a \geq c \log n \cdot \tfrac{MS}{n}$ (analogously, for total space $MS \leq \tfrac{an}{c \log n}$).
\end{theorem}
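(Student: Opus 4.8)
The plan is to combine the constant-round \MPC\ upper bound for triangle listing (Lemma~\ref{lem:mpc_triangle_listing}) with the \NCC\ information-theoretic lower bound for the lollipop family (Lemma~\ref{lem:ncc_lower_bound_lollipop}) to witness a separation. First I would instantiate the lollipop construction of Lemma~\ref{lem:ncc_lower_bound_lollipop} with a dense-part size $a$ (the clique-candidate region) and total node count $n=a^3$, so that the graph is connected, sparse ($\bigO(a^2+a^3)=\bigO(n)$ edges), and has arboricity $\Theta(a)$ (the dense part alone forces arboricity $\Omega(a)$, while the whole graph admits a forest decomposition into $\bigO(a)$ forests; cf.\ Appendix~\ref{sec:graph_properties} / Lemma~\ref{lem:arb_min_edges}). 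The problem $\Pi$ is triangle listing restricted to inputs from the promised class $\mathcal G$ (this is a graph problem in the sense of Definition~\ref{def:problem}, since the number of triangles is $\bigO(a^3)=\bigO(n)$ words). I would also need to handle the fact that $\mathcal P_a$ asks for \emph{arbitrary} $a\in\Omega(n^{1/3})$: one either pads the lollipop with an isolated-ish path/forest structure to reach the target $n$ while keeping arboricity at $\Theta(a)$, or simply notes the construction already has $a=\Theta(n^{1/3})$ and remarks that larger $a$ only makes the separation easier.

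Second, the \MPC\ side: by Lemma~\ref{lem:mpc_triangle_listing} (with $\mathcal T=\binom{V}{3}$, which however has $\Theta(n^3)$ entries — so instead I take $\mathcal T$ to be the set of triples inside the dense part, which has $\binom{a}{3}=\bigO(a^3)=\bigO(n)$ entries and is encodable/known to all machines since the dense part is an $a$-vertex ID range), triangle listing on these instances is solvable in $\bigO(1/\delta)$ \MPC\ rounds provided $M\ge |\mathcal T|/n^\delta = \bigO(n^{1-\delta})$ and $S=n^\delta$. Since we are in the regime $MS\le an/(c\log n)$ with $a=\Theta(n^{1/3})$, we have $M = \Theta(MS/n^\delta) = \Theta(a n^{1-\delta}/\log n)$, which is comfortably $\ge n^{1-\delta}$, so Lemma~\ref{lem:mpc_triangle_listing} applies and gives a constant-round \MPC\ algorithm. (Any triangle touching the sparse path can be detected trivially in $\bigO(1/\delta)$ rounds by a separate aggregation, or avoided by noting the path contributes no triangles.)

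Third, the \NCC\ side: Lemma~\ref{lem:ncc_lower_bound_lollipop} gives that any $\NCCx{C}$ algorithm listing all triangles of a random $G\in\mathcal G$ takes $\Omega(a/(C\log n))$ rounds in expectation. Plugging $C=MS/n$ and the hypothesis $a\ge c\log n\cdot MS/n = c\log n\cdot C$, this lower bound is $\Omega(a/(C\log n))=\Omega(c)$, which for a suitable absolute constant $c$ is $\omega(1)$ — more precisely, we want $a/(C\log n)\to\infty$, so I would instead phrase the hypothesis (as the theorem does, with ``there is a constant $c$'') so that $a\ge c\log n\cdot C$ forces the ratio to exceed any constant, e.g.\ by taking $c$ large or by observing that in the relevant parameter range $a/C$ already grows with $n$. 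Since the \MPC\ algorithm runs in $\bigO(1/\delta)=\bigO(1)$ rounds but every \NCC\ algorithm needs $\omega(1)$ rounds, no constant-slowdown simulation can exist, i.e.\ $\MPCx{M,S}\nleq_{\mathcal P_a}\NCCx{MS/n}$.

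The main obstacle I anticipate is the bookkeeping around the parameter $c$ and the exact growth of the lower bound: Lemma~\ref{lem:ncc_lower_bound_lollipop} as stated yields $\Omega(a/(C\log n))$, and to get a genuine $\omega(1)$ separation (not merely a large constant) one must be careful that the chosen family has $a/C$ genuinely unbounded, or else restate the claim as ruling out simulation with slowdown below a fixed constant — the theorem's ``there is a constant $c>0$'' phrasing is what makes this work, and I would make sure the proof matches that phrasing. A secondary subtlety is confirming that restricting triangle listing to the dense region (so that $\mathcal T$ is small and known to all machines, as Lemma~\ref{lem:mpc_triangle_listing} requires) does not weaken the \NCC\ lower bound — but it does not, since Lemma~\ref{lem:ncc_lower_bound_lollipop}'s bottleneck argument is entirely about the dense-part triangles anyway, and the sparse part contributes no triangles. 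Everything else (arboricity of the lollipop, edge count, the output-size bound of Definition~\ref{def:problem}) is routine.
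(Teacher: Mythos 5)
Your proposal is correct and matches the paper's proof essentially step for step: it uses the lollipop family from Lemma~\ref{lem:ncc_lower_bound_lollipop} with $n=a^3$, invokes Lemma~\ref{lem:mpc_triangle_listing} with a pre-known candidate set $\mathcal T$ of $\bigO(a^3)=\bigO(n)$ triples from the dense part for the constant-round \MPC\ upper bound, and combines this with the $\Omega(a/C\log n)$ \NCC\ lower bound to rule out constant-slowdown simulation when $C=MS/n\le a/(c\log n)$. The subtlety you flag about the quantifier on $c$ (a fixed $c$ only defeats slowdowns below a corresponding constant) is handled the same way in the paper, which introduces the constant $c'$ tied to the simulation's slowdown and then sets $c=1/(c'\delta)$; your instinct to tie $c$ to the slowdown constant being defeated is exactly the intended reading.
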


\begin{proof}
    Consider a random graph $G$ from the class $\mathcal G$ (see Lemma \ref{lem:ncc_lower_bound_lollipop}) with $n=a^3$ nodes which has arboricity at most $\arb(G) \leq a$. Since the graph is sparse, total space of $MS \in \bigO(a^3)$ is sufficient. In the $\MPCx{a^3/n^{\delta},n^\delta}$ model there is an algorithm $\mathcal A$ to solve the triangle listing problem on any graph $G$ from $\mathcal G$ in $\bigO(1/\delta)$ rounds, for constant $\delta \in (0,1)$, due to Lemma \ref{lem:mpc_triangle_listing}, and exploiting that the set $\mathcal T$ of $\bigO(a^3)$ candidate triangles is known in advance.

    By contrast, in the $\NCCx{C}$ model any algorithm for triangle listing on random graphs $G$ from $\mathcal G$ takes $\Omega(a/C\log n)$ rounds in expectation (and therefore in the worst case), by Lemma \ref{lem:ncc_lower_bound_lollipop}, even when promised that the input is from $\mathcal G$. Any constant-slowdown simulation of $\mathcal A$ in the $\NCCx{C}$ model with $C=MS/n$ must necessarily satisfy \smash{$C = MS/n > \tfrac{c' \cdot \delta \cdot a}{\log n}$} for some constant $c' > 0$, in order to match the $\bigO(1/\delta)$ round running time of $\mathcal A$ in the $\MPCx{M,S}$ model. 
    In other words, such a simulation is impossible for $a \geq \tfrac{\log n}{c'\delta} \cdot \tfrac{MS}{n} = c \log n \cdot  \tfrac{MS}{n}$ (where $c = \tfrac{1}{c'\delta}$).
\end{proof}

Note that the impossibility result above relies only on sparse graph instances, exploiting that even a sparse graph can have a dense subgraph with relatively large arboricity ($\Theta(n^{1/3})$ in expectation in the considered instance) that the $\NCC$ model struggles with in terms of the triangle listing problem.

Our next question is whether a constant slow-down simulation of $\MPCx{M,S}$ algorithms in the $\NCCx{MS/n}$ model could be feasible, if we consider only graphs of extremely small arboricity? We answer that in the negative as well, showing that a slightly super-logarithmic arboricity already makes it impossible to simulate the fastest possible $\MPC$-algorithm for triangle listing, proving that Theorem \ref{thm:ncc_sim_mpc} is almost tight even when arboricity is tiny.

\begin{theorem}    \label{thm:sim_mpc_in_ncc_impossible2}
    Let $\mathcal P_a$ be the family of graph problems (Definition \ref{def:problem}) with input graph of arboricity $a \in \omega(\log n) \cap o(n^{1/3})$. Then $\MPCx{M,S} \nleq_{\mathcal P_a} \NCC\big(\tfrac{MS}{n}\big)$.    
\end{theorem}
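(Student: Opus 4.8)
The plan is to embed the ``lollipop'' hard instance of Lemma~\ref{lem:ncc_lower_bound_lollipop} inside a sparse $n$-vertex graph, tuned so that its arboricity lands anywhere in the target window $\omega(\log n)\cap o(n^{1/3})$ while the total memory needed by an $\MPC$ solver stays essentially linear. Fix a constant $\delta\in(0,1)$ and an $a=a(n)\in\omega(\log n)\cap o(n^{1/3})$. Let $\mathcal G_n$ be the family of graphs obtained from the path $v_1-v_2-\cdots-v_n$ by independently adding each absent edge among the last $a$ vertices $D=\{v_{n-a+1},\dots,v_n\}$ with probability $1/2$, where the set $D$ is fixed and known as part of the promise. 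Every such graph has $\bigO(n+a^2)=\bigO(n)$ edges, all its triangles lie inside the induced subgraph $G[D]$ (which has only $\bigO(a^2)$ edges), and its arboricity is $\Theta(a)$ (we scale constants so it is at most $a$), hence $\mathcal G_n\subseteq\mathcal P_a$. The problem $\Pi$ is triangle listing on $\mathcal G_n$; since there are at most $\binom a3=o(n)$ triangles this is a graph problem in the sense of Definition~\ref{def:problem}.

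Next I would give a constant-round $\MPC$ algorithm whose total memory is only $\Theta(n)$. Set $S=n^\delta$ and take $M=\Theta(n^{1-\delta})$, just enough to hold the $\Theta(n)$ input edges and run the routine below; then $MS=\Theta(n)$ and $C:=MS/n=\Theta(1)$. Since every triangle sits among the $\bigO(a^2)$ known dense edges, the path is irrelevant. Partition $D$ into $p=\lceil a/n^{\delta/2}\rceil$ blocks of size $\le n^{\delta/2}$ and assign one machine to each of the $\bigO(p^3)$ block-triples; route to that machine the $\bigO(n^\delta)$ dense edges spanning its three blocks, using the $\MPC$ sorting/aggregation primitives (Lemmas~\ref{lem:mpc_sort},~\ref{lem:mpc_groupagg}) in $\bigO(1/\delta)$ rounds. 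Each dense edge is replicated $\bigO(p)$ times, so the moved data is $\bigO(p\cdot a^2)=\bigO(a^3/n^{\delta/2})=o(n)$, the $\bigO(p^3)=\bigO(a^3/n^{3\delta/2})=o(n^{1-\delta})$ machine-roles fit, each machine then enumerates the triangles spanning its blocks locally, and collecting the $o(n)$ triangles into an output representation costs another $\bigO(1/\delta)$ rounds. (Equivalently one may apply any known constant-round $\MPC$ triangle-enumeration routine to the $\bigO(a^2)$-edge graph $G[D]$; what matters is total memory $\bigO(n)$, in particular $\ll a\cdot n/\log n$.)

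Finally I would carry the information-theoretic bottleneck of Lemma~\ref{lem:ncc_lower_bound_lollipop} over to the $n$-vertex instance. Its proof uses only that the $\Omega(a^3)$ random triangles contain $\Omega(a^2)$ distinct non-path edges (via the edge-count bound, Lemma~\ref{lem:edges_of_triangles}) known exclusively to the $a$ nodes of $D$, and that a node emits $\bigO(C\log n)$ bits per round; lengthening the path to $n$ vertices leaves this intact (only the word length $\Theta(\log n)$ grows with $n$). Hence any $\NCCx{C}$ algorithm listing all triangles of a graph from $\mathcal G_n$ needs $\Omega\!\big(a/(C\log n)\big)$ rounds in expectation, and therefore in the worst case; the worst-case instance still has $\Omega(a^2)$ dense edges, hence arboricity $\Theta(a)\le a$, so it lies in $\mathcal P_a$. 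With $C=\Theta(1)$ this bound is $\Omega(a/\log n)=\omega(1)$ because $a\in\omega(\log n)$. Combining: $\Pi\in\mathcal P_a$ has an $\bigO(1/\delta)=\bigO(1)$-round $\MPCx{M,S}$ algorithm while every $\NCCx{MS/n}=\NCCx{\Theta(1)}$ algorithm for $\Pi$ takes $\omega(1)$ rounds, which by Definitions~\ref{def:simulation} and~\ref{def:dominance} is precisely $\MPCx{M,S}\nleq_{\mathcal P_a}\NCCx{MS/n}$.

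The step I expect to be the main obstacle is the $\MPC$ upper bound: we must keep $MS$ (and hence $C=MS/n$) near-linear and well below $a\cdot n/\log n$, which forces us to genuinely exploit that all triangles sit inside the $\bigO(a^2)$-edge dense subgraph — a generic arboricity-$a$ triangle-listing algorithm would use total memory $\Theta(a|E|)=\Theta(an)$, giving $C=\Theta(a)$, which is far too large and would make the $\NCC$ lower bound $\Omega(a/(C\log n))=\Omega(1/\log n)=o(1)$ vacuous. A secondary point needing care is that padding the path need not yield arboricity exactly $a$, so the construction (and the reading of $\mathcal P_a$) should be with $\Theta(a)$-arboricity rescaled to be $\le a$, and one should double-check that the $\NCC$ bottleneck argument is unaffected by the longer path and the slightly larger machine-word size.
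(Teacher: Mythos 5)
Your proposal is correct and follows essentially the same route as the paper's proof: triangle listing on a sparse $n$-vertex lollipop-style instance of arboricity $\Theta(a)\le a$, with the constant-round $\MPC$ upper bound via Lemma~\ref{lem:mpc_triangle_listing} and the $\Omega\bigl(a/(C\log n)\bigr)=\omega(1)$ lower bound for $\NCCx{\Theta(1)}$ from the information bottleneck of Lemma~\ref{lem:ncc_lower_bound_lollipop}. The only cosmetic difference is the construction — the paper chains $k=\Theta(n/a^3)$ copies of the $a^3$-vertex lollipop into one path while you stretch a single lollipop's tail to length $n$ — and both yield $\bigO(n)$-edge graphs with the bottleneck argument insensitive to tail length (as you note); the one thing to fix is that you should rely on Lemma~\ref{lem:mpc_triangle_listing} rather than your ad-hoc block-triple routing for the $\MPC$ upper bound, since replicating each dense edge $p=\lceil a/n^{\delta/2}\rceil$ times can force a machine holding $\Theta(n^\delta)$ edges to emit $\Theta(a\,n^{\delta/2})\gg S$ messages when $a\gg n^{\delta/2}$, so that routing is not obviously $\bigO(1/\delta)$ rounds across the whole range $a\in o(n^{1/3})$.
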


\begin{proof}
    For this proof we consider an ensemble of $k$ random graphs $G_1, \dots, G_k$ from the class $\mathcal G$ (each determined independently), with parameter $a$ (see Lemma \ref{lem:ncc_lower_bound_lollipop}). The $G_i$ are pairwise joined together with a single edge between their first and last node on the path respectively (cf.\ construction in the proof of Lemma \ref{lem:ncc_lower_bound_lollipop}) and the node IDs along the whole path are continuous from $1$ to $n$. Let $G$ be the connected graph created in this way. As each graph $G_i$ has $\bigO(a^3)$ nodes and edges, $G$ is sparse and consists of  $k \in \bigO(n/a^3) \in \omega(1)$ many such graphs $G_i$ (due to $a \in o(n^{1/3})$). 

    We make two observations on the complexity of triangle listing on $G$ in the different models, derived from the previous proof (of Theorem \ref{thm:sim_mpc_in_ncc_impossible1}). First, Lemma \ref{lem:mpc_triangle_listing} provides a $\bigO(1/\delta)$-round algorithm $\mathcal A$ to solve triangle listing in the $\MPCx{n^{1-\delta}, n^\delta}$ model. Second, the proof of Lemma \ref{lem:ncc_lower_bound_lollipop} shows that communication between nodes of different $G_i$ cannot help overcome the information theoretic barrier in the 
    $\NCC(C)$ model, since the information bottleneck focuses on the dense parts of expected $\Omega(a)$ edges, so the round complexity for solving triangle listing on $G$ is the same as solving it on a single instance of the random graphs $G_i$ from $\mathcal G$, i.e., $\Omega(a/C\log n)$ rounds.
    
    As in the previous proof, we conclude that in order to simulate $\mathcal A$ in $\NCC(C)$, a bandwidth of \smash{$C > \tfrac{c' \cdot \delta \cdot a}{\log n}$} for some constant $c' > 0$ is necessary to match the $\bigO(1/\delta)$ round running time of $\mathcal A$ in $\MPCx{M,S}$. However, we only require $MS= \bigO(n)$ space to solve triangle listing on the sparse graph instance $G$. This implies that we only have $C=MS/n \in O(1)$, bandwidth available in the $\NCC(C)$ model. Therefore, it is impossible to statisfy \smash{$C > \tfrac{c' \cdot \delta \cdot a}{\log n}$} and thus impossible to simulate $\mathcal A$ in $\NCC(C)$ for $a \in \omega(\log n)$.
\end{proof}

\subsection{Simulation Impossibility for Decision Problems}
\label{sec:sim_mpc_in_ncc_dec_problems_impossible}

We ask the question whether our previous impossibility result hinges on the fact that we consider a graph problem (triangle listing) with a large output? In other words, would we be able to conduct simulations much more efficiently on graph problems with small outputs?
Again we show that the answer is (conditionally) no, by proving that a more efficient simulation of $\MPC$ in $\NCC$ than what we provide in Theorem \ref{thm:ncc_sim_mpc}, would also give answers to the longstanding and notoriously intractable question \smash{$\NClin \stackrel{?}{=}\NC$} for decision problems \cite{DBLP:conf/mfcs/Valiant77, DBLP:conf/innovations/GolovnevKW21}. Loosely speaking, if these two classes were equal, any polynomial size circuit of polylogarithmic depth could be compressed to linear size while keeping polylogarithmic depth which is currently not known. 

\paragraph*{Preliminaries} We first give some more detailed definitions  and nomenclature. We interpret a decision problem $\Pi$ as a language $\Pi \subseteq \{0,1\}^*$, where $w \in \Pi$ iff $w$ encodes a yes-instance of the problem. For instance, assume that $\Pi$ is the triangle decision problem, i.e., given that $w$ is an encoding of edges of some graph $G$, then $w \in \Pi$ iff $G$ has a triangle. In our distributed models, at least one machine decides yes on instance $w$, if and only if $w \in \Pi$.

A circuit $C_n$ is a directed acyclic graph (DAG) of wires (edges) and gates (nodes) where inner nodes correspond to boolean gates, AND, OR, NOT, with fan-in (i.e., number of input wires of a gate) at most 2. The source nodes of the DAG correspond to $n$ input gates with fan-in 0, each representing a bit of the input $w \in \{0,1\}^n$. The fan-out is the maximum number of gates the output wire of a given gate feeds into. A circuit $C_n$ for a decision problem has a dedicated output wire (or equivalently an output gate) and a family $\{C_n\}$ decides $\Pi$, if for any $n \in \mathbb N$ and $w \in \{0,1\}^n$, $C_n$ outputs 1 iff $w \in \Pi$.
The family $\{C_n\}$ is called \Pclass-uniform, if there is a Turing machine (TM) that outputs (``generates'') $C_n$ in polynomial steps on input $1^n$.\footnote{The reason for the unary encoding of $n$ is purely formal, to keep the TM polynomial in the length of the input.} A stricter variety is $\Lclass$-uniformity where the TM is $\log n$-space restricted.\footnote{Log-space restricted means that the Turing machine has a read-only input tape, a write-only output tape and a $\bigO(\log n)$ read-write tape as ``work memory''. $\Lclass$-uniformity implies $\Pclass$-uniformity since a TM for the former can only distinguish $\bigO(\log n)$ states, thus a super-polynomial TM must repeat a state and consequently enter an infinite loop.} 

Then $\NC$ is the set of languages $\Pi \subseteq \{0,1\}^*$ that can be decided by a uniform family of circuits $\{C_n\}$ of size $\bigO(n^\ell)$ and depth $\bigO(\log^k n)$ for constants $k,\ell \in \mathbb N$.
The fan-out of $\{C_n\}$ is assumed unbounded, however a circuit of depth $\bigO(\log n)$ can always transformed to fan-out 1 with the same depth and $\poly(n)$ size blowup. It is also possible to bound the fan-out of a circuit with depth $\bigO(\log^k n)$ to 2 at the cost of a constant factor size increase, and depth increase by a factor of $\bigO(\log n)$.\footnote{We can turn a size $s$ circuit with fan-in 2 and depth $d \in \bigO(\log n)$ into a propositional formula $F$ of size $s' = s \cdot 2^d = \poly(n)$ and depth $d$ \cite{DBLP:books/daglib/0028687}. $F$ can be balanced to depth $\log(s') \in \bigO(\log n)$ using Spira's theorem \cite{Spira71}. For depth $\bigO(\log^k n)$ we can bound the fan-out to 2 by replacing high fan-out gates with balanced ``copy trees'', which increases depth by a factor at most $\bigO(\log n)$.} 
The class $\NCx k$ is the subset of $\NC$ with depth $\bigO(\log^k n)$. The class $\NClinx k$ is the subset of problems in $\NCx k$ that admit a circuit size of just $\bigO(n)$. Note that the intractability of the question \smash{$\NClin \stackrel{?}{=}\NC$} is not affected by the type of uniformity (\Pclass- or \Lclass-) of circuits as long as we consistently settle for either one (\Lclass-uniformity in our case).

\paragraph*{Distributed Algorithms to Uniform Circuit Transformation} 
To relate distributed algorithms to circuit complexity, we transform a given algorithm into a circuit with bounded size and depth (equivalently: we give a circuit that simulates a distributed algorithm). A main obstacle is that in the distributed model, communication may depend on the actual input instance, whereas in a uniform circuit the wiring pattern may depend only on the length of the input. We resolve this by replacing data-dependent communication with an oblivious schedule whose pattern depends only on the instance size and is computable by a log-space Turing machine (i.e., $\Lclass$-uniform).

\begin{definition}
    We call a \MPCx{M,S} (or $\NCCx{C}$) algorithm input-oblivious, if machine (or node) $i$ in round $r$ sends only to a subset $\mathcal N_{ir}$ of others, 
    and all the sets $\mathcal N_{ir}$ can be written by a Turing machine on a (write-only) output tape using $\bigO(\log n)$ space of (read-write) working memory.
\end{definition}

Regarding notation in this and the following claims and proofs: $C$ denotes the \textit{capacity} in the $\NCC(C)$ model whereas adding sub- or superscripts (e.g., $C_n$, $C^{ij}$) will always denote \textit{circuits}.

\begin{lemma}
    \label{lem:mpc_oblivious}
    Any \MPCx{M,S}-algorithm $\mathcal A$ that takes $t$ rounds can be simulated by an input-oblivious algorithm that computes the same result and takes $t \cdot \polylog(M S)$ rounds.
\end{lemma}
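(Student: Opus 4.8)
The plan is to realize the communication of each round of $\mathcal A$ by routing all transmitted words through a \emph{fixed} sorting network, so that the only data-dependent part of the schedule is the \emph{content} of the packets, never \emph{who talks to whom}. Fix one round of $\mathcal A$. By the legality convention of the \MPC model, in this round every machine emits at most $S$ words and every machine receives at most $S$ words, and since $\mathcal A$ has locally computed its outgoing messages, each machine already knows, for each of its $\le S$ outgoing words, its destination machine index in $\{0,\dots,M-1\}$. Introduce $MS$ \emph{tokens}, where machine $m$ owns the $S$ tokens indexed $mS,\dots,mS+S-1$ (padding the token count up to the next power of two if needed); each outgoing word is placed in one of its owner's tokens together with its destination index, and every unused token is filled with a sentinel destination $M$ (larger than all real indices). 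Routing this round then amounts to permuting the $MS$ tokens so that each word-token lands among the $S$ tokens owned by its destination, which is feasible exactly because at most $S$ words target any fixed machine.

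To permute the tokens \emph{obliviously} I would use Batcher's bitonic sorting network on $MS$ wires, assigning the $S$ consecutive wires $mS,\dots,mS+S-1$ to machine $m$. Every layer of this network is a fixed perfect matching that performs compare–exchange on matched wire pairs; the matching depends only on $MS$ and the layer index and is computable with $\bigO(\log(MS))=\bigO(\log n)$ bits of working memory, so each layer is an input-oblivious \MPC round in the sense of the definition above. Moreover, since each machine holds exactly $S$ wires, a single layer makes each machine send and receive at most $S$ words, so the bandwidth bound is respected. The network has depth $\bigO(\log^2(MS))$ and near-linear size $\bigO(MS\log^2(MS))$, so one pass costs $\bigO(\log^2(MS))$ \MPC rounds. (One could replace Batcher's network by the AKS network for a single-logarithm bound, but this is not needed here.)

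A sorting network by itself, however, only moves the tokens into \emph{sorted order}, not onto the correct destination machines, and this is the step that needs care: after sorting by destination, the word-tokens for machine $d$ occupy a contiguous run, but that run need not align with $d$'s block $[dS,(d{+}1)S)$, and letting each machine forward the tokens in its block to their destinations would reintroduce input dependence. I would bridge the gap with a second oblivious pass. After the first sort (keyed by destination, with sentinels falling at the end), run an oblivious (segmented) prefix-sum over the sorted array — a fixed, log-space-uniform scan network of depth $\bigO(\log(MS))$ in which every layer again touches at most $S$ tokens per machine — so that the token at sorted position $p$ learns its rank $r_p<S$ among tokens sharing its destination $d_p$; assign it the target position $d_p\cdot S+r_p$, which lies in block $d_p$ precisely because at most $S$ words target $d_p$. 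Then apply Batcher's network a second time, keyed by target position; now every word-token resides in the memory block of its destination machine, which simply reads its own $S$ tokens, a purely local and hence oblivious step. Repeating this for each of the $t$ rounds of $\mathcal A$, and leaving the per-round local computation untouched (the extra comparisons and additions are polynomial in $S$), yields an input-oblivious algorithm with the same output and round count $t\cdot\bigO(\log^2(MS))=t\cdot\polylog(MS)$.

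The main obstacle, as indicated, is exactly the mismatch between ``sorted positions'' and ``destination machines'': one must compute the in-destination ranks and re-route through a second fixed network rather than performing a data-dependent delivery, while simultaneously verifying that grouping the $MS$ wires into blocks of $S$ keeps every layer of every auxiliary network within the per-machine bandwidth $S$, and that all these networks (bitonic sort, segmented scan) are log-space uniform so that the resulting schedule genuinely satisfies the input-obliviousness condition.
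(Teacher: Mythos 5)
Your plan (route communication through a fixed sorting network so that all data-dependence lives in the packet contents rather than in the wiring) is exactly the paper's opening move, and your treatment of the per-layer bandwidth and the $\Lclass$-uniformity of the bitonic network is sound. However, there is a genuine gap in the second half: the double-sort does not actually deliver the words.

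After your first sort and the segmented scan, a real token at sorted position $p$ carries the ``target position'' $t_p = d_p S + r_p$, and you then run Batcher's network a second time keyed by $t_p$. But a sorting network places the token whose key has rank $i$ at wire $i$; it does \emph{not} place a token with key $t_p$ at wire $t_p$. These coincide only if the multiset of keys is exactly $\{0,1,\dots,MS-1\}$. In your construction the dummy tokens carry sentinel destination $M$ and hence (after the scan) keys at least $MS$, which lie entirely above the real targets, so after the second sort all $k$ real tokens are packed into wires $0,\dots,k-1$ in increasing target order rather than landing at $t_p$. Concretely, if machine $0$ receives only $k_0 = 1$ word, then the next real token (destined for some machine $d_1 > 0$) has target $t_1 = d_1 S \ge S$ but is placed at wire $1$, i.e.\ still inside machine $0$'s block. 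To make the ``sort by target'' step correct, you would need the dummies to be assigned exactly the $MS-k$ unoccupied targets in $[0,MS)$; that requires an additional oblivious step (e.g.\ a merge of the real targets with all $MS$ position markers followed by another prefix-sum, or an oblivious expansion), which you do not carry out. The paper sidesteps the problem by padding differently: every machine sends its $\le S$ real words \emph{plus} $S$ dummy words addressed to itself (packing two logical words per machine word so that the per-machine footprint stays at $S$ words). After one sort by destination, every destination block then has size between $S$ and $2S$, fits on at most two consecutive machines, and can be compacted by purely local exchanges with the adjacent machines before a final fixed $i \mapsto 2i,2i+1$ delivery; this avoids having to compute or route to the gap positions at all. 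Your scan-network idea is a reasonable alternative route, but as written the second sort does not place tokens where you claim, so the argument is incomplete.
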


\begin{proof}
    Our approach makes $\MPCx{M,S}$ algorithms input-oblivious by routing all communication through a fixed sorting network. A sorting network is a comparator circuit with a wiring pattern independent of the input. Each comparator takes two keys and outputs them in non-decreasing order on its two output connections. We fix an $\Lclass$-uniform family ${C_m}$, so that given $m$ one can generate the description of $C_m$ (equivalently, its comparators and connections) using $O(\log m)$ workspace. For $m$ keys there are such networks of depth $\polylog (m)$ with $m$ connections per layer (e.g., the bitonic network \cite{DBLP:conf/afips/Batcher68} or the AKS network \cite{DBLP:conf/stoc/AjtaiKS83}). The $\Lclass$-uniformity lets each machine locally compute the portion of $C_m$ it needs within its memory bounds. We then replace the input-dependent communication of $\mathcal{A}$ by this predetermined pattern. First, we implement input-oblivious sorting in $\MPCx{M,S}$ using $C_m$, and then use that sorting routine to deliver arbitrary point-to-point messages.

    Let $m:=MS$ be the size of the total memory (in machine words) and identify these with the $m$ connections of $C_m$, at layer $0$ they carry the inputs, and after the last layer they carry the outputs in sorted order. Each machine holds $S$ keys and can evaluate any comparator whose two inputs reside on that machine. We implement each layer of $C_m$ in \MPC as follows. (i) We fix a canonical assignment of that layer’s $MS/2$ comparators to machines with at most $S/2$ comparators per machine. Due to the $\Lclass$-uniformity of $\{C_m\}$ each machine can locally determine its assigned comparators. (ii) Locally apply the resulting compare-and-swap operations of the assigned comparators on the local memory. (iii) Route the resulting keys to the machines in the next layer, as dictated by the fixed pattern of $C_m$ (which can be determined locally due to the $\Lclass$-uniformity of $\{C_m\}$). Repeating this for the $\polylog(m)$ layers of $C_m$ sorts all $m$ keys. Each machine only exchanges messages along some connection in $C_m$ so this communication pattern can be computed by the Turing machine that computes $C_m$.

    Now we use sorting for the input-oblivious message delivery in $\mathcal A$. First consider the ``clean'' case in which every machine sends exactly $S$ messages. Because $\mathcal A$ is an $\MPCx{M,S}$ algorithm, each destination receives at most $S$ messages. If we sort all messages globally by destination ID, the block for destination $i$ is contiguous and has length exactly $S$. So every message for machine $i$ lies within positions $[iS,(i+1)S-1]$ of the global memory, hence machine $i$ finds all its messages in its local memory with no further routing.
    This argument breaks if some senders emit fewer than $S$ messages, as the messages for machine $i$ no longer align with its memory. We therefore reduce this to the balanced case by padding to exactly $S$ messages per machine.

    Each machine prepares up to $2S$ messages by taking its actual messages (at most $S$), keyed by destination ID, and adding exactly $S$ dummy messages addressed to itself. We pack two logical messages per machine word (using the $\Theta(\log n)$-bit word size), so per-machine space remains $S$ words. After a global sort by destination ID, the messages for any destination $j$ form a contiguous block of size at most $2S$, and hence lie on at most two consecutive machines (as each holds up to $2S$ messages). Due to the padding messages, each machine’s local memory contains messages for at most three destination IDs.
    Then, each machine coordinates only with its predecessor and successor machine so that every destination retains exactly $S$ messages (keeping all real messages and discarding any excess dummies). The first $M/2$ machines now hold $MS$ messages in destination-sorted order. Finally, deliver them via a fixed pattern in which machine $i$ forwards its local messages accordingly to machines $2i$ and $2i+1$. After this step, each destination machine holds its $S$ messages in its local memory. 
    
    The round complexity of $\mathcal A$ increases only by a factor of $\polylog(m)$. As we can describe the communication between machines in the transformed algorithm using the log-space description of $C_m$ and simple arithmetic (message exchanges of machine $i$ with machines $i-1, i+1,2i, 2i+1$) it is computable by a log-space Turing machine.
\end{proof}

Using our simulation result for $\MPCx{M,S}$ algorithms with a balanced problem input from Lemma \ref{lem:NCC_dominates_MPC_small_inputs}, we can simulate the input-oblivious message delivery procedure above in the $\NCCx{C}$ model (with $MS = nC$) and obtain the following result as corollary.

\begin{corollary}
    \label{cor:ncc_oblivious}
    Any $\NCC(C)$-algorithm $\mathcal A$ that takes $t$ rounds can be simulated by an input-oblivious algorithm that computes the same result and takes $t \cdot \polylog(nC)$ rounds.
\end{corollary}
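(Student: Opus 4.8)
The plan is to reduce this to Lemma~\ref{lem:mpc_oblivious} via the balanced-input simulation of Lemma~\ref{lem:NCC_dominates_MPC_small_inputs} instantiated with $M=n$ and $S=C$ (so that $MS/n = C$). First I would note that in $\NCCx{C}$ the only input-dependent ingredient of a round is the point-to-point \emph{delivery pattern}: each node's local state update and the set of at most $C$ outgoing messages it produces are computed with no communication, so they raise no obliviousness issue. Hence it suffices to show that, for a single round of an $\NCCx{C}$ algorithm $\mathcal A$, the data-dependent delivery can be replaced by a fixed schedule that takes $\polylog(nC)$ rounds and whose communication pattern is $\Lclass$-uniform.

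Next I would set up an $\MPCx{n,C}$ routing instance: identify each $\NCC$ node with a machine by its rank in the globally known ID set (exactly as in the proof of Lemma~\ref{lem:NCC_dominates_MPC_small_inputs}), and place at machine $i$ the at most $C$ messages that node $i$ wants to send in the current round of $\mathcal A$, each keyed by the rank of its destination. Since $\mathcal A$ is a legal $\NCCx{C}$ algorithm it causes no bandwidth violation, so every destination receives at most $C$ messages; thus this is a valid balanced $\MPCx{n,C}$ instance (total memory $nC$, at most $C$ words per machine). Now run the input-oblivious delivery subroutine from the proof of Lemma~\ref{lem:mpc_oblivious} on it: pad to exactly $C$ messages per machine (two logical messages per $\Theta(\log n)$-bit word), globally sort by destination rank through the fixed $\Lclass$-uniform sorting network, reconcile block boundaries with the predecessor/successor machine, and deliver by the canonical doubling pattern. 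This takes $\polylog(nC)$ rounds of $\MPCx{n,C}$ and, as argued there, its whole communication pattern is describable by a log-space Turing machine, i.e.\ it is input-oblivious.

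Finally I would simulate this $\MPCx{n,C}$ subroutine in $\NCCx{C}$ using Lemma~\ref{lem:NCC_dominates_MPC_small_inputs}: the per-node capacity $MS/n=C$ matches, and the input representation (the $\le C$ messages of $\mathcal A$ at each node) assigns at most $C$ words per node, so the instance lies in the class $\mathcal P$ of that lemma. Because $M=n$, each node simulates exactly one machine, so machine $i$'s communication maps directly onto node $i$'s communication; composing the $\Lclass$-uniform description of the $\MPC$ schedule with the (log-space computable) rank-to-ID map gives an $\Lclass$-uniform description of the resulting $\NCCx{C}$ schedule. After delivery each node holds precisely the messages $\mathcal A$ would have received, so it continues with the next round; by induction over the $t$ rounds the simulation produces the same result, at a cost of $\polylog(nC)\cdot\bigO(1)=\polylog(nC)$ rounds per round of $\mathcal A$, hence $t\cdot\polylog(nC)$ in total.

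The main obstacle, and the only point needing care, is verifying that input-obliviousness survives the $\MPC\!\to\!\NCC$ simulation together with the rank/ID translation. This is exactly where the choice $M=n$ (one machine per node) matters: with $M/n>1$ the simulation would multiplex several machines' traffic through a single node and one would have to re-argue that the combined schedule is still $\Lclass$-describable, whereas with $M=n$ it is immediate. A minor secondary point is that this argument uses $\NCC$ rather than $\NCCz$, since the rank of a destination must be computable from the globally known ID set; as the corollary is stated for $\NCCx{C}$, this is unproblematic.
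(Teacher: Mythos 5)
Your proposal is correct and takes essentially the same route as the paper, which justifies the corollary in a single sentence by invoking the balanced-input simulation (Lemma~\ref{lem:NCC_dominates_MPC_small_inputs} with $MS=nC$) to run the oblivious message-delivery subroutine of Lemma~\ref{lem:mpc_oblivious} inside $\NCCx{C}$. Your more detailed version, with the explicit per-round decomposition, the choice $M=n$ so that machines and nodes correspond one-to-one, and the observation that the rank/ID translation preserves $\Lclass$-describability of the schedule, is a faithful unfolding of exactly that argument.
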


We can now give circuit simulations starting with $\MPC$ algorithms. We say that a $\MPCx{M,S}$-algorithm $\mathcal A$ decides a language $\Pi \subseteq \{0,1\}^*$ in $t$ rounds, if for any $n \in \mathbb N$ there are parameters $M,S$ with $MS \geq n$ such that for any input $w \in \{0,1\}^n$ distributed on the machines, at least one machine decides yes within $t$ rounds if and only if $w \in \Pi$. We say that a circuit family $\{C_n\}$ simulates an algorithm $\mathcal A$ that decides $\Pi$, if $C_n$ outputs 1 on input $w \in \{0,1\}^n$ if and only if $w \in \Pi$.
Further, we say that an algorithm in the \MPC model is $\P$-bounded if all local computation on the machines are polynomial in $S$. In a sense we are forced to assume $\Pclass$-boundedness, because if nodes could, for instance, solve \NP-complete problems locally, then this would in general preclude a polynomial-size circuit simulation of this algorithm (unless $\Pclass=\NP$).

\begin{lemma}
    \label{lem:mpc_to_circuit}
    Let $\Pi \subseteq \{0,1\}^*$ be a decision problem and let $\mathcal A$ be a \P-bounded $\MPCx{M,S}$-algorithm that decides $\Pi$ in $t$ rounds. For $M,S,t \in \bigO(\poly(n))$ there is a \Lclass-uniform family $\{C_n\}$ of circuits with fan-out at most 2, size $\bigO\big(tM\cdot\poly(S\log n)\big)$ and depth $\bigO\big(t\cdot\poly(S\log n)\big)$ that solves $\Pi$ (i.e., $\{C_n\}$ simulates $\mathcal A$).
\end{lemma}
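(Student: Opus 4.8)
I propose the following plan.

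First I would apply Lemma~\ref{lem:mpc_oblivious} to replace $\mathcal A$ by an input-oblivious $\MPCx{M,S}$ algorithm $\mathcal A'$ that decides $\Pi$ in $t'=t\cdot\polylog(MS)$ rounds, so that the set $\mathcal N_{ir}$ of recipients of machine $i$ in round $r$ --- together with a canonical, log-space-computable assignment of that machine's (at most $S$) outgoing message words to fixed send/receive slots, forced by the per-machine bandwidth bounds --- is computable by an $\bigO(\log n)$-space Turing machine. Since $\mathcal A$ decides $\Pi$ for \emph{every} distribution of $w\in\{0,1\}^n$ across the machines, I will fix once and for all the canonical layout in which machine $i$ holds input bits $[\,ib,\;(i+1)b-1\,]$ with $b=\lceil n/M\rceil\le S\log n$ (which fits since $MS\ge n$). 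Note that $\polylog(MS)\subseteq\poly(S\log n)$, because $M,S\le\poly(n)$ forces $\log(MS)=\bigO(\log n)$.

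Next I would construct a single ``local circuit'' $D$ simulating one round of on-machine computation. Because $\mathcal A$ is $\P$-bounded, one round of local work is a fixed Turing machine computation of $T=\poly(S)$ steps, reading a state of $\bigO(S\log n)$ bits (memory contents, the machine's ID, a round counter) together with the $\bigO(S\log n)$ bits of incoming messages, and writing the updated state together with the outgoing message words. The standard tableau construction turning a $T$-step Turing machine into a circuit yields an $\Lclass$-uniform circuit $D$ of fan-in $2$ and size and depth $\poly(T\log n)\subseteq\poly(S\log n)$. The key point is that $D$ is the \emph{same} circuit for every machine and every round, since the machine ID and round counter are merely part of its state input; this keeps both the size accounting and the uniformity argument uniform over $(i,r)$.

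I would then assemble $C_n$ from $t'$ ``round blocks'', where block $r$ is a disjoint union of $M$ copies of $D$, copy $i$ simulating machine $i$. In block $0$ the state inputs of copy $i$ are hardwired: the ID bits are constants, the $b$ input-bit positions are wired from input gates $[\,ib,\;(i+1)b-1\,]$, the remaining bits are $0$, and the incoming-message inputs are $0$. Between block $r$ and block $r+1$ I insert only wires: each outgoing-message slot of copy $i$ is connected to the incoming-message slot of the appropriate copy in $\mathcal N_{ir}$ dictated by the fixed slot assignment, and the new-state output of copy $i$ is fed into the state input of copy $i$ in block $r+1$. The output gate of $C_n$ is a balanced OR tree of fan-in $2$ over the ``decide yes'' bits of all $Mt'$ copies; correctness is immediate from the definition of ``$\mathcal A$ decides $\Pi$''. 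This gives fan-in $2$, size $t'M\cdot\poly(S\log n)+\bigO(t'MS\log n)=tM\cdot\poly(S\log n)$, and depth $t'\cdot\poly(S\log n)+\bigO(\log n)=t\cdot\poly(S\log n)$. To reach fan-out $\le 2$ I apply the standard copy-tree transformation (as discussed in the preliminaries): each gate of fan-out $f\le\mathrm{size}(C_n)$ is replaced by a balanced binary copy-tree of depth $\bigO(\log f)=\bigO(\log n)$, costing a constant factor in size and a $\bigO(\log n)$ factor in depth, both absorbed into the claimed bounds. For $\Lclass$-uniformity, a log-space Turing machine prints $C_n$ by iterating over $(i,r)$ and emitting, with the appropriate wire-index offsets, the copies of $D$ (using $\Lclass$-uniformity of $D$), the block-$0$ constants and input wires (arithmetic on $i$ and $b$), the inter-block routing wires (by running the log-space procedure for $\mathcal N_{ir}$ and the slot assignment), the final OR tree, and the fan-out copy-trees.

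I expect the main obstacle to be the end-to-end $\Lclass$-uniformity bookkeeping: arguing that one $\bigO(\log n)$-space machine can consistently index the $Mt'$ copies of $D$, all the inter-block routing wires, and all the fan-out copy-trees. The step that truly needs care is the routing: a vanilla $\MPCx{M,S}$ algorithm's message pattern depends on the input $w$ and hence cannot be baked into the wiring of a uniform circuit, so Lemma~\ref{lem:mpc_oblivious} --- producing an input-oblivious schedule whose pattern depends only on $M,S$ and is log-space computable --- is doing the essential work, and I must also verify that the per-machine bandwidth bounds of $\mathcal A'$ really make the slot assignment well-defined. The remaining ingredients --- the circuit simulation of a poly-time Turing machine, the constant-factor fan-out reduction, and the size/depth arithmetic --- are standard.
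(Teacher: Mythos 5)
Your proposal is correct and follows essentially the same approach as the paper's proof: model each machine-round as a bounded Turing machine, convert via the tableau construction to a polynomial-size circuit, glue the per-round circuits together using the input-oblivious message pattern from Lemma~\ref{lem:mpc_oblivious}, attach a balanced OR-tree over the decision wires, and argue $\Lclass$-uniformity of the whole assembly. Three of your refinements are genuinely worth noting, though they do not change the structure of the argument. First, you use a \emph{single} circuit $D$ for every $(i,r)$ by feeding the machine ID and round counter in as part of the state; the paper instead speaks of circuits $C^{ij}$ indexed by machine and round, which is a little heavier for the uniformity bookkeeping, so your version is cleaner. Second, you explicitly fix a canonical initial layout of the input bits across machines and justify it by the requirement that $\mathcal A$ works for \emph{every} input representation; the paper leaves this implicit, but it is the correct observation to make. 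Third, and most importantly, you explicitly reduce fan-out to $2$ via copy-trees at the end; the lemma's statement promises fan-out at most $2$ but the paper's proof of this particular lemma never performs that reduction (it is done in the parallel Lemma~\ref{lem:ncc_to_circuit}), so you have actually closed a small gap in the write-up. One minor remark: you should make sure the copy-tree step's $\bigO(\log n)$ depth factor is charged per gate along a path and absorbed into the $\poly(S\log n)$ per-round factor rather than as a global multiplicative $\log n$ on the $t'$-block product, so that the depth stays $t\cdot\poly(S\log n)$; since a wire only crosses $\bigO(1)$ block boundaries, this works out exactly as you claim, but it is the kind of accounting that deserves a sentence.
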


\begin{proof}
    In \MPC, a single machine is modeled as $\bigO(S \log n)$-space bounded Turing machine $T^{ij}$.  Therefore, the computation that machine $i$ conducts in round $j$ on its current memory content, can be represented by a boolean circuit $C^{ij}$ with $\bigO(S \log n)$ input and output gates. The input gates represent the machine memory (bit by bit) and the input from last round (initial input or any received messages). The output gates correspond to the local memory and the messages the machine sends out after the local computations are concluded. Overall, these are $\bigO(S \log n)$ input and output gates. Additionally, each $C^{ij}$ has a dedicated result gate, which is 1 if and only if the machine $i$ decides yes in round $j$.
    Since $T^{ij}$ is \P-bounded, it conducts at most $\poly(S\log n)$ local computations steps. Therefore we can apply the standard Turing machine to circuit transformation (e.g., \cite{DBLP:books/daglib/0086373,DBLP:books/daglib/0023084}), and convert each $T^{ij}$ into an $\Lclass$-uniform circuit $C^{ij}$ of size and depth $\poly(S\log n)$.    
    
    It remains to determine how to connect the $T^{ij}$ into a single circuit that simulates $\mathcal A$. We first take care of connecting the result wires of all $C^{ij}$ via a balanced tree of OR-gates with fan-in 2 (adding at most depth $\log (tM)$ and size $tM$), which is the dedicated output of the whole circuit $C_n$. Since $\mathcal A$ solves the decision problem $\Pi$, at least one machine $i$ will output yes in some round $j \leq t$ on instance $w \in \{0,1\}^n$ iff $w \in \Pi$, consequently the result wire of circuit $C^{ij}$ is 1, iff $w \in \Pi$.
    
    To make the circuit functional, the outputs of $C^{ij}$ must be connected appropriately to the inputs of $C^{i, j+1}$ (to forward the local memory of machine $i$) and into all $C^{i', j+1}$ for all machines $i'$ that machine $i$ sends a message to in round $j$.
    Then the overall circuit $C_n$ is formed by the $C^{ij}$ arranged in $t$ layers.     
    Considering the size and number of circuits $C^{ij}$ and the number of layers, so far $C_n$ has overall size $\bigO(tM\poly(S\log n))$ and depth $\bigO(t\cdot  \poly(S\log n))$ (the OR-tree connecting the result wires makes no difference asymptotically).
    
    However, for connecting the layers we have to handle a crucial issue: while the communication among \MPC machines can depend on the input itself, the wiring pattern between the $C^{ij}$ must be oblivious to the input. Therefore we first apply Lemma \ref{lem:mpc_oblivious} to turn $\mathcal A$ into an equivalent input-oblivious algorithm that takes only slightly longer ($t \cdot \polylog(MS) = t \cdot \polylog(n)$ rounds, exploiting $MS  = \poly(n)$), which ensures that our message exchange pattern among machines can be output by a log-space Turing machine.      

    We argue that the resulting circuit is $\Lclass$-uniform for the parameter range $M,S,t \in \bigO(\poly(n))$.     
    The Turing machine first serially outputs all $C^{ij}$, reusing the the log-space working memory (using only an additional $\bigO(\log(MSt)) = \bigO(\log n)$ length counter to keep track of the number of circuits that we output so far). As argued before, the interconnections between the $C^{ij}$ can also be output by a log-space bounded Turing machine since we can make algorithm $\mathcal A$ input-oblivious. Hence $\{C_n\}$ is \Lclass-uniform for $S,M,t \in \bigO(\poly(n))$.
\end{proof}

Next we show that it is also possible to extract a circuit from a \NCC algorithm. Since the input in the \NCC model is tied to an input graph, we specifically require that $\Pi \subseteq \{0,1\}^*$ encodes some decision problem on a graph (e.g., triangle detection). We say that a $\NCC(C)$-algorithm $\mathcal A$ decides a graph problem $\Pi \subseteq \{0,1\}^*$ in $t$ rounds, if for any instance $w \in \{0,1\}^m$ encoding an instance of the graph problem $\Pi$ on an $n$-node graph, at least one node decides yes within $t$ rounds if and only if $w \in \Pi$.

We extend the notion of $P$-bounded to \NCC-algorithms: in every round, each node performs at most $\poly(k)$ local steps, where $k$ is the size of the cumulative information it has seen so far (its initial input plus all messages received). As both the input size and the number of rounds are typically at most polynomial in $n$, $P$-boundedness implies that every node’s local work stays polynomial in $n$. Strictly speaking, there is no function $f(n)$ limiting local steps of \NCC nodes, but it is standard not to rely on this when designing algorithms and necessary to allow for polynomial-size circuits.

\begin{lemma}
\label{lem:ncc_to_circuit}
Let $\Pi\subseteq\{0,1\}^*$ be a decision problem on graphs, and let $\mathcal A$ be a $P$-bounded $\NCCx{C}$ algorithm that decides $\Pi$ in $r$ rounds. Let $n$ be the number of nodes of a problem instance $w \in \{0,1\}^m$ and let $I \in \bigO( \poly(n))$ be the maximum number of words in a node’s initial input. If $C \in \bigO( \poly(n))$ then there exists a $\Lclass$-uniform circuit family $\{C_m\}$ deciding $\Pi$ with fan-out at most 2, depth at most $\xi$ and size at most $n\cdot\xi$ for \smash{$\xi = \poly\bigl(r(C+I)\log n\bigr)$}.
\end{lemma}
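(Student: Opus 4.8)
The plan is to follow the proof of \Cref{lem:mpc_to_circuit}, adapting the construction to the communication pattern and the graph-shaped input of the \NCC model. Fix an instance size and the canonical encoding $w \in \{0,1\}^m$ of an $n$-node graph, together with a canonical, log-space-computable way of splitting $w$ into the $n$ per-node initial inputs (so that $m$ and this split depend only on the instance size). The cumulative information a node has observed after $j \le r$ rounds is at most its initial input of $I$ words plus $j$ batches of at most $C$ received words, hence $\bigO((I+rC)\log n)$ bits. Since $\mathcal A$ is $P$-bounded, a node performs only $\poly((I+rC)\log n)$ local steps in each round, so within a single round it is an $\bigO((I+rC)\log n)$-space-bounded Turing machine. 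Applying the standard Turing-machine-to-circuit transformation (exactly as in \Cref{lem:mpc_to_circuit}) converts the computation of node $i$ in round $j$ into an $\Lclass$-uniform circuit $C^{ij}$ of size and depth $\poly((I+rC)\log n)$: its input gates encode the node's current state and the words it received in round $j-1$, its output gates encode the updated state and the words it sends in round $j$, and a dedicated result gate is $1$ iff node $i$ decides yes in round $j$.

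I would then wire the $C^{ij}$ into one circuit $C_m$ with $r$ layers: the state outputs of $C^{ij}$ feed the state inputs of $C^{i,j+1}$; the outgoing-message outputs of $C^{ij}$ addressed to $i'$ feed the corresponding message inputs of $C^{i',j+1}$; the $m$ input gates of $C_m$ feed the round-$1$ initial-input gates according to the fixed split of $w$; and all result gates are combined by a balanced fan-in-$2$ OR-tree to form the single output gate, which outputs $1$ iff $w \in \Pi$ (because $\mathcal A$ decides $\Pi$, some node decides yes within $r$ rounds precisely on yes-instances). The crux, just as in \Cref{lem:mpc_to_circuit}, is that the message pattern of $\mathcal A$ may depend on $w$, while the wiring of $C_m$ may depend only on $m$. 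I resolve this by first replacing $\mathcal A$ with the equivalent input-oblivious algorithm of \Cref{cor:ncc_oblivious}, which runs in $r \cdot \polylog(nC)$ rounds, whose per-round send-sets are writable by a log-space Turing machine, and which is still a legal $\NCC(C)$ algorithm (so no inbound-capacity violation is introduced); with this substitution the layer-to-layer wiring becomes $\Lclass$-computable.

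For the size and depth, there are $n$ nodes and $r' = r \cdot \polylog(nC)$ layers, and each round-circuit has size and depth $\poly((I+C)\log n)$ (the cumulative-information bound is unaffected up to the $\polylog$ blow-up in the number of rounds). Hence $C_m$ has size $\bigO(n r' \cdot \poly((I+C)\log n)) = n \cdot \poly(r(C+I)\log n)$ and depth $\bigO(r' \cdot \poly((I+C)\log n)) + \bigO(\log n) = \poly(r(C+I)\log n)$, which matches $n \cdot \xi$ and $\xi$ for $\xi = \poly(r(C+I)\log n)$; the OR-tree adds only $\bigO(n r')$ gates and $\bigO(\log n)$ depth. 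To enforce fan-out at most $2$, I replace every high-fan-out gate by a balanced copy tree, paying a constant factor in size and a $\bigO(\log n)$ factor in depth, which is still inside $\xi$. For $\Lclass$-uniformity, a log-space machine emits the circuits $C^{ij}$ one by one, reusing the $\bigO(\log n)$ workspace of the single-round transformation plus an $\bigO(\log(n r'))$-bit counter, and emits the inter-layer wires from the log-space description of the oblivious schedule of \Cref{cor:ncc_oblivious} and simple arithmetic on the indices $i,j$; this works precisely in the stated range $C,I,r \in \bigO(\poly(n))$.

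I expect the main obstacle to be the same one as in the \MPC analogue: removing the input-dependence of the routing, which is exactly what forces the detour through \Cref{cor:ncc_oblivious}, together with the bookkeeping that the per-node cumulative information (and hence the size of each round-circuit) stays polynomial. The remaining work — stacking the round-circuits, the OR-tree for the output, copy trees for fan-out, and the log-space uniformity argument — is a routine assembly of standard gadgets and a counting argument, so I would keep it brief.
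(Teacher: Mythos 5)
Your proof is correct and follows the same overall architecture as the paper's: per-node-per-round circuits $C^{ij}$ from the standard TM-to-circuit transformation, a balanced OR-tree collecting the decision wires, the detour through \Cref{cor:ncc_oblivious} to make the inter-layer wiring $\Lclass$-computable, copy trees to enforce fan-out $2$, and a log-space generator that emits the $C^{ij}$ one at a time. The one genuine divergence is the inter-round wiring. You thread the node's state from $C^{ij}$ to $C^{i,j+1}$, exactly as in the \MPC proof of \Cref{lem:mpc_to_circuit}. The paper deliberately avoids this: each $C^{ij}$ instead takes the node's initial input together with \emph{all} messages received up to the end of round $j-1$ ($Cr$ input ports), recomputes from scratch, and outputs only the round-$j$ messages; a message-output port of $C^{ij}$ therefore feeds every later $C^{i',j'}$ with $j'>j$. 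The motivation for the paper's choice is that \NCC nodes have no a priori memory bound, so there is no explicit cap on the state carried across rounds; your variant recovers such a cap implicitly by observing that a time-$\poly(k)$ machine can write at most $\poly(k)$ cells per round, giving a state of $\poly\bigl(r(C+I)\log n\bigr)$ bits after $r$ rounds, which is good enough. Both approaches yield the claimed size and depth, and both still need copy trees because the intra-$C^{ij}$ gates (and, in the paper's wiring, the message-output ports feeding up to $r$ later circuits) can have high fan-out. One small slip on your side: a $\poly\bigl((I+rC)\log n\bigr)$-time-bounded machine is $\poly\bigl((I+rC)\log n\bigr)$-space-bounded, not $\bigO\bigl((I+rC)\log n\bigr)$-space-bounded as you write; since the circuit size and depth scale as $\poly$ of the time, this typo does not affect the bounds.
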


\begin{proof} By $\Pclass$-boundedness, the local computation at node $i$ in round $j$ can be modeled as a polynomial-time Turing machine $T^{ij}$ that runs for at most $\poly(r(C+I)\log n)$ steps on the data consisting of the node’s initial input ($\leq I$ words) and all messages received by the end of round $j{-}1$ ($\leq rC$ words). This bound still holds if nodes do not store any intermediate results across rounds and instead recompute them from scratch each round: the recomputation adds at most a factor $r$, which is absorbed by $\poly(r(C+I)\log n)$.
We again apply the Turing machine to circuit transformation (\cite{DBLP:books/daglib/0086373,DBLP:books/daglib/0023084}), each $T^{ij}$ can be converted into an $\Lclass$-uniform circuit $C^{ij}$ of size and depth $\poly(r(C+I)\log n)$. 
Each $C^{ij}$ has a decision wire that is $1$ iff node $i$ outputs ``yes'' in round $j$. We connect all decision wires via a balanced binary OR tree to obtain the circuit’s final output.

We now show how to inter-connect the $C^{ij}$ to obtain $C_m$. Circuit $C^{ij}$ has input gates for the node’s problem-input bits and for all message bits received up to the end of round $j-1$, and output wires for all message bits sent in round $j$. We bundle message bits into \emph{ports} of $\Theta(\log n)$ wires (one machine word). Each $C^{ij}$ has exactly $C$ output ports (capacity $C$ words per round), and $Cr$ input ports (capacity $C$ words per round over at most $r$ rounds). When node $i$ sends a word in round $j$ to some node $i'$, we connect the corresponding output port of $C^{ij}$ to the appropriate input port of all $C^{i'j'}$, $j' > j$. Overall the number of wires incident to $C^{ij}$ is bounded by $\poly(r(C+I)\log n)$.

However, the last step produces gates with a high fan-out, therefore we bound the fan-out by $2$ using copy trees.  
Whenever a gate feeds into more than two successors, we replace this high-fan-out with a balanced binary tree of “copy” gates (e.g., $x\land 1$), such that each internal copy gate has at most two outputs. This increases the size by only a constant factor (each copy gate can be charged to a consumer gate and each consumer gate is charged at most twice by a copy gate), and multiplies the depth by at most a logarithmic factor in the circuit size. Thus the overall size is $n\cdot\poly(r(C+I)\log n)$ and the overall depth remains $\poly(r(C+I)\log n)$.

It remains to argue the $\Lclass$-uniformity of the resulting circuit $C_m$. The Turing machine that outputs $C_m$ sequentially runs the log-space Turing machines to print the $C^{ij}$ (each time reusing the log-space working memory and keeping track of the current circuit $C^{ij}$ via a counter of size $\log(nr) = \log(n)$). 
The connection of the output wires via an OR tree and the reduction of high fan-out gates using trees copy gates can be described by basic arithmetic operations, i.e., can be accomplished by a log-space bounded Turing machine.

The remaining issue is that the interconnections between the $C^{ij}$ are determined by the communication between machines during the execution of $\mathcal A$ which may depend on the problem input. We construct an equivalent input-oblivious algorithm $\mathcal A'$ in advance of this proof, by applying Corollary \ref{cor:ncc_oblivious}. Thus we obtain the property that there exists a $\log$-space bounded Turing machine that outputs the connections among the $C^{ij}$, as required for $\Lclass$-uniformity. This increases the running running time $r$ only to $r' \in \bigO(r \log n)$ which does not change our result asymptotically.
\end{proof}

\paragraph*{Conditional Impossibility} We are now ready to show that simulating an \MPC algorithm in \NCC with slightly less global bandwidth than the actual simulation that Theorem \ref{thm:ncc_sim_mpc} proposes, is impossible, assuming the intractability of a long standing open question from circuit complexity. 
In particular, we show that if a more efficient simulation than we achieve in Theorem \ref{thm:ncc_sim_mpc} would be possible, then we can turn any uniform polynomial size circuit with polylogarithmic depth that solves a decision problem $\Pi \in \NC$ into a uniform, near-linear size circuit while leaving the depth polylogarithmic. 

To be more precise, let us define $\NCy{near-lin}$ as the subset of $\NC$ with circuit size $n \cdot \polylog(n)$. Then a slight improvement of Theorem \ref{thm:ncc_sim_mpc} in terms of \NCC bandwidth required for the simulation would imply $\NCy{near-lin} = \NC$. This would be a major step towards answering the more general \smash{$\NCy{lin} \stackrel{?}{=} \NC$} question. Consequently, under the practical assumption that $\NCy{near-lin} \neq \NC$ and that the $\NCC$ nodes may only conduct efficient computations locally (e.g., nodes do not solve \NP-complete problems on their data), a more efficient simulation than Theorem \ref{thm:ncc_sim_mpc} is impossible, i.e., our the theorem is tight up to relatively a small factor.
This impossibility result holds even when the input graph is unlabeled and has constant arboricity and with minimum capacity $C$ of $\polylog(n)$.

\begin{theorem}    
    \label{thm:sim_mpc_in_ncc_impossible3}
    For any constant $\varepsilon>0$ and any fixed $t=\polylog(n)$, on the class $\mathcal D$ of decision problems on unlabeled, $n$-node graphs with $MS\ge n^{1+\varepsilon}$ with $\Pclass$-bounded machines and nodes, we have $\MPCx{M,S}\nleq_{\mathcal D,t}\NCCx{MS/n^{1+\varepsilon}}$ unless $\NCy{near\text{-}lin}=\NC$.
\end{theorem}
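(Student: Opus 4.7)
The plan is to prove the contrapositive: assuming the simulation hypothesis (every $\MPCx{M,S}$ algorithm on $\mathcal D$ with $MS\ge n^{1+\varepsilon}$ is simulatable in $\NCCx{MS/n^{1+\varepsilon}}$ with slowdown $t=\polylog n$), I show that every $\Pi\in\NC$ lies in $\NCy{near-lin}$. Fix an $\Lclass$-uniform polynomial-size polylog-depth circuit family for $\Pi$ of size $n^\ell$ for a constant $\ell$, and encode $\Pi$ as an unlabeled graph problem in $\mathcal D$ via a standard bounded-degree gadget (e.g., take a length-$n$ path and attach a pendant leaf to the $i$-th internal vertex iff the $i$-th input bit is $1$); this encoding is $\Lclass$-computable and keeps the per-node initial input $I=\bigO(1)$. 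The construction iterates a pad-and-shrink step a constant number of times.

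A single shrink step transforms a circuit for $\Pi$ of size $n^a$ (with $a>1+\varepsilon$) and polylog depth into one of size $n^{a/(1+\varepsilon)}\,\polylog n$ and polylog depth. I would pad inputs to length $n':=n^{a/(1+\varepsilon)}\ge n$, so that on $n'$-bit inputs the circuit has size exactly $(n')^{1+\varepsilon}$. Next I design an $\MPCx{M,S}$ algorithm for the padded problem with $MS=(n')^{1+\varepsilon}\,\polylog n'$ and $S=\polylog n'$: distribute the gates of the (uniform, log-space describable) circuit across machines, and evaluate it layer by layer, routing gate outputs to consumers using the standard MPC sort-and-aggregate primitives from \Cref{sec:mpc_routines}. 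Since the depth is polylog and each layer's routing takes polylog rounds in the polylog regime, the algorithm runs in $r=\polylog n'$ rounds and is $\Pclass$-bounded. The simulation hypothesis then yields an $\NCCx{MS/(n')^{1+\varepsilon}}=\NCCx{\polylog n'}$ algorithm in $r\cdot t=\polylog n'$ rounds, which (using $\Pclass$-boundedness of the nodes and $I=\bigO(1)$) I feed into \Cref{lem:ncc_to_circuit} to obtain an $\Lclass$-uniform circuit of size $n'\cdot\xi$ and depth $\xi=\poly(r(C+I)\log n')=\polylog n'$. Hard-wiring the padding bits to $0$ yields a circuit on $n$-bit inputs of size $n'\,\polylog n'=n^{a/(1+\varepsilon)}\,\polylog n$ and polylog depth.

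Iterating the shrink step divides the exponent by $1+\varepsilon$ each pass, so after $k=\lceil\log_{1+\varepsilon}\ell\rceil=\bigO(1)$ iterations the exponent drops below $1+\varepsilon$. At that point one final step (now without padding: use $N=n$ and $MS=n^{1+\varepsilon}\,\polylog n$, so the current circuit fits and the simulation gives $C=\polylog n$) followed by \Cref{lem:ncc_to_circuit} yields a circuit of size $n\,\polylog n$ and polylog depth. $\Lclass$-uniformity is preserved throughout since we compose only $\bigO(1)$ $\Lclass$-uniform transformations. Hence $\Pi\in\NCy{near-lin}$, and $\NC=\NCy{near-lin}$ follows. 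The hard part will be the MPC circuit evaluator inside the shrink step: simulating an arbitrary fan-in-2 circuit of size $\approx MS$ in polylog rounds with per-machine memory $\polylog n$ requires carefully implementing all-to-all routing of gate values between layers via sort-and-aggregate without violating memory or bandwidth bounds, while keeping the polylog factors and $\Lclass$-uniform descriptions composable across the $\bigO(1)$ pad-and-shrink passes so the final circuit is genuinely near-linear with polylog depth.
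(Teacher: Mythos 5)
Your proposal is correct and follows essentially the same strategy as the paper: encode the $\NC$ problem as a bounded-degree graph problem, blow up the instance so that the circuit size roughly matches $n^{1+\varepsilon}$ (the paper does this by setting the graph size to $3m^{c-\varepsilon}$ rather than by explicit input padding, but this is the same trick), simulate the circuit layer by layer in $\MPC$, invoke the simulation hypothesis to get an $\NCCx{\polylog}$ algorithm, extract a near-linear circuit via \Cref{lem:ncc_to_circuit}, and iterate a constant number of times. The only cosmetic differences are your choice of $S=\polylog$ instead of the paper's $S=2n^\varepsilon$ (both satisfy $MS\ge n^{1+\varepsilon}$, and the paper even remarks its argument still works for $C=\polylog n$), and a multiplicative exponent shrink $a\mapsto a/(1+\varepsilon)$ versus the paper's additive $c\mapsto c-\varepsilon$ (both give $\bigO(1)$ iterations); neither is a substantive departure.
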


\begin{proof}    
    In this proof we distinguish two parameters which we have previously identified with $n$. Let the parameter $n$ denote the number of nodes in our graph problem, and in particular, the number of computational nodes in the $\NCC$-model. The parameter $m$ denotes the length of the input that is handled by a circuit $C_m$.
    We aim to show the contrapositive, i.e., $\NCy{near-lin} = \NC$ under the assumption $\MPCx{M,S} \leq_{\mathcal D,t} \NCCx{MS/n^{1+\varepsilon}}$.

    Consider a problem $\Pi \subseteq \{0,1\}^*$ with $\Pi \in \NC$. Let $\{C_m\}$ be the $\Lclass$-uniform family
    of circuits that decide $\Pi$ and have size $s \in \Theta(m^{c})$ for some constant $c \geq 1 + \varepsilon$, depth $d = \polylog(m)$ and fan-out at most 2. We first turn $\Pi$ into a decision graph problem on a subset of constant degree, connected graphs. For some instance instance $w = w_1 \cdots w_m$ of $\Pi$ let $G_w=(V,E)$ with $n := 3m^{c-\eps} \geq 3m$ nodes. $V$ contains nodes $v_{1}, \dots, v_{n}$ connected on a path in that order, where the subscript $i$ corresponds to node ID of $v_i$ (w.l.o.g.). The nodes $v_1, \dots, v_{3m}$ are dedicated nodes that encode the problem input with additional edges: for the $j$-th input bit $w_j$ of instance $w$ we have $\{v_{j},v_{j+m}\} \in E$ for if $w_j = 1$ and $\{v_{j},v_{j+2m}\} \in E$ if $w_j = 0$. We define a corresponding graph problem $\Pi'$ on the set of input graphs $G_w$ with $w \in \{0,1\}^m$. A distributed graph algorithm (\MPC or \NCC) decides $\Pi'$, when at least one machine/node decides yes on $G_w$, if and only if $w \in \Pi$.
    
    We construct a $\MPCx{M,S}$ algorithm $\mathcal A$ from $C_m$ solving $\Pi'$ on $G_w, w \in \{0,1\}^m$ with $M := s/n^\varepsilon, S = 2n^\varepsilon$ in $r = \polylog(n)$ rounds. We define a canonical assignment of gates to machine IDs from bottom to top layer and from left to right within each layer with $n^\varepsilon$ assigned gates per machine. Then the machines construct the output-wires of their gates controlled by other machines. This can be done locally on $\bigO(\log n)$ bits memory exploiting the $\Lclass$-uniformity. Then we sort the input edges of $G_w$ by smaller node ID of the endpoints in constant rounds (Lemma \ref{lem:mpc_sort}). This sorting step (together with our layer-wise assignment of gates to machines) ensures that the machine that simulates the input gate $j$ (which has fan-in 1) can locally read the corresponding input bit $w_j$ from the edge $\{v_j,v_\ell\}$ (with $w_j = 1$ iff $\ell = j+m$).
    Starting from layer 0, each gate is simulated in the round corresponding to its depth in the circuit and the outputs of each simulated gate are passed to the machine simulating the input gate via a message in the \MPC model. Note that $2n^{\eps}$ total messages are sufficient since the fan-out is at most 2. Overall, the $\MPC$-algorithm $\mathcal A$ takes $\bigO(1)$ rounds for sorting the input and $d = \polylog(n)$ rounds for simulating a layer of $C_m$ so $r = \polylog(n)$ rounds overall. Further all steps are polynomial in the local memory.
    Now apply the initial simulation assumption on $\mathcal A$ to obtain a $\bigO(r \cdot t)$ round $\NCCx{C}$ algorithm $\mathcal A'$ with $C \in \bigO\big(\frac{MS}{n^{1+\varepsilon}}\big)$ which satisfies
    \begin{align*}
        \bigO\Big(\frac{MS}{n^{1+\varepsilon}}\Big) = & \hspace{2mm}  \bigO\Big(\frac{s}{n^{1+\eps}}\Big) = \bigO\Big(\frac{m^c}{(2m)^{(c-\eps)(1+\eps)}}\Big) \\ 
        \subseteq & \hspace{2mm} \bigO\Big(\frac{m^c}{m^{(c-\eps)(1+\eps)}}\Big) = \bigO\Big(\frac{m^c}{m^c m^{c\eps-\eps-\eps^2}}\Big) \\
        = & \hspace{2mm} \bigO\Big(\frac{1}{m^{\eps(c-1-\eps)}}\Big) \stackrel{c \geq 1+\eps}{\subseteq} \bigO(1).
    \end{align*}
    In summary the $\NCCx{C}$ algorithm $\mathcal A'$ solves $\Pi'$ in $G$ with node capacity $C=\bigO(1)$ (see above, but the following argument works also for $C = \polylog(n)$), with input size $I=\bigO(1)$ due to the constant degree input graph, and with round complexity $r \cdot t = \polylog(n)$. 
    We turn $\mathcal A'$ back into a circuit $C'_{n}$ using Lemma \ref{lem:ncc_to_circuit} (exploiting that the nodes in the \NCC model are $\Pclass$-bounded). The new circuit $C'_{n}$ has size $\bigO(n \cdot \log^{\ell} n) \subseteq \bigO(m^{c-\varepsilon} \cdot \log^{\ell} m)$ and depth $\log^{\ell} m$ for some constant $\ell \in \mathbb N$. We have thus decreased the size of the circuit by a factor almost $m^\eps$ with depth increasing only by a $\polylog(m)$ factor. $C'_n$ takes $\bigO(n)$ input bits of the encoding of $G_w$, where $n \geq 3m$. However, since $w$ uniquely describes $G_w$, $C'_n$ can be transformed into $C'_m$ that requires only the $m$ input bits of $w$ by adding an initial adapter layer that transforms $w$ into the input format of $C'_n$ (and does not change the size asymptotically).
    
    We repeat this approach for at most $c/\eps$ (constant!) times until the size of the resulting circuit is only $m^{c} \cdot \polylog(m)$ for some $c < 1 + \eps$. If $c = 1$, then we are done. However, if $1 < c < 1 + \eps$ then we cannot apply the size reduction above as we assumed minimum size $m^c$ for $c \geq 1+\eps$. However, for $c < 1+\eps$, we can simply pad the circuit with additional gates (with no functionality) to formally satisfy $c = 1+\eps$, and obtain circuit size $m \cdot \polylog( m)$ in one iteration of the procedure above.
    In conclusion, we have shown that the assumed simulation result allows us to iteratively transform any circuit in $\NC$ into a near-linear sized circuit in $\NCy{near-lin}$ while maintaining polylogarithmic depth.
\end{proof}

\paragraph*{Hardness of Proving Unconditional Impossibility} Superficially, it may seem self-evident that we cannot simulate any given \MPC algorithm in \NCC with only polylog slowdown for a decision problem if we have less global bandwidth available in the latter than in the former. However, below the surface this question is much more fundamental. We demonstrate that if there is a $\Pclass$-bounded $\MPC(M,S)$ algorithm $\mathcal A$ that solves some decision problem with huge global bandwidth $MS$ (which is only bounded by some arbitrarily large polynomial in $n$) and proving unconditional impossibility of simulating this algorithm $\mathcal A$ with any polylog slowdown in $\NCC(C)$ with some potentially very small $C \ll MS/n$ then this separates 
\smash{$\NC_{\text{near-lin}}$} from $\NC$ (and thus also $\NC_{\textnormal{lin}} \neq \NC$). %

\begin{theorem}    
    \label{thm:sim_mpc_in_ncc_impossible4}
    Let $\mathcal A$ be a $\polylog(n)$-round, $\Pclass$-bounded $\MPCx{M,S}$-algorithm that solves a decision problem $\Pi$ on labeled, $n$-node graphs $G$ with maximum degree $\Delta$ with $S = \polylog(n)$ and with $M$ only bounded by an arbitrarily large, fixed polynomial in $n$.
    If one can show that it is impossible to simulate $\mathcal A$ in the $\NCC(\Delta)$ model with $\Pclass$-bounded nodes on a single problem instance with any slowdown $t = \polylog (n)$, then $\NC_{\textnormal{near-lin}} \neq \NC$.
\end{theorem}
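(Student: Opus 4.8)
The plan is to prove the contrapositive: under the hypothesis $\NCy{near-lin}=\NC$, I will construct, for the given $\MPC$ algorithm $\mathcal A$, a $\Pclass$-bounded $\NCC(\Delta)$ algorithm that solves $\Pi$ in $\polylog(n)$ rounds and hence simulates $\mathcal A$ with slowdown $t=\polylog(n)$. This contradicts the assumed impossibility and therefore forces $\NCy{near-lin}\neq\NC$.

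First I would run $\mathcal A$ through Lemma~\ref{lem:mpc_to_circuit}. Since $\mathcal A$ is $\Pclass$-bounded with $S=\polylog(n)$, $t=\polylog(n)$, and $M$ bounded by a fixed polynomial in $n$, the lemma produces an $\Lclass$-uniform, fan-out-$\le 2$ circuit family $\{C_m\}$ of polynomial size and polylogarithmic depth deciding $\Pi$, where $m=\Theta(n\Delta\log n)$ is the bit length of an encoding of a labeled degree-$\Delta$ graph (polynomially related to $n$); hence $\Pi\in\NC$. To make the input layout align with the $\NCC$ node boundaries later, I would pass to the padded variant $\Pi''$ on inputs of the fixed length $m':=n\Delta\lceil\log n\rceil$ that store each node's neighbor list padded to length $\Delta$; clearly $\Pi''\in\NC$. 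Invoking $\NC=\NCy{near-lin}$ then yields an $\Lclass$-uniform family $\{C''_{m'}\}$ deciding $\Pi''$ of near-linear size $m'\cdot\polylog(m')=n\Delta\cdot\polylog(n)$ and depth $\polylog(n)$, which I would normalise to fan-out $\le 2$ by copy trees (constant-factor size, polylog-factor depth overhead).

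Step two reverses direction. Reusing the circuit-to-$\MPC$ construction from the proof of Theorem~\ref{thm:sim_mpc_in_ncc_impossible3}, I would assign the gates of $C''_{m'}$ canonically (layer by layer, left to right) to an $\MPCx{M,S}$ instance with $S=\polylog(n)$ and $M=\Theta(n\Delta)\ge n$, so that $MS=n\Delta\cdot\polylog(n)$; after an initial input-sorting step (Lemma~\ref{lem:mpc_sort}) each machine can read its few input bits, and the circuit is then evaluated layer by layer with one $\bigO(1)$-round message exchange per layer (fan-out $\le 2$ keeps every machine within $\bigO(S)$ sends and receives). This gives a $\Pclass$-bounded $\MPCx{M,S}$ algorithm $\mathcal B$ solving $\Pi''$ in $\polylog(n)$ rounds. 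Because the input graph has maximum degree $\Delta$, every $\NCC$ node can locally pad its own neighbor list to length $\Delta$ and thereby holds only its $\Delta\le MS/n=\Delta\cdot\polylog(n)$ slice of the padded encoding, so Lemma~\ref{lem:NCC_dominates_MPC_small_inputs} applies and simulates $\mathcal B$ in $\NCC(MS/n)=\NCC(\Delta\cdot\polylog(n))$ with constant overhead (and $\Pclass$-bounded nodes, each node running $M/n$ machines of $\poly(S)$ work). Finally I would compress the bandwidth from $\Delta\cdot\polylog(n)$ to $\Delta$ using the scaling technique of Lemma~\ref{lem:ncc_scaling} (replacing the strongly-sublinear $\MPC$ routines by a fixed sorting network, as in Corollary~\ref{cor:ncc_oblivious}, when $\Delta$ is sub-polynomial in $n$), at the cost of another $\polylog(n)$ factor. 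Composing everything yields the desired $\Pclass$-bounded $\NCC(\Delta)$ algorithm for $\Pi$ running in $\polylog(n)$ rounds.

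I expect the main obstacle to be precisely this reverse step: realising the near-linear circuit as an $\NCC(\Delta)$ algorithm within the tight bandwidth budget. The near-linear circuit is still a $\polylog(n)$ factor too large for bandwidth $\Delta$ in a single round, and the graph input is distributed very unevenly across the $n$ nodes (a single node may carry $\Delta$ words while most carry $\bigO(1)$); both effects must be absorbed into the $\polylog(n)$ slowdown — the first via the bandwidth-scaling lemma, the second via the padding that guarantees $MS/n\ge\Delta$ and thereby makes Lemma~\ref{lem:NCC_dominates_MPC_small_inputs} applicable at all. A secondary point worth checking carefully is that the bandwidth-scaling step still costs only $\polylog(n)$ when $\Delta$ is as small as $\polylog(n)$ or even constant, which is why the sorting-network implementation (rather than the $n^\delta$-memory routines used in the proof of Lemma~\ref{lem:ncc_scaling}) is needed there.
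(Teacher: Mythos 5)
Your high-level plan (prove the contrapositive: $\NCy{near-lin}=\NC$ lets you compress the circuit from Lemma~\ref{lem:mpc_to_circuit} to near-linear size, then turn it back into a $\polylog(n)$-round $\NCC(\Delta)$ algorithm) matches the paper. But your reverse step takes a genuinely different and more indirect route — circuit $\to$ $\MPC(M,S)$ algorithm $\to$ $\NCC(MS/n)$ via Lemma~\ref{lem:NCC_dominates_MPC_small_inputs} $\to$ $\NCC(\Delta)$ via bandwidth scaling — whereas the paper simply simulates the near-linear circuit \emph{directly} in $\NCC(\Delta)$: each node uses the $\Lclass$-uniform generator TM to locally reconstruct the entire circuit $C'_m$ (possible because $\NCC$ nodes have unbounded local memory), takes a canonical assignment of at most $\lfloor\Delta/2\rfloor$ gates per node per layer (stretching any large layer into $\polylog(n)$ sub-layers), and evaluates one layer per round, so that each node receives at most $\Delta$ gate inputs per round. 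This sidesteps both the detour through $\MPC$ and the need for any bandwidth-compression step.

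The genuine gap in your version is precisely the step you flag as an ``obstacle'': compressing bandwidth from $\NCC(\Delta\cdot\polylog(n))$ down to $\NCC(\Delta)$. Lemma~\ref{lem:ncc_scaling} is stated only for target capacity $C\ge n^{\delta}$, and its proof relies on the strongly-sublinear sort/aggregation routines (Corollary~\ref{cor:ncc_sort_agg}), which are themselves simulated $\MPC$ routines that need polynomial per-node throughput. When $\Delta$ is $\polylog(n)$ or constant — the case this theorem is most interesting for — that lemma simply does not apply as stated, and the fix you gesture at does not follow from the reference you give: Corollary~\ref{cor:ncc_oblivious} makes an $\NCC$ algorithm input-\emph{oblivious} at the same capacity; it does nothing to deliver $t\cdot C$ messages through a capacity-$C$ node without overflowing the receive bound. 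You would need a new, separately proved, small-capacity variant of Lemma~\ref{lem:ncc_scaling} (with its own oblivious global-reordering subroutine running in $\polylog(n)$ rounds at capacity $\Delta$), and that auxiliary result is neither in the paper nor in your sketch. The cleaner resolution is the paper's: do not manufacture an intermediate $\MPC$ algorithm with $MS/n = \Delta\cdot\polylog(n)$ and then struggle to shave off the $\polylog$; instead lay the circuit onto the $\NCC$ nodes so that the per-round load is at most $\Delta$ by construction, and pay for oversized layers by stretching them — exactly the kind of padding you already anticipate in your ``secondary point,'' just applied at the circuit layout stage rather than at the transport layer.
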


\begin{proof}
    We prove the contrapositive: assuming $\NC_{\textnormal{near-lin}} = \NC$, i.e., any decision problem in $\NC$ has a near-linear size, polylog-depth $\Lclass$-uniform circuit family $\{C_n\}$, then we can in fact simulate the $\MPCx{M,S}$-algorithm $\mathcal A$ on $G$ in $\NCCx{\Delta}$ with at most $\polylog (n)$ slowdown. 
    The graph decision problem $\Pi$ on labeled $n$-node graphs $G$ with maximum degree $\Delta$ can be interpreted as a decision problem on $\{0,1\}^*$ where instance $G$ is encoded by $w_G \in \{0,1\}^m$ with $m \in \bigO(n\Delta \log n)$ input bits, corresponding to a constant number of machine words per node and edge (see Definition \ref{def:problem}).
    
    The first step is to apply Lemma \ref{lem:mpc_to_circuit} to extract a $\Lclass$-uniform circuit family $\{C_m\}$ from $\mathcal A$ that solves $\Pi$ on such inputs $w_G$. With our choice of parameters this circuit has depth $d = \polylog(m) = \polylog(n)$ but can be large in size as our only bound on $M$ is some (possibly large) polynomial in $n$. However, since we assumed $\NC_{\text{near-lin}} = \NC$, there exists a $\Lclass$-uniform circuit family $\{C'_m\}$ with size $s' = \bigO\big(m \cdot \polylog(m)\big) = \bigO\big(n \Delta \cdot \polylog (n)\big)$ and depth $d' = \polylog(m) = \polylog(n)$ such that $C'_m$ computes the same output as $C_m$.
    
    Then $C'_m$ can be simulated in the $\NCC(\Delta)$ model as follows. First, each node uses the generator Turing machine to locally construct the complete circuit $C'_m$ based only on the knowledge of the problem size $m$ (which can be aggregated beforehand in $\bigO(\log n)$ steps). We use a canonical assignment of the $s'$ gates to the $n$ nodes, layer by layer from left to right, at most $\bigO\big(\Delta \polylog (n)\big)$ gates per node, such that each node has to simulate at most $\Delta/2$ gates per layer. If necessary we stretch a ``large'' layer with $\Omega\big(n\Delta \cdot \polylog (n)\big)$ gates to $\polylog(n)$ layers of $n\lfloor\Delta/2\rfloor$ gates each, which increases the depth only by a $\polylog(n)$ factor.    
    The nodes then simulate their assigned gates, one layer each round. The bits of the problem input $w_G$ can be constructed locally by each node from their respective node label and the labels of incident edges. Note that each node does on $\Pclass$ computations on its local data.
    
    Each round, each node requires at most $\Delta$ input values to simulate the output of its at most $\Delta/2$ gates with fan-in $\leq 2$. These inputs can be forwarded to that node by the corresponding nodes from the previous layer via $\NCC(C)$ messages.
    Doing the simulation for all layers takes at most $d'\polylog (n) = \polylog(n)$ steps. Overall we have simulated $C'_m$ and thus $\mathcal A$ in the $\NCCx{\Delta}$ model in $\polylog(n)$ rounds.
\end{proof}

The theorem above also gives clues under which conditions an impossibility result for simulation of \MPC decision algorithms in \NCC might still be attainable (without conditioning on \smash{$\NClin \stackrel{?}{=} \NC$} or a similar open problem). In particular, the proof still leaves room for such an impossibility result if we deal with large machines $S \in \omega(\polylog (n))$ that are not $\Pclass$-bounded. For instance, it seems straight forward that solving some ``hard'' sub-problem of a decision problem on a single large machine cannot be simulated efficiently when \NCC nodes only have much smaller bandwidth $C \ll S$, since we cannot efficiently aggregate data of size $S$ on a single \NCC node.

\section{Impossibility of Simulating \NCCz in \MPC}\label{sec:nccz_in_mpc_impossibility}

We now consider to what extent \NCC algorithms can be simulated in the \MPC model. Unfortunately, for simulations in this direction we have to be rather pessimistic. We begin by showing that simulating even the weak model $\NCCz$ inside $\MPC$ is impossible in general. We construct a graph problem based on the problem of comparing regular expressions with squaring operator which has exponential space-complexity and is unsolvable in the \MPC model when only polynomial memory is available.

\subsection{General Impossibility of Simulating \NCCz in \MPC}

\paragraph*{Preliminaries} For a finite alphabet $\Sigma$, the set of regular expressions with squaring $\textnormal{RSQ}(\Sigma)$ is defined recursively as follows. Any singleton $\{r\}$ with $r \in \Sigma$ is in $\textnormal{RSQ}(\Sigma)$. For $r,r_1,r_2 \in \textnormal{RSQ}(\Sigma)$ the usual expressions $r_1 \cup r_2, r_1 \cdot r_2, r^*$ are in $\textnormal{RSQ}(\Sigma)$ but additionally the square of an expression $r^2 := r \cdot r$ is in $\textnormal{RSQ}(\Sigma)$ as well. Let $L(r)$ be the language expressed by $r \in \textnormal{RSQ}(\Sigma)$. We say that a Turing machine recognizes $\textnormal{RSQ}(\Sigma)$ if for any $r_1, r_2 \in \textnormal{RSQ}(\Sigma)$ it solves the question \smash{$L(r_1) \stackrel{?}{=} L(r_2)$}.
Note that $L(r)$ for $r \in \textnormal{RSQ}(\Sigma)$ still defines a regular language as we can always write $r \cdot r$ instead of $r^2$. However using $\textnormal{RSQ}(\Sigma)$ allows to express a regular language much more succinct than when only the basic operations $\cup, \cdot, {}^*$ are available, for instance, the single word language \smash{$\{a^{2^k}\}, k \in \mathbb N$} requires $2^k$ symbols to express using the basic operators but only $\bigO(k)$ with the squaring operator. It was shown in \cite{MeyerStockmeyer1972} that recognizing a regular expression becomes $\EXPSPACE$ if the squaring operator is allowed.

\begin{lemma}[cf.\ \cite{MeyerStockmeyer1972}]
    \label{lem:regex_square}
     There is a finite alphabet $\Sigma$ such that if any Turing machine $T$ recognizes $\textnormal{RSQ}(\Sigma)$, then there is a constant $c > 1$ such that $T$ requires space $c^n$ on some input of length $n$ for infinitely many $n$.
\end{lemma}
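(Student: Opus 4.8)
\emph{Proof idea.} The statement is, in essence, the exponential-space lower bound of Meyer and Stockmeyer, and the plan is to re-derive it by combining a reduction with the deterministic space hierarchy theorem. First I would fix a concrete hard language: by the space hierarchy theorem there is a language $L_0\in\textnormal{DSPACE}(2^{2n})\setminus\textnormal{DSPACE}(2^{n})$, decided by some fixed Turing machine $U$ that runs in space $2^{2n}$ on inputs of length $n$; the alphabet $\Sigma$ will depend only on $U$ and is therefore a single finite alphabet, as the lemma requires. The core of the argument is a logspace-computable reduction $w\mapsto(r_1(w),r_2(w))$ with $r_1(w),r_2(w)\in\textnormal{RSQ}(\Sigma)$ of length $\bigO(|w|)$ such that $L(r_1(w))=L(r_2(w))$ iff $U$ rejects $w$. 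Given such a reduction, any Turing machine $T$ recognizing $\textnormal{RSQ}(\Sigma)$ with worst-case space $s(n)$ composes with the reduction to decide $L_0$ in space $\bigO(\log n + s(Kn))$ for a constant $K$; if $s(n)<c^{n}$ held for every $c>1$ and all large $n$, then choosing $c=2^{1/(2K)}$ would give $L_0\in\textnormal{DSPACE}(2^{n/2})\subseteq\textnormal{DSPACE}(2^{n})$, contradicting the choice of $L_0$. Hence there is some $c>1$ such that $T$ uses space $\ge c^{n}$ on an input of length $n$ for infinitely many $n$, which is the claim.

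\emph{The reduction.} The technical heart is this reduction, which I would realize as the standard computation-history encoding made succinct by the squaring operator. A configuration of $U$ on $w$ is a string of length $2^{2n}$ over a bounded alphabet $\Gamma$ (tape contents with a marked head), and a candidate run is a string $c_0\,\#\,c_1\,\#\,\cdots\,\#\,c_T$ over $\Gamma\cup\{\#\}$. I would let $r_2$ describe \emph{all} strings of that syntactic shape (a plain regular expression), and let $r_1$ be the union of $\bigO(1)$ ``defect'' clauses describing the strings that are \emph{not} legal accepting runs: the first block is not the initial configuration of $U$ on $w$; no block is accepting; some block does not have length exactly $2^{2n}$; or there are aligned positions in consecutive blocks whose local $3$-cell windows are inconsistent with $U$'s transition relation. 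Then a string escapes $L(r_1)$ exactly when it spells out a legal accepting run, so $L(r_1)=L(r_2)$ iff no such run exists iff $U$ rejects $w$, and deciding $L(r_1)\stackrel{?}{=}L(r_2)$ decides $L_0$. Squaring is indispensable precisely for the gadgets ``a block has length $2^{2n}$'' and ``skip ahead by $2^{2n}$ symbols to the aligned cell of the next block'': with only $\cup,\cdot,{}^*$ these need $2^{2n}$ symbols, whereas a $2n$-fold nesting $\bigl(\cdots((x)^{2})^{2}\cdots\bigr)^{2}$ expresses them in $\bigO(n)$ symbols. Using these gadgets, and expressing ``there exist aligned bad positions'' with $\Gamma^*$-factors rather than by enumerating positions, each clause has length $\bigO(n)$, so $|r_1|+|r_2|=\bigO(n)$ and the whole map is computable by a logspace transducer.

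\emph{Main obstacle.} The part I expect to be delicate is the reduction, and within it two requirements that must hold simultaneously. Correctness: $L(r_1)$ must capture \emph{exactly} the non-accepting strings, so every conceivable defect of a candidate run --- wrong prefix, wrong block length, boundary effects at the $\#$-separators, a transition violation --- has to be covered by one of the clauses; making this airtight is the usual tedious-but-routine part of history encodings. Efficiency: the squaring gadgets together with the defect clauses must keep $|r_1|+|r_2|$ linear in $|w|$ and logspace-computable, which is what upgrades the conclusion from a quasi-polynomially weaker bound to the genuine $c^{n}$ stated in the lemma, and is the real content of the Meyer--Stockmeyer construction, for which I would invoke \cite{MeyerStockmeyer1972} for the full verification. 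Everything downstream of the reduction --- the composition with $T$ and the hierarchy-theorem contradiction sketched in the first paragraph --- is routine.
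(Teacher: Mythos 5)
The paper does not prove this lemma at all: it is stated as ``cf.\ \cite{MeyerStockmeyer1972}'' and used as an external black box, so there is no proof in the paper to compare your attempt against. Your sketch is a reasonable modern reconstruction of the Meyer--Stockmeyer argument --- computation-history encoding with squaring gadgets to compress blocks of length $2^{\Theta(n)}$ into $O(n)$ symbols, composed with the deterministic space hierarchy theorem --- and the overall logic (negate the conclusion, choose $c=2^{1/(2K)}$, land inside $\textnormal{DSPACE}(2^{n})$, contradiction) is sound.

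Two points deserve sharpening. First, as written your $r_1$ and $r_2$ do not obviously satisfy $L(r_1)\subseteq L(r_2)$: a clause such as ``the first block is not the initial configuration'' matches strings that are not of the $c_0\#c_1\#\cdots\#c_T$ shape at all, so $L(r_1)=L(r_2)$ is not immediately the same event as ``no accepting run exists.'' The standard fix is to take $r_2=(\Gamma\cup\{\#\})^*$ (all strings over the history alphabet) and add ``ill-formed shape'' as one more defect clause, so that $L(r_1)=L(r_2)$ iff \emph{every} string exhibits a defect iff $U$ rejects $w$. Second, and more importantly, you correctly flag that linear size $|r_1|+|r_2|=O(n)$ is the load-bearing claim (a polynomial blowup of degree $d$ only yields $2^{\Omega(n^{1/d})}$, not $c^n$), but it is worth noting \emph{why} it is non-obvious: the natural encoding of the clause ``the first block disagrees with $q_0w$ at some position'' as $\bigcup_{i}\Gamma^{\,i-1}\,\overline{P_{i}}\,\Gamma^{*}$ has size $\Theta(n^{2})$, and even with squaring gadgets on the prefixes only improves to $\Theta(n\,\mathrm{polylog}\,n)$, which still fails to give $c^{n}$. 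The $O(n)$ bound requires factoring the common prefixes into a nested union of the form $\overline{P_{1}}\Gamma^{*}\cup P_{1}\bigl(\overline{P_{2}}\Gamma^{*}\cup P_{2}(\cdots)\bigr)$, each level of which contributes $O(1)$ symbols. Without that (or an equivalent device) the ``tedious-but-routine'' label undersells a step that actually determines whether the lemma holds in the form stated. Since you explicitly defer the full verification to \cite{MeyerStockmeyer1972}, this is not a gap in your argument so much as a place where the sketch hides the one genuinely delicate ingredient.
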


We start with an auxiliary statement that any problem on connected graphs can be solved by a deterministic $\NCCz$ algorithm in a polynomial number of rounds independently of labels or node identifiers. 

\begin{lemma}
    \label{lem:nccz_solvability}
    Any problem $\Pi$ on a labeled, connected $n$-node graph is $\poly(n)$-round solvable in $\NCCz(1)$.
\end{lemma}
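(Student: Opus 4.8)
The plan is to route all of the input to a single node, which then solves the problem locally using its unbounded computational power, and finally to distribute the solution back. Concretely, I would (i) elect a leader and build a spanning tree, (ii) convergecast the entire labeled graph to the leader, (iii) have the leader compute a legal output $s\in S_I$, and (iv) broadcast each node's portion of $s$ back down the tree. Since every node knows a polynomial upper bound $N\in\poly(n)$ on the ID range (and hence $n\le N$), all nodes can run each phase for a fixed, identical number of rounds that is polynomial in $N$, so the algorithm is fully synchronous and deterministic; the only thing to verify is that each phase completes within $\poly(n)$ rounds of $\NCCzx{1}$.

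For leader election and tree construction, each node maintains the largest ID it has heard of (initialized to the maximum over itself and its neighbors) and repeatedly forwards this value to its neighbors in round-robin fashion. Because the graph is connected with diameter $D\le n$, and a node can reach all of its at most $n-1$ neighbors in $n$ rounds of a round-robin, after $O(N^2)$ rounds every node knows the global maximum ID $\ell^\star$, and the node with ID $\ell^\star$ declares itself the leader. The leader then floods a BFS exploration token: when a node first receives the token it records the sender as its parent and in subsequent rounds forwards the token to its remaining neighbors and acknowledges its parent, so that after another $O(N^2)$ rounds every node knows its parent and its set of children in a BFS tree rooted at the leader. Crucially, all communication in this phase (and the next ones) takes place along graph edges only, so the $\NCCz$ restriction that a node may contact only nodes whose IDs it has already learned is satisfied automatically.

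Next, in the convergecast phase every node reports its own label together with the label of each incident edge (say, an edge is reported by its lower-ID endpoint) upward along the tree toward the leader. The total amount of data is $O(n^2)$ words (at most $\binom n2$ edges and $n$ nodes, each carrying $O(1)$ label words), and each word travels at most $n$ hops; even delivering a single word one hop per round over the whole network finishes in $O(n^3)$ rounds, and respecting the capacity $C=1$ only forces this kind of serialization, costing at most a further $\poly(n)$ factor. After this phase the leader holds a complete description of $I=(G,f_I)$ together with the full ID assignment, and by its unbounded local computation it picks any legal output $s\in S_I$ (which is nonempty, as otherwise no algorithm in any model solves $\Pi$). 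Finally it broadcasts to each node its portion of $s$ down the tree; since a legal output has size at most $n^c$ words (Definition \ref{def:problem}), this again takes $\poly(n)$ rounds, e.g.\ via a round-robin over each node's incident tree edges. Summing the four phases yields a deterministic $\poly(n)$-round $\NCCzx{1}$ algorithm.

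The main obstacle is bookkeeping rather than any conceptual difficulty: one must ensure that no node ever sends or receives more than $C=1$ word in a round and that every message goes to a node whose ID the sender has already learned. Both are handled uniformly by communicating exclusively along tree (hence graph) edges --- which makes the $\NCCz$ adjacency restriction vacuous --- and by spreading each logical communication step over enough real rounds (a round-robin over a node's incident tree edges, or simply delivering the at most $\poly(n)$ outstanding words one at a time) so the per-round capacity is never exceeded; this slowdown is only polynomial, so the overall round count stays $\poly(n)$.
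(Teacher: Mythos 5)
Your proof is correct and follows essentially the same route as the paper's: elect a leader, build a spanning tree along graph edges (so the $\NCC_0$ restriction of only contacting learned IDs is automatically respected), convergecast the full labeled graph to the root, and disperse, all in $\poly(n)$ rounds under a serialized schedule. The only minor variation is the dispersal step --- the paper has the root re-broadcast the entire collected graph so that every node locally recomputes the same deterministic solution and emits its own portion, whereas you have the leader compute $s\in S_I$ centrally and ship each node its piece; both variants work and stay within $\poly(n)$ rounds.
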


\begin{proof}
    First we elect a root and form a spanning tree. Each node maintains the smallest ID it has seen so far and, in each round, forwards that value to exactly one neighbor in round-robin order. The global minimum ID propagates to all nodes within $\bigO(n|E|)$ rounds. Each node designates as its parent the neighbor from whom it first learned the minimum ID, yielding a spanning tree rooted at the node with minimum ID.

    Next we conduct a converge-cast of the whole graph in the tree. In the converge-phase, every non-root node sends the edges (and labels) that it initially knows or learned of to its parent, one word per round. Each edge is described by $\bigO(1)$ words, and any word travels at most $n-1$ hops, so all data reach the root in $\bigO(n|E|)$ rounds.
    Then the root broadcasts the collected graph down the tree in $\bigO(n|E|)$ rounds.    
    Then, every node knows the entire input graph and can solve $\Pi$ locally and output its part of the solution. Hence $\Pi$ is solvable in $\poly(n)$ rounds in $\NCCz(1)$.
\end{proof}

In the following theorem we transform the problem of recognizing regular expressions with squaring operator into a graph problem for which we show two things. First, we can solve this graph problem in the $\NCCz(C)$ model, even with tiny bandwidth of $C=1$ word per round. Second, if we want to do the same in the $\MPC(M,S)$ model, we would require huge total space $MS$ otherwise we would be in violation of the Lemma above via a reduction from the graph problem in the $\MPC(M,S)$ model back to the recognition problem on Turing machines.

\begin{theorem}    \label{thm:sim_nccz_in_mpc_infeasible}
    There is an unlabeled graph problem $\Pi$ on a connected graph with $n$ nodes that is solvable in the $\NCCzx{1}$ model (in $\poly(n)$ rounds) and there exists $\varepsilon > 0$ such that $\Pi$ is unsolvable in $\MPCx{M,S}$ for $MS \leq (1+\varepsilon)^n$ in $r$ rounds for any finite $r$.
\end{theorem}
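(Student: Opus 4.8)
The plan is to let $\Pi$ be the graph-encoded equivalence problem for regular expressions with squaring, and then to play the unbounded local computation of an $\NCCz$ node against the space lower bound of Lemma~\ref{lem:regex_square}. Fix the alphabet $\Sigma$ of Lemma~\ref{lem:regex_square} together with a finite syntax alphabet $\Gamma\supseteq\Sigma$ containing the symbols $\cup,\cdot,{}^{*},{}^{2}$, parentheses and a separator $\#$. Given $r_1,r_2\in\textnormal{RSQ}(\Sigma)$, write the pair as a single string $x=r_1\#r_2\in\Gamma^{\ell}$ of \emph{syntactic} length $\ell$; the essential point is that $\ell$ does not grow when squares are unfolded, so $\ell$ may be tiny even though $L(r_1)$ is doubly exponential in $\ell$. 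I encode $x$ as an unlabeled, connected graph $G_x$ on $n=\Theta(\ell)$ vertices, say $n\le\beta\ell$ for a fixed constant $\beta$, by a routine gadget construction: a path ``spine'' on $v_1,\dots,v_\ell$ with a pendant gadget attached to $v_i$ whose shape encodes the $i$-th symbol of $x$ (only $|\Gamma|=\bigO(1)$ gadget shapes are needed), plus an asymmetric start marker at $v_1$ giving $G_x$ a canonical reading direction. Then $x\mapsto G_x$ is injective up to isomorphism and $x$ is recoverable from $G_x$ in $\poly(\ell)$ space. Let $\Pi$ be the graph problem whose instances are the graphs $G_x$ with $x=r_1\#r_2$ and whose single-bit answer on $G_{r_1\#r_2}$ is ``yes'' iff $L(r_1)=L(r_2)$; since that answer depends only on $x$, which $G_x$ determines, $\Pi$ is a well-defined unlabeled graph problem in the sense of Definition~\ref{def:problem}.

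\emph{Solvability in $\NCCzx{1}$.} This is immediate from Lemma~\ref{lem:nccz_solvability}: in $\poly(n)$ rounds a root gathers the entire graph, decodes $(r_1,r_2)$, decides $L(r_1)\stackrel{?}{=}L(r_2)$ using its (finite but otherwise unbounded) local computation, and broadcasts the answer bit down the spanning tree.

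\emph{Impossibility in $\MPC$.} Set $\varepsilon:=(c^{1/\beta}-1)/2>0$, where $c>1$ is the constant from Lemma~\ref{lem:regex_square} for the alphabet $\Sigma$ and $\beta$ is the constant above. Suppose for contradiction that some $\MPCx{M,S}$ algorithm $\mathcal A$ with $MS\le(1+\varepsilon)^{n}$ solves $\Pi$ in $r$ rounds for a finite $r$ (possibly depending on $n$). I build a Turing machine $T$ recognizing $\textnormal{RSQ}(\Sigma)$: on input $r_1\#r_2$ of length $\ell$, $T$ constructs $G_x$ in $\poly(\ell)$ space and then simulates $\mathcal A$ on $G_x$ round by round. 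The complete global state of $\mathcal A$ consists of $MS$ machine words, i.e.\ $\bigO(MS\log n)=\bigO((1+\varepsilon)^{n}\log n)$ bits; simulating one machine's per-round local step needs only $\bigO(S\log n)$ scratch space (machines are space-$S$ bounded Turing machines), reusable over all $M$ machines, and delivering one round of messages also fits in $\bigO(MS\log n)$ space. Since this workspace is reused across all rounds, $T$ runs in space $\bigO((1+\varepsilon)^{n}\log n)$ regardless of how large the finite $r$ is, and $T$ halts because $\mathcal A$ halts within $r$ rounds; $T$ then outputs whether some machine answered ``yes''. Using $n\le\beta\ell$, this space is $\bigO\big(((1+\varepsilon)^{\beta})^{\ell}\polylog(\ell)\big)$, which is $o(c^{\ell})$ by the choice of $\varepsilon$. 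But Lemma~\ref{lem:regex_square} forces every Turing machine recognizing $\textnormal{RSQ}(\Sigma)$ to use space at least $c^{\ell}$ for infinitely many $\ell$, a contradiction. Hence no such $\mathcal A$ exists for any finite $r$, which is exactly the claimed impossibility.

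\emph{Main obstacle.} The two delicate points are (i) keeping the graph encoding compact, i.e.\ linear in the \emph{syntactic} length of the expressions rather than in the size of the languages they denote --- this gap is exactly what Lemma~\ref{lem:regex_square} exploits, so an encoding that expanded the squares would destroy the reduction; and (ii) verifying that the Turing-machine simulation of $\mathcal A$ stays within $\bigO(MS\log n)$ space even when $\mathcal A$ uses an astronomically large (but finite) number of rounds, which hinges on the facts that nothing beyond the $MS$-word global memory must persist between rounds and that a space-$S$ MPC machine's local step and one round's message delivery are each implementable in $\bigO(MS\log n)$ reusable workspace.
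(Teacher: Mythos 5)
Your proposal is correct and follows essentially the same approach as the paper: encode the squaring-regular-expression equivalence problem as a connected unlabeled graph (the paper uses a path with chords and an end-triangle marker rather than pendant symbol gadgets, but this is cosmetic), solve it in $\NCCzx{1}$ via Lemma~\ref{lem:nccz_solvability}, and refute a small-$MS$ \MPC algorithm by simulating it with a Turing machine whose space would contradict Lemma~\ref{lem:regex_square}. Your treatment is if anything a bit more careful than the paper's on two points the paper leaves informal: you make the linear constant $\beta$ between syntactic length and vertex count explicit when choosing $\varepsilon$, and you explicitly observe that the simulating Turing machine's space is reused across rounds so the bound is independent of $r$.
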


\begin{proof}
    Let $\Sigma$ be a finite alphabet, and let $(r_1,r_2)$ with $r_1,r_2 \in \text{RSQ}(\Sigma)$ be an instance of the problem \smash{$L(r_1) \stackrel{?}{=} L(r_2)$}, and let $n := |\langle r_1,r_2\rangle| + 1$ be the bit length of a unique binary encoding applied to $(r_1,r_2)$. Build a connected, unlabeled graph $G=(V,E)$ with $V=\{v_0,\dots,v_{2n+1}\}$ as follows. For each $i\in\{0,\dots,2n-1\}$ include edge $\{v_i,v_{i+1}\}$ to create a path and ensure connectivity. We add the triangle among the nodes $v_{2n-1}, v_{2n}, v_{2n+1}$ to mark the end of the path. Further, we add chords $e_i := \{v_i,v_{n+i}\} \in E$ if the $i$-th bit $\langle r_1,r_2\rangle_i$ of the encoding equals $1$, else $b_i = 0$ (which does not create additional triangles).
    Then all bits of $\langle r_1,r_2\rangle$ are uniqely recoverable from the chords $e_i$ in $G$ whose order is given by the path and with the triangle at the end.
    Define the decision problem $\Pi$ so that on input $G$ the output is ``yes'' iff $G$ encodes some instance $(r_1,r_2)$ as defined above and \smash{$L(r_1) = L(r_2)$}. We remark that the encoding is agnostic of node IDs. By Lemma \ref{lem:nccz_solvability}, we can solve the problem in $\poly(n)$ rounds in the $\NCCzx{1}$ model.
    
    Now suppose, toward a contradiction, that $\Pi$ is solvable in $\MPCx{M,S}$ with total memory $MS \le (1+\varepsilon)^n$ for any $\varepsilon>1$. Each machine word stores $\Theta(\log n)$ bits, so the system has $\Theta(MS\log n)$ bits memory total. We construct a Turing machine $T$ that on input $\langle r_1,r_2 \rangle$ first constructs $G$ then simulates the $\MPCx{M,S}$ model (with $MS \geq |E|$) on $G$ and decides the question \smash{$L(r_1) \stackrel{?}{=} L(r_2)$} based on whether some machine outputs yes after $r$ rounds or not.
    $T$ requires only the $\Theta(MS\log n)$ bits of total space for the construction of $G$ and the simulation. Let $c := 1+2\varepsilon$. Then for all large enough such $n$, $MS\log n \leq (1+\varepsilon)^n\log n < (1+2\varepsilon)^n = c^n$. note that the constant $c = 1+2\varepsilon$ can assume any value $>1$, since $\varepsilon$ is arbitrarily close to $0$.    
    However, by Lemma~\ref{lem:regex_square} there exists a constant $c_0>1$ such $T$ uses at least $c_0^n$ bits of space for infinitely many $n$, a contradiction.
\end{proof}

The theorem above gives is the following simulation impossibility result.

\begin{corollary}
    There exists a constant $c>1$ such that $\NCCzx{1} \not\leq_t \MPCx{M,S}$ on $n$-node graph problems for any $MS \leq c^n$ and any $t \in \mathbb N$.
\end{corollary}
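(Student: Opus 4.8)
The plan is to obtain the corollary as a direct consequence of Theorem~\ref{thm:sim_nccz_in_mpc_infeasible}, simply by unwinding the definitions of simulation and model dominance. First I would fix the problem $\Pi$ and the constant $\varepsilon>0$ furnished by Theorem~\ref{thm:sim_nccz_in_mpc_infeasible}, and set $c:=1+\varepsilon>1$; this is the constant the corollary promises. The claim is then that for this $c$, no slowdown-$t$ simulation of $\NCCzx{1}$ in $\MPCx{M,S}$ can exist once $MS\leq c^n$.

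Next I would argue by contradiction. Assume $\NCCzx{1}\leq_t\MPCx{M,S}$ holds on the class of $n$-node graph problems for some constant $t\in\mathbb N$ and some parameterization with $MS\leq c^n$. Since $\Pi$ is itself a graph problem on connected $n$-node graphs, it belongs to that class, so Definition~\ref{def:dominance} applies to it. By Theorem~\ref{thm:sim_nccz_in_mpc_infeasible} there is an $\NCCzx{1}$ algorithm $\mathcal A$ solving $\Pi$ in $T_{\mathcal A}=\poly(n)$ rounds. Definitions~\ref{def:dominance} and~\ref{def:simulation} then hand us an $\MPCx{M,S}$ algorithm $\mathcal A'$ that also solves $\Pi$ and terminates within $\bigO(t\cdot T_{\mathcal A})$ rounds, in particular within some finite round bound $r$ for each input length. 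This directly contradicts the second half of Theorem~\ref{thm:sim_nccz_in_mpc_infeasible}, which asserts that $\Pi$ is unsolvable in $\MPCx{M,S}$ with $MS\leq(1+\varepsilon)^n=c^n$ in $r$ rounds for \emph{any} finite $r$. Hence the assumed simulation cannot exist, and since $t$ and the parameterization were arbitrary (subject only to $MS\leq c^n$), the corollary follows.

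I do not expect a genuine obstacle; the only points requiring care are bookkeeping ones. One is quantifier alignment: a dominance statement posits a \emph{single} simulation that works for all input sizes $n$, whereas the impossibility in Theorem~\ref{thm:sim_nccz_in_mpc_infeasible} (ultimately inherited from the Meyer--Stockmeyer space lower bound, Lemma~\ref{lem:regex_square}) is witnessed on infinitely many $n$; since an algorithm that errs on infinitely many instances does not ``solve'' $\Pi$ in the sense of Definition~\ref{def:solve}, the contradiction is legitimate. The second is to observe that the value of the slowdown $t$ is immaterial here: the obstruction is that $\MPCx{M,S}$ with $MS\leq c^n$ cannot solve $\Pi$ in \emph{any} finite number of rounds, so multiplying the round count by any constant (indeed any $\poly(n)$) factor changes nothing, which is why the corollary can quantify over all $t\in\mathbb N$.
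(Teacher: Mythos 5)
Your proposal is correct and matches the paper's intent: the paper states the corollary without proof, presenting it as an immediate consequence of Theorem~\ref{thm:sim_nccz_in_mpc_infeasible}, and your argument is precisely the routine unwinding of Definitions~\ref{def:simulation} and~\ref{def:dominance} that justifies it. The two caveats you raise (quantifier alignment with the Meyer--Stockmeyer ``infinitely many $n$'' witness, and the irrelevance of the slowdown factor $t$ because the unsolvability is for \emph{any} finite round count) are exactly the points worth being careful about, and you handle them correctly.
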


\subsection{Impossibility of Efficiently Simulating \NCCz in \MPC}

The sublinear-memory $\MPC$ model is limited in a sharper sense than failing on tasks that exceed its total memory. For instance, a single machine cannot, in general, store the full neighborhood of a node $v$ or the corresponding edge-labels, since the adjacency set $N(v)\subseteq V\setminus\{v\}$ can be any of $2^{n-1}$ subsets, so specifying $N(v)$ requires $n-1$ bits in the worst case for which the $S \in o(n)$ bits available in the sublinear regime are insufficient. By contrast, in the $\NCCz$ model the node corresponding to $v$ is given its adjacency set and associated labels by definition. Hence any $\NCCz$ algorithm where a node computes an arbitrary function $h$ on its neighborhood in 0 rounds can, in general, not be simulated without communication among machines.

Even if the function $h$ itself has little memory overhead, we show that there exists such a function that is hard to compute collaboratively on multiple machines. That is, the computation of $h$ is hard to break into subproblems that can be solved concurrently, and thus induces $\Omega(n/S)$ communication rounds in the sublinear $\MPC(M,S)$ model unless machines locally solve problems which are conjectured to be computationally hard. In other words, assuming that machines cannot break some common complexity theoretic assumptions, we cannot even efficiently simulate some problems in sublinear $\MPC$ even if they do not exceed the memory budget of the machines.

\paragraph{Preliminaries}
We aim to design a function that requires little additional space but induces a lot of communication in the \MPC model. We rely on the existence of secure pseudorandom permutations (PRPs). A PRP is a function $F:\{0,1\}^{\ell}\times\{0,1\}^{k}\to\{0,1\}^{k}$. For every fixed key $x\in\{0,1\}^{\ell}$ the function $F_x(y)$ is a bijection on $\{0,1\}^{k}$. Further, $F_x(y)$ can be efficiently evaluated for any pair $(x,y)$, in particular, it is almost in-place, with only $\bigO\big(\!\log \big(k\ell)\big)$ memory required in addition to the input $(x,y)$. The parameter $\ell$ is called the security parameter which is specifies the hardness of breaking the following security property of $F_x(y)$.

For a fixed (secret) key $x \in \{0,1\}^{\ell}$, the map $y\mapsto F_x(y)$ is computationally indistinguishable from a uniformly random permutation on $\{0,1\}^{k}$, i.e., no algorithm that is probabilistic polynomial-time (PPT) that can adaptively choose inputs $y_1,y_2,\dots$ and see the corresponding outputs, can tell whether these outputs come from $F_x$ or from a truly random permutation with more than negligible advantage in security parameter $\ell$ (i.e., for every polynomial $p$ its success probability is at most $1/2+1/p(\ell)$ for sufficiently large $\ell$).

We additionally assume the PRP to be \emph{leakage-resilient} for which the indistinguishability property above still holds when given some partial information of the key (see \cite{DBLP:conf/stoc/DodisKL09, DBLP:conf/focs/DziembowskiP08}).
Specifically, fix a $0 \leq \lambda < x$ and let $Z=L(x)$ be any efficiently computable string of at most $\lambda$ bits derived from the key $x$. Then no PPT algorithm, given $Z=L(x)$, can distinguish the outputs of $F_x(\cdot)$ from those of a uniformly random permutation with more than negligible advantage. Intuitively, the security parameter in this case is the {residual entropy} $\Theta(\ell-\lambda)$.

The existence of PRPs is a standard cryptographic assumption and follows from the existence of one-way functions (OWFs), which is equivalent to having an \NP-problem that is hard on average for PPT algorithms under some efficiently samplable input distribution. A leakage-resilient PRP is a stronger primitive not known to follow from OWFs alone, but there are concrete constructions under standard assumptions (\cite{DBLP:conf/eurocrypt/Pietrzak09, DBLP:conf/crypto/DodisP10}).

For our purposes we use two immediate consequences of leakage-resilient PRP security: (i) Since for fixed key $x$ a PRP $F_x(y)$ is a bijection, for uniformly random $y \in \{0,1\}^k$ the value $F_x(y)$ is again a uniform distribution. Thus predicting $F_x(y)$ from unseen $y$ is infeasible except with probability $2^{-k}$. (ii) there is no $\lambda$-bit “summary” $L(x)$ of the key $x$ that can be computed efficiently that on input $y$ lets one compute $F_x(y)$ with non-negligible probability (otherwise we could use this ability to distinguish $F_x(y)$ from a random permutation with non-negligible probability).

We use these properties to show in the following that there is a graph problem which is trivial in the \NCCz model but in the sublinear \MPC model requires either a certain minimum number of rounds or requires at least one machine to break the security of a PRP function (for a security parameter of polynomial size $\Theta(n^\delta)$). In contrast to Theorem \ref{thm:sim_nccz_in_mpc_infeasible}, the problem itself does not depend on requiring more total memory than available overall in the \MPC model, in fact, the result holds for any number of machines.

\begin{theorem}
\label{thm:sim_nccz_in_mpc_inefficient}
There exists a labeled graph problem $\Pi$ on connected $n$-node graphs that is solvable in $0$ rounds in $\NCC_0(1)$ but requires $\Omega(n/S)$ rounds in $\MPC(M,S)$ for any $S=n^\delta$ with $0<\delta<1$ and arbitrarily large $M$, unless a machine locally breaks the security property of leakage-resilient pseudo-random-permutation (PRP).
\end{theorem}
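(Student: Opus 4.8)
The plan is to take as the separating problem an iterated evaluation of a leakage-resilient PRP $F$, which plays the role of the ``non-distributive'' function $h$. I would fix a constant $0<\delta'<\delta$, set the PRP key length to $\ell=n^{\delta'}$ bits (the security parameter) and $m=\Theta(n^{1-\delta'}\log n)$, and attach to an instance $w$ a connected graph $G_w$ built from a ``collector'' node $v^*$ adjacent to $\Theta(n)$ leaves, together with a path through the leaves and an end-marker triangle (exactly as in the proof of Theorem~\ref{thm:sim_nccz_in_mpc_infeasible}), so that $v^*$ can locally recognise a valid instance and read off, from the labels on its incident edges, a start value $y_0\in\{0,1\}^{k}$ and keys $x_1,\dots,x_m\in\{0,1\}^{\ell}$ (each key split into $\ell/\log n$ one-word chunks, ordered by leaf ID). A legal solution places $z_m:=F_{x_m}\!\big(\cdots F_{x_1}(y_0)\cdots\big)$ --- a string of $k=\polylog(n)$ words, hence a legal output by Definition~\ref{def:problem} --- at $v^*$, and $\bot$ everywhere else. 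For $\NCCzx{1}$ this is immediate: $v^*$ is handed all incident edge-labels in round $0$ by definition, reassembles $y_0,x_1,\dots,x_m$, computes the chain in $\poly(n)$ local steps (as $F$ is efficiently evaluable), and outputs $z_m$; all other nodes output $\bot$. Hence $\Pi$ is solvable in $0$ rounds in $\NCCzx{1}$.

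For the $\MPC$ lower bound I would fix the (legal) input placement that stores $x_i$ entirely on machine $i$ --- possible since $\ell/\log n\le S$ --- and suppose $\mathcal A$ solves $\Pi$ in $r$ rounds in $\MPCx{M,S}$ with $S=n^{\delta}$ (tacitly modelling each machine's per-round computation as polynomial time, as is standard; otherwise the statement is vacuous since a machine could brute-force a key). Define a potential $p(t)$ as the largest index $i$ such that, at the end of round $t$, some machine's memory allows $z_i$ to be recomputed by a $\poly(S\log n)$-time local computation; then $p(0)=\bigO(1)$ (no machine initially holds $y_0$ together with two or more keys) and $p(r)=m$ (correctness forces $z_m$ to be output). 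The heart of the proof is a \emph{progress lemma} stating that for every round $t$,
\[
  p(t+1)\ \le\ p(t)+\kappa,\qquad \kappa:=\bigO\!\big(\tfrac{S}{\ell}\log n\big)=\bigO\!\big(n^{\delta-\delta'}\log n\big),
\]
unless, on a non-negligible fraction of inputs, some machine breaks the leakage-resilient PRP. Telescoping over $r$ rounds yields $m\le r\kappa$, i.e.\ $r=\Omega(m/\kappa)=\Omega(n^{1-\delta})=\Omega(n/S)$, as claimed.

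The progress lemma would be established by a win-win argument. If round $t^*$ witnesses a jump --- some machine $M$ ends round $t^*$ able to recompute $z_{i^*}$ while $p(t^*-1)<i^*-\kappa-1$ --- then everything $M$ knows is a $\poly(S\log n)$-time function of its \emph{view} $V$ (its memory at the end of round $t^*-1$ plus the $\le S$ words received in round $t^*$), so $|V|\le 2S$ words. Since $z_{i^*}$ is $F_{x_{i^*}}\!\circ\cdots\circ F_{x_{j_0+1}}$ applied to $z_{j_0}$ with $j_0:=p(t^*-1)$, the only ``legitimate'' route to $z_{i^*}$ is to hold some $z_{j}$ with $j\le j_0$ (a $k$-word object that $V$ can carry) together with \emph{all} of $x_{j+1},\dots,x_{i^*}$; but those are more than $\kappa$ keys, occupying more than $2S$ words, so $V$ cannot contain them all. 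With the constant in $\kappa$ chosen suitably and by averaging, some key $x_{i'}$, $i'\in(j_0,i^*]$, has conditional entropy $\Omega(\ell)$ given $V$ --- $V$ retains only a bounded-length summary of it. Then $z_{i'}=F_{x_{i'}}(z_{i'-1})$, and hence the $z_{i^*}$ that $M$ reproduces, is by a hybrid over the remaining keys computationally indistinguishable from a uniform $k$-bit string given $V$, so $M$ produces it only with probability $2^{-k}+\mathrm{negl}$; for $k=\omega(\log n)$ this is the contradiction.

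Turning that intuition into an actual PPT distinguisher is the step I expect to be hardest. The plan: non-uniformly fix the jump data $(t^*,i^*,i')$ (only $\poly(n)$ choices, costing $\le 1/\poly(n)$ in advantage); sample all keys except $x_{i'}$ and the value $y_0$, so the known prefix determines $z_{i'-1}$ and the known tail determines the bijection $g:=F_{x_{i^*}}\!\circ\cdots\circ F_{x_{i'+1}}$; run $\mathcal A$ up to round $t^*$, using oracle queries and the leakage oracle on machine $i'$'s (bounded) outgoing information to simulate the single machine holding the challenge key; then read off $M$'s output $z_{i^*}$, compute $g^{-1}(z_{i^*})$, and compare it against the challenge oracle at the known point $z_{i'-1}$ --- a match iff the oracle is $F_{x_{i'}}$ (when $M$ is correct), a mismatch with probability $1-2^{-k}$ for a random permutation. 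The delicate quantitative point is using $r=o(n/S)$ to keep the information about $x_{i'}$ that the simulation must consume below the residual-entropy threshold $\Theta(\ell)$ of the leakage-resilient PRP; the delicate conceptual point is that a shortcut need not isolate one PRP call, which is exactly what the composition/hybrid argument and the ``peel off the known tail'' step absorb. I would also verify, as routine bookkeeping, the output-size and connectivity constraints on $G_w$ and the exact constant in $\kappa$.
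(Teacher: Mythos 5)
Your high-level approach is the same as the paper's: an iterated chain of $\Theta(n/S)$ leakage-resilient PRP evaluations, keyed by edge labels of a high-degree node, with a potential/progress argument showing the chain must be evaluated essentially sequentially because leakage resilience rules out shortcuts. The substantive difference is your parameterization, and it introduces a real gap that I don't think your sketch can close.

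You set the key length to $\ell = n^{\delta'}$ bits with $\delta' < \delta$, so that a single machine's $\Theta(S\log n)$-bit memory can hold $\kappa = \Theta(n^{\delta-\delta'}\log n)$ keys at once, and your progress lemma caps the per-round jump at $\kappa$. The problem is that the leakage budget $\lambda$ of a leakage-resilient PRP is necessarily $< \ell = n^{\delta'}$ bits, whereas the object your distinguisher needs to simulate — the target machine's view $V$ of $2S$ words — has \emph{length} $\Theta(S\log n) = \Theta(n^{\delta}\log n) \gg \lambda$ bits. Your averaging step correctly bounds the \emph{mutual information} $I(x_{i'};V) = O(\ell)$, but leakage resilience is a guarantee against a leakage \emph{string} of bounded length, not a guarantee against low-information leakage of arbitrary length; there is no way to declare ``$V$ restricted to what pertains to $x_{i'}$'' as the leakage, because that isn't a well-defined, efficiently computable, $\le \lambda$-bit string. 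Your own workaround (``leakage oracle on machine $i'$'s bounded outgoing information'') makes this worse, not better: even a single round of outgoing messages is up to $S$ words $\gg \lambda$ bits, and over $t^*$ rounds it is $t^*S$ words. Shrinking $r$ to $o(n/S)$ does not help — $o(n/S)\cdot S\log n = o(n\log n) \gg n^{\delta'}$. So the distinguisher you sketch cannot be instantiated within the leakage budget.

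The paper avoids all of this by choosing the key length to be about one machine's memory: $2wS/3 \le \ell < wS$ with $\lambda = 3\ell/4 \ge wS/2$. Then $\lambda$ is comparable to the full per-machine memory, so a machine's entire memory content is a legal leakage, and the progress argument degenerates to a one-line memory-packing observation: two $(\lambda{+}1)$-bit key summaries cannot coexist in one $wS$-bit memory, so the machine that just produced $h^v(i)$ (and therefore still holds $\ge\lambda{+}1$ bits of $B_{i-1}^v$) cannot already hold $\ge\lambda{+}1$ bits of $B_i^v$; co-location requires the next communication phase, giving $t(i{+}1)\ge t(i)+1$ per step and $\Omega(d_v/S) = \Omega(n/S)$ rounds on a star. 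I would suggest replacing your $\delta' < \delta$ with $\ell = \Theta(S\log n)$ so that $\kappa = O(1)$; at that point your ``progress lemma'' reduces to the paper's trivial pigeonhole step and both the averaging and the delicate distinguisher disappear.
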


\begin{proof}
Fix a PRP $F:\{0,1\}^\ell\times\{0,1\}^k\!\to\!\{0,1\}^k$ with $k,\ell=\Omega(n^\delta\log n)$ that is resilient against leakages of up to $\lambda = 3\ell/4$ bits. 
Note that this still gives a residual security parameter of $\Theta(n^\delta)$ bits, i.e., any PPT algorithm breaking the security property of $F$ has a negligible probability in $n$. Let $w \in \Theta(\log n)$ be the size of a machine word in bits.
We assume that $\ell < wS$ but $ \lambda = 3\ell/4 \geq wS/2$ which implies that an $\ell$ bit string fits into memory but two $(\lambda+1)$-bit strings do not (although the argument can be adapted for up to $O(1)$ bit strings of size $\lambda+1$ per machine).
For some key $x \in \{0,1\}^\ell$ let $L(x)$ an arbitrary $\lambda+1$ bit value that was derived from $x$ via an efficient procedure. In particular, any machine can hold at most one such summary $L(x)$ of $x$.

Define a graph problem as follows. 
For a node $v$ of degree $d_v$, order its incident edges and group their $w$-bit labels into $b_v=d_v/S$ consecutive batches $B_0^v,\ldots,B_{b_v-1}^v\in\{0,1\}^\ell$. All labels are mutually independent and uniform, hence each $B_i^v$ is uniform and independent.
Define the state sequence
\[
h^v(0):=0,\qquad
h^v(i{+}1):=F_{\,B_i^v}\!\bigl(h^v(i)\bigr)\quad\text{for }i=0,\ldots,b_v-1.
\]
Let $\Pi$ ask each node $v$ to output $h^v(b_v)$. In $\NCC_0(1)$ every node initially knows all its incident labels, so it computes $h^v(b_v)$ locally in $0$ rounds.

We assume the usual synchronous \MPC round structure where each round consists of a communication phase and a subsequent computation phase. For a fixed $v$, let $t(i)$ be the earliest round such that some machine holds $h^v(i)$ at the end of the \emph{computation} phase of that round. Note that $t(1)=0$ is possible as a machine holding $B_0^v$ can immediately compute $h^v(1)$. We claim that, unless a machine breaks the security properties of $F_x(y)$, we must have $t(i{+}1)\ge t(i)+1$.

First we show that at the start of round $t(i)+1$, no machine can simultaneously hold both $h^v(i)$ and a summary $L(B_i^v)$ of at least $\lambda+1$ bits. Any machine that first wrote $h^v(i)$ in its memory did so only after the compute phase of round $t(i)$, thus after the communication phase of the same round $t(i)$ it must hold a summary $L(B_{i-1}^v)$ of at least $\lambda+1$-bits in order to produce $h^v(i)$ (typically $B_{i-1}^v$ itself), since otherwise it would have broken the security of properties of $F_{B_{i-1}^v}$.
Therefore, it cannot hold a corresponding summary of $L(B_i^v)$ of at least $\lambda+1$ bits before the next communication phase at the start of round $t(i)+1$.

Any other machine may have $B_i^v$ in its memory, but cannot yet have received $h^v(i)$ at the start of round $t(i)+1$, as values created in the last compute phase cannot be delivered before the next communication phase at the start of round $t(i)+1$. Thus co-location of $h^v(i)$ and a any summary of $B_i^v$ with at least $\lambda +1$ bits can only be achieved \emph{during} the communication phase of round $t(i)+1$ at the earliest.

Consequently, the first time any machine can compute $h^v(i{+}1)=F_{\,B_i^v}\!\bigl(h^v(i)\bigr)$ is in the compute phase of round $t(i)+1$, which gives the desired statement $t(i{+}1)\ge t(i)+1$. Moreover, PRP security rules out any shortcut at the start of round $t(i)$: producing $F_{\,B_i^v}\!\bigl(h^v(i)\bigr)$ without co-locating both inputs cannot be done efficiently with probability  that is significantly better than guessing (success probability is at most $2^{-k}$ plus a negligible function in $k$).

Finally we choose a family of graphs containing a node $v$ with $d_v=\Theta(n)$ (e.g., a star). Then $b_v=d_v/S=\Theta(n/S)$, and by the progression just shown we have $t(b_v)\ge b_v$. Hence any $\MPC(M,S)$ algorithm needs $\Omega(n/S)$ rounds to produce $h^v(b_v)$, and thus to solve $\Pi$, unless it violates the PRP security as stated.
\end{proof}

\section{\NCCzs Simulation in \MPC}
\label{sec:nccz_in_mpc}

As we have seen in the previous section, we have reason to be pessimistic about simulation results even of the weaker \NCCz model in \MPC. The main issue is that the memory restrictions of the $\MPC(M,S)$ already prohibits its simulation in principle, for any sub-exponential total memory $MS$ (Theorem \ref{thm:sim_nccz_in_mpc_infeasible}). Further, efficient simulation in the sublinear regime is impossible even for arbitrarily large total memory $MS$, (assuming we do not expect machines to locally break common cryptographic assumptions, see Theorem \ref{thm:sim_nccz_in_mpc_inefficient}).

In particular, our impossibility results imply that whenever we allow single nodes in the $\NCCz$ model to accumulate a large amount of information, then we can construct functions on that information that the unbounded \NCCz node can compute locally but no single machine is able to replicate this computation efficiently, because it cannot locally store the input, and no subset of machines can collaboratively compute that function efficiently, as it might be inherently non-distributive.
Our goal for this section is therefore to find a minimal set of necessary restrictions of \NCCz such that simulation in \MPC is indeed feasible to identify a common core where both models align. We make the following assumptions about \NCCz with justifications provided. 

We identify the following situations where an \NCCz node has a large local accumulation of information. (a) in a labeled graph a \NCCz node has access to the edge labels of all incident edges. (b) Even without edge labels, a high degree \NCCz node has access to a large set of neighbor identifiers. (c) Worse yet, a \NCCz node that is locally unlimited in its memory may accumulate more information than the entire $\MPC$ is able to store.
We deal with this in the following way. Situation (a) forces us to limit ourselves to unlabeled graphs. Situation (c) forces us limit the information that an \NCCz node can use, in particular to be able to simulate $\NCCz(C)$ with $C=n^\delta$ in the $\MPCx{M,S}$ model with comparable machine space $S=n^{O(\delta)}$, we must restrict the initial input of nodes in the \NCCz model to $C$ words. Hardest to deal with is situation (b), since the ability of \NCCz nodes to identify their neighbors in the input graph is a core property of the model. We deal with (b) by first restricting our simulation to graphs with small arboricity $n^\delta$ and by instead of giving each node $v$ the full set of neighbor IDs equipping them with function $f_v$ that that allows them to map IDs to local port numbers.

\begin{definition}\label{def:ncczs}
    The $\NCCzsx{C}$ model corresponds to a restricted version of the \NCCzx{C} model. Each node $v$ has only $C$ words of local memory and knows only its own unique ID from an ID space $\mathcal I$. Instead of knowing the IDs of its initial neighbors directly, $v$ can send a message to any neighbor via port numbers in $[1, \dots, \deg(v)]$ and has access to a function $f_v : \mathcal I \to [1, \dots, \deg(v)] \cup \{\bot\}$ that is computable within local memory and maps any ID from $\mathcal I$ to the corresponding port numbers $[1, \dots, \deg(v)]$, if it is a neighbor of $v$, and $\bot$ otherwise. Similar to the \NCCzx{C}, each node can contact any node whose ID it learned.
\end{definition}

Note that in \NCCzs any given node can still contact any of its neighbors, and ask any neighbor for their identifier via a single message which it can then store locally (subject to the memory restriction).
Arguably, the \NCCzs model represents a minimal set of restrictions we have to accept in order to make a simulation in \MPC feasible, that still maintains one of the core features in the \NCCz model: the ability of nodes to identify their neighborhood.
We give a simulation scheme for \NCCzsx{C} on graph problems in $\mathcal U_a$, defined as the set of unlabeled graphs with bounded arboricity $a$ in the $\MPC(n,S)$ model for $S=\bigO(\beta a\cdot(\beta a+\log_\beta n)  +  C\beta a)$ for some tuning parameter $\beta$.

For our simulation, we require a preprocessing step (\Cref{ssec:nccz_in_mpc:preprocessing}) where we map nodes to machines one-to-one, with the goal that the local computation of each node is handled by exactly one machine. As a single machine is unable to hold all neighbor identifiers of a high degree node, we also distribute each node's neighborhood among $\bigO(\beta\cdot a)$ machines, making them responsible for the communication to these neighbors. In \Cref{ssec:nccz_in_mpc:simulation}, we describe the details of the actual simulation, i.e., how the machines responsible for a node perform the actions the node would perform during the execution of any simulated algorithm. 

\subsection{Preprocessing}\label{ssec:nccz_in_mpc:preprocessing}
We start by assigning to each node $v$ exactly one machine $M(v)$ that handles the computations performed $v$. %
Specifically, we will relabel the input graph by consistently replacing all node identifiers with machine identifiers.

\begin{lemma}\label{lem:mpc_machineids}
    Let $G=(V,E)$ be a graph which is given in the \MPCx{M, S} where is $E$ partitioned among the machines (subject to the memory bound $S \geq n^\delta$ for some constant $\delta > 0$) and $M=n$. For each $v\in V$ we can relabel the edges of $G$ incident to $v$ by replacing $v$'s identifier with a unique machine identifier $M(v)$ in $\bigO(\nicefrac{1}{\delta})$ rounds.
\end{lemma}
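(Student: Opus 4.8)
The plan is to assign to each distinct node identifier a rank in $\{0,\dots,n-1\}$, declare that rank to be the machine identifier $M(v)$, and then route these ranks back to every edge occurrence so that each machine can rewrite its incident edges in place. Since there are exactly $n$ distinct node identifiers and $M=n$ machines, the ranks form a legal, injective assignment of machine IDs to nodes, and consistency (the same identifier always maps to the same machine ID, regardless of on which machine an incident edge happens to sit) is automatic because the rank depends only on the identifier, not on the location. The whole procedure is a constant number of calls to the sorting, prefix-sum and group-aggregation primitives of Lemmas~\ref{lem:mpc_sort}, \ref{lem:mpc_prefix} and \ref{lem:mpc_groupagg}, each costing $\bigO(1/\delta)$ rounds, which yields the claimed bound.

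In more detail: first, each machine, for every incident edge $\{u,v\}$ it holds, emits two tuples, $(u,\mathrm{loc})$ and $(v,\mathrm{loc})$, where $\mathrm{loc}$ records the machine ID, the local slot of the edge, and a bit identifying which endpoint. There are at most $2|E|\le 2MS$ such tuples and $\bigO(S)$ per machine. Sort them by the identifier component (Lemma~\ref{lem:mpc_sort}), so that equal identifiers occupy a contiguous block; mark as a \emph{representative} the first tuple of each block by comparing each tuple with its predecessor in the sorted order (a $\bigO(1)$-word exchange across machine boundaries, or folded into the sort). A prefix sum over the $0/1$ representative-indicator (Lemma~\ref{lem:mpc_prefix}) gives the representative of identifier $x$ the count $r(x)$ of distinct identifiers strictly preceding it; then $r(x)\in\{0,\dots,n-1\}$ and distinct identifiers get distinct values, so we set $M(x):=r(x)$. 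A group aggregation (Lemma~\ref{lem:mpc_groupagg}) with group key the identifier, value $r(x)$ at the representative and the neutral element elsewhere (summation, say), makes every endpoint tuple carry the pair $(\mathrm{loc},M(x))$ for its own node. Finally, re-sort the tuples by $\mathrm{loc}$ (Lemma~\ref{lem:mpc_sort}); this returns to each machine, for each edge it originally held, the two machine IDs $M(u),M(v)$, which it writes in place of $u,v$. This is a fixed number of sorts plus one prefix sum and one aggregation, hence $\bigO(1/\delta)$ rounds.

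The only genuinely delicate point is the round trip: the location tags must be maintained so that the final re-sort delivers each rank to exactly the edge occurrence that requested it, and one must check no step exceeds the per-machine budget. The latter holds because the number of endpoint tuples is only a constant factor above $|E|\le MS$ and each auxiliary field ($\mathrm{loc}$, a rank) is $\bigO(1)$ words; the subtle case is a machine that initially holds $\Theta(S)$ edges, which would receive $\Theta(S)$ ranks back for each of its two endpoints — this is resolved either by packing both ranks into the (already $\bigO(1)$-word) edge record or by performing the final routing in two separate sorting passes, one per endpoint. Everything else is a direct, routine application of the standard strongly-sublinear \MPC toolkit.
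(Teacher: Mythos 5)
Correct, and essentially the same approach as the paper: both sort the endpoint occurrences by node identifier, use a prefix sum over a $0/1$ first-occurrence indicator to assign a unique rank $M(v)$ to each distinct identifier, and then route the ranks back to the edge records via another sort. The only cosmetic difference is that you carry explicit location tags and re-sort by them, and use a separate group aggregation to broadcast the rank, whereas the paper notes that the prefix sum already delivers the rank to all tuples of an identifier and then re-sorts by the (undirected) edge identity to co-locate and merge the two halves of each edge — either routing works and both fit within the stated $\bigO(1/\delta)$-round budget.
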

\begin{proof}
    We start by sorting the bidirected version of each edge $E$ by their first entry, i.e., for each edge $\{v,w\}$ (or $(v,w)$), the machine holding the edge provides data items $(v,w)$ and $(w,v)$ as input for the sorting routine from \Cref{lem:mpc_sort}. 

    Now, for each tuple $(v,w)$ that is to the right of some tuple $(x,y)$ with $v \neq x$ or is the first tuple in the ordering we create a tuple $(v,1)$. Else we create $(v,0)$ for $(v,w)$. We run a prefix sum algorithm (cf.\ \Cref{lem:mpc_prefix}) on these tuples, which delivers $(v,s_v)$ to each machine that stores a tuple $(v,w)$, where the partial sum $s_v$ is a a unique number in $\{1,\dots,n\}$, which we interpret as the machine identifier $M(v) := s_v$.

    Next, we locally attach $M(v)$ to each edge to obtain data items of the form $(M(v),v,w)$, which we sort by $\min \{v,w\}\circ\max\{v,w\}$, where $\circ$ denotes concatenation. This way, the two data items $(M(v),v,w)$ and $(M(w),w,v)$ representing an edge are located at the same machine without loss of generality. We merge them, obtaining a data item $\{M(v),M(w)\}$ for each edge $\{v,w\} \in E$.
\end{proof}

Even in the \NCCzs model, it is possible for a single node's initial neighborhood to be an arbitrary subset of the set of nodes. As \MPC machines responsible for such high degree nodes cannot hold enough information to enable them to simulate their node's communication, the main part of the preprocessing is to compress and distribute the knowledge about each node's initial neighborhood, making multiple machines responsible for storing it. 
Specifically, we start by performing the forest decomposition described in \Cref{lem:mpc_forest_decomposition} (taking $\bigO(\nicefrac{(\log_\beta n)}{\delta})$ rounds in the \MPC$(M,S)$ model with $S\geq n^\delta$) and distribute the knowledge of each node's neighborhood as described in the following lemma. 

\begin{lemma}[Neighborhood Redistribution]\label{lem:neighborhood_redistribution}
    Presume we are given a forest decomposition $\mathcal{H}=\{F_1,\dots,F_d\}$ of a graph $G=(V,E)$ where $d\leq 4\beta a$ with $a\coloneqq\arb(G)$ and a partition $\mathcal{H}=\mathcal{F}\cup \mathcal{F}'$ into two disjoint sets $\mathcal{F}\cap \mathcal{F}'=\emptyset$ of size $\lvert\mathcal{F}\rvert=\lvert\mathcal{F}'\rvert=\nicefrac{d}{2}$, where the forests in $\mathcal{F}$ have depth $\bigO(\log_\beta n)$ and the forests in $\mathcal{F}'$ have in-degree $\leq d$ (distributed among machines, cf.\ \Cref{lem:mpc_forest_decomposition}). Then, in $\bigO(\nicefrac{(\log_\beta n)}{\delta})$ rounds of the $\MPC(n,S)$ model with $S=\bigO(\beta a\cdot(\beta a+\log_\beta n))$, we can ensure that for every node $v\in V$ the machine $M(v)$
    \begin{enumerate}[(i)]
        \item learns the machine identifier $M(w)$ for each out-neighbor $w$ of $v$ in any forest $F\in\mathcal{H}$,
        \item learns the machine identifier $M(w)$ for each in-neighbor $w$ of $v$ in any forest $F\in\mathcal{F}'$, and
        \item learns its part of a distributed data structure allowing $M(v)$ to infer the machine identifier $M(v,w)$ that knows the machine identifier $M(w)$ for each in-neighbor $w$ that $v$ has in any forest $F\in\mathcal{F}$.
    \end{enumerate}   
\end{lemma}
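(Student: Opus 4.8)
The plan is to handle the three deliverables separately, using the standard \MPC primitives (sorting, group aggregation, prefix sums from \Cref{lem:mpc_sort,lem:mpc_groupagg,lem:mpc_prefix}, each $\bigO(1/\delta)$ rounds) as black boxes, and spending essentially the entire $\bigO((\log_\beta n)/\delta)$ budget on part (iii). Throughout, I keep each forest $F_i$ in the convenient form of its set of oriented edges $(u\to w)$, meaning $w=\mathrm{parent}_{F_i}(u)$, each edge stored on one machine as a tuple carrying $M(u)$, $M(w)$, the forest index $i$, and a bit flagging whether $F_i\in\mathcal F$ or $F_i\in\mathcal F'$. This representation is available after the forest decomposition of \Cref{lem:mpc_forest_decomposition} and the machine-relabeling of \Cref{lem:mpc_machineids}.

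Parts (i) and (ii) are direct applications of the primitives. For (i), a node has at most one $F_i$-parent per forest, hence at most $d\le 4\beta a$ out-neighbours over all of $\mathcal H$; I sort the edge tuples $(u\to w)$ by $M(u)$ and route the $\le d$ entries $(i,M(w))$ to machine $M(u)$ (computing receiver offsets with one prefix sum), in $\bigO(1/\delta)$ rounds and $\bigO(\beta a)$ words of payload. For (ii), for every $F_i\in\mathcal F'$ and edge $(u\to w)$ I create a tuple $(M(w);i,M(u))$ keyed by the parent $M(w)$; since the forests in $\mathcal F'$ have in-degree $\le d$ and $|\mathcal F'|=d/2$, a fixed node is the parent in at most $d\cdot d/2=\bigO((\beta a)^2)\le S$ such tuples. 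I sort all these tuples by destination $M(w)$; the block for a fixed $M(w)$ is then contiguous, of size $\le d^2/2=\bigO((\beta a)^2)\le S$, hence spans $\bigO(\beta a)$ consecutive machines (each holding $\bigO(\beta a)$ tuples after the sort), and these forward their slices directly to $M(w)$ without exceeding its receive capacity. Again $\bigO(1/\delta)$ rounds.

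Part (iii) is the crux. For a bounded-depth forest $F\in\mathcal F$ a node may have $\Theta(n)$ children, so $M(v)$ cannot store them; instead I compute for each such $F$ an \emph{ordered-BFS relabeling} $\lambda_F\colon V\to\{1,\dots,n\}$ with the key property that the children of any node $w$ in $F$ receive a \emph{contiguous} block of new labels. Concretely: roots of $F$ (nodes with no $F$-parent) get labels $1,\dots,r_F$ in machine-ID order, and when a node $w$ is "expanded", its children (in machine-ID order) receive the next free block of $c_F(w)$ labels — this is exactly BFS with a queue ordered by $(\text{level},\lambda_F(\text{parent}),\text{ID})$. I run this level by level and \emph{simultaneously for all $F\in\mathcal F$}, at $\bigO(1/\delta)$ rounds per level: having labeled level $i$, I use a group aggregation keyed by $M(w)$ to broadcast each level-$i$ node's label $\lambda_F(w)$ to all incident forest-$F$ child-edges (this is the one place unbounded degree bites — a naive "parent messages its children" step would overflow $M(w)$'s send capacity, whereas the aggregation delivers the value to all group members regardless of group size), then sort the level-$(i{+}1)$ child-edges of all forests by $(F,\lambda_F(\text{parent}),\text{child ID})$ and assign consecutive labels with a per-forest prefix sum whose per-forest starting offsets are the running label counts (maintained by a group aggregation after each level). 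Because the forests in $\mathcal F$ have depth $\bigO(\log_\beta n)$ there are $\bigO(\log_\beta n)$ levels, so the relabeling costs $\bigO((\log_\beta n)/\delta)$ rounds, and the total number of (forest, node) records touched is $\bigO(nd)=\bigO(n\beta a)$, i.e.\ $\bigO(\beta a)$ per machine, within $S$.

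Finally I assemble the distributed lookup structure. During the relabeling I deliver to each $M(v)$, for every $F\in\mathcal F$, the pair consisting of its own new label $\lambda_F(v)$ and the interval $[\beta_F(v),\beta_F(v)+c_F(v)-1]$ of its children's new labels; this is $\bigO(\beta a)$ words over all of $\mathcal F$. Separately I fix a canonical, locally computable injection of the $\bigO(nd)$ pairs $(F,\ell)$ with $F\in\mathcal F,\ \ell\in\{1,\dots,n\}$ into the $n$ machines ($\bigO(\beta a)$ pairs per machine) and set $M(v,w):=$ the machine owning $(F,\lambda_F(w))$; by sorting the records $(F,\ell,M(w))$ with $\ell=\lambda_F(w)$ by $(F,\ell)$ and routing each to its owner, that machine learns $M(w)$. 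Then $M(v)$ can, for each $F\in\mathcal F$, enumerate its $F$-children by running through the interval $[\beta_F(v),\beta_F(v)+c_F(v)-1]$, compute $M(v,w)$ locally for each such $\ell$, and query it to obtain $M(w)$ — which is exactly property (iii). This last assembly is $\bigO(1/\delta)$ extra rounds, so the whole redistribution runs in $\bigO((\log_\beta n)/\delta)$ rounds. I expect the main obstacle to be making the ordered-BFS relabeling both correct (contiguous child-blocks across arbitrary trees with isolated vertices and multiple roots) and fast in parallel over all $\bigO(\beta a)$ bounded-depth forests despite unbounded node degrees; once the level step is implemented via the sorted-layout group-aggregation broadcast rather than naive parent-to-children messaging, the remainder is routine bookkeeping with the standard primitives.
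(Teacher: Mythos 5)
Your proposal is correct and reaches the same end state as the paper's proof, but the mechanism for the crucial part (iii) is genuinely different. Both you and the paper aim to relabel each bounded-depth forest $F\in\mathcal F$ so that the children of any node occupy a contiguous block of labels, and then place the corresponding machine identifiers in predictable locations so the parent can address each child by index. The paper accomplishes this by building \emph{path labels} $\ell(v)=\ell(w)\circ j$ during a top-down broadcast and then performing a single one-to-one lexicographic sort of all $(\ell(v),M(v))$ pairs at the end, asserting that the sorted placement makes children of a parent land on consecutive machines. You instead assign \emph{explicit integer labels} level by level: after labeling level $i$, you broadcast each parent's label to its incident child-edges via a group aggregation, sort the level-$(i{+}1)$ edges by $\bigl(F,\lambda_F(\text{parent}),\text{child ID}\bigr)$, and issue consecutive integers with a per-forest prefix sum offset by the running label count. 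Your construction guarantees the contiguity invariant directly by how labels are issued, whereas the paper's contiguity depends on the sort order of variable-length sequences; note that a plain lexicographic sort of path labels actually produces DFS preorder (in which the entire subtree of the first child precedes the second child), so the paper would need a sort keyed on $(\text{label length},\text{label})$ to make siblings adjacent as claimed---your version avoids this subtlety entirely. Your parts (i) and (ii) match the paper's in substance (direct delivery bounded by $\bigO(\beta a)$ and $\bigO((\beta a)^2)$ words respectively), and the final canonical $(F,\ell)\!\to\!$ machine injection plays the same role as the paper's ``machine holding the $j$-th sorted item'' pointer. Both routes run in $\bigO\bigl((\log_\beta n)/\delta\bigr)$ rounds; yours spends an extra sort and prefix sum per level, which is absorbed in the constants.
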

\begin{proof}
    By construction, each node $v$ has one out-neighbor for each of the $d=\bigO(\beta a)$ forests. Thus, $M(v)$ can store the machine identifiers corresponding to the $\bigO((\beta a))$ out-neighbors directly. Similarly, $v$ has $d=\bigO(\beta a)$ in-neighbors in each of the $\nicefrac{d}{2}=\bigO(\beta a)$ forests of $\mathcal F'$, which $M(v)$ also stores directly. Any machine holding an edge of one of these two types directly sends a copy to the corresponding machines. Both procedures take one \MPC round. Storing the identifiers takes $\bigO((\beta a)^2)$ machine words of capacity in each machine.

    The number of in-neighbors in $\mathcal{F}$ that $v$ has can be much larger (up to $|V|-1$). Therefore, we will distribute the machine identifiers corresponding to a node's neighborhood among machines and store compressed information on where to look them up. More specifically, we assign a label $\ell$ to each machine and each forest, such that sorting the machines by this label results in a BFS-order traversal of that forest, where all children of a node are on machines with adjacent identifiers when performing a one-to-one sorting. Then, the corresponding machine only has to learn the identifier of its first child and the number of children it has to be able to access all of its children.
    
    Let $F\in\mathcal{F}$ be a forest with depth $\bigO(\log_\beta n)$. We start by having each machine $M(v)$ of node $v$ learn its in-degree by performing a group aggregation (cf.\ \Cref{lem:mpc_groupagg}) with $f=\SUM$ where $M(v)$ provides the input $(M(v),0)$ and a machine holding an edge $(M(u),M(w))$ provides the input $(M(w),1)$. We also have each machine $M(v)$ of node $v$ learn $M(w)$, where $w$ is the parent of $v$ by sending the edge $(M(v),M(w))$ to $M(v)$ (if it exists; recall that each machine has only one out-neighbor in each forest)).
    
    Next, we have each machine $M(v)$ of a root node $v$ of $F_i$ (that did not learn any parent identifier this way) set $\ell(v)\coloneqq M(v)$ and broadcast $\ell(v)$ to its children. To this end, we perform another group aggregation with $f=\max$, where every machine $M(v)$ of a node $v$ provides the input $(M(v),\ell(v)$ if it just set its label and $(M(v),\infty)$ otherwise and a machine holding an edge $(M(u),M(w))$ provides the input $(M(w),0)$. $\infty$ serves to mark the machines that did not learn their predecessor and is greater than any machine identifier. During the group aggregation, we also track the path the messages take through the broadcast tree. This way, each child $u$ of $v$ also learns a unique number $j\in\{1,\dots,\indeg(v)\}$ corresponding to its position in the broadcast tree.
    
    We repeat the following until each machine has set its identifier\footnote{We can synchronize the completion of this phase by aggregating the number of nodes that already set its identifier. The resulting overhead is constant.}: Every machine $M(v)$ of a node $v$ that received its parent $w$'s label $\ell(w)$ last round sets $\ell(v)\coloneqq \ell(w)\circ j$ and broadcasts $\ell(v)$ to its children with the same procedure as described above.
    
    Finally, for each node $v$, we have the machine $M(v)$ create a data item $(\ell(v),M(v))$. We use a one-to-one sorting again to sort these data items by the first entry lexicographically entry by entry, where the first entry of $\ell(v)$ corresponds $M(v)$ and each other entry corresponds to exactly one of the child identifiers $j$ concatenated to it. As a result of this sorting, all nodes of the same tree are are in one contiguous block (they share the first entry in their $\ell$-labels), and all children of same parent are in a subblock of that block (they share all but the last entries in their $\ell$-labels), i.e., this sorting yields a BFS order traversal of the forest $F$ where children of the same parent are adjacent.
    
    For each node $v$, where the last entry of $v$ is a $1$, i.e., where $v$ is a first child of its parent, $M(v)$ sends its own machine identifier to $v$'s parent's machine. This way, every parent learns the identifier of a machine that knows the identifier of its first child's machine. As it already learned its own degree earlier, it can compute the identifiers of the machines that know the identifiers of its other children, too. Specifically, for the $j$th child $w_j$ of $v$ in some forest $F\in\mathcal F$ ($2\leq j\leq \indeg(v)$), $M(w_j)$ is stored at $M(v,w_j)=M(v,w_1)+j-1$.
    
    For a single forest the loop iterates $\bigO(\log_\beta n)$ times and each iteration contains a group aggregation that takes $\bigO(\nicefrac{1}{\delta})$ time. Thus, the total computation takes $\bigO(\nicefrac{(\log_\beta n)}{\delta})$ \MPC rounds. The computation requires $\bigO(\log_\beta n)$ machine words of capacity for a single forest. Performing it for all forests in parallel requires $\bigO(\beta a\cdot\log_\beta n)$ machine words of capacity. Afterwards, each machine stores at most two machine words of data per forest to infer the positions where data on its own children is located. Further, each machine stores $\bigO(\beta a)$ machine words to store other machines' neighbors.
\end{proof}

\subsection{Simulation}\label{ssec:nccz_in_mpc:simulation}

Our goal is to perform all computations that an \NCCzs node $v$ would perform during the execution of an algorithm on an unlabeled graph problem on the machine $M(v)$.
In \Cref{ssec:nccz_in_mpc:preprocessing} we described how to distribute the knowledge of the neighborhood of $v$. 
After this preprocessing, each machine $M(v)$ responsible for a node $v$ directly stores or has access to the identifier of machine $M(w)$ for every neighbor $w$ of $v$ via constant rounds of direct communication. To properly interface with the \NCCzs model, we require each machine to be able to access its neighbors via port numbers.

\begin{lemma}
    After the preprocessing of \Cref{lem:neighborhood_redistribution}, each machine $M(v)$ can assign port numbers in $\{1,\dots,\deg(v)\}$ to its neighbors, such that for each neighbor $w$, $v$ can contact $M(w)$ via $w$'s port number. 
\end{lemma}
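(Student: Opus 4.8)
The plan is to assign port numbers by partitioning $v$'s neighbors according to which forest of the decomposition $\mathcal H=\{F_1,\dots,F_d\}$ the corresponding edge belongs to, and then, within each forest, by the role of the neighbor (out-neighbor vs.\ in-neighbor). Since $d\le 4\beta a$ and each node has exactly one out-neighbor per forest, the out-neighbors contribute at most $d$ ports, which $M(v)$ can index directly from the $M(w)$ values it stores by Lemma~\ref{lem:neighborhood_redistribution}(i). It remains to allocate a disjoint range of port numbers to the in-neighbors of $v$ in each forest. For forests in $\mathcal F'$, by part (ii) of the redistribution lemma $M(v)$ directly stores the machine identifiers of all (at most $d$) in-neighbors, so these can simply be enumerated and assigned consecutive port numbers, again at most $d$ per forest. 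For a forest $F\in\mathcal F$, $M(v)$ does \emph{not} store the in-neighbor identifiers, but by part (iii) it knows its own in-degree $\indeg_F(v)$ and the location $M(v,w_1)$ of the first child's machine identifier, with the $j$-th child's identifier stored at $M(v,w_1)+j-1$; so $M(v)$ can allocate a contiguous block of $\indeg_F(v)$ port numbers for forest $F$, where port (offset) $j$ corresponds to the in-neighbor whose machine identifier sits at position $M(v,w_1)+j-1$.

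Concretely, I would fix a canonical order on the forests $F_1,\dots,F_d$ and lay out the port range $\{1,\dots,\deg(v)\}$ as the concatenation, over $i=1,\dots,d$, of (a) one port for the unique out-neighbor in $F_i$ and (b) $\indeg_{F_i}(v)$ ports for the in-neighbors in $F_i$. Since every edge incident to $v$ is an out-edge or in-edge of $v$ in exactly one forest of the decomposition, these blocks are disjoint and exhaust all of $v$'s incident edges, so the total is exactly $\deg(v)$ and the map from ports to neighbors is a bijection. The block boundaries are computable by $M(v)$ from the $d\le 4\beta a$ in-degree values $\indeg_{F_i}(v)$, all of which it has after preprocessing (the $\mathcal F'$ in-degrees are bounded by $d$ and stored explicitly; the $\mathcal F$ in-degrees are part of the data structure from part (iii)). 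This bookkeeping fits in $\bigO(\beta a)$ machine words, within the memory budget.

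For the contacting mechanism: given a port number, $M(v)$ first determines the forest $F_i$ and whether it is the out-port or an in-port of offset $j$. If it is the out-port or an in-port of a forest in $\mathcal F'$, $M(v)$ already holds $M(w)$ and sends the message directly. If it is in-port $j$ of a forest $F\in\mathcal F$, $M(v)$ sends a one-hop request to $M(v,w_1)+j-1$, which holds $M(w_j)$ and forwards the message to $M(w_j)$ (or returns $M(w_j)$ to $M(v)$); this costs only $\bigO(1)$ extra rounds per communication step and respects the $\bigO(C\beta a)$-per-machine bandwidth in the $\MPC$ model since each machine handles $\bigO(C)$ words on behalf of $\bigO(\beta a)$ lookup entries. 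Conversely, when $M(w)$ wants to learn and store $v$'s identifier, $v$ attaches its own machine identifier to its message, exactly as the \NCCzs model permits (a node may ask a neighbor for its ID via a single message).

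The main obstacle is not the arithmetic of the port layout but verifying that \emph{every} incident edge of $v$ is covered exactly once and that the indirect lookups for $\mathcal F$-forests do not blow up the round count or the per-machine bandwidth. The first point follows because $\mathcal H$ is a forest decomposition — a partition of $E$ — so each edge $\{v,w\}$ lies in a unique $F_i$, and in $F_i$ it is oriented either out of $v$ (making $w$ the out-neighbor, one per forest) or into $v$ (making $w$ one of the in-neighbors), and these two cases are mutually exclusive; hence the blocks partition $v$'s edge set. The second point follows from Lemma~\ref{lem:neighborhood_redistribution}'s guarantee that the child-identifier-holding machines are laid out contiguously (BFS order), so offset arithmetic $M(v,w_1)+j-1$ is valid, and each such machine stores only $\bigO(\beta a)$ identifiers, so the per-round load of forwarding is within $S=\bigO(\beta a(\beta a+\log_\beta n)+C\beta a)$.
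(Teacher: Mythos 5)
Your proposal is correct and takes essentially the same approach as the paper: assign consecutive ports to the neighbors whose machine identifiers $M(v)$ stores directly (parts (i), (ii) of Lemma~\ref{lem:neighborhood_redistribution}), and, for each forest $F\in\mathcal F$, use the contiguous BFS layout from part (iii) so that the $j$-th in-child of $v$ in $F$ gets port $p_F + j - 1$ for a suitably chosen block start $p_F$. You supply somewhat more explicit bookkeeping (a per-forest block layout and a discussion of the one-hop lookup and its bandwidth cost, the latter of which the paper defers to Lemma~\ref{lem:message_delivery}), but the underlying idea is identical.
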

\begin{proof}
    $M(v)$ assigns a continuous range of port numbers starting at $1$ to the neighbors it stores directly (cf.\ (i), (ii) from \Cref{lem:neighborhood_redistribution}). 
    To handle the neighbors stored using the data structure (cf.\ (iii) from \Cref{lem:neighborhood_redistribution}), $M(v)$ assigns a continuous range of port numbers to them, i.e., if the first child corresponding to an in-neighbor in forest $F\in\mathcal{F}$ has port number $p$, the $i$-th child has port number $p+i-1$. It remains to pick a suitable port number for each first child of each $F\in\mathcal{F}$ such that the ranges of port numbers do not intersect.
\end{proof}

It remains to ensure that the nodes can evaluate $f_v$, i.e., locally map node identifiers to port numbers (or $\bot$ if the identifier does not correspond to a neighbor). Note that in our simulation, we only have to evaluate $f_v$ for at most $C$ identifiers and we may conduct $O(1)$ communication rounds for the evaluation. This is due to the fact that in order to obtain a set of $C$ (fresh) identifiers, 
an \NCCzs node has to conduct at least one round of communication (see Definition \ref{def:ncczs}),

\begin{lemma}\label{lem:neighbor_identification}
    After the preprocessing of \Cref{lem:neighborhood_redistribution}, an \MPC machine $M(v)$ simulating a node $v$ can evaluate $f_v(I)$ (cf.\ Definition \ref{def:ncczs}) for a set $I$ of at most $C$ machine identifiers it stores locally. %
    The procedure requires $O(\nicefrac{1}{\delta})$ rounds in the \MPCx{$n,S$} model with $S\geq C\beta a, S \geq n^\delta$.
\end{lemma}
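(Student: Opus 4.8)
The plan is to show that $M(v)$ can evaluate $f_v(I)$ — i.e., determine for each $w \in I$ whether $w$ is a neighbor of $v$ and if so return its port number — in $\bigO(1/\delta)$ rounds, by combining a direct lookup against the neighbors that $M(v)$ stores explicitly (cases (i) and (ii) of Lemma~\ref{lem:neighborhood_redistribution}) with a sorting-based join against the neighbors reachable through the BFS data structure (case (iii)). First I would handle the easy part: for the out-neighbors in all forests of $\mathcal H$ and the in-neighbors in forests of $\mathcal F'$, the machine $M(v)$ stores the machine identifiers $M(w)$ directly together with their assigned port numbers (there are only $\bigO(\beta a)$ such neighbors, which fit in memory since $S \geq C\beta a$). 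So $M(v)$ can check membership in this explicit list and return the port number with zero communication; it marks each $w \in I$ that is resolved this way.

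The hard part is resolving the in-neighbors in forests $F \in \mathcal F$: here $M(v)$ does not store the identifiers $M(w)$ but only, for each such forest, the machine identifier $M(v,w_1)$ holding the first child, the in-degree $\indeg_F(v)$, and the starting port number of the corresponding port range. The key observation is that the machines $M(v,w_1), M(v,w_1)+1, \dots, M(v,w_1)+\indeg_F(v)-1$ form a contiguous block and each such machine $M(v,w_j)$ stores the identifier $M(w_j)$ of the $j$-th child. To evaluate $f_v(I)$ we need to, for each unresolved $w \in I$, discover whether $w = M(w_j)$ for some child $j$ in some forest $F \in \mathcal F$, and if so recover $j$ (from which the port number is $p_F + j - 1$). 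My plan is to do this via the sorting and aggregation primitives from \Cref{sec:mpc_routines}: $M(v)$ emits, for each of its at most $\bigO(\beta a)$ relevant contiguous blocks, a ``query record'' $(v, F, M(v,w_1), \indeg_F(v), p_F)$; the machines $M(v,w_j)$ holding child identifiers are addressed by their known machine IDs, and a constant number of routing rounds (scatter the queries over the contiguous ranges, each target machine returns $(v, F, j, M(w_j))$ to $M(v)$) suffices because the total number of query records is $\bigO(n \beta a)$, within the total memory budget, and because per-machine load is bounded: each machine $M(v,w_j)$ holds only $\bigO(\beta a)$ such child-of relations. After this exchange $M(v)$ holds, for each forest $F \in \mathcal F$, the list of pairs $(j, M(w_j))$; it then does a local join against $I$ to answer $f_v$. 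Alternatively — and perhaps cleaner — one can phrase the whole thing as a single invocation of the sorting routine on tuples keyed by machine identifier, matching query tuples to stored-neighbor tuples, which is exactly the input-distribution setting the sublinear \MPC sort handles in $\bigO(1/\delta)$ rounds.

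The main obstacle to watch is the load bound on the ``holder'' machines: a single machine $M(u)$ might be the first-child-holder (or a generic child-holder) for many different parents $v$ across many forests, so I must argue that the number of $(v,F,j)$-triples for which a fixed machine stores $M(w_j)$ is $\bigO(\beta a)$ — this follows because a fixed node $w_j$ has exactly one parent in each of the $\bigO(\beta a)$ forests, so it answers at most $\bigO(\beta a)$ such queries — and then that the queries issued in one round ($\bigO(\beta a)$ per node $v$, hence $\bigO(n \beta a)$ total) can be delivered respecting the $S$-word memory bound, which is where $S \geq C\beta a$ and $S \geq n^\delta$ come in (the latter to run the sorting/routing primitives at all). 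Everything else — assigning the returned port numbers, merging with the directly-stored answers, outputting $\bot$ for the $w \in I$ that matched nothing — is routine local computation bounded by $|I| \leq C \leq S$. I expect the write-up to consist of: (1) the direct-lookup step; (2) the query-and-reply step for $\mathcal F$-forests using sorting/routing, with the load argument; (3) the local reconstruction of $f_v(I)$; and a one-line round count $\bigO(1/\delta)$ for the primitives plus $\bigO(1)$ routing rounds.
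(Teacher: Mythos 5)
Your main plan has a genuine load gap on the reply side. In the ``query-and-reply step'' you have $M(v)$ scatter a query record to every machine $M(v,w_j)$ in the contiguous ranges and have \emph{each} target return $(v,F,j,M(w_j))$ to $M(v)$, after which $M(v)$ ``does a local join against $I$''. But $v$ can have in-degree up to $n-1$ in the forests of $\mathcal F$ (precisely the case the BFS data structure exists to handle), so $M(v)$ would receive $\Theta(\deg(v))$ reply words, far exceeding $S\in\bigO(\beta a(\beta a+\log_\beta n)+C\beta a)$. You carefully bound the \emph{inbound} load at the helper machines ($\bigO(\beta a)$ queries each, since a node has one parent per forest), but never bound the \emph{return} traffic at $M(v)$ — and it is unbounded. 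The fix is to filter \emph{before} replying: the helper machines must learn $I$ first so they can send back only the $\le |I|\le C$ matches rather than the whole child list. This is exactly what the paper does: it runs a group aggregation with $f=\bigcup$ to disseminate $I$ from $M(v)$ to all machines holding parts of $v$'s neighborhood, then each helper locally intersects its stored identifiers with $I$ and forwards only the hits (plus its own ID so $M(v)$ can recover the port index). Your parenthetical alternative — a single global sort of query tuples $(w,v)$ against stored-neighbor tuples $(M(w_j),v,F,j)$ keyed by machine identifier — would also avoid the problem (each query matches at most one stored tuple since $\{v,w\}$ lies in exactly one forest, so $M(v)$ gets back at most $C$ replies), but you leave it as a one-liner and it is not the route the paper takes. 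The direct-lookup step for cases (i)/(ii) and the final local reconstruction are fine.
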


\begin{proof}
    We start by performing a group aggregation with $f=\bigcup$ the union operation on sets, where $M(v)$ provides the input $(M(v),I)$ and every node holding an identifier from $M(v)$'s neighborhood (cf.\ \Cref{lem:neighborhood_redistribution} provides the input $(M(v),\emptyset)$. Note that the total size of the aggregate computed this way is at most $C$ machine words. Afterwards, each participating machine can locally check, if it holds any of the identifiers in $I$. Any node that does sends them to $M(v)$ directly together with its own identifier, allowing $M(v)$ to infer the corresponding port. As each machine only holds identifiers for at most $\bigO(\beta a)$ machines responsible for simulating nodes, this procedure does not violate the \MPC's messaging limit. In total, we require $O(\nicefrac{1}{\delta})$ rounds of communication.
\end{proof}

Finally, we prove that we can deliver all messages send during the execution of one \NCCzs round in $O(1)$ simulation rounds in the \MPC model.

\begin{lemma}[Message Delivery]\label{lem:message_delivery}
    For any \NCCzsx{C} algorithm $\mathcal{A}$ and for every node $v$, let $w_1,\dots,w_\ell$ be the nodes $v$ aims to send messages to in one \NCCzs round during the execution of $\mathcal{A}$. Assuming that we have conducted the prepossessing step $M(v)$ can send these messages to $M(w_1),\dots,M(w_\ell)$ (see Lemma \ref{lem:neighborhood_redistribution}) in three $\MPC(n,C\cdot (\beta a))$ rounds.
\end{lemma}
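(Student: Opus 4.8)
The plan is to classify the recipients $w_1,\dots,w_\ell$ of $v$ by how $M(v)$ can address each of them after the preprocessing of Lemma~\ref{lem:neighborhood_redistribution}, and then to route each class to its target machine using the stored neighborhood data, handling the one genuinely difficult class in three rounds and the rest in one.

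First I would record the constraints coming from the source model: since $\mathcal A$ is a legal $\NCCzsx{C}$ algorithm, in the round under consideration $v$ sends at most $\ell\le C$ messages and, crucially, \emph{every} node receives at most $C$ messages that round. A target $w_i$ is addressed by $v$ either by a node identifier learned earlier — which in the simulation is literally $M(w_i)$, so $M(v)$ sends directly — or by a port number. In the latter case, if the port points to an out-neighbor of $v$ in some forest or to an in-neighbor of $v$ in a forest of $\mathcal F'$, then $M(v)$ already stores $M(w_i)$ by parts~(i)--(ii) of Lemma~\ref{lem:neighborhood_redistribution} and again sends directly. The only remaining case is a port pointing to the $j$-th in-neighbor $w_i$ of $v$ in a forest $F\in\mathcal F$: here $M(v)$ does not store $M(w_i)$, but by part~(iii) it knows the machine $M(v,w_i)=M(v,w_1)+j-1$ that does, with $j$ recoverable from the port-number assignment.

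For this last class I would run a three-round protocol. In round~1, $M(v)$ sends to each relevant resolver $M(v,w_i)$ a request tagged with $M(v)$ (and the index $i$); in round~2, each resolver looks up $M(w_i)$ among the $\bigO(\beta a)$ records $(\ell(\cdot),M(\cdot))$ it holds and returns it to $M(v)$; in round~3, $M(v)$ sends the payloads directly to the $M(w_i)$'s. Messages of the two direct classes above are sent in round~1 (equivalently round~3). Correctness is immediate: after round~2 the machine $M(v)$ knows the MPC-identifier of every target, and round~3 delivers all payloads.

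The hard part — and the only real obstacle — is verifying that no machine ever exceeds its send/receive budget of $S=\Theta(C\beta a)$ in any of the three rounds; everything else is bookkeeping. On the send side, $M(v)$ emits at most $\ell\le C$ messages per round, each $\bigO(1)$ words. In round~2, $M(v)$ receives one reply per request, i.e.\ at most $C$ words. A resolver $M'$ holds one sorted record per forest of $\mathcal F$, hence $\bigO(\beta a)$ of them, and within a fixed forest every node has a unique parent, so $M'$ is the designated resolver for at most $\bigO(\beta a)$ distinct pairs (parent, forest); thus it receives $\bigO(\beta a)$ requests in round~1 and sends $\bigO(\beta a)$ replies in round~2, within budget since $C\ge1$. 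Finally, in round~3 a destination $M(w_i)$ receives exactly the messages that $w_i$ receives in the simulated $\NCCzs$ round — at most $C$ of them, each $\bigO(1)$ words, by the bandwidth guarantee of the model. Hence all three rounds are legal in $\MPC(n,C\cdot(\beta a))$, and the accounting at the resolver and destination machines goes through precisely because the preprocessing guarantees only $\bigO(\beta a)$ neighborhood records per machine and because the simulated $\NCCzs$ execution is bandwidth-legal by assumption.
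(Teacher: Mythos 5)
Your proof is correct and follows essentially the same route as the paper: classify targets by whether $M(v)$ already stores the recipient's machine identifier versus must look it up via the BFS data structure of part~(iii), route the latter through a request/reply/deliver sequence, and bound congestion by the observations that each resolver machine stores one node-record per forest (hence receives $\bigO(\beta a)$ requests) and that the simulated $\NCCzsx{C}$ round is itself bandwidth-legal (hence destinations receive at most $C$ payloads). The paper states the same argument a bit more tersely; your explicit per-round accounting, including the overlap when a single machine plays sender, resolver, and destination simultaneously, is a welcome sharpening but not a different method.
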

\begin{proof}
    $M(v)$ can directly send messages to any machines it stores the identifier of itself, independent of whether $v$ knows those machines by identifier or port, and the messaging restriction of the \NCCzs model ensures that these messages create no congestion at a single machine. Any other machine $w_i$, $v$ could send a message to, must be of its initial neighborhood, i.e., it is only known to $v$ by port number. To send a message to the corresponding machine $M(w_i)$, $M(v)$ evaluates the arbitrary port numbering it picked, identifying the correct neighbor $M(v,w_i)$ storing $M(w_i)$. $M(v)$ can then request $M(w_i)$ from $M(v,w_i)$ and send the message. As each machine stores at most one machine identifier per forest, this can cause a congestion of $O(\beta a)$.%
    
    Since $v$ sends up to $\bigO(C)$ messages, $M(v)$ needs to send $\bigO(C)$ requests to auxiliary machines and each auxiliary machine receives at most $\bigO(\beta a)$ requests (one for each forest). To handle the congestion $S \in  \bigO(C\cdot \beta a)$ is sufficient. 
\end{proof}

We combine our results to obtain the following simulation result.%
\begin{theorem}
\label{thm:sim_ncczs_in_mpc}
    Let $\mathcal{U}_a$ be a family of unlabeled graph problems with arboricity $a$. For the $\NCCzsx{C}$ model, we have $\NCCzsx{C}\leq_{\mathcal{U}_a}\MPC(n,S)$, 
    where $S \in \bigO(\beta a\cdot(\beta a+\log_\beta n)  +  C\beta a)$
    for a parameter $\beta$. We require a preprocessing with additive overhead $\bigO(\nicefrac{1}{\delta}\cdot \log_\beta n)$ rounds and the slowdown during simulation is $\bigO(1)$.
\end{theorem}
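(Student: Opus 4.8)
The plan is to assemble the preprocessing of \Cref{ssec:nccz_in_mpc:preprocessing} with the per-round simulation primitives of \Cref{ssec:nccz_in_mpc:simulation} and then bookkeep rounds and memory. Throughout we assume $S \ge n^{\delta}$ for some constant $\delta>0$, which is what the basic \MPC routines (sorting, group aggregation, prefix sums) underlying all the cited lemmas need; the $\bigO(1/\delta)$ factors this introduces are constants in $n$.

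\emph{Preprocessing.} First apply \Cref{lem:mpc_machineids} to relabel $G$ so that every node $v$ is represented by a unique machine $M(v)$ and every edge is stored as a pair of machine identifiers, at cost $\bigO(1/\delta)$ rounds. Next run the forest decomposition of \Cref{lem:mpc_forest_decomposition} to obtain $\mathcal H=\{F_1,\dots,F_d\}$ with $d\le 4\beta a$, split into a bounded-depth half $\mathcal F$ and a bounded-in-degree half $\mathcal F'$, in $\bigO((\log_\beta n)/\delta)$ rounds. Finally invoke \Cref{lem:neighborhood_redistribution} so that each $M(v)$ directly stores the machine identifiers of its out-neighbors in every forest and of its in-neighbors in the forests of $\mathcal F'$, and holds its part of the BFS-order lookup structure resolving its (possibly huge) set of in-neighbors in the forests of $\mathcal F$; this also takes $\bigO((\log_\beta n)/\delta)$ rounds and $\bigO(\beta a\cdot(\beta a+\log_\beta n))$ words per machine. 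Afterwards the port-numbering lemma lets each $M(v)$ fix an injection from $\{1,\dots,\deg(v)\}$ onto the machines representing $v$'s initial neighbors. The total preprocessing cost is the claimed $\bigO((\log_\beta n)/\delta)$ additive overhead.

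\emph{Simulating one \NCCzs round.} We maintain the invariant that at the start of each simulated round $M(v)$ holds the entire $\bigO(C)$-word local state of node $v$ (its ID, its $C$ words of memory, and the messages received last round). Since $S$ dominates both $\bigO(\beta a(\beta a+\log_\beta n))$ and $\bigO(C\beta a)$, this state together with the preprocessing structures fits, and $M(v)$ carries out the polynomial-time local step of $v$ itself. Node $v$ may feed up to $C$ learned identifiers into $f_v$: we realize this with one call to \Cref{lem:neighbor_identification}, in $\bigO(1/\delta)$ rounds and within $S\ge C\beta a$, returning for each identifier a port number or $\bot$. Then $v$ sends up to $\bigO(C)$ messages, addressed either to machines whose identifier $M(v)$ already knows (delivered directly) or to initial neighbors known only by port (delivered through the auxiliary machine holding that identifier); \Cref{lem:message_delivery} performs this in three \MPC rounds, the \NCCzs receive bound of $C$ words per node keeping any target $M(w)$ below $\bigO(C)$ words while the indirection inflates auxiliary-machine load only to $\bigO(\beta a)$, still inside $S\in\bigO(C\beta a)$. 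The delivered words become the incoming messages of the next round, restoring the invariant. Each simulated round thus costs $\bigO(1/\delta)=\bigO(1)$ \MPC rounds, so the overall slowdown is $\bigO(1)$.

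\emph{Memory and the main obstacle.} Summing requirements, each machine needs $\bigO(\beta a\cdot(\beta a+\log_\beta n))$ words for the redistribution structures, $\bigO(C\beta a)$ for identification and delivery, and $\bigO(C)$ for the simulated state, giving $S\in\bigO(\beta a\cdot(\beta a+\log_\beta n)+C\beta a)$ as claimed. The delicate point is not any single step but the interaction of adaptivity with the port abstraction: a node may legitimately contact both freshly learned identifiers and initial neighbors it only knows via $f_v$, and the argument must certify that the induced \MPC traffic — direct sends, lookup requests to auxiliary machines, and the padding used inside \Cref{lem:message_delivery} — never exceeds $S$. The $\NCCzs$ per-node receive bound of $C$ words and the $\bigO(\beta a)$ fan-out of the lookup structure are exactly what make this go through, and they are the reason the $C\beta a$ term appears in $S$.
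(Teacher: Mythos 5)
Your proof is correct and follows essentially the same route as the paper's: relabel via \Cref{lem:mpc_machineids}, decompose into forests (\Cref{lem:mpc_forest_decomposition}), redistribute neighborhoods (\Cref{lem:neighborhood_redistribution}), assign ports, and then simulate each round by evaluating $f_v$ with \Cref{lem:neighbor_identification} and delivering messages with \Cref{lem:message_delivery}. Your write-up is somewhat more explicit about the round-by-round state invariant and the memory accounting, but the underlying argument and the lemmas invoked are the same.
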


\begin{proof}
    We start by having the MPC machines perform our preprocessing where we first replace node identifiers with machine identifiers (cf.\ \Cref{lem:mpc_machineids}) and distribute each node $v$'s neighborhood among multiple machines, making each of them available to the machine $M(v)$ corresponding to $v$ (cf.\ \Cref{lem:neighborhood_redistribution}). This way, every machine $M(v)$ simulating $v$ can locally assign an arbitrary port numbering in the range $\{1, \dots, \deg(v) \}$ for the neighborhood of $v$, which means it has the same local view as the nodes they simulate. As it further has access to $\geq C$ memory, it can perform all local computations that $v$ would perform in the \NCCzs, including checking for $\leq C$ identifiers, which of them are neighbors (cf \Cref{lem:neighbor_identification}), and importantly it can make the same decisions on which ports and machines to send messages to that $v$ would make. Finally, according to \Cref{lem:message_delivery}, $M(v)$ can deliver any messages to neighbors according to the assigned numbers.
\end{proof}

Setting $\beta=a=C=n^\delta$ for $\delta\in(0,\nicefrac{1}{4})$, we obtain:%

\begin{corollary}
\label{cor:sim_ncczs_in_mpc}
    Let  $\delta\in(0,\nicefrac{1}{4})$ and $\mathcal{U}$ be a family of unlabeled graph problems with arboricity $n^\delta$. For the \NCCzsx{\ensuremath{n^\delta}} model, we have    $\NCCzsx{n^\delta}\leq_{\mathcal{U}}\MPC(n,n^{4\delta})$.
    We require a preprocessing of additive $\bigO(\nicefrac{1}{\delta^2})$ rounds and the slowdown during runtime is $\bigO(1)$.
\end{corollary}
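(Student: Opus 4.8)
The plan is to derive the statement as a direct specialization of \Cref{thm:sim_ncczs_in_mpc} under the parameter choice $\beta = a = C = n^\delta$, followed by a short simplification of the resulting bounds. First I would substitute into the per-machine memory requirement $S \in \bigO\big(\beta a\cdot(\beta a+\log_\beta n) + C\beta a\big)$ of \Cref{thm:sim_ncczs_in_mpc}. We have $\beta a = n^{2\delta}$ and $\log_\beta n = \log_{n^\delta} n = 1/\delta = \bigO(1)$, so the parenthesized term is dominated by $n^{2\delta}$, giving $\beta a\cdot(\beta a + \log_\beta n) = \bigO(n^{4\delta})$; likewise $C\beta a = n^{3\delta} = \bigO(n^{4\delta})$. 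Hence $S = \bigO(n^{4\delta})$, which is exactly the $\MPC(n,n^{4\delta})$ model claimed. Matching $a = n^\delta$ with the arboricity parameter identifies $\mathcal{U}_{n^\delta}$ with $\mathcal{U}$, and the constant-slowdown guarantee of \Cref{thm:sim_ncczs_in_mpc} transfers verbatim.

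Next I would simplify the preprocessing overhead. \Cref{thm:sim_ncczs_in_mpc} charges an additive $\bigO(\tfrac{1}{\delta}\log_\beta n)$ rounds; since $\log_\beta n = 1/\delta$ this collapses to $\bigO(1/\delta^2)$ rounds, as stated. There is nothing further to compute: the body of the corollary is just these two substitutions together with the unchanged $\bigO(1)$ runtime slowdown.

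The only genuine care point -- and the closest thing to an obstacle -- is verifying that the parameter choice keeps the simulation valid. The governing constraint is that $S = n^{4\delta}$ must remain strongly sublinear, which forces $4\delta < 1$; this is precisely why the hypothesis restricts $\delta$ to $(0,\tfrac{1}{4})$. I would then confirm that all side conditions of the lemmas feeding into \Cref{thm:sim_ncczs_in_mpc} remain satisfied under this choice: $S = n^{4\delta} \ge n^\delta$, so the basic \MPC primitives (\Cref{lem:mpc_sort,lem:mpc_prefix,lem:mpc_groupagg}), the machine relabeling of \Cref{lem:mpc_machineids}, and the forest decomposition of \Cref{lem:mpc_forest_decomposition} all apply; the neighborhood-redistribution memory bound $\bigO(\beta a(\beta a + \log_\beta n)) = \bigO(n^{4\delta})$ is met; and the neighbor-identification and message-delivery steps (\Cref{lem:neighbor_identification,lem:message_delivery}), which require $S \ge C\beta a = n^{3\delta}$, are covered by $S = n^{4\delta}$. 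With these checks in place the corollary follows immediately.
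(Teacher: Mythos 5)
Your proof is correct and matches the paper's approach exactly: the paper obtains the corollary by instantiating \Cref{thm:sim_ncczs_in_mpc} with $\beta=a=C=n^\delta$, and your substitution and simplification of the memory bound $S=\bigO(n^{4\delta})$ and preprocessing overhead $\bigO(1/\delta^2)$ reproduce this. The extra sanity-checking of side conditions (that $S\ge n^\delta$, that the feeding lemmas remain applicable, and that $\delta<1/4$ keeps $S$ strongly sublinear) is thorough but not a departure from the paper's one-line derivation.
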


Finally, we remark that we can also use less than $n$ machines in the simulation by instead increasing $S$ with the following simulation.

\begin{lemma}
    For $x\in\{1,\dots,M\}$, and $S = n^\delta$ for some constant $\delta > 0$ we have $\MPC(M,S)\leq\MPC(\lceil\nicefrac{M}{x}\rceil,x\cdot S)$
\end{lemma}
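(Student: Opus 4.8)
The plan is to give a direct resource-grouping simulation: a single machine in the target model $\MPC(\lceil M/x\rceil, x\cdot S)$ takes over the role of $x$ consecutive machines of the source model $\MPC(M,S)$, using its $x\cdot S$ words of memory to store their combined local states and its per-round bandwidth of $x\cdot S$ words to carry their combined traffic. First I would fix the grouping: machine $j \in \{0,\dots,\lceil M/x\rceil-1\}$ of the target model is responsible for simulating source machines $\{jx, jx+1, \dots, \min((j{+}1)x, M)-1\}$; each such group has at most $x$ members, so the total memory needed is at most $x\cdot S$ words, which is exactly the target machine's budget, and the initial input representation can be distributed accordingly (each target machine holds the inputs of the source machines it simulates, still at most $x\cdot S$ words).

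Next I would describe one simulated round. In a round of the source algorithm, each source machine $i$ performs its local computation (at most $n^{\bigO(S)}$ steps, well within what the target machine with memory $x\cdot S$ can do for all $x$ of its charges sequentially) and then sends at most $S$ words. The target machine simulating the group locally runs the computations of all its source machines, producing for each a list of outgoing messages, at most $x\cdot S$ words in total. Because the source algorithm is a legal $\MPC(M,S)$ algorithm, no source machine receives more than $S$ words in any round; hence the target machine simulating a group of $x$ source machines receives at most $x\cdot S$ words in total, which respects the target bandwidth bound of $x\cdot S$. Routing is straightforward: a message in the source model from machine $i$ to machine $i'$ becomes a message from the target machine owning $i$'s group to the target machine owning $i'$'s group, tagged with the source destination ID $i'$ so the receiver can deliver it to the correct simulated machine's state; each target round thus implements one source round with $\bigO(1)$ overhead.

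The only point requiring a little care — and the main (minor) obstacle — is the bandwidth accounting on the receiving side: one must observe that the $S$-per-machine receive bound in the source model, summed over the $\le x$ members of a group, gives exactly the $x\cdot S$ receive bound of the target machine, so no bandwidth violation is introduced (recall the excerpt adopts the strict \MPC policy that any algorithm causing a violation is illegal, so we only need the bound, not a resolution mechanism). A similar summation handles the send side. Local computation is not an obstacle since the target machine simply runs the $x$ local computations one after another within its larger memory. Putting this together, every round of the source algorithm is faithfully reproduced by one round of the target algorithm, the memory and bandwidth constraints of $\MPC(\lceil M/x\rceil, x\cdot S)$ are respected throughout, and the output representation is obtained by having each target machine report the outputs of its simulated source machines; hence $\MPC(M,S)\le\MPC(\lceil M/x\rceil, x\cdot S)$ with constant slowdown, which is what we wanted.
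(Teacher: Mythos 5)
Your proof is correct and takes essentially the same approach as the paper: group $x$ consecutive source machines onto one target machine, simulate their local work sequentially, and route each source message via the target machine $\lfloor i'/x\rfloor$. The only cosmetic difference is that the paper first assumes the source-machine states are scattered arbitrarily (but unsplit) across target machines and runs a one-round sort-and-exchange to establish the consecutive grouping $\{jx,\dots,jx+x-1\}$, whereas you define that grouping directly at input time -- which is legitimate, since the simulated algorithm must work for every source input representation and you are free to choose it. You also make the send- and receive-side bandwidth accounting explicit, which the paper leaves implicit.

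One small point worth being precise about (though it does not change the conclusion): tagging each routed word with the source destination ID $i'$ roughly doubles the words exchanged, so the per-round target bandwidth could reach $\Theta(xS)$ with a constant factor larger than $1$. This is harmlessly absorbed either by a constant dilation of each simulated round or by observing that in the standard $\MPC$ convention addressing information is not charged against the $S$-word payload budget; but it deserves a sentence since the paper's policy treats any bandwidth violation as making the algorithm illegal.
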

\begin{proof}
    We assume the initial states of the $\MPC(M,S)$ machines are distributed evenly among the $\MPC(\lceil\nicefrac{M}{x}\rceil,x\cdot S)$ machines without splitting the state of a single machine. Every $\MPC(\lceil\nicefrac{M}{x}\rceil,x\cdot S)$ machine $i$ creates a data item $(j,i)$ for every $\MPC(M,S)$ machine $j$ it holds. We sort these data items by their first entry, such that $\MPC(\lceil\nicefrac{M}{x}\rceil,x\cdot S)$ machine $i$ holds data items corresponding to $\MPC(M,S)$ machines $x\cdot i,\dots,x\cdot i+x-1$. We have the machines exchange their states, resulting in $i$ actually holding these machines. Each $\MPC(\lceil\nicefrac{M}{x}\rceil,x\cdot S)$ machine simulates the machines $\MPC(M,S)$ it holds. When a simulated machine $j$ sends a message to a simulated machine $j'$, it is sent to $\lfloor \nicefrac{j'}{x}\rfloor$ instead. 
\end{proof}

\appendix

\section{Graph Properties}
\label{sec:graph_properties}

Let $G=(V,E)$ be an undirected graph. We write $H \subseteq G$ if $H$ is a (nonempty) subgraph of $G$. %
The \emph{arboricity} $\arb(G)$ is the smallest integer $k$ such that the edge set $E$ is the union of $k$ edge-disjoint forests. Intuitively, the arboricity is a density parameter that counts how many forests are needed to cover the graph. For instance, $G$ has at most $\arb(G)(n-1)$ edges and thus average degree $\Delta_{avg}(G) < 2\arb(G)$.
Nash-Williams \cite{NashWilliams1964} gave an equivalent characterisation of arboricity:
\begin{align}
\label{eq:arb-def1}
\arb(G)= &
\min\{\,k\in\mathbb{N} \mid \forall H \subseteq G,\ |E(H)|\le k(|V(H)|-1)\} \\
\label{eq:arb-def2}
= & \Bigl\lceil \max_{H\subseteq G, |V(H)|\ge 2} \frac{|E(H)|}{|V(H)|-1} \Bigr\rceil.
\end{align}

This has a few consequences on the size of graphs with certain arboricity. E.g., cliques have relatively few edges for a given arboricity parameter.

\begin{lemma}
    \label{lem:arb_clique}
    The complete $n$-node graph $K_n$ with $n \geq 2$ has arboricity $\lceil \tfrac{n}{2} \rceil$.
\end{lemma}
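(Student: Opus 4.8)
The plan is to invoke the Nash--Williams characterization of arboricity recorded in \eqref{eq:arb-def1} and \eqref{eq:arb-def2}, which expresses $\arb(K_n)$ as the ceiling of the maximum density $\tfrac{|E(H)|}{|V(H)|-1}$ taken over subgraphs $H\subseteq K_n$ with at least two vertices. So it suffices to determine that this maximum equals exactly $n/2$.

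For the upper bound on the density I would argue as follows: any subgraph $H\subseteq K_n$ on $j:=|V(H)|\in\{2,\dots,n\}$ vertices has at most $\binom{j}{2}=\tfrac{j(j-1)}{2}$ edges, hence $\tfrac{|E(H)|}{|V(H)|-1}\le\tfrac{j(j-1)/2}{j-1}=\tfrac{j}{2}\le\tfrac{n}{2}$. For the matching lower bound I would simply take $H=K_n$ itself, which attains $\tfrac{\binom{n}{2}}{n-1}=\tfrac{n}{2}$. Therefore the maximum in \eqref{eq:arb-def2} is $\tfrac{n}{2}$, and $\arb(K_n)=\lceil\tfrac{n}{2}\rceil$. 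Equivalently, one can verify the defining condition of \eqref{eq:arb-def1} directly for $k=\lceil\tfrac{n}{2}\rceil$, since $\binom{j}{2}=\tfrac{j}{2}(j-1)\le\tfrac{n}{2}(j-1)\le\lceil\tfrac{n}{2}\rceil(j-1)$ holds for all $2\le j\le n$.

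I do not expect any genuine obstacle here; the only point worth keeping in mind is that the map $j\mapsto j/2$ is increasing, so among subgraphs on a fixed number of vertices the densest is the complete one, and the overall densest subgraph of $K_n$ is $K_n$ itself --- one need not search among non-complete subgraphs to locate the maximum. The hypothesis $n\ge2$ is only needed so that the set of admissible subgraphs in \eqref{eq:arb-def2} is nonempty (and so that $\lceil n/2\rceil\ge1$).

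As an alternative route that bypasses Nash--Williams, one could exhibit an explicit cover of $E(K_n)$ by $\lceil n/2\rceil$ forests: for even $n$, the classical decomposition of $K_n$ into $n/2$ edge-disjoint Hamiltonian paths; for odd $n$, Walecki's decomposition into $(n-1)/2$ edge-disjoint Hamiltonian cycles, each turned into a spanning path by deleting one suitably chosen edge, with the $(n-1)/2$ deleted edges assembled into one further forest, for a total of $(n-1)/2+1=\lceil n/2\rceil$. The counting argument via \eqref{eq:arb-def2} is shorter and avoids this case distinction, so that is the route I would present.
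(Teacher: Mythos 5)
Your argument is correct and follows essentially the same route as the paper: both appeal to the Nash--Williams characterization \eqref{eq:arb-def2} and observe that $K_n$ itself attains the maximum density $\tfrac{\binom{n}{2}}{n-1}=\tfrac{n}{2}$. You additionally spell out why no smaller subgraph can do better (density at most $j/2\le n/2$ on $j$ vertices), which the paper asserts without proof, so your version is slightly more complete but not a different approach.
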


\begin{proof}
    The densest subgraph of $K_n$ is the graph itself, i.e., it maximizes the expression $\frac{|E(H)|}{n-1}$ in Equation~\eqref{eq:arb-def2}. The number of edges of $K_n$ is $\binom{n}{2}$, therefore $a(K_n) = \lceil \binom{n}{2}/(n-1) \rceil = \lceil \tfrac{n}{2} \rceil$.
\end{proof}

This implies that $K_{2a-1}$ has arboricity $a$ and $(2a-1)(a-1)$ edges. In general, the required minimum size of graph with arboricity $a$ is close to that.

\begin{lemma}
    \label{lem:arb_min_edges}
    A graph $G$ has at least $2 \arb(G)-1$ nodes and $2(\arb(G)-1)^2 +1$ edges.
\end{lemma}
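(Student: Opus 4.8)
The plan is to extract from $\arb(G)=a$ a single subgraph $H$ that is forced to be dense, and then bound its vertex and edge counts from below using only the fact that $H$ is simple. First I would invoke the Nash--Williams characterization in the form of Equation~\eqref{eq:arb-def2}: since $\bigl\lceil \max_{H\subseteq G} |E(H)|/(|V(H)|-1)\bigr\rceil = a$, the maximum density lies in the half-open interval $(a-1,a]$, so there is a subgraph $H\subseteq G$ with $k := |V(H)|\ge 2$ whose density satisfies $|E(H)|/(k-1) > a-1$, i.e.\ $|E(H)| > (a-1)(k-1)$.

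Next I would combine this with the trivial bound $|E(H)|\le\binom{k}{2}=k(k-1)/2$ valid for any simple graph, obtaining $k(k-1)/2 > (a-1)(k-1)$. Dividing through by $k-1>0$ gives $k/2 > a-1$, hence $k\ge 2a-1$ by integrality of $k$. Since $H$ is a subgraph of $G$, this yields $|V(G)| \ge k \ge 2a-1$, the first claim.

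For the edge bound I would feed the inequality $k\ge 2a-1$ back into the density bound on $H$: $|E(H)| > (a-1)(k-1) \ge (a-1)(2a-2) = 2(a-1)^2$, so $|E(H)| \ge 2(a-1)^2+1$ again by integrality, and therefore $|E(G)| \ge |E(H)| \ge 2(a-1)^2+1$.

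I do not anticipate a genuine obstacle here; the only two points needing care are reading the \emph{strict} inequality $\max_H |E(H)|/(|V(H)|-1) > a-1$ off the ceiling in~\eqref{eq:arb-def2}, and noting that the statement tacitly assumes $a=\arb(G)\ge 1$ (equivalently, that $G$ has at least one edge), so that $a-1\ge 0$ and the density bound is not vacuous. The node bound is tight, as witnessed by $K_{2a-1}$, which has arboricity $a$ by Lemma~\ref{lem:arb_clique}, and the edge bound is nearly tight for the same reason.
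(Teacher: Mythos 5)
Your proof is correct and essentially matches the paper's argument: both extract a subgraph $H$ with $|E(H)| > (\arb(G)-1)(|V(H)|-1)$ from Nash--Williams (you via Equation~\eqref{eq:arb-def2}, the paper via Equation~\eqref{eq:arb-def1}, an inconsequential difference), combine it with $|E(H)|\le\binom{|V(H)|}{2}$ to get the vertex bound, and feed that back in for the edge bound. The only addition you make beyond the paper's proof is flagging the tacit assumption $\arb(G)\ge 1$, which is a fair observation but does not change the substance.
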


\begin{proof}
    Let $a := \arb(G)$. By Equation \eqref{eq:arb-def1} there must a subgraph $H$ of $G$ with $|E(H)| > (a-1)(|V(H)|-1) = (a-1)(\ell-1)$, where $\ell := |V(H)|$. Further, $H$ can have at most $|E(H)| \leq \binom{\ell}{2} = \ell(\ell-1)/2$ edges. Thus
    \[
        \ell(\ell-1)/2 > (a-1)(\ell-1) \quad \Longleftrightarrow \quad \ell > 2(a-1) \quad \Longleftrightarrow \quad \ell \geq  2(a-1) + 1
    \]
    Therefore, $E(H) > (a-1)(\ell-1) \geq 2(a-1)^2$ edges, i.e., $H$ has at least $2(a-1)^2 +1$ edges.
\end{proof}

The lower bound in Lemma \ref{lem:arb_min_edges} is tight, i.e., the $K_{2a-1}$ clique does in fact not minimize the number of edges of an arboricity $a$ graph.

\begin{lemma}
    For any $a\geq 1$, there is a graph $G^*$ with $\arb(G^*)=a$, $2a-1$ nodes and $2(a-1)^2 +1$ edges.
\end{lemma}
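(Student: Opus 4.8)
The plan is to realize the two lower bounds of Lemma \ref{lem:arb_min_edges} with equality by starting from the clique $K_{2a-1}$ and deleting just enough edges. Assume $a\ge 2$; the case $a=1$ is degenerate (the two extremal bounds of Lemma \ref{lem:arb_min_edges} cannot be met simultaneously, and a single edge $K_2$ is already a forest), so I would flag it separately. By Lemma \ref{lem:arb_clique}, $\arb(K_{2a-1})=\lceil (2a-1)/2\rceil=a$, and $K_{2a-1}$ has $\binom{2a-1}{2}=(2a-1)(a-1)$ edges. A one-line identity gives $(2a-1)(a-1)-\bigl(2(a-1)^2+1\bigr)=a-2\ge 0$, so I let $G^*$ be $K_{2a-1}$ with any $a-2$ of its edges removed; then $G^*$ has exactly $2a-1$ vertices and exactly $2(a-1)^2+1$ edges.

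It then remains to verify $\arb(G^*)=a$. For the upper bound, $G^*$ is a subgraph of $K_{2a-1}$ and arboricity is monotone under taking subgraphs (restricting a forest cover of $G$ to $H\subseteq G$ yields a forest cover of $H$), so $\arb(G^*)\le\arb(K_{2a-1})=a$. For the lower bound, apply the Nash--Williams characterisation \eqref{eq:arb-def2} to the whole graph $H=G^*$: since $|E(G^*)|=2(a-1)^2+1 > 2(a-1)^2 = (a-1)\bigl(|V(G^*)|-1\bigr)$, we have $|E(G^*)|/(|V(G^*)|-1) > a-1$, hence $\arb(G^*)\ge a$. Combining the two inequalities gives $\arb(G^*)=a$, as required. (Equivalently, one can avoid mentioning $K_{2a-1}$ altogether: \emph{any} graph on $2a-1$ vertices with exactly $2(a-1)^2+1$ edges works, since the edge count lies between $0$ and $\binom{2a-1}{2}=2(a-1)^2+(a-1)$ for $a\ge 2$, the lower-bound argument is unchanged, and every subgraph $H$ satisfies $|E(H)|/(|V(H)|-1)\le |V(H)|/2\le (2a-1)/2$, whose ceiling is $a$.)

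I do not expect a genuine obstacle here. The only places calling for (minor) care are the arithmetic identity $(2a-1)(a-1)-\bigl(2(a-1)^2+1\bigr)=a-2$ that fixes how many edges to delete, and checking that the quotient $|E(G^*)|/(|V(G^*)|-1)=(a-1)+\tfrac{1}{2a-2}$ rounds up to exactly $a$ and not beyond, which holds because $0<\tfrac{1}{2a-2}<1$ for $a\ge 2$.
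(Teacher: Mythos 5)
Your proof is correct and follows essentially the same construction as the paper: take $K_{2a-1}$ and delete $a-2$ edges for $a\ge 2$ (handling $a=1$ separately). You are slightly more careful than the paper's own proof in that you explicitly supply the upper bound $\arb(G^*)\le a$ via subgraph-monotonicity of arboricity, whereas the paper applies the Nash--Williams formula only to $H=G^*$, which by itself yields just the lower bound.
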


\begin{proof}
    For $a=1$ $G^*$ is a single edge on two vertices. For $a \geq 2$ we let $G^{*}$ be the $K_{2a-1}$, where we remove $a-2$ edges. Then 
    \[
        |E(G^{*})| =\tbinom{2a-1}{2}-(a-2)
   =(2a-1)(a-1)-(a-2) = 2(a-1)^{2}+1.
    \]
    By Equation \eqref{eq:arb-def2}, we have $\lceil |E(G^{*})|/2(a-1)\rceil = \lceil (a-1) + \frac{1}{2(a-1)}\rceil = a$, thus $a(G^{*})= a$.
\end{proof}

A combinatorial insight by \cite{Rivin2002} shows that one can lower bound the number of distinct edges that are involved in a given set of distinct triangles.

\begin{lemma}[cf. \cite{Rivin2002} Eq.\ (8)]
    \label{lem:edges_of_triangles}
    A graph $G$ with $k$ triangles has at least $\tfrac{\sqrt{2}}{3}k^{2/3}$ edges.
\end{lemma}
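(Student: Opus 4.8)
The statement to prove is: a graph $G$ with $k$ triangles has at least $\tfrac{\sqrt 2}{3}k^{2/3}$ edges. The cleanest route is the classical Kruskal–Katona / Loomis–Whitney style argument, which Rivin's Eq.~(8) is an instance of. Let me sketch the self-contained approach I would take.

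Let $m = |E(G)|$ and let $k$ be the number of triangles. The plan is to bound $k$ from above in terms of $m$ and then invert. First I would fix any acyclic orientation of $G$ (e.g.\ orient each edge from the lower-ID endpoint to the higher-ID endpoint), so that every triangle $\{u,v,w\}$ becomes a directed path $u \to v \to w$ for a unique ordering of its vertices. Thus $k$ is at most the number of directed paths of length two, i.e.\ $k \le \sum_{v} \outdeg(v)\cdot\indeg(v)$ in this orientation, and crudely $k \le \sum_v \binom{\deg(v)}{2} \le \tfrac12\sum_v \deg(v)^2$. This already gives a bound but not a tight enough one; to get the constant $\tfrac{\sqrt2}{3}$ one needs the sharper counting where each triangle is charged to its pairs of incident edges.

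The sharper argument: each triangle contains three edges, and conversely each edge lies in at most some number of triangles. Count incidences between triangles and their \emph{pairs of edges sharing a vertex}. Every triangle contributes three such cherries (paths of length two), so the number of cherries that close into a triangle is $3k$. On the other hand the total number of cherries in $G$ is $\sum_v \binom{\deg(v)}2$, and by convexity, subject to $\sum_v \deg(v) = 2m$ fixed, this is minimized—wait, we want an \emph{upper} bound on $3k$ in terms of $m$, and $3k \le \sum_v\binom{\deg v}{2}$ is the wrong direction for a clean $m$-only bound since degrees can be large. The correct move, and the one Rivin uses, is instead to bound via the number of \emph{edges}: apply the Kruskal–Katona theorem to the $3$-uniform "triangle hypergraph" whose shadow is (a subset of) $E(G)$. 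Kruskal–Katona gives that if a $3$-uniform family has $k$ sets then its shadow has size at least $\binom{x}{2}$ where $\binom{x}{3}=k$; since the shadow is contained in $E$, we get $m \ge \binom{x}{2}$ with $\binom{x}{3} = k$. Asymptotically $x \approx (6k)^{1/3}$, so $m \gtrsim \tfrac12 x^2 \approx \tfrac12 (6k)^{2/3} = \tfrac{6^{2/3}}{2} k^{2/3}$, and $6^{2/3}/2 = \tfrac{36^{1/3}}{2}$; one checks $\tfrac{\sqrt2}{3} = \sqrt2/3 \approx 0.471$ while $36^{1/3}/2 \approx 1.65$, so the Kruskal–Katona constant is comfortably larger than $\tfrac{\sqrt2}{3}$ and the claimed inequality follows a fortiori (the weaker constant $\tfrac{\sqrt2}{3}$ is presumably what is convenient to state). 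Alternatively, avoiding Kruskal–Katona entirely, one can use the elementary fact that the number of triangles through a fixed edge $uv$ is at most $\min(\deg u,\deg v)-1$, combine over all edges, and apply Cauchy–Schwarz / the AM–GM bound $\sum \deg(v)^{3/2} \le (2m)\cdot(2m)^{1/2}$-type estimates to reach $k \le c\, m^{3/2}$ with $c$ small enough; then invert to $m \ge (k/c)^{2/3}$ and verify $c^{-2/3}\ge \tfrac{\sqrt2}{3}$.

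I expect the main obstacle to be pinning down the exact constant without hand-waving: the elementary double-counting gives $3k \le \sum_v \binom{\deg v}{2}$ or $k \le \sum_{uv\in E}(\min(\deg u,\deg v)-1)$, and turning either into a bound of the form $k \le C m^{3/2}$ with $C$ small requires a careful convexity/power-mean estimate (one typically gets $k \le \tfrac{1}{3}(2m)^{3/2}/\sqrt{?}$ and must track the $\sqrt 2$ and the $3$ precisely). Since the paper only invokes the inequality with the loose constant $\tfrac{\sqrt2}{3}$, the safest writeup is: derive $3k \le \sqrt 2\, m^{3/2}$ (equivalently $k \le \tfrac{\sqrt2}{3} m^{3/2}$) via the cherry-counting plus Cauchy–Schwarz argument, and then simply rearrange to $m \ge \big(3k/\sqrt2\big)^{2/3} = \big(\tfrac{3}{\sqrt2}\big)^{2/3} k^{2/3}$; since $\big(\tfrac{3}{\sqrt2}\big)^{2/3} = \tfrac{3^{2/3}}{2^{1/3}} > \tfrac{\sqrt2}{3}$, the stated bound holds. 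In fact the most economical version just cites Rivin directly: \emph{the inequality is exactly \cite{Rivin2002} Eq.~(8)}, so for this paper's purposes the "proof" can legitimately be a one-line pointer plus the rearrangement above made explicit.
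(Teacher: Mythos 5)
The paper provides no proof of this lemma at all---it simply states the bound with a pointer to Rivin 2002, Eq.~(8). Your final recommendation (cite Rivin's $k \le \tfrac{\sqrt{2}}{3}m^{3/2}$ directly and rearrange) is therefore exactly the paper's approach, and your numerology is correct: the inversion gives $m \ge \bigl(\tfrac{3}{\sqrt{2}}\bigr)^{2/3}k^{2/3}$ with constant $\approx 1.65$, which dominates the claimed $\tfrac{\sqrt{2}}{3}\approx 0.47$, so the lemma as stated (with the looser constant) follows a fortiori. Your Kruskal--Katona digression is also a correct alternative route and you correctly check its constant works too.

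One small correction on the \emph{derivation} sketch: the clean way to get $k \le \tfrac{\sqrt{2}}{3}m^{3/2}$ is not really ``cherry-counting plus Cauchy--Schwarz'' on degree sums (which, as you yourself flag, does not directly produce an $m$-only bound); it is the spectral/Schatten-norm argument, which is Rivin's own. Namely $6k = \operatorname{tr}(A^3) = \sum_i \lambda_i^3 \le \sum_i |\lambda_i|^3 \le \bigl(\sum_i \lambda_i^2\bigr)^{3/2} = (2m)^{3/2}$, where $A$ is the adjacency matrix and the last inequality is the monotonicity of $\ell^p$ norms. Rearranging gives exactly $k \le \tfrac{(2m)^{3/2}}{6} = \tfrac{\sqrt{2}}{3}m^{3/2}$. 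So if you wanted a self-contained proof, the three-line spectral argument is the one to write down; but for the paper's purposes the bare citation plus the rearrangement is all that is needed, and you identified that correctly.
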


\section{Fundamental \MPC Routines}
\label{sec:mpc_routines}

In the following, we sketch several known primitives in \MPC for sorting, aggregation and computing prefix-sums, which can then be used to decompose a graph into a number of forests proportional to its arboricity with specific properties we require later on. 

\paragraph{Constant-round sorting.} In the sorting problem we have $k$ sorting keys (of size one machine word\jw{generalize to larger sorting keys
}) distributed arbitrarily among machines. We need to rearrange the keys such that they appear sorted in the consecutive memories of machines $1, \dots, M$.
For \smash{$\delta=\tfrac12$}, assuming the $k$ input keys are scattered randomly among the $M=S\in\Theta(k^{1/2})$ machines, a randomized \emph{sample-sort} completes in $\bigO(1)$ rounds: (1) each machine locally sorts its $k^{1/2}$ keys and emits $\bigO(1)$ evenly spaced samples; (2) the $\bigO(M)$ samples are sent to a single collector, sorted, and every $O(1)$-th item is chosen as a splitter; (3) the $\bigO(M)$ splitters are broadcast; (4) any key that falls into the $j$-th splitter interval is routed to machine $j$; (5) each machine $j$ locally sorts the received keys from interval $j$. Chernoff bounds show that every bucket contains $\Theta(k^{1/2})$ keys with high probability, so no memory constraint is violated~\cite{DBLP:conf/isaac/GoodrichSZ11,DBLP:conf/soda/KarloffSV10}. A deterministic variant replaces the random scatter with regular sampling: each machine selects $\Theta(\log k)$ evenly spaced samples, then a histogram round computes exact bucket sizes, and the resulting $\bigO(1)$-round algorithm guarantees that every bucket holds $\bigO(k^{1/2})$ keys for any input~\cite{DBLP:conf/isaac/GoodrichSZ11}.
The algorithm extends to any constant $\delta < 1/2$ by, loosely speaking,  recursively sorting the $\Theta(M)$ samples from step (2), with recursion depth $1/\delta$. In some cases, we are interested in performing a one-to-one sorting, i.e., in sorting $O(M)$ items, such that each machine receives at most one item. We achieve this, by duplicating each item $S$ times.

\begin{lemma}[\cite{DBLP:conf/isaac/GoodrichSZ11,DBLP:conf/soda/KarloffSV10}]\label{lem:mpc_sort}
Let $\delta\in(0,1)$ be a constant. In the \MPCx{M,S} model with $M\in\Theta(k^{1-\delta})$, $S\in\Theta(k^{\delta})$, $k$ keys can be sorted in $O(1/\delta)$ rounds.
\end{lemma}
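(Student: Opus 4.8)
The plan is to carry out the sample-sort sketched above as an induction whose depth is the $1/\delta$ recursion levels, with the base case $\delta \ge 1/2$ (equivalently $M \le S$) handled directly, and then to reduce the one-to-one sorting variant to it.

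\textbf{Base case $M\le S$.} I would run the five-step sample-sort: (1) each machine locally sorts its $\Theta(k^{\delta})$ keys and emits $\bigO(1)$ (randomized) or $\Theta(\log k)$ (deterministic) evenly spaced samples; (2) collect the $\bigO(M)$, resp.\ $\bigO(M\log k)$, samples onto a single machine---possible since $M\le S$ up to the constants involved---sort them there, and take every suitably spaced sorted sample as one of $M-1$ splitters; (3) broadcast the splitters (each machine receives $<M\le S$ words); (4) route each key to the machine owning the splitter interval it falls into; (5) each machine locally sorts the keys it received. The only nontrivial point is that step (4) never overflows a machine, i.e.\ that every splitter interval holds $\bigO(S)$ keys. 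For the randomized variant this is a Chernoff bound over the uniformly random initial placement of the keys, exactly as in \cite{DBLP:conf/isaac/GoodrichSZ11, DBLP:conf/soda/KarloffSV10}. For the deterministic variant I would instead insert a histogram round: each machine counts, per bucket, how many of its local keys fall there; these counts are aggregated and turned into the exact global rank of every key, so that keys are routed to their exact final positions; a standard regular-sampling counting argument then bounds every bucket by $\bigO(S)$ for \emph{every} input, so no memory bound is violated.

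\textbf{Recursion for $\delta<1/2$ (so $M>S$).} Now the samples collected in step (2) no longer fit on one machine, but they form a sorting instance of size only $\bigO(M)=\bigO(k^{1-\delta})$, which is polynomially smaller than $k$; likewise the count-aggregation of the histogram round is an aggregation/sorting task on $\bigO(M)$ items. I would solve both recursively. Each level strictly shrinks the instance by a polynomial factor, so after at most $1/\delta$ levels one reaches the base-case regime; since each level costs $\bigO(1)$ rounds (local sort, sampling, splitter broadcast, one routing step, one counting step), the total is $\bigO(1/\delta)$ rounds. The part that requires care---and the main obstacle---is checking that the per-machine memory bound $S=k^{\delta}$ is respected at every recursion level simultaneously, for the sample sets, the splitter sets, and the routed payloads; this is precisely why the deterministic argument routes by exact rank (via the histogram/prefix-count step) rather than relying only on concentration, and it is also what forces a recursion rather than a single global collector once $M>S$.

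\textbf{One-to-one sorting.} For the variant that sorts $\bigO(M)$ items so that each machine ends up holding at most one, I would replace each item by $S$ identical copies, obtaining $\Theta(MS)=\Theta(k)$ items, apply the sorting routine above, and then have each machine discard all but one copy of each distinct item it holds; since the padding inflates the instance only by the factor $S$ that the model already tolerates, this adds $\bigO(1/\delta)$ rounds. (Sorting keys of more than one word is handled the same way, carrying a constant-word payload alongside each key.)
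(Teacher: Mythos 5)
Your proposal matches the paper's own sketch essentially step for step: the five-phase sample-sort for the $M\le S$ base case (with a Chernoff argument in the randomized variant and regular sampling plus a histogram round in the deterministic variant), recursion on the $\bigO(M)$ samples for $\delta<1/2$ with depth $\bigO(1/\delta)$, and the one-to-one variant via $S$-fold duplication. It is the same approach as the paper (which, like you, follows the cited sources), written out in slightly more detail.
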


\paragraph{Constant-round group aggregation.} Given group-value pairs $(g_i,v_i)$, $i = 1, \dots, k$ (each of size one machine word), distributed arbitrarily over the machines, each machine that initially holds $(g_i,v_i)$ should output $f_{g_i}$, where $f_g = f\big(v_j \mid g_j = g \big)$ for some associative and commutative function $f$. For instance, this problem covers counting, computing the minimum or broadcasting within each group, which allows to implement computing group-wide averages.
First add the number $m_i$ of the machine that initially holds $(g_i,v_i)$, i.e., we obtain tuples $(g_i,v_i,m_i)$. Sort these by $g_i$ using Lemma \ref{lem:mpc_sort}. %
Light groups $g$, whose values are on a single machine can be aggregated within that machine. Heavy groups $g$ that span multiple machines are aggregated up an aggregation tree to the least common ancestor of all machines with values in $g$, who broadcasts the complete aggregate $f_{g}$ back down the corresponding sub-tree. The tree has degree $\Theta(S)$ and height $h=\bigO(\log_S n) = \bigO(1/\delta)$ and can be computed locally by each machine using machine labels.\ps{it is actually already used in the sorting algorithm}.
The results $f_{g_i}$ are then transferred back to $m_i$ in a single step.

\ps{add argument that original holders of aggregation tuple obtain aggr. result (transfer back to original holder, which does not violate bandwidth bound.)}

\begin{lemma}[\cite{DBLP:conf/isaac/GoodrichSZ11}]\label{lem:mpc_groupagg}
Let $\delta\in(0,1)$ be a constant. In the \MPCx{M,S} model with $M\in\Theta(k^{1-\delta})$, $S\in\Theta(k^{\delta})$, group aggregation can be carried out in $O(1/\delta)$ rounds.
\end{lemma}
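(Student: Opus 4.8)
The plan is to reduce group aggregation to two calls of the constant‑round sorting routine (Lemma~\ref{lem:mpc_sort}) plus one pass up and one pass down a fixed aggregation tree over the machines. First I would have every machine attach to each of its pairs $(g_i,v_i)$ the index $m_i$ of the machine currently holding it, producing triples $(g_i,v_i,m_i)$; this is local. Then sort all $k$ triples by the group key $g_i$ with Lemma~\ref{lem:mpc_sort} in $O(1/\delta)$ rounds, so that afterwards the triples of any fixed group $g$ occupy a contiguous block of memory, i.e.\ a contiguous interval $[a_g,b_g]$ of machines (the two boundary machines $a_g,b_g$ may additionally hold tails of the neighboring groups). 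Each machine then locally folds together, per group, the values it holds; a group with $a_g=b_g$ (\emph{light}) is now fully aggregated, and every triple of it can be rewritten to $(g,f_g,m_i)$.

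For the \emph{heavy} groups spanning several machines I would use the same balanced $\Theta(S)$‑ary tree on the machine indices that already underlies the sorting routine; it has height $h=O(\log_S M)=O(1/\delta)$, and each machine can locate its ancestors and children locally from its index. In the up‑phase, over $h$ rounds, each internal node combines (using associativity and commutativity of $f$) the partial per‑group aggregates received from its children with its own and forwards upward exactly the partial aggregates of those groups whose block is not yet fully contained in its subtree; when a group's block first becomes contained in a node's subtree — that node being the least common ancestor of $a_g$ and $b_g$ — the full aggregate $f_g$ is available there. In the symmetric down‑phase, again over $h$ rounds, each such LCA sends $f_g$ downward and every node relays it only into those children whose subtrees meet the block of $g$, until $f_g$ reaches all of $[a_g,b_g]$, where each triple is rewritten to $(g,f_g,m_i)$. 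Finally, sort all triples back by $m_i$ with Lemma~\ref{lem:mpc_sort} in $O(1/\delta)$ rounds (padding to exactly $S$ triples per machine if necessary, as is standard), delivering $f_{g_i}$ to the machine that originally held $(g_i,v_i)$. The round count is $O(1/\delta)+O(h)+O(h)+O(1/\delta)=O(1/\delta)$.

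The step I expect to be the crux is the bandwidth accounting for the two tree phases: arguing that no machine ever sends or receives more than $O(S)$ words in a round. The key structural fact is that, because the data is sorted, the groups form an interval partition of the global memory, so for any machine $u$ viewed as a tree node with leaf range $[l,r]$ there are at most two groups whose block meets $[l,r]$ without being contained in it — at most one straddling the left end and at most one straddling the right end — and at most one group whose block \emph{properly contains} $[l,r]$, in which case that group is the only one present on every leaf of $u$, precluding the other two. Hence at most $O(1)$ distinct partial aggregates cross the edge from $u$ to its parent in the up‑phase, and symmetrically at most $O(1)$ distinct aggregates cross each parent–child edge in the down‑phase; since each aggregate value is $O(1)$ words and $u$ has $O(S)$ children, every machine moves $O(S)$ words per round, within the $\Theta(S)$ budget. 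A secondary bookkeeping point to settle alongside this is co‑locating the $O(M/S)$ internal tree nodes of each level with the data machines without exceeding $\Theta(S)$ memory, which is fine because an internal node stores only $O(S)$ partial aggregates and a machine plays an internal role on only $O(1/\delta)$ levels.
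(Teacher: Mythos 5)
Your proposal is correct and follows essentially the same route as the paper: attach the holding machine's index, sort by group key, fold locally per machine, then aggregate heavy groups up to the LCA in a $\Theta(S)$-ary tree and broadcast the result back down, and finally return the answers to the original machines. The only cosmetic difference is that the paper returns $f_{g_i}$ to machine $m_i$ in a single direct send (which is safe because $m_i$ originally held at most $S$ pairs, so it receives at most $S$ words), whereas you sort back by $m_i$ — both cost $O(1/\delta)$; your interval-straddling argument for the $O(S)$-per-round bandwidth bound is the right justification and is simply left implicit in the paper's sketch.
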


\paragraph{Constant-round prefix-sum.}  We are given $k$ indexed values $(i,v_i)$ spread arbitrarily over $M$ machines (so $MS \in \Omega(k)$).  The task is to output $(i,s_i)$ with \smash{$s_i=\sum_{j \leq i}v_j$}. \cite{DBLP:conf/isaac/GoodrichSZ11} achieves this in $\bigO(1)$ rounds when $S=k^{\delta}$, $0<\delta<1$, as follows. (1) sort pairs by index in $\bigO(1/\delta)$ rounds, i.e., the sequence of pairs $(i,v_i)$ are stored as consecutive blocks on the machines; (2) each machine computes prefix sums within its block and the block total; (3) aggregate the $M$ totals up the aggregation tree such that each treenode holds the total sum of its subtrees; (4) broadcast prefix sums of the totals of subtrees down the same tree, adding the prefix sum obtained from the parent as offset; (5) at the leaves we obtain the final $(i,s_i)$. \cite{DBLP:conf/isaac/GoodrichSZ11} also proves an $\Omega(\log_S n)$ lower bound, making the algorithm round-optimal.

\begin{lemma}[\cite{DBLP:conf/isaac/GoodrichSZ11}]\label{lem:mpc_prefix}
Let $\delta\in(0,1)$ be a constant. In the \MPCx{M,S} model with $M\in\Theta(k^{1-\delta})$, $S\in\Theta(k^{\delta})$, the prefix sum of $k$ index-value pairs can be computed in $O(1/\delta)$ rounds.
\end{lemma}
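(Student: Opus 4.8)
The plan is to follow the standard two-pass tree-contraction strategy, reducing first to the case where the index-value pairs already sit in sorted order across the machines' memories. Concretely, I would first invoke Lemma~\ref{lem:mpc_sort} to sort the $k$ pairs $(i,v_i)$ by their index $i$ in $\bigO(1/\delta)$ rounds, so that after this step machine $1$ holds the first $S$ pairs in index order, machine $2$ the next $S$, and so on; before sorting I would tag each pair with the ID of the machine that originally held it, so that the final answers can be shipped back at the end. Once the data is block-sorted, each machine locally computes the prefix sums of the $\le S$ values in its own block together with the block total $B_m=\SUM$ of its values, which is a purely local computation and costs no rounds.

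Next I would set up a fixed aggregation tree on the $M$ machines of arity $\Theta(S)$ and height $h=\bigO(\log_S k)=\bigO(1/\delta)$ — the same tree already used inside the sorting and group-aggregation routines (Lemmas~\ref{lem:mpc_sort}, \ref{lem:mpc_groupagg}), whose structure every machine can reconstruct locally from the machine labels. The upward pass takes $h$ rounds: each internal node receives the subtree sums of its $\Theta(S)$ children (which fits in its bandwidth budget of $S$ words), and forwards the sum of those to its own parent. After this pass every node knows the total of the values stored in the leaves of its subtree. The downward pass also takes $h$ rounds: the root sends to its children the prefix sums of the children's subtree totals (i.e.\ child $j$ receives $\sum_{j'<j}(\text{subtree total of child }j')$ plus the offset the node itself received from its parent), and this propagates down so that every machine eventually learns the exact offset $O_m$ equal to the sum of all values in blocks that precede its block in index order. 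Adding $O_m$ to each locally computed prefix sum produces the correct $s_i$ for every pair on machine $m$.

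Finally I would route each answer $(i,s_i)$ back to the machine that originally held $(i,v_i)$, using the tag attached before sorting; since the original assignment of the $k$ pairs respected the $S$-per-machine bound, this reverse routing delivers at most $S$ words to each machine and can be done either directly or with one more application of the sorting routine, in $\bigO(1/\delta)$ further rounds. Summing the costs — $\bigO(1/\delta)$ for the sort, $\bigO(1)$ for the local work, $\bigO(\log_S k)=\bigO(1/\delta)$ for each of the two tree passes, and $\bigO(1/\delta)$ for the reverse routing — gives the claimed $\bigO(1/\delta)$ bound.

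I expect the only real subtlety, and hence the main thing to argue carefully, is that no step violates the per-machine bandwidth bound of $S$ words: the tree arity must be capped at $\Theta(S)$ so that a node never sends or receives more than $S$ words in a single tree layer, and the choice $M\in\Theta(k^{1-\delta})$, $S\in\Theta(k^{\delta})$ must be checked to yield tree height $\bigO(\log_S k)=\bigO(1/\delta)$. Everything else — the correctness of the offsets, the associativity of summation justifying the tree contraction, and the locality of the tree description — is routine. (A matching $\Omega(\log_S k)$ lower bound is known~\cite{DBLP:conf/isaac/GoodrichSZ11}, so no asymptotically better round bound is possible, but that is not needed for the statement.)
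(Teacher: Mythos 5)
Your proposal is correct and follows essentially the same route as the paper's argument: sort by index, compute local block prefix sums and totals, aggregate totals up a $\Theta(S)$-ary tree of height $\bigO(\log_S k)=\bigO(1/\delta)$, and broadcast offsets back down. The only addition is the explicit reverse routing of results to the original holders, which is a harmless refinement of the same scheme.
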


\paragraph{Group-based prefix-sum.} We can generalize the prefix sum routine to compute prefix sums only within the same group. Here we are given $k$ tuples $(g,i,v_{g,i})$ distributed on the machines, such that the $(g,i)$ are pairwise different among tuples. We want each machine that holds some tuple $(g,i,v_{g,i})$ to output $(g,i,s_{g,i})$ with \smash{$s_{g,i}=\sum_{j \leq i} v_{g,j}$} (where we formally set $v_{g,i} = 0$, if there is no input tuple $(g,i,v_{g,i})$). We first sort the $(g,i,v_{g,i})$ by $(g,i)$, i.e., tuples are grouped by $g$ on the machines with ascending indices $i$ within groups. We then run the prefix sum algorithm within each block (essentially broadcasting partial totals up and down the broadcast tree to the least common ancestor within each block) to compute the values $(g,i,s_{g,i})$ which will be obtained by the machine that holds $(g,i,v_{g,i})$ after sorting. Finally, we can the attach the original machine number to each $(g,i,v_{g,i})$ (before sorting) and send the result $(g,i,s_{g,i})$ back to this machine.

\begin{lemma}
    \label{lem:mpc_group_prefix}
    Let $\delta\in(0,1)$ be a constant. In the \MPCx{M,S} model with $M\in\Theta(k^{1-\delta})$, $S\in\Theta(k^{\delta})$, the group based prefix sum with $k$ inputs can be computed in $O(1/\delta)$ rounds.\ps{check if there is a source for group based prefix sum, I can't imagine that we are the first doing this}
\end{lemma}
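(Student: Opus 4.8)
The plan is to reduce the group‑based prefix sum to the ordinary sorting and prefix‑sum primitives of Lemmas~\ref{lem:mpc_sort} and~\ref{lem:mpc_prefix}, all of which run in the same parameter regime $M\in\Theta(k^{1-\delta})$, $S\in\Theta(k^{\delta})$. First I would have every machine tag each tuple $(g,i,v_{g,i})$ it holds with the identifier $m$ of its current machine, yielding $(g,i,v_{g,i},m)$, and then sort all $k$ tagged tuples by the composite key $g\circ i$ using Lemma~\ref{lem:mpc_sort} in $\bigO(1/\delta)$ rounds. After this step the tuples occupy a contiguous prefix of the machine memories, grouped by $g$ and sorted by $i$ within each group; in particular every group $g$ occupies a contiguous block of positions spanning a contiguous range of machines. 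Since a missing index contributes $0$ to $s_{g,i}$, the value we must produce at the position of $(g,i,\cdot)$ is precisely the sum of the $v$‑values of the tuples of group $g$ at sorted positions up to and including that of $(g,i,\cdot)$, so it suffices to compute a \emph{segmented} (per‑group) prefix sum over the sorted sequence.

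To compute the segmented prefix sum I would combine two scans. Run Lemma~\ref{lem:mpc_prefix} once to obtain, at each sorted position $\ell$, the unsegmented prefix sum $\sigma_\ell$ of all $v$‑values up to position $\ell$. In parallel, in one round each machine forwards the group label of its last tuple to the next machine, so every position can compare its group with its predecessor's and mark the positions that start a new group; a prefix scan with the operator $\max$ (structurally the same aggregation as Lemma~\ref{lem:mpc_prefix}) applied to the sequence ``position index at each group‑start, $-\infty$ elsewhere'' then gives at position $\ell$ the index $b(\ell)$ of the most recent group‑start. All positions sharing a common value of $b(\ell)$ form exactly one contiguous group‑block, so a single aggregation/broadcast within these blocks along the degree‑$\Theta(S)$, height‑$\bigO(1/\delta)$ tree already used by Lemma~\ref{lem:mpc_groupagg} lets every position learn $\sigma_{b(\ell)-1}$, after which $s_{g,i}=\sigma_\ell-\sigma_{b(\ell)-1}$ is computed locally. (Equivalently, one may run the aggregation‑tree prefix sum directly in ``segmented'' form, resetting the running total at group boundaries on the up‑sweep and down‑sweep; this is the route sketched before the lemma.)

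Finally, each sorted tuple still carries the original machine identifier $m$, so one more invocation of Lemma~\ref{lem:mpc_sort} keyed by $m$ (or a single routing round) returns $(g,i,s_{g,i})$ to the machine that originally held $(g,i,v_{g,i})$; that machine started with at most $S$ tuples, hence receives at most $S$ answers and its memory bound is respected. Summing the costs gives $\bigO(1/\delta)$ rounds. The step I expect to require the most care is the bandwidth/memory accounting for running the per‑group aggregations concurrently: because the sorted layout places each group in a contiguous range of machines, a given machine lies strictly inside at most one group‑block and on the boundary of at most two, so the low‑depth tree serves all blocks simultaneously with $\bigO(1)$ load per tree node and no per‑machine budget is ever exceeded — making this bookkeeping, rather than any conceptual difficulty, the main obstacle to a fully rigorous write‑up.
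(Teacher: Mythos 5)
Your proof is correct and follows the same high-level outline as the paper's: sort the tuples by the composite key $(g,i)$ so each group occupies a contiguous block, compute the per-group prefix sums over that sorted layout, and route the results back to the machines of origin using the machine identifier attached before sorting. The only substantive difference is how the segmented (per-group) scan is realized. The paper runs it directly, accumulating partial block totals up and down the aggregation tree and resetting at group boundaries; you instead compute the unsegmented prefix sum $\sigma_\ell$ over the entire sorted sequence, identify group boundaries in one round by forwarding each machine's last group label, compute $b(\ell)$ via a prefix-$\max$ scan, and recover $s_{g,i}=\sigma_\ell-\sigma_{b(\ell)-1}$ by subtraction. Both are equivalent standard implementations of a segmented scan, and both respect the $\bigO(1/\delta)$ round budget, so the approaches buy essentially the same thing. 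One small economy you could make: the prefix-$\max$ scan to compute $b(\ell)$ is redundant, since after sorting every position already carries its group label $g$, and position $b(\ell)$ knows $\sigma_{b(\ell)-1}=\sigma_{b(\ell)}-v_{b(\ell)}$ locally; that boundary value can therefore be broadcast within each group directly via the group aggregation of Lemma~\ref{lem:mpc_groupagg} keyed on $g$, eliminating one scan without changing the asymptotics.
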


\section{\MPC Routines for Forest Decompositions}
\label{sec:mpc_routines_forest_decomp}

\paragraph{Constant round forest decomposition} The primitives above can be used to orient edges such that each node has only bounded out-degree, in the \MPC model. We apply the Nash-Williams forest decomposition framework, which operates in phases. In each phase, low degree nodes will orient their incident edges, after which they are removed from the graph. The process can be shown to terminate after relatively few phases.

\begin{lemma}[\cite{DBLP:conf/podc/BarenboimE08,NashWilliams1964}]
    \label{lem:mpc_orientation}
    An orientation of a graph $G$ with out-degree less than $2  \beta \arb(G)$ can be computed in  $\bigO\big(\tfrac{\log_\beta n}{\delta}\big)$ rounds in the \MPCx{M,S} model with $S \geq n^\delta$ and for constant $\delta \in (0,1)$ and $\beta \geq 2$.
\end{lemma}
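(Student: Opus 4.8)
The plan is to implement the Nash--Williams peeling procedure in \MPC and to argue that each phase needs only $\bigO(1/\delta)$ rounds, while the number of phases is $\bigO(\log_\beta n)$. First I recall the combinatorial fact underlying the peeling: if $G$ has arboricity $a=\arb(G)$, then by the Nash--Williams characterisation (Equation \eqref{eq:arb-def1}) every subgraph $H$ of $G$ has $|E(H)| \le a(|V(H)|-1) < a\,|V(H)|$, so the average degree of any subgraph is below $2a$. Consequently at least a $\bigl(1-\tfrac1\beta\bigr)$-fraction of the vertices of any residual subgraph have degree below $2\beta a$ (otherwise more than a $\tfrac1\beta$-fraction of vertices would each have degree $\ge 2\beta a$, forcing $|E(H)| \ge \tfrac{1}{2}\cdot\tfrac{1}{\beta}|V(H)|\cdot 2\beta a = a|V(H)|$, a contradiction). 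In each phase we orient every edge incident to a vertex of current residual degree less than $2\beta a$ \emph{away} from that vertex (breaking ties by, say, orienting toward the endpoint with larger ID when both endpoints are low-degree), and then delete these vertices and their incident edges. Each deleted vertex receives out-degree at most $2\beta a$ minus one (it orients only its at most $2\beta a - 1 < 2\beta a$ incident residual edges outward), and no vertex ever gets a second batch of out-edges since it is removed once processed; hence the final out-degree is strictly less than $2\beta a$. Because each phase removes a $\bigl(1-\tfrac1\beta\bigr)$-fraction of the remaining vertices, the residual graph is empty after $\log_{\beta/(\beta-1)} n = \bigO(\log_\beta n)$ phases (using $\beta\ge 2$).

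Next I describe how one phase is executed in $\bigO(1/\delta)$ \MPC rounds using only the primitives of Appendix \ref{sec:mpc_routines}. Maintain the invariant that at the start of a phase every still-present vertex $v$ knows its residual degree $d(v)$ (stored on whichever machine is canonically responsible for $v$; with $M\in\bigO(\poly n)$ machines and $S\ge n^\delta$ this is a standard bookkeeping matter), and that every edge record carries a flag indicating whether it has been oriented/removed. (i) A group aggregation (Lemma \ref{lem:mpc_groupagg}) over the still-active edges, keyed by endpoint, lets each vertex recompute or confirm $d(v)$; a vertex with $d(v) < 2\beta a$ marks itself \emph{low}. (ii) Broadcast each vertex's low/high status to the machines holding its incident edges via another group aggregation keyed by edge endpoint. (iii) For each active edge $\{u,v\}$: if exactly one endpoint is low, orient toward the other; if both are low, orient toward the larger ID; if neither, the edge stays. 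This decision is made locally at the edge's machine once both endpoint statuses are known, and both endpoints agree because the rule is symmetric. (iv) Delete all low vertices and all edges incident to them (mark their records), and update residual degrees of the surviving endpoints by a group aggregation subtracting the count of newly removed incident edges. Each of these steps is a constant number of sorts/aggregations/prefix-sums on $\bigO(|E|)$ items over $M$ machines with $S\ge n^\delta$, hence $\bigO(1/\delta)$ rounds by Lemmas \ref{lem:mpc_sort}, \ref{lem:mpc_groupagg}, \ref{lem:mpc_prefix}. Multiplying by the $\bigO(\log_\beta n)$ phases gives the claimed $\bigO(\log_\beta n/\delta)$ total.

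The main obstacle, and the point that deserves the most care, is the bound on the \emph{out-degree}: one must be sure that the greedy peeling genuinely keeps every vertex's out-degree below $2\beta a$, which rests on the structural fact that in every residual subgraph a $\bigl(1-\tfrac1\beta\bigr)$-fraction of vertices are low-degree (so that peeling makes real progress \emph{and} a low vertex never later acquires more out-edges). A secondary subtlety is that in \MPC a vertex with very high total degree has its incident edge list spread across many machines, so "a vertex decides it is low and orients its edges" cannot be a literal local computation; it has to be realised through the keyed aggregations described above, and one must check these aggregations never violate the per-machine load of $S$ words — which holds because each aggregation processes $\bigO(|E|)\subseteq\bigO(an)\subseteq\bigO(\poly n)$ items distributed over the $M$ machines, and the group-aggregation primitive is designed to respect the memory bound regardless of how skewed the group sizes are. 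Everything else (tie-breaking, flag maintenance, termination count) is routine once these two points are settled. A reference for the phase structure and the $\bigO(\log_\beta n)$ bound is \cite{DBLP:conf/podc/BarenboimE08}; the \MPC-specific implementation is what the above spells out.
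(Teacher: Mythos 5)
Your proof is correct and follows essentially the same Barenboim--Elkin peeling approach as the paper, implemented via the same \MPC aggregation primitives. The only substantive difference is cosmetic: the paper thresholds on the residual average degree $\beta\cdot\Delta_{avg}$ (and uses a Markov-type counting argument, noting $\Delta_{avg}<2\arb(G)$), whereas you threshold on the fixed quantity $2\beta\arb(G)$ and invoke the Nash--Williams inequality $|E(H)|<\arb(G)\,|V(H)|$ directly; both give the same fraction bound and the same out-degree bound. One small slip: since each phase leaves at most a $\tfrac{1}{\beta}$-fraction of vertices, the number of phases is $\log_{\beta}n$, not $\log_{\beta/(\beta-1)}n$ (the latter would correspond to removing a $\tfrac{1}{\beta}$-fraction, and is in fact \emph{not} $O(\log_\beta n)$ uniformly in $\beta$); your removal argument already gives the correct $\log_\beta n$ bound, so this is just a misprint in the exponent's base and does not affect the conclusion.
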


\begin{proof}
    To achieve the bound, we follow \cite{DBLP:conf/podc/BarenboimE08} and speed up a Nash-Williams forest decomposition by raising the degree threshold for node removals. The computation in \MPC is performed in phases until all edges are directed. 
    In each phase, we first compute the degree of every node: replace every undirected edge $\{v,w\}$ with directed edges $(v,w)$ and $(w,v)$ and compute the count in the group defined by the first node entry (Lemma \ref{lem:mpc_groupagg}), resulting in triples $(u,v,\deg(u))$. Sorting by the minimum ID of each edge's incident nodes, breaking ties with the maximum identifier, edges $(u,v,\deg(u))$ and $(v,u,\deg(v))$ can be combined to $(u,v,\deg(u),\deg(v))$.
    We also compute the average degree $\Delta_{avg}$ of the remaining nodes in this phase using Lemma \ref{lem:mpc_groupagg}.
    
    As the final step of the phase, we orient all edges $\{u,v\}$ where $\deg(u)\leq \beta\cdot \Delta_{avg}$ towards $v$ and do not consider edges incident to $u$ in future phases. If both $\deg(u)\leq \beta\cdot \Delta_{avg}$ and $\deg(v)\leq \beta\cdot \Delta_{avg}$ orient $\{u,v\}$ arbitrarily.
    At most a $1/\beta$ fraction of remaining nodes can have degree larger than $\beta\cdot \Delta_{avg}$, thus each phase, only a $1/\beta$ fraction of nodes remains. Therefore all nodes are removed after $O(\log_\beta n)$ phases and each phase takes $O(1/\delta)$ rounds according to Lemmas \ref{lem:mpc_prefix} and \ref{lem:mpc_groupagg}, the stated runtime follows. Finally, we have $\Delta_{avg} < 2 \cdot \arb(G)$, which yields the out-degree bound.
\end{proof}

We show that with a slight change to the edge orientation process we can obtain a decomposition of a graph into oriented forests with the property that each forest has either relatively small depth or small in-degree.

\begin{lemma}\label{lem:mpc_forest_decomposition}
    A partitioning of $G$ into forests $F_{1}, \dots, F_{d}, F_{1}', \dots, F_{d}'$ with $d < 2 \beta \arb(G)$ such that each forest $F_j$ has depth $O(\log_\beta n)$ and each forest $F_{j}'$ has in-degree less than $2 \beta \arb(G)$, can be computed in  \smash{$\bigO\big(\tfrac{\log_\beta n}{\delta}\big)$} rounds in the \MPCx{M,S} model with $S \geq n^\delta$, for constant $\delta \in (0,1)$ and $\beta \geq 2$.
\end{lemma}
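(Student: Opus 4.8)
The plan is to run the Nash--Williams style orientation procedure of Lemma~\ref{lem:mpc_orientation} essentially verbatim, additionally having every node record its \emph{layer} $\ell(v):=i$, the index of the phase in which it is removed. This is free: the procedure still uses $\bigO(\log_\beta n)$ phases of $\bigO(1/\delta)$ rounds, and a node only needs an extra $\bigO(\log n)$-bit counter for the current phase. The one structural fact I would extract is that a node $v$ removed in phase $i$ has residual degree at most $\beta\Delta_{avg}<2\beta\arb(G)$, where $\Delta_{avg}$ is the average degree of the phase-$i$ residual graph (a subgraph of $G$, hence $\Delta_{avg}<2\arb(G)$), and that \emph{every} neighbor $w$ of $v$ with $\ell(w)\ge i$ still contributes an active edge to this residual degree; hence $v$ has strictly fewer than $2\beta\arb(G)$ neighbors lying in layers $\ge\ell(v)$, and in particular strictly fewer than $2\beta\arb(G)$ neighbors in its own layer.

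Next I would reorganize the oriented edges. Call an edge \emph{within-layer} if its two endpoints share a layer, and \emph{cross-layer} otherwise; these two classes partition $E$. Orient each within-layer edge from its smaller-ID endpoint to its larger-ID endpoint, and each cross-layer edge from its lower-layer endpoint to its higher-layer endpoint; attaching $\ell(u),\ell(v)$ to every edge record is one sort-and-merge step, just as degrees are combined in the proof of Lemma~\ref{lem:mpc_orientation}. By the structural fact, each node $u$ has fewer than $2\beta\arb(G)$ outgoing within-layer edges and fewer than $2\beta\arb(G)$ outgoing cross-layer edges. Using a group-based enumeration (Lemma~\ref{lem:mpc_group_prefix}, groups keyed by $(u,\text{class})$, unit values) in $\bigO(1/\delta)$ rounds, every node numbers its outgoing within-layer edges $1,2,\dots$ and, separately, its outgoing cross-layer edges $1,2,\dots$. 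Let $F_j'$ collect the $j$-th outgoing within-layer edge of each node and $F_j$ the $j$-th outgoing cross-layer edge of each node; taking $d$ as the largest index occurring in either family (and padding the shorter family with empty forests) gives $d<2\beta\arb(G)$, and $\{F_j\}\cup\{F_j'\}$ partitions $E$.

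It then remains to verify the two properties. In each $F_j$ and each $F_j'$ every node has out-degree at most $1$ and the orientation is acyclic --- the layer strictly increases along a cross-layer out-edge, and the ID strictly increases along a within-layer out-edge while the layer is fixed --- so no component can contain a cycle and each of them is a forest whose trees are oriented towards a unique sink. For $F_j$, a directed path $v_0\to v_1\to\cdots\to v_k$ satisfies $\ell(v_0)<\cdots<\ell(v_k)$, and since the decomposition uses only $\bigO(\log_\beta n)$ layers we get $k=\bigO(\log_\beta n)$, i.e.\ $F_j$ has depth $\bigO(\log_\beta n)$. For $F_j'$, all of its edges stay inside a single layer, so the in-degree of any node $v$ in $F_j'$ is at most $v$'s number of same-layer neighbors, which is $<2\beta\arb(G)$ by the structural fact. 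Overall the procedure costs $\bigO(\tfrac{\log_\beta n}{\delta})$ rounds, dominated by the call to Lemma~\ref{lem:mpc_orientation}.

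I expect the main obstacle to be conceptual rather than computational: spotting that the ``bounded in-degree'' half of the decomposition comes essentially for free from the within-layer edges, because a node's same-layer neighbors all survive into the phase that removes it and are therefore already capped by the removal threshold $2\beta\arb(G)$ --- no extra work beyond recording the layers. Everything else is reorientation plus the constant-round sorting, aggregation and (group) prefix-sum primitives of Appendix~\ref{sec:mpc_routines}.
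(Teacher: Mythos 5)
Your proposal is correct and takes essentially the same approach as the paper: run the phased Nash--Williams orientation of Lemma~\ref{lem:mpc_orientation}, recall the layer (phase of removal) of each node, observe that the residual-degree cap at removal bounds a node's same-layer neighbor count by $2\beta\arb(G)$, split the oriented out-edges into the two classes (cross-layer and within-layer), and obtain the bounded-depth forests from the former (layers strictly increase along a path, and there are $\bigO(\log_\beta n)$ layers) and the bounded-in-degree forests from the latter. The only cosmetic deviations are that you orient within-layer edges toward the larger ID rather than the smaller, and that you enumerate the two edge classes into separate forest families directly instead of first picking an arbitrary assignment of out-edges to forests $H_1,\dots,H_d$ and then splitting each by category; neither change affects the argument.
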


\begin{proof}
    We conduct the phases almost as described previously (proof of Lemma \ref{lem:mpc_orientation}) with a small change when we orient an edge $\{u,v\}$ with $\deg(u)\leq \beta\cdot \Delta_{avg}$ and $\deg(v)\leq \beta\cdot \Delta_{avg}$, i.e., the case where both nodes have small degree and will be removed in the same phase. Instead of orienting $\{u,v\}$ arbitrarily we now orient it towards the node with smaller ID.
    
    Consider the strict total order on the nodes of $G$ imposed by the phase in which each node was removed (where $u < v$ if $u$ was remove before $v$), with the smaller node ID as tie breaker among nodes that are removed in the same phase.
    This strict total order is an extension of the partial order given by the edge orientation, since we direct edges either towards the node that got removed in an earlier phase, or towards the one with smaller ID if both endpoints are removed in the same phase. Consequently, the orientation that we compute is acyclic.

    For each node, let us assign each out-edge uniquely to one of the directed edge sets $H_1, \dots, H_{d}$, i.e., $d < 2  \beta  \arb(G)$.
    Since the orientation of $G$ is acyclic, each $H_i$ is acyclic as well. Further, in a given $H_i$, each node $v$ has at most one parent (defined by the at most one out-edge of $v$ assigned to $H_i$). Therefore, $H_i$ cannot even have an undirected cycle, because if there would be such an undirected cycle, then this cycle would be acyclic after the orientation, implying that one node on the directed cycle would have two parents, a contradiction.

    We further subdivide edges into two categories. Category a: Edges where endpoints were removed in different phases. Category b: Edges where endpoints were removed within the same phase. We further partition each $H_i$ into category a edges $F_{i}$ and category b edges $F'_{i}$. Any directed path that consists exclusively of category a edges can have a length of at most the total number of phases, i.e., $\bigO(\log_\beta n)$ implying the same for the depth of $F_{i}$.    
    Any node has (undirected) degree less than $2 \beta \arb(G)$ at the time of removal (see previous proof), i.e., each node has less than $2\beta \arb(G)$ incident category b edges, which implies the same limit for the in-degree in $F'_{i}$.
\end{proof}

\bibliographystyle{alpha}
\bibliography{ref}

\end{document}